\definecolor{green}{rgb}{0.0, 0.42, 0.24}
\newcommand*{\rb}{\mathrm{b}}
\newcommand*{\ri}{\mathrm{i}}
\newcommand*{\rf}{\mathrm{f}}
\newcommand*{\rp}{\mathrm{p}}
\newcommand*{\rw}{\mathrm{w}}
\begin{document}

\title{Analyzing the structure of multidimensional compressed sensing problems through coherence}
\author{A. Jones\\ \small
University of Cambridge \and B. Adcock\\ \small Simon Fraser Univesity
\and
A. Hansen\\ \small
University of
Cambridge 
}

\date{}

\maketitle

\begin{abstract}
Recently it has been established that asymptotic incoherence can be used to facilitate subsampling, in order to optimize reconstruction quality, in a variety of continuous compressed sensing problems, and the coherence structure of certain one-dimensional Fourier sampling problems was determined.   This paper extends the analysis of asymptotic incoherence to cover multidimensional reconstruction problems.  It is shown that Fourier sampling and separable wavelet sparsity in any dimension can yield the same optimal asymptotic incoherence as in one dimensional case. Moreover in two dimensions the coherence structure is compatible with many standard two dimensional sampling schemes that are currently in use. However, in higher dimensional problems with poor wavelet smoothness we demonstrate that there are considerable restrictions on how one can subsample from the Fourier basis with optimal incoherence. This can be remedied by using a sufficiently smooth generating wavelet. It is also shown that using tensor bases will always provide suboptimal decay marred by problems associated with dimensionality. The impact of asymptotic incoherence on the ability to subsample is demonstrated with some simple two dimensional numerical experiments.
\end{abstract}

\section{Introduction}

Exploiting additional structure has always been central to the success of compressed sensing,  ever since it was introduced by Cand\`es, Romberg \& Tao \cite{CandesRombergTao} and Donoho \cite{donohoCS}. Sparsity and incoherence has allowed us to recover signals and images from uniformly subsampled measurements. Recently \cite{AHPRBreaking} the notions of asymptotic sparsity in levels and asymptotic incoherence were introduced to provide enough flexibility to recover signals in a larger variety of inverse problems using subsampling in levels. The key is that optimal subsampling strategies will depend both on the signal structure (asymptotic sparsity) and the asymptotic incoherence structure. 

There is a wide variety of problems that lack incoherence, a fact that has been widely recognized \cite{AHPRBreaking, discrete, VanderEtAlSpreadSpectrum, VanderEtAlVariable, ChauffertGradientwaveform, ChauffertVDS, BoyerBlockStructured, PoonFrames, Siemens, Gitta_Fourier, PoonTV, WardFourierAndPolys}, however, they instead posses asymptotic incoherence. Examples include Magnetic Resonance Imaging (MRI) \cite{Unser,Lustig3}, X-ray Computed Tomography \cite{Stanford_CT, quinto2006xrayradon}, Electron Tomography \cite{lawrence2012et,leary2013etcs}, Fluorescence microscopy \cite{Candes_PNAS, Roman} and Surface scattering \cite{JonesTamtoglHAS}, to name a few. This phenomena often originates from the inverse problems being based upon integral transforms, for example, reconstructing a function $f$ from pointwise evaluations of its Fourier transform. In compressed sensing, such a transform is combined with an appropriate sparsifying transformation associated to a basis or frame, giving rise to an infinite measurement matrix $U$. The `coherence' of $U \in \bbC^{\bbN \times \bbN} $ or $U' \in C^{N \times N}$ is defined by
\bes{
\mu(U) = \sup_{i,j \in \bbN} | U_{ij} |^2,  \qquad \mu(U') = \max_{i,j=1,\ldots,N} | U'_{ij} |^2.
}

Small coherence is refered to as `incoherence'. Asymptotic incoherence is the phenomena of when
\be{ \label{asympinco}
\mu(P^\perp_N U), \mu(U P^\perp_N) \to 0, \qquad N \to \infty,
}
where $P^\perp_N$ denotes the projection onto the indices $N+1,N+2,...$. As a general rule, the faster asymptotic incoherence decays the more we are able to subsample (see (\ref{conditions31_levels})). The study of more precise notions of coherence has also been considered for the one and two dimensional discrete Fourier sampling, separable Haar sparsity problems in \cite{discrete}. This paper focuses on studying the structure of (\ref{asympinco}) in continuous multidimensional inverse problems and the impact this has on the ability to effectively subsample.

In previous work \cite{onedimpaper}, the structure of incoherence was analyzed as a general problem and theoretical limits on how fast it can decay over all such inverse problems were established. Furthermore, the notion of optimal decay was introduced, which describes the fastest asymptotic incoherence decay possible for a given inverse problem. The notion of an optimal ordering was also introduced, which acted as a set of instructions on how to actually attain this optimal incoherence decay rate by ordering the sampling basis. Optimal decay rates and optimal orderings were determined for the one-dimensional Fourier-wavelet and Fourier-polynomial cases and the former was found to attain the theoretically optimal incoherence decay rate of $N^{-1}$. By `optimal' here we mean in the sense of over all inverse problems that has $U$ an isometry. Furthermore, it is the fastest decay as a power of $N$. This paper extends the basic findings in \cite{onedimpaper} to general $d$-dimensional problems.

\begin{figure}[t]
\begin{center}
\begin{subfigure}[t]{0.43\textwidth}
\begin{center}
\includegraphics[width=\textwidth]{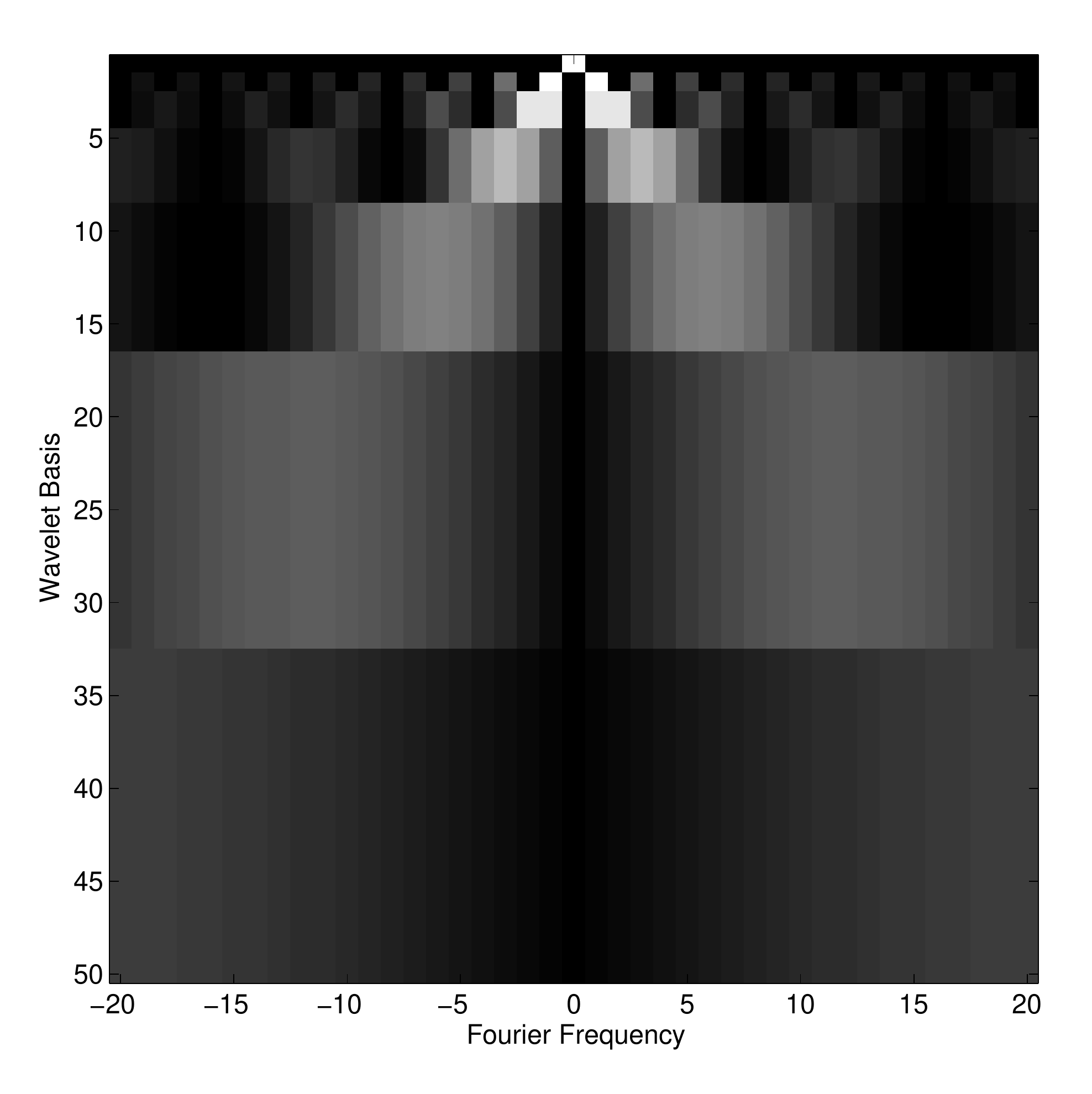}
 \caption{\footnotesize Coherence Matrix for the 1D case}
\end{center}
\end{subfigure}
\begin{subfigure}[t]{0.43\textwidth}
\begin{center}
 \includegraphics[width=\textwidth]{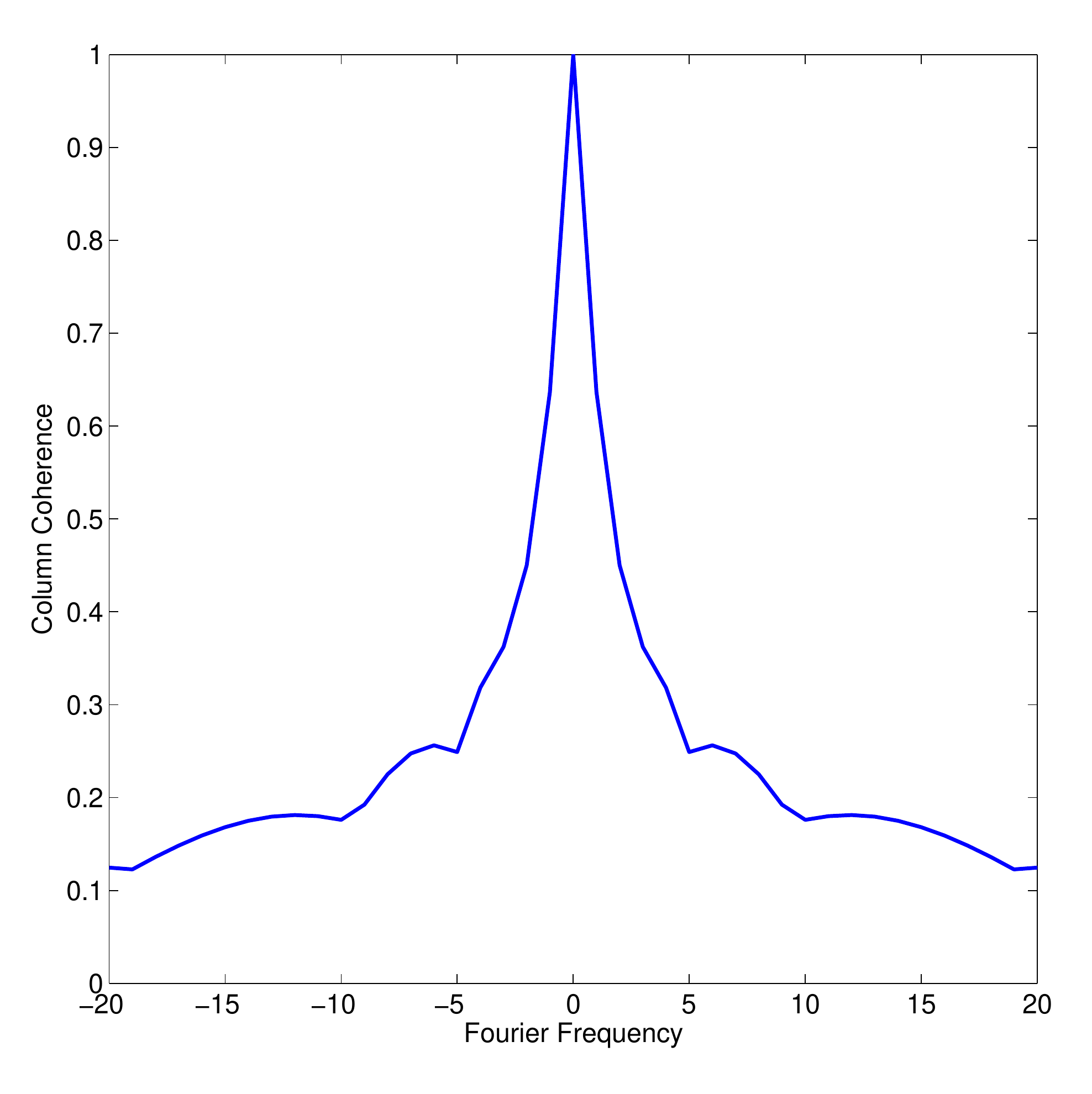} 
  \caption{\footnotesize 1D Column Coherences}
  \end{center}
\end{subfigure}\\
\begin{subfigure}[t]{0.43\textwidth}
\begin{center}
\includegraphics[width=\textwidth]{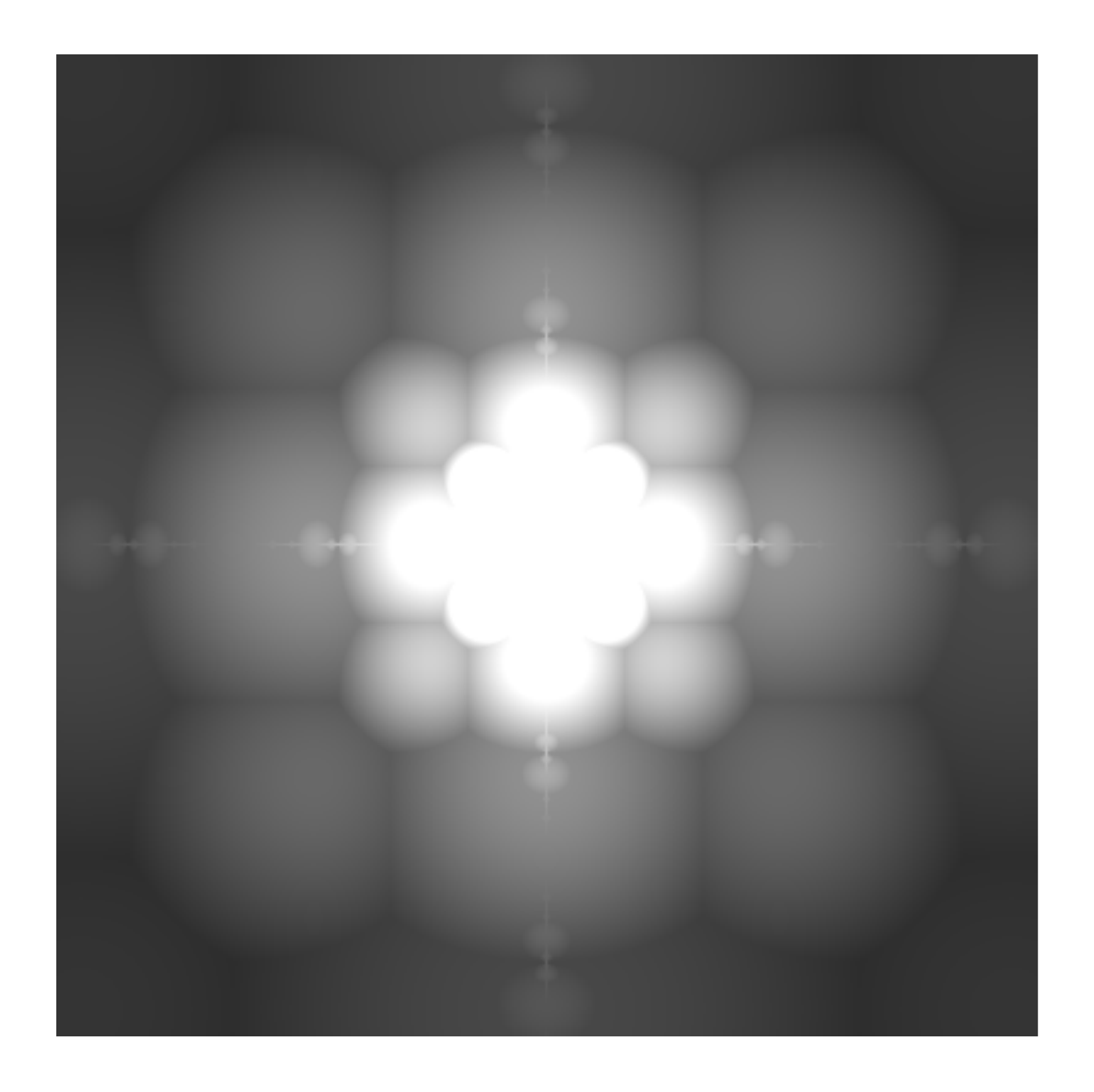} 
  \caption{\footnotesize 2D analogue of (b)}
  \end{center}
\end{subfigure}
\begin{subfigure}[t]{0.43\textwidth}
\begin{center}
\includegraphics[width=\textwidth]{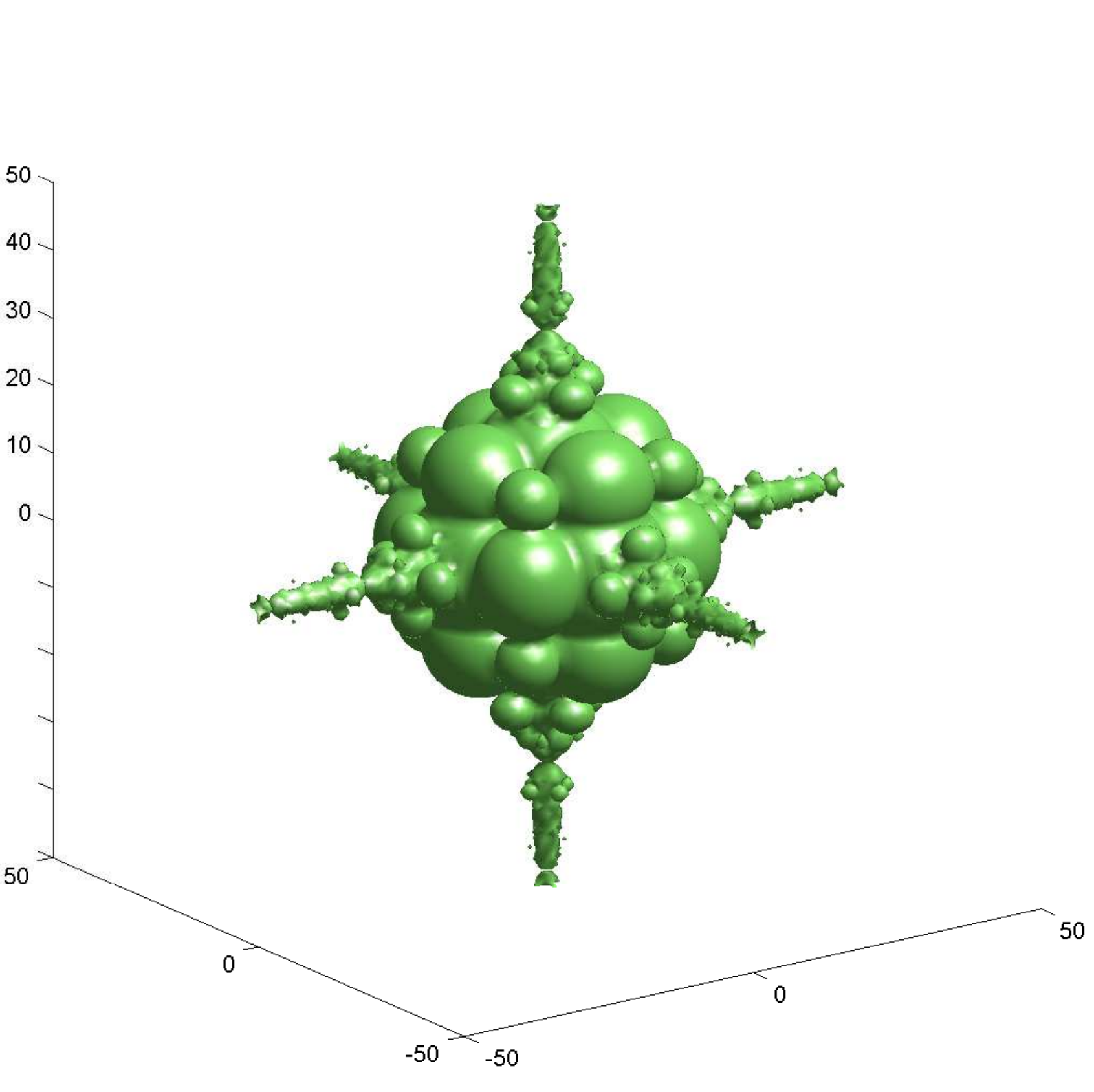} 
\caption{\footnotesize Isosurface of 3D Case}
\end{center}
\end{subfigure}
\end{center}
\caption{Fourier - Separable Haar Cases: Incoherence Structures in Different Dimensions. In (b), the coherences are calculated by taking the maxima over the columns in (a), demonstrating decay that scales with frequency. In 2D this decay roughly matches that of the norm of the frequency as seen in (c). However in 3D there are hyperbolic spikes around the coordinate axes that lead to poor incoherence decay (see (d)) when using sampling patterns with rotational invariance or linear scaling. In the black and white plots, white indicates larger absolute value.}
\label{haarimages}
\end{figure}

The optimal orderings in these one dimensional cases matched the leveled schemes that were already used for subsampling. For example, when sampling from the 1D Fourier basis, the sampling levels are usually ordered according to increasing frequency. In multiple dimensions there is no such consensus, instead many different sampling patterns are used, especially when it comes to 2D sampling patterns where radial lines \cite{radial}, spirals \cite{spiralmed} or other k-space trajectories are used. There are also a variety of other sampling techniques used in even higher dimensional (3-10D) problems, such as in the field of NMR spectroscopy \cite{highdimnmr}. If one desires to exploit asymptotic incoherence to its fullest it must be understood whether the coherence structure is consistent with the sampling pattern that one intends to use.

This paper determines optimal orderings for the case of Fourier sampling and (separable) wavelet sparsity in any dimension. It is shown that the optimal decay is always that of the one-dimensional case, and moreover in two dimensions the optimal orderings are compatible with the structure of the 2D sampling patterns mentioned above. However, in higher dimensions problems with poor wavelet smoothness, such as the three dimensional separable Haar case, the class of optimal orderings\footnote{Technically we mean \emph{strongly} optimal here (see Definition \ref{strongoptimality}).} are no longer rotationally invariant (as in Figure \ref{haarimages}), hindering the ability to subsample with traditional sampling schemes. It is also shown that using a pair of tensor bases in general leads to a best possible incoherence decay that is always anisotropic and suboptimal.

We should mention here that for many inverse problems in higher dimensions, using separable wavelets as a reconstruction basis fairs poorly against other bases such as shearlets \cite{shearlet} and curvelets \cite{candes2004new} for approximating images with curve-like features. However, it is not our goal to focus on a particular reconstruction basis in this paper, instead we wish to demonstrate how the incoherence structure can vary for different bases and the impact this has on its application in compressed sensing problems, for good or for worse.

\subsection{Setup \& Key Concepts : Incoherence, Sparsity \& Orderings}
Throughout this paper we shall work in an infinite dimensional separable Hilbert space $ \mathcal{H}$, typically $\mathcal{H}=L^2(\bbR^d)$, with two closed infinite dimensional subspaces $V_1, V_2$ spanned by orthonormal bases $B_1,B_2$ respectively,
\[ V_1 = \overline{ \text{Span} \{ f \in B_1 \}}, \qquad   V_2 = \overline{ \text{Span} \{ f \in B_2 \}    }.\]
We call $(B_1,B_2)$ a `basis pair'. If we are to form the change of basis matrix $U=(U_{i,j})_{i,j \in \bbN}$ we must list the two bases, which leads to following definitions:
\begin{definition}[Ordering]
Let $S$ be a set. Say that a function $\rho: \mathbb{N} \to S$ is an `ordering' of $S$ if it is bijective.
\end{definition}
\begin{definition}[Change of Basis Matrix]
For a basis pair $(B_1,B_2)$, with corresponding orderings $\rho:\mathbb{N} \to B_1$ and $\tau:\mathbb{N} \to B_2$, form a matrix $U$ by the equation
\begin{equation}\label{U}
 U_{m,n} :=  \langle \tau(n) , \rho(m) \rangle.
\end{equation}
Whenever a matrix $U$ is formed in this way we write `$U:=[(B_1,\rho),(B_2,\tau)]$'.
\end{definition}

Standard compressed sensing theory says that if $x \in \mathbb{C}^N$ is $s$-sparse, i.e.\ $x$ has at most $s$ nonzero components, then, with probability exceeding $1-\epsilon$, $x$ is the unique minimiser to the problem  
\bes{
\min_{\eta \in \bbC^N} \| \eta \|_{l^1} \quad \mbox{subject to} \quad P_{\Omega} U \eta 
= P_{\Omega} Ux,
}
where $P_{\Omega}$ is the projection onto $\mathrm{span}\{e_j:j\in \Omega\}$, $\{e_j\}$ is the canonical basis, $\Omega$ is chosen uniformly at random with $|\Omega| = m$ and
\be{
\label{m_est_Candes_Plan}
m \ge  C \cdot \mu(U) \cdot N \cdot s \cdot \log (\epsilon^{-1}) \cdot \log (N),
}
for some universal constant $C>0$ (see  \cite{Candes_Plan} and 
\cite{BAACHGSCS}). In \cite{AHPRBreaking} a new theory of compressed sensing was introduced based on the following three key concepts: 

\defn{[Sparsity in Levels]
\label{d:Asy_Sparse}
Let $x$ be an element of either $\bbC^N$ or $l^2(\bbN)$.  For $r \in \bbN$ let $\mathbf{M} = (M_1,\ldots,M_r) \in \bbN^r$ with $1 \leq M_1 < \ldots < M_r$ and $\mathbf{s} = (s_1,\ldots,s_r) \in \bbN^r$, with $s_k \leq M_k - M_{k-1}$, $k=1,\ldots,r$, where $M_0 = 0$.  We say that $x$ is $(\mathbf{s},\mathbf{M})$-sparse if, for each $k=1,\ldots,r$,
\bes{
\Delta_k : = \mathrm{supp}(x) \cap \{ M_{k-1}+1,\ldots,M_{k} \},
}
satisfies $| \Delta_k | \leq s_k$.  We denote the set of $(\mathbf{s},\mathbf{M})$-sparse vectors by $\Sigma_{\mathbf{s},\mathbf{M}}$.
}

\defn{[Multi-level sampling scheme]
\label{multi_level_dfn}
Let $r \in \bbN$, $\mathbf{N} = (N_1,\ldots,N_r) \in \bbN^r$ with $1 \leq N_1 < \ldots < N_r$, $\mathbf{m} = (m_1,\ldots,m_r) \in \bbN^r$, with $m_k \leq N_k-N_{k-1}$, $k=1,\ldots,r$, and suppose that
\bes{
\Omega_k \subseteq \{ N_{k-1}+1,\ldots,N_{k} \},\quad | \Omega_k | = m_k,\quad k=1,\ldots,r,
}
are chosen uniformly at random, where $N_0 = 0$.  We refer to the set
\bes{
\Omega = \Omega_{\mathbf{N},\mathbf{m}} := \Omega_1 \cup \ldots \cup \Omega_r
}
as an $(\mathbf{N},\mathbf{m})$-multilevel sampling scheme.
}

\defn{[Local coherence]\label{loc_coherence}
Let $U$ be an isometry of either $\bbC^{N}$ or $l^2(\bbN)$.
If $\mathbf{N} = (N_1,\ldots,N_r) \in \bbN^r$ and $\mathbf{M} = (M_1,\ldots,M_r) \in \bbN^r$ with $1 \leq N_1 < \ldots N_r $ and $1 \leq M_1 < \ldots < M_r $ the $(k,l)^{\rth}$ local coherence of $U$ with respect to $\mathbf{N}$ and $\mathbf{M}$ is given by
\ea{ \label{localincoherence}
\mu_{\mathbf{N},\mathbf{M}}(k,l) &= \sqrt{\mu(P^{N_{k-1}}_{N_{k}}UP^{M_{l-1}}_{M_{l}}) \cdot  \mu(P^{N_{k-1}}_{N_{k}}U)},\qquad k,l=1,\ldots,r,
}
where $N_0 = M_0 = 0$ and $P^{a}_{b}$ denotes the projection matrix corresponding to indices $\{a+1,\hdots, b\}$.  
}

The paper \cite{AHPRBreaking} provided the following estimate (with $C>0$ a universal constant) regarding the local number of measurements $m_k$ in the $k^{\rth}$ level in order to obtain a good reconstruction with probability $\ge 1-\epsilon$:
\be{
\label{conditions31_levels}
\frac{m_k}{N_k-N_{k-1}} \ge C \cdot \log(\epsilon^{-1}) \cdot \left(
\sum_{l=1}^r \mu_{\mathbf{N},\mathbf{M}}(k,l) \cdot s_l\right) \cdot \log\left(N\right),\quad k=1,\ldots,r.
}
In particular, the sampling strategy (i.e.\ the parameters $\mathbf{N}$ and $\mathbf{m}$) is now determined through the local sparsities and coherences. Since the local coherence (\ref{localincoherence}) is rather difficult to analyze in its current form, we bound it above by the following:
\ea{ \label{local2asymp}
\mu_{\mathbf{N},\mathbf{M}}(k,l) &= \sqrt{\mu(P^{N_{k-1}}_{N_{k}}UP^{M_{l-1}}_{M_{l}}) \cdot  \mu(P^{N_{k-1}}_{N_{k}}U)}
\\  \label{local2asymp2} & \le \sqrt{\min(\mu(P^{N_{k-1}}_{N_{k}}U), \mu(U P^{M_{l-1}}_{M_{l}})) \cdot  \mu(P^{N_{k-1}}_{N_{k}}U)}
\\ \label{local2asymp3} & \le \sqrt{\min(\mu(P^\perp_{N_{k-1}}U), \mu(U P^\perp_{M_{l-1}})) \cdot  \mu(P^\perp_{N_{k-1}}U)}
}
It is arguably (\ref{local2asymp3}) rather than (\ref{local2asymp2}) that is the roughest bound here, however we shall see that this becomes effectively an equality in what follows. The crucial improvement of (\ref{local2asymp3}) over (\ref{local2asymp}) is that it is completely in terms of the asymptotic incoherences $\mu(P_N^\perp U), \mu(U P_N^\perp)$, which depend only on the orderings of $B_1,B_2$ respectively, rather than both of them. Furthermore, we can treat the two problems of maximizing the decay of  
$\mu(P_N^\perp U), \mu(U P_N^\perp)$ separately and then combine the two resulting orderings together at the end.

Next we describe how one determines the fastest decay of $\mu(P_N^\perp U)$. In \cite{onedimpaper} this was done via the notion of optimality up to constants:

\begin{definition}[Optimal Orderings] \label{fasterdecay}
Let $\rho_1, \rho_2 : \mathbb{N} \to B_1$ be any two orderings of a basis $B_1$ and $\tau$ any ordering of a basis $B_2$. Let $U_1:=[(B_1,\rho_1) , (B_2,\tau)], \ U_2:=[(B_1,\rho_2) , (B_2,\tau)]$ as in (\ref{U}). Also let $Q_N:=P_{N-1}^\perp$. If there is a constant $C>0$ such that
\[ \mu(Q_NU_1) \le C \cdot \mu(Q_NU_2), \qquad \forall N \in \mathbb{N}, \]
then we write $\rho_1 \prec \rho_2$ and say that `$\rho_1$ has a faster decay rate than $\rho_2$ for the basis pair $(B_1, B_2)$'. $\rho_1$ is said to be an `optimal ordering of $(B_1,B_2)$' if $\rho_1 \prec \rho_2$ for all other orderings $\rho_2$ of $B_1$. The relation $\prec$, defined on the set of orderings of $B_1$, is independent of the ordering $\tau$ since the values of $\mu(Q_N U_1), \mu(Q_N U_2)$ are invariant under permuation of the columns of $U_1, U_2$.
 \end{definition}

It was shown in \cite{onedimpaper} that optimal orderings always exist. Optimal orderings are used to give us the optimal decay rate:
 \begin{definition}[Optimal Decay Rate] Let $f,g : \mathbb{N} \to \mathbb{R}_{>0}$ be decreasing functions.
We write $f \lesssim g$ to mean there is a constant $C>0$ such that
\[ f(N)  \le  C \cdot g(N), \qquad \forall N \in \mathbb{N}. \]
If both $f \lesssim g$ and $g \lesssim f$ holds, we write `$f \approx g$'.
Now suppose that $\rho: \mathbb{N} \to B_1$ is an optimal ordering for the basis pair     $(B_1,B_2)$ and we let $U=[(B_1,\rho),(B_2,\tau)]$ be a corresponding incoherence matrix (with some ordering $\tau$ of $B_2$). Then any decreasing function $f: \mathbb{N} \to \mathbb{R}_{>0}$ which satisfies $f \approx g$, where $g$ is defined by $g(N) = \mu(Q_N U)$, $\forall N \in \bbN$, is said to `represent the optimal decay rate' of the basis pair $(B_1,B_2)$. 
\end{definition}

Notice that the optimal decay rate is unique up to the equivalence relation $\approx$ defined on the set of decreasing functions $f: \mathbb{N} \to \mathbb{R}_{>0}$.

We also have a stronger notion of optimality, which gives us finer details on the exact decay:

\begin{definition} [Strong Optimality] \label{strongoptimality}
Let $U=[(B_1,\rho),(B_2,\tau)]$ and $\pi_N$ denote the projection onto the single index $N$. If $f$ represents the optimal decay rate of the basis pair $(B_1,B_2)$ then $\rho$ is said to be `strongly optimal' if the function $g(N):= \mu(\pi_N U)$ satisfies $f \approx g$. 
\end{definition}

Estimates in terms of the row incoherence $\mu(\pi_N U)$ have used before in \cite{discrete}, where it was called the `local coherence'. If $\rho$ is a strongly optimal ordering, $U=[(B_1,\rho),(B_2,\tau)]$ and $f$ represents the optimal decay of $(B_1,B_2)$ then 
\[
\mu(Q_N U) \le C_1 \cdot  f(N) \le C_2 \cdot \mu(\pi_N U) \le C_2 \cdot \mu(P_{N-1}^{N-1+M} U), \qquad N, M \in \bbN,
\]
for some constants $C_1(\rho), C_2(\rho)>0$, which can then be used to show the $\le$ in (\ref{local2asymp3}) can be replaced by $\approx$.

We shall introduce the Fourier basis here as it is used in all of the examples discussed in this paper:

\begin{definition}[Fourier Basis] \label{fourier}

If we define
\[ \chi_k(x) = \sqrt{\epsilon} \exp(2 \pi \ri \epsilon k x)\cdot \ \mathds{1}_{[(- 2 \epsilon)^{-1},(2 \epsilon)^{-1}]} (x), \qquad k \in \mathbb{Z}, \]
then the $(\chi_k)_{k \in \bbZ}$ form a basis\footnote{The little $\rf$ here stands for `Fourier'.} $B_\rf(\epsilon)$ of $L^2([-(2 \epsilon)^{-1},(2 \epsilon)^{-1}])$ . We can form a $d$-dimensional basis of $L^2([-(2 \epsilon)^{-1},(2 \epsilon)^{-1}]^d)$ by taking tensor products (see Section \ref{tensors})
\[ \chi_{k} := \bigotimes_{j=1}^d \chi_{k_j} ,
 \qquad k \in \mathbb{Z}^d, \]
and setting $B^d_\rf(\epsilon)= \{ \chi_{k} \ : \ k \in \mathbb{Z}^d \} $.
It shall be convenient to identify $B^d_\rf(\epsilon)$ with $\mathbb{Z}^d$ using the function 
\be{ \label{multidimlambda}
\lambda_d:B_\rf^d \to \mathbb{Z}^d, \quad  \lambda_d(\chi_k):=(\lambda(\chi_{k_1}), ..., \lambda(\chi_{k_d}))= (k_1,...,k_d)=k.
} 

\end{definition}

\section{Main Results}

It turns out that the task of determining the asymptotic incoherence for general $d$-dimensional cases is substantially more difficult and subtle than the $1$-dimensional problems. However, we are able to present sharp results on the decay as well as optimal orderings of the bases. 
The main results can be broken down into two groups: one for tensor cases in general and one for the Fourier-Separable wavelet case. In what follows $ d \in \bbN$ denotes dimension.

\subsubsection{Fourier to Tensor Wavelets}

\begin{theorem} \label{tensormainwavelet}
Let $B^d_\rw$ be a tensor wavelet basis. The optimal decay rate of both $(B^d_\rf, B^d_\rw)$ and $(B^d_\rw,B^d_\rf)$ is represented by $f(N)= \log^{d-1}(N) \cdot N^{-1}$.
\end{theorem}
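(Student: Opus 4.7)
The plan is to reduce the $d$-dimensional problem to the one-dimensional Fourier--wavelet analysis of \cite{onedimpaper} via the tensor factorization of the change-of-basis matrix, and then resolve a combinatorial ordering problem on $\mathbb{Z}^d$ whose answer is governed by the classical lattice count for hyperbolic regions.

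First I would observe that for tensor bases the matrix entries factor: writing $V_{k,n} = \langle \chi_k, \psi_n \rangle$ for the $1$D Fourier--wavelet matrix, under the canonical multi-index identification one has
\[
|U^d_{k,n}|^2 \;=\; \prod_{i=1}^d |V_{k_i,n_i}|^2.
\]
The $1$D Fourier--wavelet theorem of \cite{onedimpaper} supplies a strongly optimal ordering of the $1$D Fourier basis with row coherence $\mu(\pi_k V) \approx (1+|k|)^{-1}$, i.e.\ $\max_n |V_{k,n}|^2 \approx (1+|k|)^{-1}$ as a two-sided estimate. Since the maximum of a product of non-negative quantities over $n = (n_1,\ldots,n_d)$ factorises, this gives $\max_n |U^d_{k,n}|^2 \approx \prod_{i=1}^d (1+|k_i|)^{-1}$, and the problem reduces to the purely combinatorial task of choosing an ordering $\rho$ of $\mathbb{Z}^d$ that minimises $\sup_{m \geq N} \prod_i (1+|\rho(m)_i|)^{-1}$ as a function of $N$.

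For the upper bound I would use the \emph{hyperbolic ordering} of $\mathbb{Z}^d$, enumerating multi-indices in order of increasing $\prod_i(1+|k_i|)$, together with the classical lattice count
\[
\bigl|\{\, k \in \mathbb{Z}^d : \textstyle\prod_i (1+|k_i|) \leq T \,\}\bigr| \;\approx\; T \log^{d-1}(T), \qquad T \geq 2,
\]
which forces the $N$-th element of this ordering to satisfy $\prod_i (1+|k_i|) \gtrsim N/\log^{d-1}(N)$; hence $\mu(Q_N U^d) \lesssim \log^{d-1}(N)\cdot N^{-1}$. For the matching lower bound, the same count shows that \emph{any} ordering must place some multi-index with $\prod_i(1+|k_i|) \lesssim N/\log^{d-1}(N)$ at position exceeding $N$, and the lower-bound half of the $1$D two-sided estimate then yields $\mu(Q_N U^d) \gtrsim \log^{d-1}(N)\cdot N^{-1}$. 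The case $(B^d_\rw, B^d_\rf)$ follows by transposition, since $\mu(U Q_N) = \mu(Q_N U^*)$ and the same argument applies with the strongly optimal $1$D ordering taken on the wavelet side.

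The step I expect to be the main obstacle is the lower bound: converting a bulk counting statement on $\mathbb{Z}^d$ into a pointwise lower bound on the tail coherence relies critically on the \emph{strong} form of $1$D optimality, so that small upper bounds on $\max_n |V_{k,n}|^2$ come with matching lower bounds. A secondary technical nuisance is the behaviour near the coordinate axes, where one or more $k_i$ vanish and interact with the scaling-function entries of the wavelet basis; the shift $1+|k_i|$ is designed to absorb the formal singularity, but verifying that the $\approx$ relations survive at these degenerate indices is the genuine work.
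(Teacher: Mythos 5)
Your proposal follows essentially the same route as the paper: the tensor factorization of the row coherences (the paper's Lemma \ref{generaltensor}), the two-sided one-dimensional Fourier--wavelet estimate $\mu(\pi_N U)\approx N^{-1}$ from \cite{onedimpaper}, and the hyperbolic-cross lattice count $\#\{k:\prod_i\max(|k_i|,1)\le T\}\approx T\log^{d-1}T$ (Lemmas \ref{orderset} and \ref{orderset2}) to pin the $N$-th element of the induced ordering, with the lower bound obtained exactly as you describe from the matching lower half of the 1D estimate. The only difference is presentational: the paper routes the final step through its abstract domination/characterization machinery (Lemma \ref{characterisationlemma}) rather than arguing the tail bound directly, and it handles the coordinate-axis degeneracy with $\max(|k_i|,1)$ in place of $1+|k_i|$, which is equivalent up to constants.
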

This theorem is a user friendly and easy-to-read restatement of Theorem \ref{TensorResultsWavelet}. The latter theorem contains the more subtle and technical statements of the results. 

\subsubsection{Fourier to Legendre polynomials}
\begin{theorem} \label{tensormainpoly}
Let $B^d_\rp$ be a (tensor) Legendre polynomial basis. The optimal decay rate of both $(B^d_\rf, B^d_\rp)$ and $(B^d_\rp,B^d_\rf)$ is represented by $f(N)= \big( \log^{(d-1)}(N) \cdot N^{-1} \big)^{2/3}$. 
\end{theorem}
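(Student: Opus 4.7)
The strategy is directly analogous to that for the tensor wavelet case (Theorem \ref{tensormainwavelet}): I would reduce the $d$-dimensional problem to a hyperbolic cross counting argument, feeding in the one-dimensional Fourier-Legendre asymptotics from \cite{onedimpaper}. First, because both $B_\rf^d$ and $B_\rp^d$ are tensor products of their 1D factors, the change-of-basis matrix $U^d$ is the $d$-fold Kronecker product of the 1D Fourier-Legendre matrix $U$, so $|U^d_{\vec m,\vec n}|^2 = \prod_{j=1}^d |U_{m_j,n_j}|^2$. Taking the supremum over columns, the row supremum factors:
\[
C(\vec m) := \mu(\pi_{\vec m} U^d) = \prod_{j=1}^d C_1(m_j), \qquad C_1(m) := \mu(\pi_m U).
\]
The strong optimality of the 1D Fourier and Legendre orderings in \cite{onedimpaper} supplies the pointwise estimate $C_1(m) \approx m^{-2/3}$ (for all but finitely many small indices, where $C_1$ is simply bounded), hence $C(\vec m) \approx \prod_j m_j^{-2/3}$.

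Since $\mu(Q_N U^d) = \sup_{k \geq N} C(\rho(k))$ depends only on the row ordering $\rho$, it is minimized by sorting multi-indices in decreasing order of $C$, equivalently (up to constants) by ordering via the hyperbolic cross weight $H(\vec m) := \prod_j \max(m_j,1)$. The classical divisor-type count
\[
|\{\vec m \in \bbN^d : H(\vec m) \leq T\}| \approx T \log^{d-1}(T), \qquad T \to \infty,
\]
yields that the $N$-th smallest value of $H$ is $H_N \approx N / \log^{d-1}(N)$, so for the sorted ordering
\[
\mu(Q_N U^d) \approx C_N \approx H_N^{-2/3} \approx \bigl(\log^{d-1}(N)/N\bigr)^{2/3},
\]
establishing the upper bound. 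The matching lower bound for every ordering follows by pigeonhole: any $\rho$ assigns at most $N-1$ positions below $N$, so at least one multi-index among the $N$ with largest $C$-value must land at position $\geq N$, forcing $\mu(Q_N U^d) \geq C_N$. The case $(B_\rp^d, B_\rf^d)$ is handled identically after exchanging the roles of Fourier and Legendre, since $C_1(m) \approx m^{-2/3}$ holds on both sides.

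The main obstacle is the pointwise estimate $C_1(m) \approx m^{-2/3}$. Upper bounds follow from the classical Bessel representation $\langle \chi_k, \tilde P_n \rangle \propto k^{-1/2} J_{n+1/2}(k)$ together with standard asymptotics for $J_\nu$; the matching lower bound requires, for each row index $m$, producing a column $n$ where $|U_{m,n}|$ attains the Airy transition peak of order $m^{-1/3}$, which amounts to invoking the strong optimality proved in \cite{onedimpaper} for both the Fourier and the Legendre orderings. A secondary bookkeeping issue is that multi-indices with some small or zero entry produce a bounded $O(1)$ factor in $C$ rather than a decaying one; these form a finite set for each $d$ and can be absorbed into the implicit constants without affecting the leading asymptotic $(\log^{d-1}(N)/N)^{2/3}$.
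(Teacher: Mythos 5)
Your proposal follows essentially the same route as the paper: factorize the row coherence over the tensor structure (Lemma \ref{generaltensor}), feed in the two-sided one-dimensional bound $\mu(\pi_m U) \approx m^{-2/3}$ from \cite{onedimpaper}, order by the hyperbolic cross and count lattice points to obtain the $(\log^{d-1}(N)\cdot N^{-1})^{2/3}$ rate (Lemmas \ref{ordersimplify} and \ref{orderset}, Theorem \ref{generaltensortheorem}), with your pigeonhole step playing the role of Lemma \ref{StrongOptimalEquivalence}. One small correction: the multi-indices with a small coordinate do not form a finite set when $d \ge 2$, but no special treatment is needed anyway, since the two-sided 1D bound of Theorem \ref{FourierPolynomialResults} holds for every index, so the product estimate is uniform.
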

This theorem is a restatement of Theorem \ref{TensorResultsPoly} for the purpose of an easy-to-read exposition.  
The additional logarithmic factors in the tensor cases here demonstrates the typical problems associated with dimensionality. In general the optimal orderings for all tensor problems are constructed using the hyperbolic cross on the original one-dimensional optimal orderings.

\subsubsection{Fourier to Separable Wavelets}
The definition of a separable wavelet basis $B_\text{sep}^d$ is provided in Section \ref{separable}. The main results on these cases are summarized below:
\begin{theorem} \label{separablesummary}
Consider the Fourier basis $B^d_\rf$ and the wavelets basis $B^d_\text{sep}$. Then the following is true.
\begin{itemize}
\item[(i)]
The optimal decay rate of $(B^d_\text{sep},B^d_\rf)$ is represented by $f(N)= N^{-1}$. The optimal decay rate of $(B^d_\text{sep},B^d_\rf)$ is obtained by using an ordering $\tau$ that is consistent with the wavelet levels (see definitions \ref{consistent_ordering} and \ref{leveled}) and this ordering is strongly optimal.
\item[(ii)]
In 2D ($d=2$) the optimal decay rate of $(B^d_\rf,B^d_\text{sep})$ is represented by $f(N)= N^{-1}$. This optimal decay rate is obtained by using an ordering $\rho$ of $B^d_\rf$ that satisfies, for some constants $C_1, C_2>0$ and some norm $\| \cdot \|$ on $\bbR^d$,
\be{ \label{linearrough}
\max(\| \lambda_d(\rho(N)) \|,1)  \approx N^{1/d}, \qquad  N \in \mathbb{N}.
}
In fact $\rho$ is strongly optimal in 2D if and only if (\ref{linearrough}) holds.
\item[(iii)]
In higher dimensions ($d \ge 3$) the optimal decay rate of $(B^d_\rf,B^d_\text{sep})$ is still represented by $f(N)= N^{-1}$. However the optimal ordering used to obtain this decay rate is dependent on the wavelet used to generate the basis $B^d_\text{sep}$. 
\end{itemize}
\end{theorem}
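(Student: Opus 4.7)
The plan is to reduce the three claims to pointwise bounds on $|\langle \chi_k, \psi\rangle|^2$ for $\chi_k \in B^d_\rf$ and $\psi \in B^d_\text{sep}$, and then to analyze the resulting envelopes either as a function of the wavelet's scale (for (i)) or as a function of $k$ (for (ii)--(iii)). Part (i) concerns the optimal ordering of the wavelet basis for the pair $(B^d_\text{sep}, B^d_\rf)$, which permutes the rows of $U = [(B^d_\text{sep}, \rho), (B^d_\rf, \tau)]$; parts (ii)--(iii) concern the optimal ordering of the Fourier basis for the reversed pair, which amounts to ordering the lattice $\bbZ^d$. As noted after (\ref{local2asymp3}), these two problems can be treated independently.

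For (i), I would partition the wavelets by scale. A separable wavelet at scale $2^{-j}$ has the form $2^{jd/2}\psi^e(2^j x - m)$, where $\psi^e$ is a tensor product of 1D mother wavelets and scaling functions. Since $\|\chi_k\|_\infty \lesssim 1$ and $\|\psi^e_{j,m}\|_{L^1} \lesssim 2^{-jd/2}$, the bound $|\langle \chi_k, \psi^e_{j,m}\rangle|^2 \lesssim 2^{-jd}$ holds uniformly in $k$. Level $j$ contains $\sim 2^{jd}$ wavelets, so any ordering $\rho$ consistent with the wavelet levels places the $N$-th wavelet at a scale $j$ with $2^{jd}\sim N$, yielding $\mu(Q_N U)\lesssim N^{-1}$. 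For the matching lower bound and strong optimality in the sense of Definition \ref{strongoptimality}, it suffices to exhibit at each level $j$ one wavelet $\psi^e_{j,m}$ and one frequency $k$ with $|\langle \chi_k, \psi^e_{j,m}\rangle|^2\gtrsim 2^{-jd}$; this follows by evaluating $\hat\psi^e$ at its peak and matching to the appropriate lattice frequency after dilation.

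For (ii)--(iii), let $U'=[(B^d_\rf,\rho),(B^d_\text{sep},\tau)]$ and fix a Fourier multi-index $k\in\bbZ^d$. The row envelope $F(k):=\sup_\psi|\langle \chi_k,\psi\rangle|^2$ decomposes as follows: at scale $2^{-j}$ and type $e\in\{0,1\}^d$ one has $|\langle \chi_k,\psi^e_{j,m}\rangle|^2 = 2^{-jd}\prod_{i=1}^d|\hat\phi^{e_i}(k_i/2^j)|^2$, with $\hat\phi^1=\hat\psi$ and $\hat\phi^0=\hat\phi$. In $d=2$ the dominant contribution comes from fine-scale wavelets peaked at $2^j\sim\|k\|$, producing $F(k)\approx\max(\|k\|,1)^{-2}$ essentially independently of the mother wavelet. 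A direct lattice-counting argument then gives $\mu(Q_N U')\approx N^{-1}$ precisely for orderings satisfying (\ref{linearrough}), proving both directions of (ii). In $d\ge 3$, the coarse-scale scaling-only contribution is of order $|\hat\phi(k_i)|^2$ along each coordinate axis, generating the hyperbolic spikes seen in Figure \ref{haarimages}(d); for low-regularity wavelets such as Haar, $|\hat\phi(k)|^2\sim|k|^{-2}$ decays slower than $|k|^{-d}$ and obstructs rotationally-invariant orderings. For sufficiently smooth mother wavelets, $|\hat\phi(k)|^2\lesssim|k|^{-d}$ suppresses these spikes below the isotropic envelope and a suitable hyperbolic-cross-type ordering on $\bbZ^d$ recovers the rate $N^{-1}$, giving (iii).

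The main obstacle will be proving sharp lower bounds, particularly the ``only if'' direction in (ii). Upper bounds follow fairly routinely from the envelope estimates combined with lattice counting; matching lower bounds require constructing, for each prescribed lattice frequency, a wavelet whose Fourier transform is explicitly bounded below there, in the spirit of the 1D analysis in \cite{onedimpaper}. The delicate quantitative point in (iii) is calibrating exactly how much regularity of the mother wavelet is needed to push $|\hat\phi(k)|^2$ below $|k|^{-d}$ along each coordinate axis, which in turn determines whether an ordering attaining $N^{-1}$ exists at all and what shape it must take. This calibration is automatic in $d=2$ but genuinely wavelet-dependent for $d\ge 3$, and it is what distinguishes parts (ii) and (iii).
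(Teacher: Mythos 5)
Your plan for parts (i) and (ii) follows essentially the same route as the paper: part (i) is the two-sided bound $\sup_n|\langle\Psi^s_{j,k},\chi_n\rangle|^2\approx 2^{-dj}$ (Proposition \ref{dDSeparableWaveletFourierCharacterisation}) combined with the level count $2^{dj}\approx N$ (Lemma \ref{levelgrowth}), and part (ii) is the characterization of the row envelope by $\max(\|n\|_\infty^d,1)^{-1}$ (Proposition \ref{dDSeparableFourierWavelet}, where the needed exponent $d/2=1$ in (\ref{dDFTdecay}) is free for every Daubechies wavelet when $d=2$) together with the lattice count of Lemma \ref{normest}; the ``only if'' direction you flag as the main obstacle then comes for free from the characterization machinery (part 3 of Lemma \ref{characterisationlemma}) rather than needing a separate construction. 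One small caution on (i): exhibiting \emph{one} good wavelet and frequency per level lower-bounds $\mu(Q_N U)$ but not the row coherence $\mu(\pi_N U)$ that Definition \ref{strongoptimality} requires; you need the lower bound for \emph{every} wavelet in each level. This does follow, since $|\mathcal{F}\Psi^s_{j,k}|$ is independent of the translation index $k$ and there are only finitely many types $s$, but it should be said.

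The genuine gap is in part (iii). Your mechanism --- verify $|\mathcal{F}\phi(k)|^2\lesssim|k|^{-d}$ for sufficiently smooth wavelets so that the coordinate-axis spikes sit below the isotropic envelope, then order linearly --- only proves the claim for wavelets with that much Fourier decay, and you explicitly leave open ``whether an ordering attaining $N^{-1}$ exists at all'' otherwise. But the theorem asserts the optimal rate is $N^{-1}$ for \emph{every} separable Daubechies basis in every dimension, including 3D Haar, where your approach provably cannot work (Example \ref{3DHaar}). The missing idea is the family of \emph{semi-hyperbolic} orderings interpolating between linear and hyperbolic: for $r\in\{1,\dots,d-1\}$ one orders $\bbZ^d$ consistently with $H_{d,r}(n)=\max_{i_1<\dots<i_r}\prod_{j=1}^{r}\max(|n_{i_j}|,1)$, applies the decay $|\mathcal{F}\phi(\omega)|\lesssim|\omega|^{-d/2r}$ only to the $r$ largest coordinates of the product in (\ref{sepprodexample2}) while bounding the rest by $1$, obtaining $\sup_{g}|\langle g,\lambda_d^{-1}(n)\rangle|^2\lesssim H_{d,r}(n)^{-d/r}$, and then proves the counting estimate $\#\{m:H_{d,r}(m)\le n\}\approx n^{d/r}$ (Lemma \ref{semihypbehaviour}) so that $H_{d,r}(\sigma(N))^{-d/r}\approx N^{-1}$. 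Since every Daubechies scaling function satisfies $|\mathcal{F}\phi(\omega)|\lesssim|\omega|^{-1}$, condition (\ref{semiFTdecay}) always holds with $r=d-1$, so the $N^{-1}$ rate is always attained; the wavelet dependence in (iii) lies only in which orders $r$ (hence which shapes of ordering) are admissible, not in whether the rate is achievable. Note finally that a fully hyperbolic ordering ($r=d$) does \emph{not} recover $N^{-1}$ --- it yields the suboptimal $\log^{d-1}(N)\cdot N^{-1}$ of Proposition \ref{Hyperbolic4Separable} --- so ``a suitable hyperbolic-cross-type ordering'' is not an adequate substitute for the intermediate construction.
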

Part (i) is the subject of Section \ref{separablewaveletordering} and is proven in Corollary \ref{leveledresults}. Part (ii), tackled in Section \ref{linearproof}, is the same as Corollary \ref{twodimresults}. Part (iii), covered in Section \ref{semihypsection}, is proven in Theorem \ref{semihyperbolicthm}.

An ordering satisfying (\ref{linearrough}) is called a `linear ordering'. The class of linear orderings are rotation invariant and  compatible with sampling schemes based on linearly scaling a fixed shape from the origin (see Section \ref{linearsection}).

 Optimal orderings in the case of high dimensions and poor wavelet smoothness can be found by 
interpolating between the case of (\ref{linearrough}) and the hyperbolic cross, which generates semi-hyperbolic orderings (see Definition \ref{semihyperbolic}). If the wavelet is sufficiently smooth relative to the dimension then linear orderings are optimal. It is also shown that if a linear ordering is optimal then the wavelet used must have some degree of smoothness proportional to the dimension; in 3D it is $C^0$, 5D it is $C^1$, 7D it is $C^2$, etc. (see Section \ref{orderingsandsmoothness}).

The differences between the two incoherence structures of the Fourier-Tensor wavelet and Fourier-Separable wavelet cases are tested in 2D in Section \ref{numericalsection}.

\subsection{Outline for the Remainder}

Some key tools that we use to find optimal orderings are given in Section \ref{orderings}. Those familiar with \cite{onedimpaper} can skip the majority of this section except for the concept of characterization. We then cover the general tensor case and introduce hyperbolic orderings in Section \ref{tensors} and prove Theorem \ref{tensormainwavelet} and Theorem \ref{tensormainpoly}. In Section \ref{separable} we discuss the separable cases, first covering how to optimally order the wavelet basis before quickly moving on to the central problem of finding optimal orderings of the Fourier basis. Linear orderings are introduced first, then we justify the need for semihyperbolic orderings. Finally we move onto some simple compressed sensing experiments, one demonstrating the benefits of multilevel subsampling and one showing the impact of differing incoherence structures between the 2D tensor and separable cases.

\section{Tools for Finding Optimal Orderings \& Theoretical Limits on Optimal Decay} \label{orderings}

The first tool is perhaps the most important, as it is a very easy way to identify a strongly optimal ordering:

\begin{lemma} \label{StrongOptimalEquivalence}
\textbf{1):} Let $(B_1,B_2)$ be a basis pair and $\tau$ any ordering of $B_2$. Furthermore, let $ B_1$ have an ordering $\rho_1 : \mathbb{N} \to B_1$, and define $U_1:=[(B_1,\rho_1),(B_2,\tau)]$.  Suppose that that there is a decreasing function $f_1: \mathbb{N} \to \mathbb{R}_{>0}$ such that
\[ f_1(N)  \le  \mu(\pi_N U_1), \qquad \forall N \in \mathbb{N}. \]
Then if $\rho_2: \mathbb{N} \to B_1$ is an ordering, $U_2=[(B_1,\rho_2),(B_2,\tau)]$  and $f_2: \mathbb{N} \to \mathbb{R}_{>0}$ is a function with
\[ \mu( Q_N U_2)  \le   f_2(N), \qquad \forall N \in \mathbb{N}, \]
then $f_1(N) \le f_2(N)$ for every $N \in \mathbb{N}$.

\textbf{2):} Let $\rho$ be an ordering of $B_1$ with $U:=[(B_1,\rho),(B_2,\tau)]$ and $f: \mathbb{N} \to \mathbb{R}_{\ge 0}$ be a decreasing function  with $f(N) \to 0$ as $N \to \infty$. If, for some constants $C_1, C_2 > 0$, we have
\begin{equation} \label{rowordercond}
 C_1 f(N)  \le  \mu( \pi_N U)  \le  C_2 f(N), \qquad \forall N \in \mathbb{N}, 
 \end{equation}
 then $\rho$ is a strongly optimal ordering and $f$ is a representative of the optimal decay rate.
\end{lemma}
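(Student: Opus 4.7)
The plan rests on two structural facts about the matrices $U_1 = [(B_1,\rho_1),(B_2,\tau)]$ and $U_2 = [(B_1,\rho_2),(B_2,\tau)]$. First, directly from the definitions, $\mu(Q_N U) = \sup_{M\ge N} \mu(\pi_M U)$. Second, since $\rho_1$ and $\rho_2$ both enumerate $B_1$ while the columns of $U$ are indexed by the common ordering $\tau$, the two matrices differ only by a row permutation: setting $\sigma := \rho_1^{-1}\circ\rho_2 : \mathbb{N} \to \mathbb{N}$, the $M$th row of $U_2$ equals the $\sigma(M)$th row of $U_1$, so $\mu(\pi_{M} U_2) = \mu(\pi_{\sigma(M)} U_1)$ for every $M$.

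For Part 1, I would fix $N$ and consider the tail $T = \{M : M \ge N\}$. Its image $\sigma(T)$ is the complement in $\mathbb{N}$ of the finite set $\sigma(\{1,\ldots,N-1\})$, which has exactly $N-1$ elements. By pigeonhole $\sigma(T) \cap \{1,\ldots,N\}$ is nonempty, so we can pick $M^\star\in T$ with $m^\star := \sigma(M^\star) \le N$. Then
\[
f_2(N)\ \ge\ \mu(Q_N U_2)\ \ge\ \mu(\pi_{M^\star} U_2)\ =\ \mu(\pi_{m^\star} U_1)\ \ge\ f_1(m^\star)\ \ge\ f_1(N),
\]
the last inequality using monotonicity of $f_1$ and $m^\star \le N$. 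This gives $f_1(N) \le f_2(N)$ pointwise.

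Part 2 is then almost immediate. Strong optimality demands both that $\rho$ be an optimal ordering and that $\mu(\pi_N U) \approx f(N)$; the latter is precisely the hypothesis (\ref{rowordercond}). For optimality, given any alternative ordering $\rho_2$ with associated $U_2$, Part 1 applied with $f_1 = C_1 f$ and $f_2(N) := \mu(Q_N U_2)$ yields $C_1 f(N) \le \mu(Q_N U_2)$. Combined with the upper bound $\mu(Q_N U) = \sup_{M\ge N}\mu(\pi_M U) \le C_2 f(N)$ (from $\mu(\pi_M U)\le C_2 f(M)\le C_2 f(N)$ for $M\ge N$, since $f$ is decreasing), we obtain $\mu(Q_N U) \le (C_2/C_1)\mu(Q_N U_2)$, i.e.\ $\rho \prec \rho_2$. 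Since $\rho_2$ was arbitrary, $\rho$ is optimal, and this same chain shows $\mu(Q_N U) \approx f(N)$, so $f$ represents the optimal decay rate.

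The main (and essentially only) conceptual step is the pigeonhole observation: any infinite tail of $U_2$'s rows, viewed through $\sigma$ in $U_1$'s indexing, must hit $\{1,\ldots,N\}$; everything else is unpacking the definitions of $\mu(Q_N \cdot)$, $\prec$, and strong optimality, together with the column-permutation invariance that justifies fixing a single $\tau$ throughout.
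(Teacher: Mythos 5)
Your proof is correct; note that the paper does not reproduce an argument for this lemma but simply defers to Lemma~2.11 of the cited one-dimensional paper, and your argument --- identifying $U_1$ and $U_2$ as row permutations of one another via $\sigma=\rho_1^{-1}\circ\rho_2$ and then using the pigeonhole observation that $\sigma(\{N,N+1,\dots\})$ must meet $\{1,\dots,N\}$ --- is precisely the natural (and essentially the only) route to Part~1, with Part~2 following by the bookkeeping you describe. The only cosmetic caveat is that Part~2 states $f:\mathbb{N}\to\mathbb{R}_{\ge 0}$ while Part~1 requires $f_1>0$; this mismatch is present in the lemma as stated and is harmless since a decreasing $f$ vanishing at some $N_0$ would force all rows of $U$ beyond $N_0$ to vanish.
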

\begin{proof}
See Lemma 2.11 in \cite{onedimpaper}.
\end{proof}

\begin{definition}[Best ordering]
Let $(B_1,B_2)$ be a basis pair. Then any ordering $\rho: \mathbb{N} \to B_1$ is said to be a `best ordering' if for any ordering $\tau$ of $B_2$ and $U=[(B_1,\rho),(B_2,\tau)]$ we have that the function $g(N):= \mu(\pi_N U)$ is decreasing.
\end{definition}
Notice that any best ordering is also a strongly optimal ordering. We shall need the notion of a best ordering briefly to prove Lemma \ref{characterisationlemma}.
\begin{lemma} \label{bestexistence} 
Suppose that we have a basis pair $(B_1, B_2)$ with two orderings $\rho: \bbN \to B_1$, $\tau: \bbN \to B_2$ of $B_1, B_2$ respectively. If $U=[(B_1,\rho),(B_2,\tau)]$ satisfies
\[ \mu(\pi_N U) \to 0 \quad \text{as} \quad N \to \infty, \]
then a best ordering exists.
\end{lemma}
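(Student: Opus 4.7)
The plan is to observe first that $\mu(\pi_N U)$ depends only on the single basis element $\rho(N)$ and not on the ordering $\tau$. Indeed
\[
\mu(\pi_N U) = \sup_{n \in \bbN} |\langle \tau(n),\rho(N)\rangle|^2 = \sup_{f' \in B_2} |\langle f',\rho(N)\rangle|^2 =: \phi(\rho(N)),
\]
since $\tau$ enumerates $B_2$ bijectively. The map $\phi: B_1 \to [0,\infty)$ is intrinsic to the pair $(B_1,B_2)$, and the definition of a best ordering reduces to asking for a bijection $\rho^*: \bbN \to B_1$ for which $\phi \circ \rho^*$ is non-increasing. This reformulation is the key first step: it severs the problem from $\tau$ completely.

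Next I would translate the hypothesis $\mu(\pi_N U) \to 0$, via bijectivity of $\rho$, into the finite-level-set statement that $\{f \in B_1 : \phi(f) > \epsilon\}$ is finite for every $\epsilon > 0$. Given this, I would build $\rho^*$ by greedy induction: at stage $N$ choose $\rho^*(N)$ to attain $\sup_{f \in R_{N-1}} \phi(f)$, where $R_{N-1} := B_1 \setminus \{\rho^*(1),\ldots,\rho^*(N-1)\}$. Whenever this supremum is positive, the finite-level-set property forces it to be a maximum over a finite non-empty subset and hence actually attained, so the induction never stalls.

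Finally I would verify the two required properties. Non-increase of $\phi \circ \rho^*$ is immediate from the greedy rule. For surjectivity, suppose $f_0 \in B_1$ is never chosen; then $\phi(\rho^*(N)) \ge \phi(f_0)$ for every $N$, and if $\phi(f_0) > 0$ this exhibits infinitely many basis vectors whose $\phi$-value is at least $\phi(f_0)$, contradicting the finite-level-set property (via injectivity of the greedy selection). Any $f_0$ with $\phi(f_0) = 0$ can be appended once the positive part of $B_1$ has been exhausted. The resulting $\rho^*$ then makes $N \mapsto \mu(\pi_N[(B_1,\rho^*),(B_2,\tau)]) = \phi(\rho^*(N))$ non-increasing for every choice of $\tau$, which is precisely the definition of a best ordering.

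The step I expect to be the main obstacle is the justification that the greedy supremum is actually attained at each stage; this is where the hypothesis is used in an essential way, upgrading a mere supremum to a maximum by forcing the ``top'' of every $R_{N-1}$ to live in a finite set. The surjectivity argument and the initial reduction to $\phi$ are then routine, with only a little care required when $\phi$ happens to vanish on part of $B_1$.
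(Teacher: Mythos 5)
Your argument is correct in substance and is the natural one; the paper itself gives no proof here (it defers to Lemma~2.10 of the one-dimensional predecessor paper), and a greedy ``sort by coherence'' construction of exactly the kind you describe is what that reference carries out. Your reduction to the $\tau$-independent quantity $\phi(f)=\sup_{g\in B_2}|\langle g,f\rangle|^2$, the translation of the hypothesis into finiteness of the level sets $\{f\in B_1:\phi(f)>\epsilon\}$, the upgrade of the greedy supremum to a maximum, and the surjectivity argument for elements with $\phi>0$ are all sound.

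The one place you should be more careful is the set $Z=\{f\in B_1:\phi(f)=0\}$. Your fix --- ``append these once the positive part has been exhausted'' --- only works when $P=\{f\in B_1:\phi(f)>0\}$ is finite. If $P$ is infinite \emph{and} $Z$ is non-empty, your greedy rule never selects any element of $Z$ (the supremum over the remaining set is always positive), so $\rho^*$ fails to be a bijection; worse, in that configuration no best ordering can exist at all, since any bijection must place some $f\in Z$ at a finite position $N_0$ after which infinitely many elements of $P$ with strictly positive $\phi$-value remain to be listed, violating monotonicity. So the lemma as literally stated is only true under the implicit assumption that $\phi>0$ on all of $B_1$ (equivalently, no element of $B_1$ is orthogonal to $\overline{\mathrm{span}}\,B_2$), which holds in every example treated in the paper but is not forced by the stated hypotheses. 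You should either add that hypothesis explicitly or note that your construction handles every case in which a best ordering exists.
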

\begin{proof}
See Lemma 2.10 in \cite{onedimpaper}.
\end{proof}

Throughout this paper we would like to define an ordering according to a particular property of the basis but this property may not be enough to specify a unique ordering. To deal with this issue we introduce the notion of consistency:

\begin{definition}[Consistent ordering]\label{consistent_ordering}
Let $F: S \to \mathbb{R}$ where $S$ is a set. We say that an ordering $\rho: \mathbb{N} \to S$ is `consistent with F' if
\[ F(f)  <  F(g)  \quad \Rightarrow  \quad \rho^{-1}(f)  <  \rho^{-1}(g), \qquad \forall f,g \in S. \]
\end{definition}

The notion of consistency becomes important if we want to convert bounds on the coherence into optimal orderings:

\begin{definition} \label{Characterisation}
\begin{itemize}
\item[1.)]
 Suppose $F:S \to \bbR_{> 0} $ satisfies $| \{ x \in S :  1/F(x) \le K \}| < \infty$  for all $K>0$, $\sigma: \bbN \to S$ is consistent with $1/F$ and $F(\sigma(N)) \to 0$ as $N \to \infty$. Then any  decreasing function $f: \bbN \to \bbR_{>0}$  such that $f \approx F \circ \sigma$ is said to `represent the fastest decay of $F$'.
\item[2.)]
Suppose $(B_1,B_2)$ is a basis pair and $\iota: S \to B_1$ a bijection. If there exists a function $F:S \to \bbR_{>0}$ and a constant $C_1>0$ such that 
\be{ \label{dominate}
\sup_{g \in B_2} | \langle \iota(s) , g \rangle |^2 \le C_1 \cdot F(s), \quad \forall s \in S,
}
then $F$ is said to `dominate the optimal decay of $(B_1,B_2)$'. If the inequality is reversed we say $F$ is `dominated by the optimal decay of $(B_1,B_2)$'. Furthermore, if there is a constant $C_2>0$ such that
\be{ \label{characterise}
C_2 \cdot F(s) \le \sup_{g \in B_2} | \langle \iota(s) , g \rangle |^2 \le C_1 \cdot F(s), \quad \forall s \in S,
}
then $F$ is said to `characterize the optimal decay of $(B_1,B_2)$'.
\end{itemize}
\end{definition}

\begin{lemma} \label{characterisationlemma}
1):  \ \ Suppose $f$ is a representative of the optimal decay rate for the basis pair $(B_1,B_2)$, $\iota: S \to B_1$ is a bijection, $F: S \to \bbR$ dominates the optimal decay of $(B_1,B_2)$, $\sigma : \bbN \to S$ is consistent with $1/F$ and $U=[(B_1, \iota \circ \sigma), (B_2, \tau)]$ . Then if $g$ represents the fastest decay of $F$ then $f, \mu(\pi_{\cdot}U) \lesssim g$. 

2):  \ \ If $F$ is instead is dominated by the optimal decay of $(B_1,B_2)$ then $f, \mu(\pi_{\cdot}U) \gtrsim g$.

3):  \ \ If $F$ now characterizes the optimal decay of $(B_1,B_2)$ then $f, \mu(\pi_{\cdot}U) \approx g$ and therefore $\rho$ is a strongly optimal ordering for the basis pair $(B_1,B_2)$ if and only if $F(\iota^{-1} \circ \rho(\cdot)) \approx g $.
\end{lemma}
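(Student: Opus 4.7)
The plan is to reduce everything to the identity
\[
\mu(\pi_N U) \;=\; \sup_{g \in B_2} \bigl|\langle \iota(\sigma(N)),\, g \rangle\bigr|^2,
\]
which is immediate from the definitions of $U$ and the rank-one projection $\pi_N$. Two preliminary observations then do the work: first, since $\sigma$ is consistent with $1/F$, the composition $F \circ \sigma : \mathbb{N} \to \mathbb{R}_{>0}$ is (weakly) decreasing, and hence $g \approx F \circ \sigma$ by Definition \ref{Characterisation}; second, the characterization bounds of Definition \ref{Characterisation} translate pointwise into estimates on $\mu(\pi_N U)$, which Lemma \ref{StrongOptimalEquivalence} will transfer to $\mu(Q_\cdot U)$ and to the optimal decay $f$.

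For Part 1, the domination inequality \eqref{dominate} gives $\mu(\pi_N U) \le C_1 F(\sigma(N))$, and monotonicity of $F \circ \sigma$ yields
\[
\mu(Q_N U) \;=\; \sup_{m \ge N} \mu(\pi_m U) \;\le\; C_1 F(\sigma(N)),
\]
so both $\mu(\pi_\cdot U)$ and $\mu(Q_\cdot U)$ are $\lesssim g$. The bound $f \lesssim g$ then comes straight from Definition \ref{fasterdecay}: for any optimal ordering $\rho^*$ with matrix $U^*$, we have $\mu(Q_N U^*) \le C \mu(Q_N U)$, so $f \approx \mu(Q_\cdot U^*) \lesssim g$.

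For Part 2, the reverse inequality in \eqref{dominate} gives $\mu(\pi_N U) \ge C_2 F(\sigma(N))$, so $\mu(\pi_\cdot U) \gtrsim g$. To deduce $f \gtrsim g$, I apply Lemma \ref{StrongOptimalEquivalence}(1) with $f_1 := C_2 (F \circ \sigma)$ and with $U_2 := U^*$ the matrix of an optimal ordering, noting $\mu(Q_\cdot U^*) \le C'' f$; the conclusion $f_1 \le C'' f$ gives $F \circ \sigma \lesssim f$, hence $g \lesssim f$. Combining Parts 1 and 2 yields Part 3's $f \approx \mu(\pi_\cdot U) \approx g$. For the iff statement, for any ordering $\rho$ of $B_1$ with $V := [(B_1,\rho),(B_2,\tau)]$ the characterization \eqref{characterise} gives $\mu(\pi_N V) \approx F(\iota^{-1}(\rho(N)))$. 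If $F \circ \iota^{-1} \circ \rho \approx g$, then $\mu(\pi_\cdot V) \approx g$ with $g$ decreasing, so Lemma \ref{StrongOptimalEquivalence}(2) gives strong optimality of $\rho$. Conversely, strong optimality forces $\mu(\pi_\cdot V) \approx f \approx g$, which together with $\mu(\pi_N V) \approx F(\iota^{-1}(\rho(N)))$ yields $F \circ \iota^{-1} \circ \rho \approx g$.

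The main obstacle, and where I would take the most care, is bookkeeping the direction of inequalities and verifying the monotonicity hypotheses required by Lemma \ref{StrongOptimalEquivalence}: in Part 1 the monotonicity of $F \circ \sigma$ is what lets me pass from a row-coherence bound to a $Q_N$-coherence bound; in Part 2 the same monotonicity is what makes $f_1 = C_2(F \circ \sigma)$ a legal choice in Lemma \ref{StrongOptimalEquivalence}(1); and in the converse of Part 3 one must notice that although $F \circ \iota^{-1} \circ \rho$ itself need not be monotone for arbitrary $\rho$, its $\approx$-equivalence with the decreasing $g$ is exactly the hypothesis required to invoke Lemma \ref{StrongOptimalEquivalence}(2).
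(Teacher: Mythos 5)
Your proof is correct and follows essentially the same route as the paper: translate the domination/characterization inequalities into two-sided bounds on $\mu(\pi_N U)$ via $F\circ\sigma \approx g$, and transfer these to $f$ and to strong optimality through Lemma \ref{StrongOptimalEquivalence}. The only deviation is in Part 1, where you compare against an optimal ordering directly through the relation $\prec$ of Definition \ref{fasterdecay}, whereas the paper passes through a best ordering (via Lemma \ref{bestexistence}, after reducing to the case $g(N)\to 0$) in order to apply Lemma \ref{StrongOptimalEquivalence}(1); both work, and yours is marginally shorter.
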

\begin{proof} 1.) \ \ We may assume, without loss of generality, that $g(N) \to 0$ as $N \to \infty$ else there is nothing to prove as $f(N), \mu(\pi_N U)$ are bounded functions of $N$. Therefore, a best ordering exists by Lemma \ref{bestexistence}. (\ref{dominate}) becomes (for $C_1'>0$ a constant),
\[
\mu(\pi_N U) \le C_1 \cdot F(\sigma(N)) \le C_1' \cdot g(N), \quad \forall N \in \bbN.
\]
Since $g$ is decreasing we have $\mu(Q_N U) \le C_1'\cdot g(N)$ and therefore we can apply part 1) of Lemma \ref{StrongOptimalEquivalence} to $f_1=f$  and $f_2=g$ (using a best ordering as $\rho_1$ and $\rho_2 = \iota \circ \sigma$) to deduce that $f \lesssim g$.

2.) \ \ (\ref{dominate}) reversed becomes
\[
\mu(\pi_N U) \ge C \cdot F(\sigma(N)) \ge C_1' \cdot g(N), \quad \forall N \in \bbN.
\]
Therefore we can apply part 1) of Lemma \ref{StrongOptimalEquivalence} to $f_1=g$ and $f_2=f$ (using $\rho_1 = \iota \circ \sigma$, $\rho_2$ an optimal ordering) to deduce that $f \lesssim g$.

3.) \ \ Notice that if 
$F$ characterizes the optimal decay of $(B_1,B_2)$ then (\ref{characterise}) becomes
\[
C_2' \cdot g(N) \le C_2 \cdot F(\sigma(N)) \le \mu(\pi_N U)  \le C_1 \cdot F(\sigma(N)) \le C'_1 \cdot g(N), \quad \forall N \in \bbN,
\]
and we can then apply part 2) of Lemma  \ref{StrongOptimalEquivalence} to show $f, \mu(\pi_{\cdot}U) \approx g$. If we let $U':=[(B_1, \rho), (B_2, \tau)]$ then (\ref{characterise}) becomes
\[
C_2 \cdot F(\iota^{-1} \circ \rho(N)) \le \mu(\pi_N U')  \le C_1 \cdot F(\iota^{-1} \circ \rho(N)) , \quad \forall N \in \bbN,
\]
and the result follows from Definition \ref{strongoptimality}.
\end{proof}

Before moving on, we recall from \cite{onedimpaper} some results on the fastest optimal decay rate for a basis pair:

\begin{theorem} \label{isometrydecaylowerbound}
Let $U \in \cB(l^2(\bbN))$ be an isometry.  Then
$
\sum_{N} \mu(Q_N U)
$
diverges. 
\end{theorem}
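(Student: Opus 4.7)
The plan is to argue by contradiction: assume that $S := \sum_N \mu(Q_N U) < \infty$ and derive a contradiction by showing that the tail $S_N := \sum_{k \geq N} \mu(Q_k U)$, which must tend to $0$ under the assumption, is in fact always bounded below by $1$.

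The mechanism relies on double counting the Hilbert--Schmidt mass of the first $M$ columns of $U$ in two ways. First, since $U$ is an isometry, $\|U e_j\|^2 = 1$ for every $j$, so the total mass $\sum_{j=1}^M \sum_i |U_{ij}|^2$ equals exactly $M$. Second, because $U^\ast U = I$, the operator $UU^\ast$ is a projection and hence the row norms satisfy $r_i := \sum_j |U_{ij}|^2 = (UU^\ast)_{ii} \leq 1$. This gives an upper bound of $N-1$ on the mass sitting in the first $N-1$ rows of the first $M$ columns, so at least $M - (N-1)$ of the mass must lie in rows $i \geq N$.

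On the other hand, each entry with $i \geq N$ satisfies $|U_{ij}|^2 \leq \mu(\pi_i U) \leq \mu(Q_i U)$, so the $M$-truncated row sum is bounded by $M \mu(Q_i U)$. Summing over $i \geq N$ yields the upper estimate $M \cdot S_N$ for the mass in rows $\geq N$ of the first $M$ columns. Combining the two bounds gives
\[
M - (N-1) \leq M \cdot S_N,
\]
and letting $M \to \infty$ with $N$ fixed forces $S_N \geq 1$. This contradicts $S_N \to 0$, which would follow from $\sum_N \mu(Q_N U) < \infty$.

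The main obstacle is really just finding the correct two-sided balance between the column orthonormality (which forces mass to pile up in the tail rows, since the first $N-1$ rows can absorb at most $N-1$ units of mass) and the uniform row bound provided by the coherence (which forces the tail-row mass across $M$ columns to be small when the $\mu(Q_i U)$ decay fast). Once one spots that column orthonormality plus $r_i \leq 1$ gives a lower bound growing like $M$, while the coherence upper bound grows only like $M S_N$, the proof essentially writes itself by letting $M \to \infty$ before $N \to \infty$.
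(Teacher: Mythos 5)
Your argument is correct: column orthonormality gives total mass $M$ in the first $M$ columns, the bound $(UU^\ast)_{ii}\le 1$ caps the mass the first $N-1$ rows can absorb, and the coherence tail $S_N$ caps the mass in the remaining rows, so letting $M\to\infty$ forces $S_N\ge 1$ and contradicts summability. The paper itself defers to Theorem 2.14 of the one-dimensional companion paper, whose proof rests on essentially this same double-counting of Hilbert--Schmidt mass, so your proposal matches the intended approach.
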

\begin{proof}
See Theorem 2.14 in \cite{onedimpaper}.
\end{proof}
\begin{corollary}
Let $U \in \cB(l^2(\bbN))$ be any isometry.  Then there does not exist an $\epsilon > 0$ such that
$$
\mu(Q_N U)  = \mathcal{O}(N^{-1-\epsilon}) , \qquad \ N \to \infty.
$$
\end{corollary}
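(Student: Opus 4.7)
The corollary is a direct consequence of Theorem \ref{isometrydecaylowerbound}, so the plan is to argue by contradiction using the divergence of $\sum_N \mu(Q_N U)$.

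Suppose for contradiction that some $\epsilon > 0$ exists with $\mu(Q_N U) = \mathcal{O}(N^{-1-\epsilon})$. Unpacking the big-$\mathcal{O}$ notation, this means there is a constant $C > 0$ and an index $N_0 \in \bbN$ so that $\mu(Q_N U) \le C N^{-1-\epsilon}$ for all $N \ge N_0$. The plan is then to split the tail of the series $\sum_N \mu(Q_N U)$ at $N_0$, bound the finite initial part by a constant, and apply the hypothesized decay to the tail:
\[
\sum_{N=1}^{\infty} \mu(Q_N U) \;\le\; \sum_{N=1}^{N_0-1} \mu(Q_N U) \;+\; C \sum_{N=N_0}^{\infty} N^{-1-\epsilon}.
\]
The first sum is finite as a sum of finitely many bounded terms (recall that $\mu(Q_N U) \le 1$ for any isometry, since each entry of $U$ has modulus at most $1$), and the second converges by the $p$-series test since $1 + \epsilon > 1$. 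Hence $\sum_N \mu(Q_N U) < \infty$, contradicting Theorem \ref{isometrydecaylowerbound}.

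There is essentially no obstacle here; the only thing to be mildly careful about is verifying that $\mu(Q_N U)$ is indeed a bounded quantity, which is immediate from the fact that every column of an isometry on $l^2(\bbN)$ has unit norm (so every entry satisfies $|U_{ij}| \le 1$, hence $\mu(Q_N U) \le 1$). No structural results about the ordering of the bases or the form of $U$ are needed, only the divergence statement from the preceding theorem.
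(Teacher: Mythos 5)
Your argument is correct and is exactly the intended one: the paper states this corollary as an immediate consequence of Theorem \ref{isometrydecaylowerbound}, since the hypothesized decay would make $\sum_N \mu(Q_N U)$ converge by comparison with a convergent $p$-series, contradicting the divergence asserted there. The observation that $\mu(Q_N U)\le 1$ for an isometry correctly handles the finitely many initial terms, so there is no gap.
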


It turns out that Theorem \ref{isometrydecaylowerbound} cannot be improved without imposing additional conditions on $U$:

\begin{lemma} \label{incoherencecounter}
Let $f,g: \bbN \to \bbR$ be any two strictly positive decreasing functions and suppose that $\sum_N f(N)$ diverges. Then there exists $U \in \cB(l^2(\bbN))$ an isometry with 
\be{ \label{strongerthanoptimal}
\mu(Q_N U) \le f(N), \quad \mu(U Q_N) \le g(N), \qquad N \in \bbN .
}
\end{lemma}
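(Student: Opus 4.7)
The approach is to construct the isometry $U$ explicitly, with the divergence $\sum_N f(N) = \infty$ providing exactly the budget that lets us spread the mass of each column of $U$ widely enough to meet both coherence bounds simultaneously.

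I would first rephrase the two bounds as a single entrywise condition. Since $f$ and $g$ are decreasing, $\mu(Q_N U) \le f(N)$ for every $N$ is equivalent to $|U_{ij}|^2 \le f(i)$ for every $i,j \in \bbN$ (take $N = i$ for one direction and use $\sup_{i \ge N} f(i) = f(N)$ for the other), and likewise $\mu(U Q_N) \le g(N)$ is equivalent to $|U_{ij}|^2 \le g(j)$. Thus it suffices to produce an isometry with $|U_{ij}|^2 \le \min(f(i),g(j))$ for all $i,j$. The natural attempt is block-diagonal: partition $\bbN = \bigsqcup_{k \ge 1} I_k$ with $I_k = \{M_{k-1}+1,\ldots,M_k\}$ and $n_k := M_k - M_{k-1}$, and take $U$ to be block-diagonal with the $k$-th diagonal block equal to the scaled $n_k \times n_k$ DFT matrix, so that every entry in block $k$ has modulus $1/\sqrt{n_k}$ and $U$ is an isometry as a direct sum of unitaries.

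Choosing the partition is the main work. The entrywise bound then reduces within block $k$ to the scalar requirement $n_k \ge 1/\min(f(M_k),g(M_k))$, and this is precisely where $\sum_N f(N) = \infty$ is invoked. After replacing $f,g$ by $f/C, g/C$ for a fixed constant $C > 1$ (which preserves the divergence hypothesis), I would define $M_k$ recursively as the least integer past $M_{k-1}$ for which $\sum_{j=M_{k-1}+1}^{M_k} f(j) \ge 1$; the tail-divergence ensures such an $M_k$ always exists, and this rule forces $n_k \ge 1/f(M_{k-1}+1)$. Monotonicity of $f$ then promotes the inequality to the desired pointwise bound throughout block $k$, with the absorbed constant $C$ accommodating the gap between $f(M_{k-1}+1)$ and $f(M_k)$; the $g$-constraint is handled analogously by passing to a sufficiently refined partition.

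The principal obstacle is the self-referential coupling $M_k = M_{k-1}+n_k$: the sharp per-block inequality $1/n_k \le f(M_k)$ need not admit any solution, reflecting the structural fact that every row of a unitary of dimension $n$ contains an entry of modulus at least $1/\sqrt{n}$, so any block-diagonal unitary automatically satisfies $\mu(Q_N U) \gtrsim 1/N$ asymptotically. For $f$ at or slower than the critical rate $1/N$ the constant-absorption trick above closes this gap; if instead $f$ decays faster than $1/N$ (still permitted, since $\sum f = \infty$ only rules out summable $f$) one must abandon the block-diagonal form in favour of a hierarchical, wavelet-style construction in which the columns are organised into levels of geometrically growing supports and geometrically shrinking entry magnitudes, with orthonormality ensured by Hadamard-type sign patterns across the supports. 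In either regime, once the columns have been specified the two coherence bounds and the isometry property of $U$ follow by direct verification.
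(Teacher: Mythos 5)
Your opening reduction is correct and is the right first move: since $f$ and $g$ are decreasing, the two bounds in (\ref{strongerthanoptimal}) are equivalent to the single entrywise condition $|U_{ij}|^2 \le \min(f(i),g(j))$. The gap is in everything after that. The square block-diagonal DFT construction forces $|U_{M_k,j}|^2 = 1/n_k$ with $n_k \le M_k$, so the row bound at the block end requires $n_k \ge 1/f(M_k) $; but the hypothesis $\sum_N f(N)=\infty$ permits, say, $f(N)=1/(N\log(N+1))$, for which $1/f(M_k) > M_k \ge n_k$ and no choice of square blocks can ever work. Your ``constant-absorption'' repair does not close this: the greedy rule only yields $n_k \ge C/f(M_{k-1}+1)$, whereas you need $n_k \ge 1/f(M_k)$, and the ratio $f(M_{k-1}+1)/f(M_k)$ is unbounded for a general decreasing $f$ (the lemma allows no constant slack, and you cannot control where the large drops of $f$ land relative to the greedily chosen block ends). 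Handling the $g$-constraint by ``refining'' the partition is also backwards, since refining shrinks $n_k$ and makes both constraints harder. You correctly diagnose that the block form must be abandoned when $f$ decays faster than $1/N$, but the ``hierarchical, wavelet-style construction'' you invoke at that point is never actually specified, so the argument is incomplete in precisely the regime where the statement is delicate.

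The missing idea is that $U$ only needs orthonormal \emph{columns}, not constant-modulus ones. Give the columns pairwise disjoint row supports $S_j=\{b_{j-1}+1,\dots,b_j\}$ (with $b_0=0$) and set $|U_{ij}|^2=\min(f(i),g(j))$ for $i\in S_j$, except that the last entry $|U_{b_j,j}|^2$ is reduced so that the column has norm exactly $1$; here $b_j$ is chosen as the least index with $\sum_{i=b_{j-1}+1}^{b_j}\min(f(i),g(j))\ge 1$. Such a $b_j$ exists because for fixed $j$ the tail sum $\sum_{i>b_{j-1}}\min(f(i),g(j))$ diverges: either $f(i)\ge g(j)>0$ for infinitely many $i$, or a full tail of the divergent series $\sum_i f(i)$ appears. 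Minimality of $b_j$ guarantees the adjusted last entry still satisfies the entrywise bound, disjointness of the supports makes the columns orthonormal, hence $U$ an isometry, and the entrywise bound gives (\ref{strongerthanoptimal}) by your own reduction. This construction is essentially that of Lemma 2.16 in \cite{onedimpaper}, which the present paper cites for this result; it sidesteps the self-referential block-size equation entirely because the entries within a column are allowed to track $\min(f(i),g(j))$ pointwise rather than being forced to a common value $1/\sqrt{n_k}$.
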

\begin{proof}
See Lemma 2.16 in \cite{onedimpaper}.
\end{proof}

If we restrict our decay function to be a power law, i.e. $f(N):= CN^{- \alpha}$ for some constants $\alpha, C >0$ then the largest possible value of $\alpha>0$ such that (\ref{strongerthanoptimal}) holds for an isometry $U$ is $\alpha=1$. This gives us a notion of the fastest optimal decay rate as a power of $N$ over all pairs of bases where the span of $B_2$ lies in the span of $B_1$.

\section{One-dimensional Bases and Incoherence Results} \label{onedim}

Before we begin our review of the one-dimensional cases by quickly going the one-dimensional bases and orderings that we shall be working with to construct multi-dimensional bases and orderings in Section \ref{tensors}.

\subsection{Fourier Basis} 
We recall the one-dimensional Fourier basis $B_\rf(\epsilon)=(\chi_k)_{k \in \bbZ}$ from Definition \ref{fourier}.

\begin{definition}[Standard ordering]\label{standard_ordering}
We define $F_\rf:B_\rf \to \mathbb{N} \cup \{0\}$  by $F_\rf(\chi_k)=|k|$ and say that an ordering $\rho: \mathbb{N} \to B_\rf$ is a `standard ordering' if it is consistent with $F_\rf$ (recall Definition \ref{consistent_ordering}). 
\end{definition}

\subsection{Standard Wavelet Basis} \label{waveletbasis}

Take a Daubechies wavelet $\psi$ and corresponding scaling function $\phi$ in $L^2(\mathbb{R})$ with \[\text{Supp} (\phi) = \text{Supp} (\psi) = [-p+1,p]. \]  
We write
\[
\begin{aligned}
 \phi_{j,k}(x) =2^{j/2} \phi(2^j x-k) , \qquad \psi_{j,k} (x)  = 2^{j/2} \psi(2^j x - k), 
 \\ V_j := \overline{\text{Span} \{ \phi_{j,k}: k \in \bbZ \}}, \quad W_j := \overline{\text{Span} \{ \psi_{j,k}: k \in \bbZ \}}.
 \end{aligned}
 \]
With the above notation, $(V_j)_{j \in \mathbb{Z}}$ is the multiresolution analysis for $\phi$, and therefore
\[V_j \subset V_{j+1} , \qquad V_{j+1} = V_j \oplus W_j, \qquad L^2(\bbR) = \overline{\bigcup_{j \in \bbZ} V_j},  \]
where $W_j$ here is the orthogonal complement of $V_j$ in $V_{j+1}$. For a fixed $J \in \bbN$ we define the set\footnote{`$\rw$' here stands for `wavelet'.}
\begin{align} \label{waveletbasisdefine}
B_\rw := \left\{ \begin{array}{cc}   &  \mathrm{Supp}(\phi_{J,k}) \cap (-1,1) \neq \emptyset , \\  \phi_{J,k} , \ \psi_{j,k} :  &  \mathrm{Supp}(\psi_{j,k}) \cap (-1,1)  \neq \emptyset, \\ &  j \in \mathbb{N}, j \ge J , \ k \in \mathbb{Z}  \end{array} \right \} , 
\end{align}
Let $\rho$ be an ordering of $B_\rw$. Notice that since $L^2(\mathbb{R})= \overline{ V_J \oplus \bigoplus^{\infty}_{j=J} W_j}$ for all 
$f \in L^2(\mathbb{R})$ with $\mathrm{supp}(f) \subseteq [-1,1]$ we have
\[ f  =  \sum_{n=1}^\infty c_{n} \rho(n) \quad \text{for some} \quad (c_n)_{n \in \bbN} \in \ell^2(\bbN) .\] 

\begin{definition} [Leveled ordering (standard wavelets)]\label{leveled}
Define $F_\rw:B_\rw \to \mathbb{R}$ by 
\[ F_\rw( f) \ = \ \begin{cases} 
\ j, \  & \mbox{if }  f \in W_j\\
\ -1,  \ & \mbox{if }f \in V_J 
\end{cases} ,  \]
and say that any ordering $\tau: \mathbb{N} \to B_\rw$ is a `leveled ordering' if it is consistent with $F_\rw$.
\end{definition}
 Notice that $F_\rw(\psi_{j,k})=j$. We use the name ``leveled'' here since requiring an ordering to be leveled means that you can order however you like within the individual wavelet levels themselves, as long as you correctly order the sequence of wavelet levels according to scale.
 
Suppose that $U=[(B_\rf(\epsilon), \rho), (B_\rw,\tau)]$ for orderings $\rho, \tau$. If we require $U$ to be an isometry we must impose the constraint $(2\epsilon)^{-1} \ge 1+2^{-J+1}(p-1)$ otherwise the elements in $B_\rw$ do not lie in the span of $B_\rf(\epsilon)$. For convenience we rewrite this as $\epsilon \in I_{J,p}$ where 
\[I_{J,p}:=(0,(2+2^{-J+2}(p-1))^{-1}]. \]

\subsection{Boundary Wavelet Basis} \label{BoundaryWavelets}

We now look at an alternative way of decomposing a function $f \in L^2([-1,1])$ in terms of a wavelet basis, which involves using boundary wavelets  \cite[Section 7.5.3]{dDwav}. The basis functions all have support contained within $[-1,1]$, while still spanning $L^2[-1,1]$. Furthermore, the new multiresolution analysis retains the ability to reconstruct polynomials of order up to $p-1$ from the corresponding original multiresolution analysis. We shall not go into great detail here but we will outline the construction;  we take, along with a Daubechies wavelet $\psi$ and corresponding scaling function $ \phi$ with $ \mathrm{Supp} (\psi) = \mathrm{Supp} (\phi)=[-p+1,p]$, boundary scaling functions and wavelets (using the same notation as in \cite{dDwav} except that we use $[-1,1]$ instead of $[0,1]$ as our reconstruction interval)

\[ \phi^{\text{left}}_n, \ \phi^{\text{right}}_n, \ \psi^{\text{left}}_n , \ \psi^{\text{right}}_n , \qquad  n =0,\cdots,p-1 .\]
Like in the standard wavelet case we shift and scale these functions,
\[ \phi^{\text{left}}_{j,n}(x) = 2^{j/2} \phi^{\text{left}}_{n}(2^j (x+1)), 
\qquad \phi^{\text{right}}_{j,n}(x)= 2^{j/2} \phi^{\text{right}}_{n}(2^j (x-1)). \]
We are then able to construct nested spaces ,
$ (V^{\text{int}}_j)_{j \ge J}$, $ (W^{\text{int}}_j)_{j \ge J}$ for a fixed base level $J \ge \lceil \log_2 (p) \rceil $, such that \mbox{$L^2([-1,1])=\overline{ \bigoplus^{\infty}_{j=J} V^{\text{int}}_j}$}, $V^{\text{int}}_{j+1}=V^{\text{int}}_j \oplus W^{\text{int}}_j$ with $W^\text{int}_j$ the orthogonal complement of $V^\text{int}_j$ in $V^\text{int}_{j+1}$ by defining
\begin{equation*}
 V^{\text{int}}_j = \overline{ \text{Span} 
\left \{ \begin{aligned}
 \phi^{\text{left}}_{j,n} & , \phi^{\text{right}}_{j,n}  \\ & \phi_{j,k} 
 \end{aligned} 
 :  
\begin{aligned}
 & n =0 , \cdots , p-1 \ \\  & k \in \mathbb{Z} \ s.t. \ \mathrm{Supp}(  \phi_{j,k} ) \subset (-1,1)
 \end{aligned} 
 \right \} } ,
 \end{equation*}
 
 \begin{equation*} 
 W^{\text{int}}_j = \overline{ \text{Span} 
\left \{ \begin{aligned}
 \psi^{\text{left}}_{j,n} & , \psi^{\text{right}}_{j,n}  \\ & \psi_{j,k} 
 \end{aligned} 
 : 
\begin{aligned}
 & n =0 , \cdots , p-1 \ \\  & k \in \mathbb{Z} \ s.t. \ \mathrm{Supp}(  \psi_{j,k} ) \subset (-1,1)
 \end{aligned} 
 \right \} } .
 \end{equation*}
 
We then take the spanning elements of $V^{ \text{int}}_J$ and the spanning elements of $W^{\text{int}}_j$ for every $j \ge J$ to form the basis $B_{\rb \rw}$ ($\rb \rw$ for 'boundary wavelets').
\begin{definition}[Leveled ordering (boundary wavelets)]
Define $F_w: B_{\rb \rw} \to \mathbb{R}$ by the formula
\[ F_{\rb \rw}( f) \ = \ \begin{cases} 
\ j, \  & \mbox{if }  f \in W^{\text{int}}_j\\
\ -1,  \ & \mbox{if }f \in V^{\text{int}}_J 
\end{cases}.  \]
Then we say that an ordering $\tau: \mathbb{N} \to B_{\rb \rw}$ of this basis is a `leveled ordering' if it is consistent with $F_{\rb \rw}$.
\end{definition}

 \subsection{Legendre Polynomial Basis} \label{polynomialbasis}
If $(p_n)_{n \in \mathbb{N}}$ denotes the standard Legendre polynomials on $[-1,1]$ (so $p_n(1)=1$ and $p_1(x)=1$ for $x \in [-1,1]$) then the $L^2$-normalised Legendre polynomials are defined by $\tilde{p}_n=\sqrt{n-1/2} \cdot p_n$ and we write $B_\rp := (\tilde{p}_n )_{n=1}^\infty$  (the $\rp$ here stands for ``polynomial'' ). $B_\rp$ is already ordered; call this the \emph{natural ordering} . 

\subsection{Incoherence Results for One-dimensional Bases} \label{1DResults}

Next we recall the one-dimensional incoherence results proved in \cite{onedimpaper}, which shall be used to prove the corresponding multi-dimensional tensor results in Section \ref{tensors}:

\begin{theorem} \label{FourierWaveletResults}
Let $\rho$ be a standard ordering of $B_\rf(\epsilon)$ with $\epsilon \in I_{J,p}$, $\tau$ a leveled ordering of $B_\rw$ and $U=[(B_\rf(\epsilon),\rho),(B_\rw,\tau)]$. Then we have, for some constants $C_1, C_2>0$ the decay
\be{ \label{FourierWaveletOptimalBounds}  
\frac{C_1}{N} \le \mu(\pi_N U), \ \mu(U \pi_N) \le \frac{C_2}{N}, \qquad \forall N \in \mathbb{N} ,
}
The same conclusions also hold if the basis $B_\rw$ is replaced by $B_{\rb \rw}$ and the condition $\epsilon \in I_{J,p}$ by $\epsilon \in (0,1/2]$.
\end{theorem}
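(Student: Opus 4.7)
The plan is to verify the two-sided estimates in (\ref{FourierWaveletOptimalBounds}) through a single Fourier-analytic computation of the inner products. The core identity is
\[ |\langle \chi_m, \psi_{j,k}\rangle|^2 = \epsilon \cdot 2^{-j} \cdot |\hat{\psi}(\epsilon m/2^j)|^2, \]
obtained by the change of variables $y = 2^j x - k$ in the defining integral (and with an analogous identity for $\phi_{J,k}$ in which $\hat\psi, j$ are replaced by $\hat\phi, J$). I would couple this identity with three elementary facts about Daubechies wavelets: $\|\hat{\psi}\|_{L^\infty}<\infty$, a polynomial decay bound $|\hat{\psi}(\xi)| \lesssim (1+|\xi|)^{-s}$ for some $s > 1/2$ (which holds for Daubechies families of sufficient regularity), and the existence of some $\xi_0 \neq 0$ with $\hat{\psi}(\xi_0) \neq 0$.

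For the upper bound on $\mu(U\pi_N)$, note that $\tau$ being leveled together with the count $|B_\rw \cap W_j| = \Theta(2^j)$ forces $\tau(N) \in W_j$ for some $j$ with $2^j \asymp N$ (the block $V_J$ is finite and only affects constants for small $N$). The identity then immediately yields $\sup_m |\langle \chi_m, \psi_{j,k}\rangle|^2 \le \epsilon \cdot 2^{-j} \cdot \|\hat\psi\|_\infty^2 \lesssim 1/N$. A matching lower bound follows by selecting an integer $m$ with $\epsilon m/2^j \approx \xi_0$, yielding $|\langle \chi_m, \psi_{j,k}\rangle|^2 \gtrsim 2^{-j} \asymp 1/N$; the finite list of small-$N$ exceptions is absorbed using the fact that $U$ is an isometry, so $\mu(\pi_N U) > 0$ for every $N$.

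For $\mu(\pi_N U)$ the argument is dual. Since $\rho$ is standard we have $\rho(N) = \chi_m$ with $|m| \asymp N$, and the $k$-dependence of $|\langle \chi_m, \psi_{j,k}\rangle|^2$ is just a phase and so contributes nothing. Writing $t = \epsilon m / 2^j$, one gets
\[ \sup_{j \ge J,\, k} |\langle \chi_m, \psi_{j,k}\rangle|^2 \;=\; \epsilon \sup_{j \ge J} 2^{-j}|\hat{\psi}(\epsilon m/2^j)|^2 \;=\; \frac{1}{|m|}\sup_{j \ge J} t\,|\hat{\psi}(t)|^2. \]
The polynomial decay of $\hat\psi$ keeps $t\,|\hat\psi(t)|^2$ uniformly bounded, which is the upper bound, while evaluating at $t = \xi_0$ (i.e.\ the $j$ with $2^j \asymp \epsilon|m|/\xi_0$) supplies the matching lower bound $\gtrsim 1/N$. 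The scaling-function contributions at the fixed level $J$ enter only through a finite block and are absorbed into constants.

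The main obstacle I anticipate is verifying that the polynomial decay exponent for $\hat{\psi}$ can be taken large enough (delicate for Daubechies wavelets with small $p$) and extending the argument to the boundary basis $B_{\rb\rw}$. For the latter, each boundary scaling function or wavelet is, by construction, a finite linear combination of translates of $\phi$ or $\psi$ with coefficients that are uniformly bounded and independent of $j$, so its Fourier transform inherits both the $L^\infty$ bound and the polynomial decay of $\hat{\phi}$ and $\hat{\psi}$ up to absolute constants; the new hypothesis $\epsilon \le 1/2$ replaces $\epsilon \in I_{J,p}$ precisely because the boundary basis is supported inside $[-1,1]$ from the outset.
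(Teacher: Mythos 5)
First, note that this paper contains no proof of Theorem \ref{FourierWaveletResults}: it is recalled verbatim from \cite{onedimpaper} (Section \ref{1DResults} opens by saying these one-dimensional results are proved there), so the only basis for comparison is that reference. Your route is in fact the standard one and matches the ingredients this paper does import: your core identity follows from (\ref{1Dequalities}), the decay $|\mathcal{F}\phi(\omega)|,|\mathcal{F}\psi(\omega)|\le K/|\omega|$ is Lemma \ref{FTdecayLemma}, and the count $2^j\asymp N$ for a leveled ordering is Lemma \ref{levelgrowth} with $d=1$. Your upper bounds are sound, with two small caveats: the $V_J$ columns contribute to the supremum defining \emph{every} row coherence, so the row upper bound needs $|\mathcal{F}\phi(\omega)|\lesssim|\omega|^{-1}$ explicitly rather than ``a finite block absorbed into constants''; and your worry about securing an exponent $s>1/2$ is unfounded, since $s=1$ is available for every Daubechies wavelet, Haar included.

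The genuine gap is in the lower bound for $\mu(\pi_N U)$. From a fixed row $\chi_m$ the reachable arguments are $t=\epsilon m2^{-j}$, $j\ge J$ --- a geometric sequence of ratio $1/2$ --- so you cannot ``evaluate at $t=\xi_0$''; you can only guarantee that some term of the sequence lands in a prescribed interval of the form $[a,2a]$. A single point $\xi_0$ with $\mathcal{F}\psi(\xi_0)\ne 0$ therefore does not suffice: you need $\inf_{t\in[a,2a]}|\mathcal{F}\psi(t)|>0$ for some $a>0$, which is exactly Lemma \ref{wavelower} (Lemma 3.6 of \cite{onedimpaper}) and must be supplied as an input (it holds, e.g., because $\mathcal{F}\psi$ is entire and so has isolated zeros). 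Your column lower bound is fine as stated, since there the reachable set $\epsilon2^{-j}\bbZ$ densifies as $j\to\infty$. The second soft spot is the boundary case: the edge functions are \emph{not} finite linear combinations of translates of $\phi$ and $\psi$ --- they are built from \emph{truncated} translates and then re-orthonormalized --- so polynomial Fourier decay does not transfer by linearity. What does transfer trivially is the $L^\infty$ bound (there are only $2p$ fixed edge profiles), which gives the column upper bound; the $|\omega|^{-1}$ decay of those profiles, needed for the row upper bound, requires a separate argument.
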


\begin{theorem} \label{FourierPolynomialResults}
Let $\rho$ be a standard ordering of $B_\rf(\epsilon)$ with $\epsilon \in (0,0.45], $ $\tau$ a natural ordering of $B_\rp$ and $U=[(B_\rf(\epsilon),\rho),(B_\rp,\tau)]$. Then we have, for some constants $C_1, C_2>0$ the decay
\be{ \label{FourierPolynomialOptimalBounds}  
\frac{C_1}{N^{2/3}} \le \mu(\pi_N U), \ \mu(U \pi_N), \le \frac{C_2}{N^{2/3}}, \qquad \forall N \in \mathbb{N} .
}
\end{theorem}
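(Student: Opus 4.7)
The starting point is to rewrite the inner products in closed form. Since $\epsilon \le 1/2$, the indicator in $\chi_k$ equals $1$ on $[-1,1]$, so combining the definition of $\tilde p_n$ with the Rayleigh identity
\[
\int_{-1}^{1} p_n(x)\, e^{-i \xi x}\, dx \;=\; (-i)^{n-1}\sqrt{\tfrac{2\pi}{\xi}}\, J_{n-1/2}(\xi)
\]
(with $p_n$ of degree $n-1$) one obtains
\[
|\langle \chi_k, \tilde p_n \rangle|^2 \;=\; \frac{n - 1/2}{|k|}\,\bigl|J_{n-1/2}(2\pi\epsilon k)\bigr|^2, \qquad k \neq 0.
\]
The full problem is thus reduced to a two-variable estimate on the Bessel function $J_\nu(x)$ with $\nu = n - 1/2$ and $x = 2\pi\epsilon |k|$, together with the task of locating the maximum along each horizontal and vertical slice.

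For this I would exploit the well-known three-regime structure of $J_\nu$: exponential decay for $x < \nu - c\nu^{1/3}$, the uniform estimate $|J_\nu(x)| \lesssim x^{-1/2}$ for $x > \nu + c\nu^{1/3}$, and the Airy transition window $|x-\nu| \lesssim \nu^{1/3}$ where $|J_\nu(x)| \lesssim \nu^{-1/3}$. Inserted into the closed form above, for fixed large $n$ the product $(n/|k|)|J_\nu(2\pi\epsilon k)|^2$ is maximised precisely when $2\pi\epsilon|k| \sim \nu$, yielding $\sup_k |\langle \chi_k, \tilde p_n\rangle|^2 \lesssim n^{-2/3}$; outside this window either the prefactor $n/|k|$ or the Bessel factor kills it. A symmetric analysis in $k$ gives $\sup_n |\langle \chi_k, \tilde p_n\rangle|^2 \lesssim |k|^{-2/3}$. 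Since a standard ordering has $|\lambda(\rho(N))| \approx N/2$ and the natural ordering has $\tau(N) = \tilde p_N$, this immediately delivers the upper bound in (\ref{FourierPolynomialOptimalBounds}). For the matching lower bound I would pick, for each $N$, an integer $k$ (respectively $n$) inside the transition window of the opposite index and read off $|J_\nu(2\pi\epsilon k)| \gtrsim \nu^{-1/3}$ from the uniform Airy expansion at the first peak. An application of Lemma~\ref{StrongOptimalEquivalence} part~2) with $f(N) = C/N^{2/3}$ then simultaneously delivers the claimed decay rate and the strong optimality of both orderings.

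The principal obstacle is the transition-regime analysis. Sharp upper bounds for $|J_\nu(x)|$ that are uniform in $\nu$ are available in the literature (Olver, Krasikov), but producing a uniform lower bound $|J_\nu(2\pi\epsilon k)| \gtrsim \nu^{-1/3}$ at a \emph{constructive} integer $k$ near $\nu/(2\pi\epsilon)$ requires care: the spacing $2\pi\epsilon$ between admissible arguments must fit inside the Airy peak of half-width $\approx \nu^{1/3}$, and the chosen integer must be kept away from the zeros of $J_\nu$. The numerical hypothesis $\epsilon \le 0.45$ is exactly what is needed to secure uniform constants in this reduction; in particular $2\pi\cdot 0.45 < \pi$, the first positive zero of $J_{1/2}(x) = \sqrt{2/(\pi x)} \sin x$, so the small-$n$ base case is handled by the same estimate without a separate argument.
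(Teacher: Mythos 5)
The paper does not actually prove this theorem: it is imported verbatim from \cite{onedimpaper} (see the preamble to Section \ref{1DResults}, ``we recall the one-dimensional incoherence results proved in \cite{onedimpaper}''), so there is no in-paper argument to compare against. Your route --- the spherical Bessel representation of $\mathcal{F}\tilde{p}_n$, giving $|\langle \chi_k,\tilde{p}_n\rangle|^2=\tfrac{n-1/2}{|k|}|J_{n-1/2}(2\pi\epsilon k)|^2$, followed by the three-regime Airy-transition analysis of $J_\nu$ (which is exactly where the exponent $2/3$ originates) --- is the standard approach and, as far as one can tell, the same one taken in the cited source; the sketch is sound, and you have correctly isolated the genuinely delicate step, namely the constructive lower bound $|J_\nu(2\pi\epsilon k)|\gtrsim \nu^{-1/3}$ at an integer $k$ inside the transition window, uniformly in $\nu$ (and its mirror image in which the order $\nu$ is varied for fixed argument, needed for $\mu(\pi_N U)$).
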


\section{Multidimensional Tensor Cases: Proof of Theorem \ref{tensormainwavelet} and Theorem \ref{tensormainpoly} } \label{tensors} 

In this section we prove Theorem \ref{tensormainwavelet} and Theorem \ref{tensormainpoly}. In fact, we state and prove their slightly more involved generalisations: Theorems \ref{TensorResultsWavelet} and \ref{TensorResultsPoly}. We also provide examples of hyperbolic orderings.

\subsection{General Estimates}
\begin{definition}[Tensor basis]
Suppose that $B$ is an orthonormal basis of some space $T \le L^2 (\mathbb{R})$ (i.e. $T$ is a subspace $L^2 (\mathbb{R})$) and we already have an ordering $\rho: \mathbb{N} \to B$. Define $\rho^d: \mathbb{N}^d \to \bigotimes_{j=1}^d T  \le L^2 (\mathbb{R}^d)$ by the formula ($m \in \mathbb{N}^d$)
\[ \rho^d(m)(x):= \Big( \bigotimes_{j=1}^d \rho(m_j) \Big) (x) = \prod_{j=1}^d \rho(m_j)(x_j).   \]
This gives a basis of $\bigotimes_{j=1}^d T  \le L^2 (\mathbb{R}^d)$ because of the formula
\begin{equation} \label{prodsplit}
 \langle \rho^d (m),\rho^d(n) \rangle_{L^2(\mathbb{R}^d)} = \prod_{j=1}^d \langle \rho(m_j), \rho(n_j) \rangle_{L^2(\mathbb{R})}.
\end{equation}
We call $B^d:=(\rho^d(m))_{m \in \mathbb{N}^d}$ a `tensor basis'. The function $\rho^d$ is said to be the `d-dimensional indexing induced by $\rho$'. Notice that $\rho^d$ is not an ordering unless $d=1$.
\end{definition}

Now suppose that we have two one-dimensional bases $B_1$, $B_2$ with corresponding optimal orderings $\rho_1, \rho_2$. Let $\rho^d_1, \rho^d_2$ be the d-dimensional indexings induced by $\rho_1,\rho_2$ of the bases $B^d_1,B^d_2$. What are optimal orderings of the basis pair $(B^d_1,B^d_2)$ and what is the resulting optimal decay rate? Some insight is given by the following Lemma:
\begin{lemma} \label{generaltensor}
Let $(B_1,B_2)$ be a pair of bases with corresponding tensor bases $B_1^d, B_2^d$. 
Let $\rho_1$ be a strongly optimal ordering of $B_1$ and $\rho_1^d$ be the $d$-dimensional indexing induced by $\rho_1$. Finally, for some ordering $\tau$ of $B_2$, let $U=[(B_1, \rho_1), (B_2, \tau)]$ . Then if $f$ represents the optimal decay rate corresponding to the basis pair $(B_1,B_2)$ we have, for some constants $C_1, C_2>0$,
\be{ \label{generaltensorequation}
\prod_{i=1}^d C_1^d \cdot f(n_i) \le \sup_{g \in B^d_2} | \langle \rho_1^d (n) , g \rangle |^2 = \prod_{i=1}^d \mu(\pi_{n_i} U) \le \prod_{i=1}^d C_2^d \cdot f(n_i), \quad n \in \bbN^d.
}
Consequently, if we let $\iota := \rho_1^d$ then $F(n):=\prod_{i=1}^d f(n_i)$ characterizes the optimal decay of $(B_1,B_2)$. 

\end{lemma}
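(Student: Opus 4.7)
The plan is to exploit the product structure built into tensor bases via the splitting formula (\ref{prodsplit}). First I would observe that every element $g \in B_2^d$ takes the form $g = \tau^d(k)$ for a unique $k \in \bbN^d$, where $\tau^d$ is the $d$-dimensional indexing induced by $\tau$. Applying (\ref{prodsplit}) yields
\[
\bigl|\langle \rho_1^d(n), \tau^d(k)\rangle\bigr|^2 \;=\; \prod_{j=1}^d \bigl|\langle \rho_1(n_j), \tau(k_j)\rangle\bigr|^2 \;=\; \prod_{j=1}^d |U_{n_j, k_j}|^2.
\]
Since each factor depends on only one coordinate $k_j$ and is non-negative, the supremum over $k \in \bbN^d$ factors across the product:
\[
\sup_{g \in B_2^d} \bigl|\langle \rho_1^d(n), g\rangle\bigr|^2 \;=\; \prod_{j=1}^d \sup_{k_j \in \bbN} |U_{n_j, k_j}|^2 \;=\; \prod_{j=1}^d \mu(\pi_{n_j} U),
\]
which is exactly the central equality in (\ref{generaltensorequation}).

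Next I would invoke the hypothesis that $\rho_1$ is strongly optimal for $(B_1, B_2)$. By Definition \ref{strongoptimality} this means $\mu(\pi_N U) \approx f(N)$, so there exist constants $c_1, c_2 > 0$ with $c_1 f(N) \le \mu(\pi_N U) \le c_2 f(N)$ for every $N \in \bbN$. Multiplying these inequalities across the $d$ coordinates then gives
\[
c_1^d \prod_{j=1}^d f(n_j) \;\le\; \prod_{j=1}^d \mu(\pi_{n_j} U) \;\le\; c_2^d \prod_{j=1}^d f(n_j),
\]
which is the desired two-sided bound with constants $c_1^d$ and $c_2^d$.

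To conclude that $F(n) := \prod_{j=1}^d f(n_j)$ characterizes the optimal decay of $(B_1^d, B_2^d)$ in the sense of Definition \ref{Characterisation}, I would note that $\iota := \rho_1^d$ is a bijection from $S = \bbN^d$ onto $B_1^d$, and that the two-sided inequality just derived is precisely the definition (\ref{characterise}) applied to this $\iota$ and $F$.

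There is no real obstacle here: the tensor splitting formula does almost all of the work, and the strong-optimality hypothesis is converted into the conclusion by a coordinatewise product. The only subtlety worth flagging is that the supremum factors because each coordinate variable $k_j$ is decoupled and the factors are non-negative; without these features one would have to argue more carefully. Everything else is bookkeeping.
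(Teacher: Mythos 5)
Your proposal is correct and follows essentially the same route as the paper's proof: factor the supremum over $B_2^d$ into a product of one-dimensional suprema via the tensor splitting formula and the bijectivity of $\tau^d$, then apply the two-sided bound from strong optimality coordinatewise. The extra remarks on why the supremum factors (decoupled coordinates, non-negative factors) are sound and merely make explicit what the paper leaves implicit.
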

\begin{proof}
Let $\tau^d$ denote the $d$-dimensional indexing induced by $\tau$. Then by breaking the down the tensor product into terms and using the bijectivity of $\tau^d$ we have 
\[
\begin{aligned}
\sup_{g \in B^d_2} | \langle \rho_1^d (n) , g \rangle |^2 & = \sup_{m \in \bbN^d} | \langle \rho_1^d (n) , \tau^d(m) \rangle |^2 = \sup_{m \in \bbN^d} \prod_{i=1}^d | \langle \rho_1 (n_i) , \tau(m_i) \rangle |^2
\\ & = \prod_{i=1}^d \sup_{m \in \bbN} | \langle \rho_1 (n_i) , \tau(m) \rangle |^2 = \prod_{i=1}^d \mu(\pi_{n_i} U).
\end{aligned}
\]
Therefore (\ref{generaltensorequation}) follows from applying the definition of a strongly optimal ordering to each term in the product.
\end{proof}

Lemma \ref{generaltensor} says that if we have a strongly optimal ordering for the basis pair $(B_1,B_2)$ then we can use Lemma \ref{characterisationlemma} to find all strongly optimal orderings for the corresponding tensor basis pair $(B_1^d,B_2^d)$. In particular, we have

\begin{corollary} \label{generaltensorcorollary}
We use the framework of the previous Lemma. Let $\sigma:\bbN \to \bbN^d$ be consistent with $1/F$. Then an ordering $\rho$ is strongly optimal for the basis pair $(B^d_1,B^d_2)$ if and only if there are constants $C_1,C_2>0$ such that
\[
C_1 F(\sigma(N)) \le F((\rho_1^d)^{-1} \circ \rho(N))  \le C_2 F(\sigma(N)), \quad N \in \bbN.
\]
\end{corollary}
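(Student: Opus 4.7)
The plan is to obtain the corollary as an immediate consequence of part 3 of Lemma \ref{characterisationlemma}, leveraging the fact that Lemma \ref{generaltensor} has already shown that $F:\bbN^d \to \bbR_{>0}$ defined by $F(n) = \prod_{i=1}^d f(n_i)$ characterizes the optimal decay of the tensor pair $(B_1^d, B_2^d)$ with respect to the bijection $\iota = \rho_1^d$. That characterization lemma then states that an ordering $\rho$ of $B_1^d$ is strongly optimal if and only if
\[
F\bigl((\rho_1^d)^{-1} \circ \rho(\cdot)\bigr) \approx g,
\]
where $g$ represents the fastest decay of $F$ in the sense of Definition \ref{Characterisation}.

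To bring this into the form of the claim, I would next identify $g$ explicitly. By Definition \ref{Characterisation} part 1, once $\sigma:\bbN \to \bbN^d$ is consistent with $1/F$, the composition $F \circ \sigma$ itself represents the fastest decay of $F$, provided the two preconditions of that definition hold: the sublevel sets $\{n \in \bbN^d : 1/F(n) \le K\}$ must be finite for every $K>0$, and one needs $F(\sigma(N)) \to 0$. A short check gives both in the cases of interest: because $f$ is positive, decreasing on $\bbN$ and tends to zero (as is the case in all applications of the corollary, e.g.\ $f(N)=N^{-1}$), the product $1/F(n) = \prod_{i=1}^d 1/f(n_i)$ grows without bound whenever any coordinate $n_i$ grows, making the sublevel sets finite. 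This finiteness then forces $1/F(\sigma(N)) \to \infty$ for any bijective enumeration $\sigma$.

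Substituting $g = F \circ \sigma$ into the equivalence above and unpacking the relation $\approx$ into a two-sided bound with multiplicative constants $C_1, C_2 > 0$ recovers exactly the inequality stated in the corollary. No real obstacle arises: the statement is essentially a clean specialization of Lemma \ref{characterisationlemma}(3) to the tensor framework supplied by Lemma \ref{generaltensor}, with the actual content already absorbed into those earlier results. The only mildly careful step is verifying the preconditions of Definition \ref{Characterisation} for the specific product $F$, which is routine given the one-dimensional decay properties of $f$.
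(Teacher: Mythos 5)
Your proposal is correct and follows exactly the route the paper intends: the corollary is stated without a separate proof precisely because it is the specialization of Lemma \ref{characterisationlemma}(3) to the characterization $F(n)=\prod_{i=1}^d f(n_i)$ established in Lemma \ref{generaltensor}, with $g = F\circ\sigma$ supplied by Definition \ref{Characterisation}. Your additional check of the preconditions (finiteness of the sublevel sets of $1/F$ and $F(\sigma(N))\to 0$) is a point the paper glosses over, and your handling of it is sound.
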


Suppose that we have a strongly optimal ordering $\rho_1$ of $B_1$ such that the optimal decay rate is a power of $N$, namely that $f(n)=n^{-\alpha}$ for some $\alpha>0$, which is the case for the one dimensional examples we covered in Section \ref{onedim}. The above Lemma tells us that to find the optimal decay rate we should take an ordering $\sigma : \bbN \to \bbN^d$ that is consistent with $1/F(n):= \prod_{i=1}^d 1/f(n_i)= \prod_{i=1}^d n_i^{\alpha}$ which is equivalent to being consistent with $1/F^{1/\alpha}(n)=\prod_{i=1}^d n_i$. This motivates the following:

\begin{definition}[Corresponding to the hyperbolic cross]
Define $F_H: \mathbb{N}^d \to \mathbb{R}$ by $ F_{H}(n)= \prod_{i=1}^d n_i$.
Then we say a bijective function $\sigma: \mathbb{N} \to \mathbb{N}^d$ `corresponds to the hyperbolic cross' if it is consistent with $F_H$.
\end{definition}
The name `hyperbolic cross' originates from its use in approximation theory \cite{crossorig,hypcross}. We now claim that if $\sigma$ corresponds to the hyperbolic cross and $d \ge 2$, then 
\be{ \label{hyperbolicdecayrate}
\prod_{i=1}^d \sigma(N)_i \sim \frac{(d-1)! N}{\log^{d-1}(N+1)}  \quad \text{as } \quad N \to \infty.
}
Next we proceed to prove this claim.

\begin{definition}
For $d \in \mathbb{N}$ let $f_d(x) = x \log^{d-1} x$ be defined on $[1,\infty)$. We define
 $g_d$ as the inverse function of $f_d$ on $[1,\infty)$, and so $g_d: [ 0, \infty) \to [1,\infty)$. Furthermore, we define
 \be{ \label{hyperbolicdecay} 
 h_d(x):= \frac{x}{\log^{d-1}(x+1)}
 , \qquad x \in [1, \infty).}
\end{definition}

\begin{lemma} \label{ordersimplify}
The following holds:
\\
1.) \ \  
$g_d(x)/h_{d}(x) \to 1 \quad  \text{as} \quad x \to \infty.$ 
\\
2.) \ \ Let $\tilde{f}(x) = x \log^{d-1} x + x p( \log(x) )  + \beta$ with $p$ a polynomial of degree at most $d-2$, $ \beta \in \mathbb{R}$ and let $\tilde{g}$ be its inverse function defined for large $x \in \bbR_+$. Then we also have $\tilde{g}(x)/h_{d}(x) \to 1 \quad  \text{as} \quad x \to \infty.$
\end{lemma}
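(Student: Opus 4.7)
The plan is to prove both parts by the same bootstrap argument: first extract a coarse estimate for $\log g_d(x)$ (respectively $\log \tilde g(x)$) from the defining equation by taking logarithms, then feed this back to get the sharp leading-order asymptotic for $g_d(x)$ itself.

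For part 1, since $f_d$ is strictly increasing on $[1,\infty)$ and $f_d(x) \to \infty$, the inverse $g_d$ is well-defined and $g_d(x) \to \infty$ as $x \to \infty$. The relation $f_d(g_d(x)) = x$ says $g_d(x)\log^{d-1}g_d(x) = x$, so taking logarithms (for $d\ge 2$; the case $d=1$ is trivial since $g_1(x)=h_1(x)=x$) gives
\[
\log g_d(x) + (d-1)\log\log g_d(x) = \log x.
\]
Because $g_d(x)\to\infty$, we have $\log\log g_d(x) = o(\log g_d(x))$, so $\log g_d(x) \sim \log x$ as $x\to\infty$. Substituting this back into $g_d(x) = x/\log^{d-1}g_d(x)$ yields $g_d(x) \sim x/\log^{d-1}x$. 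Since $\log(x+1)/\log x \to 1$, we also have $\log^{d-1}(x+1)\sim\log^{d-1}x$, so $g_d(x)/h_d(x)\to 1$ as desired.

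For part 2, I would first show that $\tilde f(y) \sim f_d(y)$ as $y \to \infty$. Writing $\tilde f(y) = y\log^{d-1}y + yp(\log y) + \beta$ and noting that $\deg p \le d-2$, we have $yp(\log y) = o(y\log^{d-1}y)$ and $\beta = o(y\log^{d-1}y)$, hence $\tilde f(y)/f_d(y) \to 1$. Consequently $\log\tilde f(y) = \log y + (d-1)\log\log y + o(1)$, and the relation $\tilde f(\tilde g(x)) = x$ again gives $\log\tilde g(x) \sim \log x$ by exactly the same reasoning as in part 1. Finally, rearranging $\tilde f(\tilde g(x)) = x$ as
\[
\tilde g(x) = \frac{x - \beta}{\log^{d-1}\tilde g(x) + p(\log\tilde g(x))}
\]
and using $p(\log\tilde g(x)) = o(\log^{d-1}\tilde g(x))$ together with $\log\tilde g(x) \sim \log x$ yields $\tilde g(x) \sim x/\log^{d-1}x \sim h_d(x)$.

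The main obstacle is really just careful bookkeeping of the lower-order terms; there is no deep analytic content. One minor subtlety worth flagging is the degenerate case $d=1$, where part 2 forces $p\equiv 0$ (a polynomial of degree at most $-1$), so $\tilde f(x) = x+\beta$, $\tilde g(x) = x-\beta$, and the claim reduces to $(x-\beta)/x \to 1$. For $d\ge 2$ the two bootstraps are essentially identical, so I would present part 1 in detail and then remark that the only change needed for part 2 is the preliminary observation $\tilde f \sim f_d$ which absorbs the perturbation terms into the error.
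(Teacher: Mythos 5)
Your proof is correct, and it takes a noticeably lighter route than the paper's. Both arguments begin the same way — pass to logarithms and extract the coarse estimate $\log g_d(x) \sim \log x$ — but they diverge in how the sharp asymptotic is then obtained. The paper works with the conjugated function $e_d(x) = f_d(\exp(x)) = x^{d-1}\exp(x)$, whose inverse is $\log \circ\, g_d$, and produces an explicit two-sided sandwich $x - (d-1)\log x \le \log(g_d(\exp(x))) \le x - (d-1)\log x + \epsilon$; for part 2 it then compares $\tilde{e}$ with $e_d$ via a mean value theorem estimate on $\tilde{e}'$. You instead bootstrap: feed the coarse estimate $\log g_d(x) \sim \log x$ back into the exact functional equation $g_d(x) = x/\log^{d-1} g_d(x)$ (respectively $\tilde g(x) = (x-\beta)/(\log^{d-1}\tilde g(x) + p(\log \tilde g(x)))$), which immediately yields $g_d(x) \sim x/\log^{d-1}x$. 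This avoids the $\epsilon$-sandwich in part 1 and, more substantially, the derivative/MVT comparison in part 2, at the cost of no generality: the perturbation terms are absorbed by the single observation $p(\log y) = o(\log^{d-1} y)$. Your handling of the degenerate case $d=1$ (forcing $p \equiv 0$) is a detail the paper glosses over. The only steps worth making fully explicit in a write-up are that $\log g_d(x) \sim \log x$ implies $\log^{d-1} g_d(x) \sim \log^{d-1} x$ (continuity of $t \mapsto t^{d-1}$ applied to the ratio) and that $\tilde g(x) \to \infty$, both of which are immediate.
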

\begin{proof}
1.) \ \ For notational convenience we shall prove the equivalent result
\[ \frac{g_d(x) \log^{d-1}(x)}{x} \to 1 \quad \text{as} \quad x \to \infty. \]
By taking logarithms we change the problem from studying the asymptotics of a fraction to the asymptotics of the difference 
\be{ \label{equivalentasymp}
\log(g_d(x))-\log(h_d(x)) =  \log(g_d(x)) - \log x + (d-1) \log \log x \to 0 \quad \text{as} \quad x \to \infty. 
}
With this in mind we notice that the function $\log(g_d)$ (defined on $[0,\infty)$) is the  inverse function of $e_d(x):= f_d(\exp(x)) = x^{d-1} \exp x$ (defined on $[0,\infty)$). 

Notice that for $x$ large we have $e_d(x- (d-1) \log x)= \frac{(x- (d-1) \log x)^{d-1}}{x^{d-1}} \exp(x) \le \exp(x)$ which implies that $ x - (d-1) \log x \le \log(g_d( \exp(x))) $. Now if we let $\epsilon>0$ then we deduce that 
\[e_d(x- (d-1) \log x+ \epsilon)= \frac{(x- (d-1) \log x+\epsilon)^{d-1}}{x^{d-1}} \exp(x+\epsilon) \ge \exp(x) \quad \text{for} \quad  x \quad \text{large.} \]
This implies that $ x - (d-1) \log x + \epsilon \ge \log(g_d( \exp(x))) $ for $x$ large. We therefore conclude that for all $x$ sufficiently large we have
\[ x - (d-1) \log x \le \log(g_d( \exp(x))) \le x - (d-1) \log x + \epsilon,\]
from which (\ref{equivalentasymp}) follows since $\epsilon>0$ is arbitrary.

2.) \ \ Notice that by part 1. it suffices to show that $\tilde{g}(x)/g_{d}(x) \to 1 \ $ as $\ x \to \infty.$ Again, we shall show this by taking logarithms, reducing the proof to showing
\[ \log( \tilde{g}(x)) - \log( g_d(x)) \to 0 \quad \text{as} \quad  x \to \infty. \]
Notice that $\log( \tilde{g}(x))$ is the inverse function, defined for large $x$, of 
\[\tilde{e}(x):=\tilde{f}( \exp(x)) = x^{d-1} \exp(x) + p(x) \cdot \exp(x) + \beta, \] 
Then since 
 \[\tilde{e}'(x)= x^{d-1} \exp(x) + ((d-1) \cdot x^{d-2} +p'(x)+p(x)) \cdot \exp(x), \] 
 we can use the hypothesis that $p$ is of a lower order than $x^{d-1}$ to show that for every $\epsilon>0$, there is an $L(\epsilon)>0$ such that for all $x \ge L(\epsilon)$ we have $\epsilon \cdot \tilde{e}'(x -\epsilon) \ge |\tilde{e}(x) - e_d(x)|=|p(x) \cdot \exp(x) + \beta|$. We therefore deduce from the mean value theorem that for $x \ge \exp(L(\epsilon))$ we have
\begin{align*} 
 \tilde{e}(\log(g_d(x))-\epsilon)  \le e_d(\log(g_d(x)))= & x  \le  \tilde{e}( \log(g_d(x))+ \epsilon)   
 \\ & \Rightarrow   \log(g_d(x)) - \epsilon \le  \log(\tilde{g}(x))  \le \log(g_d(x)) + \epsilon,
\end{align*}
where we applied $\log(\tilde{g})$ to the inequality in the last step (this preserves the inequality since $\log(\tilde{g})$ is an increasing function of $x$ for $x$ large).
\end{proof}

\begin{lemma} \label{orderset}
1). \ \ For every $d \in \mathbb{N}$ we have
\be{ \label{hyplogestimate}
R_N:=\sum_{i=1}^N \frac{1}{i} (\log(N) - \log(i))^d \ = \ \frac{1}{d+1}  \log^{d+1} N + \mathcal{O}(\log^d N) \qquad N \to \infty.
}

2). \ \ Let $S_d(N)$ for $d, N  \in \mathbb{N}$ be defined by
\begin{equation} \label{simplehyperbolic} 
S_d(N):= \# \Big\{ m \in \mathbb{N}^d  :  \prod_{i=1}^d m_i  \le  N \Big\}. 
\end{equation}

Then for every $d \in \mathbb{N}$, there exists polynomials $ \underline{p}_d, \overline{p}_d$ both of degree $d-1$ with identical leading coefficient $1/(d-1)!$ such that
\begin{equation}   \label{hyperbolic_count}
 N \underline{p}_d( \log(N))   \le  S_d(N)  \le  N \overline{p}_d( \log(N)) .
 \end{equation}
 
 3). \ \ If we let $\sigma: \mathbb{N} \to \mathbb{N}^d$ correspond to the hyperbolic cross then (\ref{hyperbolicdecayrate}) holds.
\end{lemma}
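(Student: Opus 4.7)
\textbf{Proof proposal for Lemma \ref{orderset}.} The plan is to handle the three parts in order: (1) is a standard integral-approximation computation, (2) proceeds by induction on $d$ using the bound from (1), and (3) follows by inverting the asymptotic of $S_d$ via Lemma \ref{ordersimplify}.

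For part (1), I would compare the sum $R_N$ to the integral $\int_1^N \frac{1}{x}(\log N-\log x)^d\,dx$. The substitution $u=\log N-\log x$ turns this integral into $\int_0^{\log N} u^d\,du=\frac{1}{d+1}\log^{d+1}N$, which is the claimed leading term. For the error, I would use that the integrand $x\mapsto \tfrac{1}{x}(\log N-\log x)^d$ is monotone decreasing on $[1,N]$ (since both factors decrease in $x$), so the difference between the Riemann sum and the integral is controlled termwise by the variation of the integrand; summing the variations and telescoping gives an error of size bounded by the integrand at $x=1$, namely $\log^d N$. Hence the error is $\mathcal{O}(\log^d N)$.

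For part (2), the inductive scheme is to split off the last coordinate: $S_d(N)=\sum_{m=1}^N S_{d-1}(\lfloor N/m\rfloor)$. The base case $d=1$ gives $S_1(N)=N$, corresponding to $\underline{p}_1=\overline{p}_1\equiv 1$. Assuming $k\,\underline{p}_{d-1}(\log k)\le S_{d-1}(k)\le k\,\overline{p}_{d-1}(\log k)$ with $\underline{p}_{d-1},\overline{p}_{d-1}$ of degree $d-2$ and leading coefficient $1/(d-2)!$, I would bound
\begin{equation*}
S_d(N)\le \sum_{m=1}^N \frac{N}{m}\,\overline{p}_{d-1}\!\left(\log\frac{N}{m}\right)=N\sum_{m=1}^N \frac{1}{m}\,\overline{p}_{d-1}(\log N-\log m),
\end{equation*}
and analogously from below using $\lfloor N/m\rfloor\ge N/m-1$, absorbing the resulting lower-order contributions into lower-degree polynomial terms. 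Part (1) evaluates each monomial $(\log N-\log m)^k$ sum, giving $\frac{1}{k+1}\log^{k+1}N+\mathcal{O}(\log^k N)$. Applied to the highest-degree term $\frac{1}{(d-2)!}(\log N-\log m)^{d-2}$ of $\overline{p}_{d-1}$, this contributes $\frac{1}{(d-2)!}\cdot\frac{1}{d-1}\log^{d-1}N=\frac{1}{(d-1)!}\log^{d-1}N$; lower-degree terms of $\overline{p}_{d-1}$ produce terms of degree at most $d-2$ in $\log N$. The same computation for the lower bound gives matching leading coefficient, so one can choose $\underline{p}_d,\overline{p}_d$ of degree $d-1$ each with leading coefficient $1/(d-1)!$.

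For part (3), set $L_N:=\prod_{i=1}^d\sigma(N)_i=F_H(\sigma(N))$. Consistency of $\sigma$ with $F_H$ forces $\{m\in\bbN^d:F_H(m)<L_N\}\subseteq\{\sigma(1),\ldots,\sigma(N-1)\}$ and $\{\sigma(1),\ldots,\sigma(N)\}\subseteq\{m:F_H(m)\le L_N\}$, so $S_d(L_N-1)\le N-1<N\le S_d(L_N)$. Together with part (2) this sandwich yields $L_N\log^{d-1}(L_N)=(d-1)!\,N+\mathcal{O}(L_N\log^{d-2}L_N)$, i.e.\ $L_N$ is, up to lower-order corrections, the value of the inverse of $x\log^{d-1}x$ evaluated at $(d-1)!\,N$. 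Applying part (2) of Lemma \ref{ordersimplify} (taking $\tilde f(x)=x\log^{d-1}x+xp(\log x)+\beta$ chosen so the expansion matches) gives $L_N\sim h_d((d-1)!\,N)=(d-1)!\,N/\log^{d-1}((d-1)!\,N+1)\sim (d-1)!\,N/\log^{d-1}(N+1)$, which is exactly \eqref{hyperbolicdecayrate}.

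The main obstacle is the bookkeeping in part (2): one must check that the polynomial bounds from the induction hypothesis, when processed through the sum formula of part (1), yield a polynomial of the correct degree with the precise leading coefficient $1/(d-1)!$ on both sides, and that the rounding error from $\lfloor N/m\rfloor$ and the $\mathcal{O}(\log^k N)$ terms in part (1) contribute only to strictly lower-degree terms; everything else is a direct computation once this is handled cleanly.
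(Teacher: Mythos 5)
Your proposal is correct and follows essentially the same route as the paper's proof: integral comparison via monotonicity for part (1) (your substitution $u=\log N-\log x$ is a slightly slicker evaluation than the paper's binomial expansion, but the argument is identical), induction on $d$ via the identity $S_d(N)=\sum_{m=1}^N S_{d-1}(\lfloor N/m\rfloor)$ for part (2), and the consistency sandwich $S_d(L_N-1)\le N\le S_d(L_N)$ inverted through Lemma \ref{ordersimplify} for part (3). The bookkeeping you flag in part (2) — sign conventions on the non-leading coefficients and the discrepancy $\log(N/m)-\log\lfloor N/m\rfloor\le 1$ — is exactly what the paper handles explicitly, and your plan accommodates it.
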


\begin{proof}
1). \ \ Let $I_N:=\int_1^N \frac{1}{x} (\log(N) - \log(x))^d \, dx$. Since the integrand is a decreasing function of $x$ (with $N$ fixed) we find that by the Maclaurin integral test that
$0 \le R_N -I_N \le \log^d(N)$. This means that showing (\ref{hyplogestimate}) is equivalent to showing that
\[ \int_1^N \frac{1}{x} (\log(N) - \log(x))^d \, dx =  \frac{1}{d+1}  \log^{d+1} N + \mathcal{O}(\log^d N). \]
Now, by expanding out the factors of the integrand and integrating (recall that the integral of $x^{-1} \log^k x$ is $\frac{1}{k+1} \cdot \log^{k+1}x$) the integral becomes
\[ \log^{d+1}(N) \cdot \sum_{i=0}^d \frac{1}{i+1} \binom{d}{i} (-1)^{i}. \]
Since $\frac{1}{i+1} \binom{d}{i}  = \frac{1}{d+1} \binom{d+1}{i+1}$ we see that the sum simplifies to $ \frac{1}{d+1}$ and we are done.

2). \ \  We use induction on the dimension $d$. The case $d=1$ is immediate since $\underline{p}_1(x)=\overline{p}_1(x)=1$ satisfies inequality (\ref{hyperbolic_count}). Therefore suppose that inequality (\ref{hyperbolic_count}) holds for dimension $d=k$. We shall extend the result to $d=k+1$ using the equality:
\begin{equation} \label{hyperbolicdimreduce}
S_{k+1}(N)= \sum_{i=1}^N S_{k}\Big( \left\lfloor \frac{N}{i} \right\rfloor \Big).
\end{equation}
This equality follows from rewriting the set defining $S_{k+1}$ as the following disjoint union:
\[
 \Big\{ m \in \bbN^{k+1} : \prod_{i=1}^{k+1} m_i \le N \Big\} = \coprod_{j=1}^N \Bigg\{ m \in \bbN^{k+1} : m_{k+1}=j, \prod_{i=1}^k m_i \le \left\lfloor \frac{N}{i} \right\rfloor \Bigg\} .
\]

\textbf{Upper Bound:} We may assume without loss of generality that $\overline{p}_k$ has all coefficients positive. Therefore, by replacing $ \lfloor \frac{N}{i} \rfloor $ with $\frac{N}{i}$ and using the upper bound in (\ref{hyperbolic_count}), we can upper bound equation (\ref{hyperbolicdimreduce}) by
\[ \sum_{i=1}^N \frac{N}{i}   \cdot  \overline{p}_k \Big( \log \Big( \frac{N}{i} \Big) \Big) \ \le \ N \sum_{i=1}^N \frac{1}{i} \cdot  \overline{p}_k ( \log(N) - \log(i)). \]
We can then get the required upper bound by applying part 1) of the lemma to each term in the polynomial; for example the highest order term becomes
\[
\begin{aligned} 
 \sum_{i=1}^N \frac{N}{i} \cdot \frac{1}{(k-1)!}  ( \log(N) - \log(i))^{k-1} \le \frac{N }{k!} \log^{k}N + C N \log^{k-1} N, \qquad \forall N \in \mathbb{N},         
 \end{aligned}   
 \]
 for some constant $C>0$ sufficiently large. The other terms in $\overline{p}_k$ are handled similarly. 

\textbf{Lower Bound:} Notice that without loss of generality we can assume all the coefficients of $\underline{p}_k$ apart from the leading coefficient are negative. Using the lower bound in (\ref{hyperbolic_count}), we can lower bound equation (\ref{hyperbolicdimreduce}) by
\[ \sum_{i=1}^N \left\lfloor \frac{N}{i} \right\rfloor   \cdot  \overline{p}_k \Big( \log \Big( \left\lfloor \frac{N}{i} \right\rfloor \Big) \Big). \]
This means we can tackle the $<k-1$ order terms in the same way as in the upper bound since we can replace $ \left\lfloor \frac{N}{i} \right\rfloor $ with $\frac{N}{i}$ (recall we have assumed these terms are negative). Now we are left with bounding the highest order term:
\begin{equation}
\begin{aligned}
   \sum_{i=1}^N  \left\lfloor \frac{N}{i} \right\rfloor \frac{1}{(k-1)!}   (\log \Big( \left\lfloor \frac{N}{i} \right\rfloor \Big) )^k =   \sum_{i=1}^N  \left\lfloor \frac{N}{i} \right\rfloor  \frac{1}{(k-1)!} \cdot \Big[ \log \Big(  \frac{N}{i}  \Big)  -  \big( \log \Big(   \frac{N}{i}  \Big)  - \log \Big( \left\lfloor \frac{N}{i} \right\rfloor \Big) \big) \Big]^k.   
\end{aligned}                  
\end{equation}
Therefore expanding out the binomial term, setting the sign of all terms except the first to be negative, and noticing $\log \Big(   \frac{N}{i}  \Big)  - \log \Big( \left\lfloor \frac{N}{i} \right\rfloor \Big) \le 1$ for every $i,N$ we get the lower bound
\[                          
\begin{aligned}
\sum_{i=1}^N & \left\lfloor \frac{N}{i}  \right\rfloor \frac{1}{(k-1)!} \log^{k}  \Big(  \frac{N}{i}  \Big)  - \sum_{i=1}^N  \sum_{j=0}^{k-1} \left\lfloor \frac{N}{i} \right\rfloor  \binom{k}{j} \frac{1}{(k-1)!}  \log^{j} \Big(  \frac{N}{i}  \Big).
\end{aligned}
 \]
From here we can replace $\left\lfloor \frac{N}{i} \right\rfloor$ by $ \frac{N}{i} $ for the right term, $\left\lfloor \frac{N}{i} \right\rfloor$ by $\frac{N}{i}  -1 $  on the left term and use part 1) of the lemma again to prove the lower bound.
 
3.) \ \  From the second part of the lemma we know that for some degree $d-1$ polynomials $\underline{p}_d , \overline{p}_d$ with leading coefficient $1/(d-1)!$ we have
$ N \underline{p}_d (\log(N)) \le S_d(N) \le N \overline{p}_d ( \log(N)). $
 Now notice that if $m \in \mathbb{N}$ then because of consistency we must have $ S_d( F_H(\sigma(m))-1) \le m$ since $\sigma$ must first list all the terms $n$ in $\bbN^d$ with $F_H(n) \le F_H(\sigma(m))-1$ before listing $\sigma(m)$. Likewise we must have $m \le S_d( F_H(\sigma(m)))$ since the $S_d( F_H(\sigma(m)))$ terms with $F_H(n) \le F_H(\sigma(m)), n \in \bbN^d$ must be listed by $\sigma$ first, including $m$ , before any others. Consequently we deduce
 \begin{equation} \label{productbound}
 \begin{aligned}
 (F_H(\sigma(m)) -1) \underline{p}_d (\log(F_H(\sigma(m)) & -1))  \le m \le F_H(\sigma(m)) \overline{p}_d ( \log( F_H( \sigma(m)) )).
\end{aligned}
\end{equation}
 We now treat both sides separately. Looking at the LHS we get the estimate
 $F_H(\sigma(m)) -1 \le \tilde{g}_d(m),$
 where $\tilde{g}_d(m)$ is the inverse function (defined for large $m$) of 
 \[ 
 \begin{aligned}
 \tilde{f}_d(x):= & \frac{1}{(d-1)!} x \log^{d-1}(x) + \mbox{(degree $d-2$ poly)}(\log(x)),
\end{aligned}
\] 
 and so we may apply part 2. of Lemma \ref{ordersimplify} to deduce
 $ F_H(\sigma(m)) \le h_d( (d-1)! m ) \cdot (1 + \epsilon(m)),$
 where $\epsilon(m) \to 0$ as $m \to \infty$. The right hand side is handled similarly to get the same asymptotic lower bound on $F_H(\sigma(m))$, namely
$ F_H(\sigma(m)) \ge h_d( (d-1)! m ) \cdot (1 + \epsilon(m)),$ where $\epsilon(m) \to 0$ as $m \to \infty$. Since $\frac{h_d((d-1)! x)}{(d-1)! h_d(x)} \to 1$ as $x \to \infty$ the proof is complete.
\end{proof}

(\ref{hyperbolicdecayrate}) allows us to determine the optimal decay rate for when the optimal one dimensional decay rate is a power of $N$. 

\begin{theorem} \label{generaltensortheorem}
Returning to the framework of Corollary \ref{generaltensorcorollary}, if $f(n)=n^{- \alpha}$ for $n \in \bbN$, $F(n)= \prod_{i=1}^d f(n)$ for $n \in \bbN^d$ and $\sigma : \bbN \to \bbN^d$ corresponds to the hyperbolic cross then 
\be{ \label{finaltensordecay}
F(\sigma(N)) = \Bigg( \prod_{i=1}^d \sigma(N)_i \Bigg)^{- \alpha} \sim \big((d-1)! \cdot h_d(N) \big) ^{-\alpha}, \quad N \to \infty.
}
Consequently $h_d^{-\alpha}$ is representative of the optimal decay rate for the basis pair $(B_1^d,B_2^d)$. Furthermore, an ordering $\rho$ is strongly optimal for the basis pair $(B_1^d,B_2^d)$ if and only if there are constants $C_1, C_2>0$ such that
\begin{equation}  \label{hyperbolicdef}
C_1 \cdot h_d(N) \le \prod_{i=1}^d \Big((\rho_1^d)^{-1} \circ \rho(N) \Big)_i \le C_2 \cdot h_d(N), \quad N \in \bbN.
\end{equation}
\end{theorem}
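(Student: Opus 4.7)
The heavy lifting has already been carried out in Lemma \ref{orderset}; the present theorem is, in spirit, just a repackaging of part~3 of that lemma combined with the characterization machinery of Lemma \ref{characterisationlemma}. The plan is to tackle the three assertions in sequence.

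\textbf{Asymptotic formula (\ref{finaltensordecay}).} The first claim follows immediately by invoking part~3 of Lemma \ref{orderset}: since $\sigma$ corresponds to the hyperbolic cross we have
\[
\prod_{i=1}^d \sigma(N)_i \,\sim\, (d-1)!\cdot h_d(N), \qquad N\to\infty.
\]
Raising both sides to the $-\alpha$ power (valid because the functions in question are positive and monotone for $N$ large) yields $F(\sigma(N))\sim \bigl((d-1)!\,h_d(N)\bigr)^{-\alpha}$, which is (\ref{finaltensordecay}). I would note as a small caveat that the $\sim$ statement translates to the $\approx$ relation on positive decreasing functions on $\bbN$, since any finite initial discrepancy can be absorbed into the multiplicative constants of $\approx$.

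\textbf{Optimal decay rate of $(B_1^d,B_2^d)$.} Lemma \ref{generaltensor} shows that $F(n)=\prod_{i=1}^d f(n_i)$ characterizes the optimal decay of $(B_1^d,B_2^d)$ via the bijection $\iota=\rho_1^d$. Plug this $F$ and the ordering $\sigma$ (which is consistent with $1/F$, since being consistent with $\prod n_i^\alpha$ is equivalent to being consistent with $\prod n_i = F_H$) into part~3 of Lemma \ref{characterisationlemma}. That lemma then asserts that if $g$ represents the fastest decay of $F$, then $g$ represents the optimal decay rate of $(B_1^d,B_2^d)$. By the preceding paragraph we may take $g=h_d^{-\alpha}$, establishing the second claim.

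\textbf{Strong optimality characterization (\ref{hyperbolicdef}).} For the final assertion I would invoke Corollary \ref{generaltensorcorollary}, which says that $\rho$ is strongly optimal for $(B_1^d,B_2^d)$ if and only if
\[
C_1\,F(\sigma(N))\,\le\, F\!\bigl((\rho_1^d)^{-1}\!\circ\rho(N)\bigr)\,\le\, C_2\,F(\sigma(N)), \qquad N\in\bbN,
\]
for some constants $C_1,C_2>0$. Substituting $F(n)=(\prod_i n_i)^{-\alpha}$ and using the asymptotic $F(\sigma(N))\approx h_d(N)^{-\alpha}$ from step one, both sides rearrange to the claimed two-sided bound on $\prod_{i=1}^d\bigl((\rho_1^d)^{-1}\!\circ\rho(N)\bigr)_i$ once we take $-1/\alpha$ powers. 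The only thing to be mildly careful about is that the $\approx$ relation is preserved under the map $x\mapsto x^{-1/\alpha}$ on positive functions (just relabel constants), so the equivalence goes through cleanly.

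The main obstacle in this proof is purely bookkeeping: lining up the three different relations ($\sim$ for limits, $\approx$ for equivalence of decaying functions, and the two-sided constant bound of Corollary \ref{generaltensorcorollary}) and confirming they are compatible after the power transformation. There is no substantive analysis left beyond what Lemma \ref{orderset} supplies.
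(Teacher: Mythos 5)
Your proposal is correct and follows essentially the same route as the paper: (\ref{finaltensordecay}) from part 3 of Lemma \ref{orderset} (i.e. (\ref{hyperbolicdecayrate})), the optimal decay rate from the characterization in Lemma \ref{generaltensor} fed into Lemma \ref{characterisationlemma}, and the strong optimality criterion from Corollary \ref{generaltensorcorollary}. The extra care you take over the compatibility of $\sim$, $\approx$, and the power map $x\mapsto x^{-1/\alpha}$ is sound and merely makes explicit what the paper leaves implicit.
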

\begin{proof}
(\ref{finaltensordecay}) follows immediately from (\ref{hyperbolicdecayrate}). This implies that $F \circ \sigma \approx h_d^{-\alpha}$.  The statement on the optimal decay rate then follows from the characterization result from Lemma \ref{generaltensor} applied to Lemma \ref{characterisationlemma}. The statement on strongly optimal orderings follows from Corollary \ref{generaltensorcorollary}.
\end{proof}

\begin{definition} \label{hyperbolicordering}
Using the framework of Lemma \ref{generaltensor}, any ordering $\rho: \bbN \to B_1^d$ such that (\ref{hyperbolicdef}) holds is called a 'hyperbolic' ordering. with respect to $\rho_1$. Notice that by (\ref{finaltensordecay}) that if $\sigma: \bbN \to \bbN^d$ corresponds to the hyperbolic cross then $\rho_1^d \circ \sigma$ is hyperbolic with respect to $\rho_1$.
\end{definition}

We now apply Theorem \ref{generaltensortheorem} to the one-dimensional cases we have already covered:

\subsection{Fourier-Wavelet Case}

\begin{theorem} \label{TensorResultsWavelet}
We use the setup of Lemma \ref{generaltensor}. 
Suppose $B_1=B_\rf(\epsilon)$, $B_2=B_\rw$ for some fixed $\epsilon \in I_{J,p}$, $\rho_1$ is a standard ordering of $B_1$ and $\tau_1$ is a leveled ordering of $B_2$. Let $U_d=[(B_1^d, \rho), (B_2^d, \tau)]$ where $\rho, \tau$ is hyperbolic with respect to $\rho_1, \tau_1$ respectively. Then we have, for some constants $C_1,C_2>0$,
\begin{equation}\label{FourWave1}
\frac{C_1 \log^{d-1}(N+1)}{N} \le \mu( \pi_N U_d ), \ \mu(U_d \pi_N) \le \frac{C_2 \log^{d-1}(N+1)}{N}, \qquad N \in \bbN.
\end{equation}
The above also holds if the basis $B_\rw$ is replaced by $B_{\rb \rw}$ and the condition $\epsilon \in I_{J,p}$ by $\epsilon \in (0,1/2]$.
\end{theorem}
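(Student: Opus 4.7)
The plan is to derive the theorem as a direct application of the tensor machinery developed in Theorem \ref{generaltensortheorem} to the one-dimensional Fourier-wavelet estimates in Theorem \ref{FourierWaveletResults}. First, I would observe that Theorem \ref{FourierWaveletResults} gives both $\mu(\pi_N U) \approx N^{-1}$ and $\mu(U \pi_N) \approx N^{-1}$, where $U=[(B_\rf(\epsilon),\rho_1),(B_\rw,\tau_1)]$ for the stated one-dimensional orderings. By part 2) of Lemma \ref{StrongOptimalEquivalence}, this row-incoherence estimate immediately gives that $\rho_1$ is strongly optimal for the pair $(B_\rf(\epsilon),B_\rw)$ and that $f(n)=n^{-1}$ represents the optimal decay rate. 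Since the bound also holds on $\mu(U \pi_N)$, the symmetric argument applied to the adjoint $U^*=[(B_\rw,\tau_1),(B_\rf(\epsilon),\rho_1)]$ shows $\tau_1$ is strongly optimal for $(B_\rw, B_\rf(\epsilon))$ with the same optimal decay rate.

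Next, I would feed these one-dimensional strong optimality results into Lemma \ref{generaltensor}: for the tensor pair $(B_1^d, B_2^d)$ with the $d$-dimensional indexing $\rho_1^d$ induced by $\rho_1$, the function $F(n):=\prod_{i=1}^d n_i^{-1}$ characterizes the optimal decay of $(B_1^d,B_2^d)$ in the sense of Definition \ref{Characterisation}. Applying Theorem \ref{generaltensortheorem} with $\alpha=1$ now yields that $h_d^{-1}(N)=\log^{d-1}(N+1)/N$ represents the optimal decay rate for $(B_1^d, B_2^d)$, and that an ordering $\rho$ of $B_1^d$ is strongly optimal precisely when (\ref{hyperbolicdef}) holds, which is the definition of a hyperbolic ordering.

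Since by assumption $\rho$ is hyperbolic with respect to $\rho_1$, Definition \ref{strongoptimality} directly yields $\mu(\pi_N U_d) \approx \log^{d-1}(N+1)/N$, which gives the first pair of inequalities in (\ref{FourWave1}). To obtain the $\mu(U_d \pi_N)$ estimate, I would repeat the exact same argument with the roles of $B_1$ and $B_2$ swapped: the adjoint $U_d^* = [(B_2^d,\tau),(B_1^d,\rho)]$ satisfies $\mu(\pi_N U_d^*) = \mu(U_d \pi_N)$, and applying Lemma \ref{generaltensor} and Theorem \ref{generaltensortheorem} to the pair $(B_2^d, B_1^d)$ with the hyperbolic ordering $\tau$ (with respect to the strongly optimal 1D ordering $\tau_1$) gives the required bound. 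Finally, to handle the boundary wavelet basis $B_{\rb\rw}$, I would simply observe that Theorem \ref{FourierWaveletResults} gives the identical one-dimensional conclusion under the condition $\epsilon \in (0,1/2]$, and then rerun the argument verbatim.

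The only step requiring care, rather than a genuine obstacle, is verifying that the hypotheses of Lemma \ref{generaltensor} and Theorem \ref{generaltensortheorem} are met symmetrically in both basis slots — in particular, that the hyperbolic condition (\ref{hyperbolicdef}) on $\tau$ with respect to $\tau_1$ produces strong optimality in the pair $(B_2^d, B_1^d)$. This is immediate from the symmetry built into Definition \ref{hyperbolicordering}, which makes no distinction between which basis is being indexed.
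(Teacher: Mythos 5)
Your proposal is correct and follows essentially the same route as the paper, which proves this theorem in one line by applying the one-dimensional bounds of Theorem \ref{FourierWaveletResults} to the tensor machinery of Theorem \ref{generaltensortheorem}; you have simply spelled out the intermediate steps (strong optimality of $\rho_1$ and $\tau_1$ via Lemma \ref{StrongOptimalEquivalence}, the characterization via Lemma \ref{generaltensor}, and the symmetric treatment of the column coherence) that the paper leaves implicit.
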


\begin{proof}
Inequality (\ref{FourWave1}) follows from applying Theorem \ref{FourierWaveletOptimalBounds} to Theorem \ref{generaltensortheorem}.
\end{proof}

\subsection{Fourier-Polynomial Case}

\begin{theorem} \label{TensorResultsPoly}
We use the setup of Lemma \ref{generaltensor}. Suppose $B_1=B_\rf(\epsilon)$, $B_2=B_\rp$ for some fixed $\epsilon \in (0,0.45]$, $\rho_1$ is a standard ordering of the Fourier basis and $\tau_1$ is the natural ordering of the polynomial basis. Let $U_d=[(B_1^d, \rho), (B_2^d, \tau)]$ where $\rho, \tau$ is hyperbolic with respect to $\rho_1, \tau_1$ respectively. Then we have, for some constants $C_1,C_2>0$, that
\begin{equation}\label{FourLeg1}
 \frac{C_1 (\log^{d-1}(N+1))^{2/3}}{N^{2/3}} \le \mu( \pi_N U_d ), \ \mu(U_d \pi_N) \le \frac{ C_2 (\log^{d-1}(N+1))^{2/3}}{N^{2/3}}, \qquad N \in \bbN.
\end{equation}
\end{theorem}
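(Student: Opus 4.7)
The plan is to mirror the proof of Theorem \ref{TensorResultsWavelet} almost verbatim, since the only change going from the Fourier-wavelet case to the Fourier-Legendre case is that the one-dimensional exponent $\alpha$ jumps from $1$ to $2/3$. All the heavy lifting, namely the hyperbolic-cross counting asymptotic $h_d(N)=N/\log^{d-1}(N+1)$, has already been done once and for all in Theorem \ref{generaltensortheorem}. So the proof will consist of checking that the hypotheses of that theorem are met in the Fourier-Legendre setting, and then reading off the consequence with $\alpha=2/3$.

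First I would invoke Theorem \ref{FourierPolynomialResults} to produce a strongly optimal ordering in the one-dimensional case. Specifically, the two-sided bound \eqref{FourierPolynomialOptimalBounds} combined with Lemma \ref{StrongOptimalEquivalence} part 2 shows that the standard Fourier ordering $\rho_1$ is strongly optimal for the basis pair $(B_\rf(\epsilon),B_\rp)$, and symmetrically that the natural Legendre ordering $\tau_1$ is strongly optimal for $(B_\rp,B_\rf(\epsilon))$. In both cases, a representative of the optimal decay rate is the power law $f(n)=n^{-2/3}$.

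Next I would apply Theorem \ref{generaltensortheorem} with this $f$ (so $\alpha=2/3$). By Definition \ref{hyperbolicordering}, the hyperbolic ordering $\rho$ in the theorem statement can be written as $\rho_1^d\circ\sigma$ for some $\sigma:\bbN\to\bbN^d$ corresponding to the hyperbolic cross (up to the constants appearing in \eqref{hyperbolicdef}, which contribute only multiplicative constants and are absorbed into $C_1,C_2$). Theorem \ref{generaltensortheorem} then gives
\[
\mu(\pi_N U_d) \;\approx\; F(\sigma(N)) \;=\; \Bigl(\prod_{i=1}^d \sigma(N)_i\Bigr)^{-2/3} \;\approx\; h_d(N)^{-2/3} \;=\; \left(\frac{\log^{d-1}(N+1)}{N}\right)^{2/3},
\]
which is exactly the row bound in \eqref{FourLeg1}. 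For the column bound $\mu(U_d\pi_N)$, I would simply run the same argument on the transpose: the roles of $B_1^d$ and $B_2^d$ swap, $\tau_1$ plays the role that $\rho_1$ played above, and the same asymptotic $h_d(N)^{-2/3}$ drops out.

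There is no real obstacle here; the only thing to be slightly careful about is that Lemma \ref{generaltensor} is stated one-sided (it characterizes the row decay from the $B_1^d$ side only), so I need to apply the machinery twice to obtain both the row and column bounds in \eqref{FourLeg1}. This is exactly the reason the stronger conclusion of Theorem \ref{FourierPolynomialResults}, which furnishes strong optimality on both sides at once, is what makes the symmetric tensor bound possible.
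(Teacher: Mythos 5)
Your proposal is correct and follows essentially the same route as the paper, whose proof is the one-line instruction to apply Theorem \ref{FourierPolynomialResults} to Theorem \ref{generaltensortheorem}; you have simply spelled out the intermediate steps (strong optimality of the 1D orderings via Lemma \ref{StrongOptimalEquivalence}, then Theorem \ref{generaltensortheorem} with $\alpha=2/3$, applied symmetrically for the column bound). No gaps.
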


\begin{proof} 
Inequality (\ref{FourLeg1}) follows from applying Theorem \ref{FourierPolynomialOptimalBounds} to Proposition \ref{generaltensortheorem}.
\end{proof}

\subsection{Examples of Hyperbolic Orderings} \label{hyperbolicexamples}

The generalisation introduced by Definition \ref{hyperbolicordering}, apart from allowing us to characterise all orderings that are strongly optimal, may seem to fulfil little other purpose. However, as we shall see in this section, this definition admits orderings which in specific cases are very natural and appear a little less abstract than an ordering derived from the hyperbolic cross.

\begin{example} \label{hypcrossZd} (Hyperbolic Cross in $\mathbb{Z}^d$)
Our first example is unremarkable but nonetheless important. In $d$ dimensions, take $B_1^d:=B_\rf^d$ as a d-dimensional tensor Fourier basis. Recall we can identify this basis with $\mathbb{Z}^d$ using the function $\lambda_d$. Suppose that we define a function $H_d: \mathbb{Z}^d \to \mathbb{R}$ by
\be{ \label{hddef}
H_d(m) = \prod_{i=1}^d | \max(|m_i|,1)|, 
}
and say that a bijective function $\sigma: \mathbb{N} \to \mathbb{Z}^d$ `corresponds to the hyperbolic cross in $\mathbb{Z}^d$' if it is consistent with $H_d$. Figure \ref{hyperbolic} shows the first few contour lines of $H_d$ in two dimensions.
With this definition we can then prove the analogous result of Lemma \ref{orderset}:

\begin{figure}

\centering
    \includegraphics[width=0.6\textwidth]{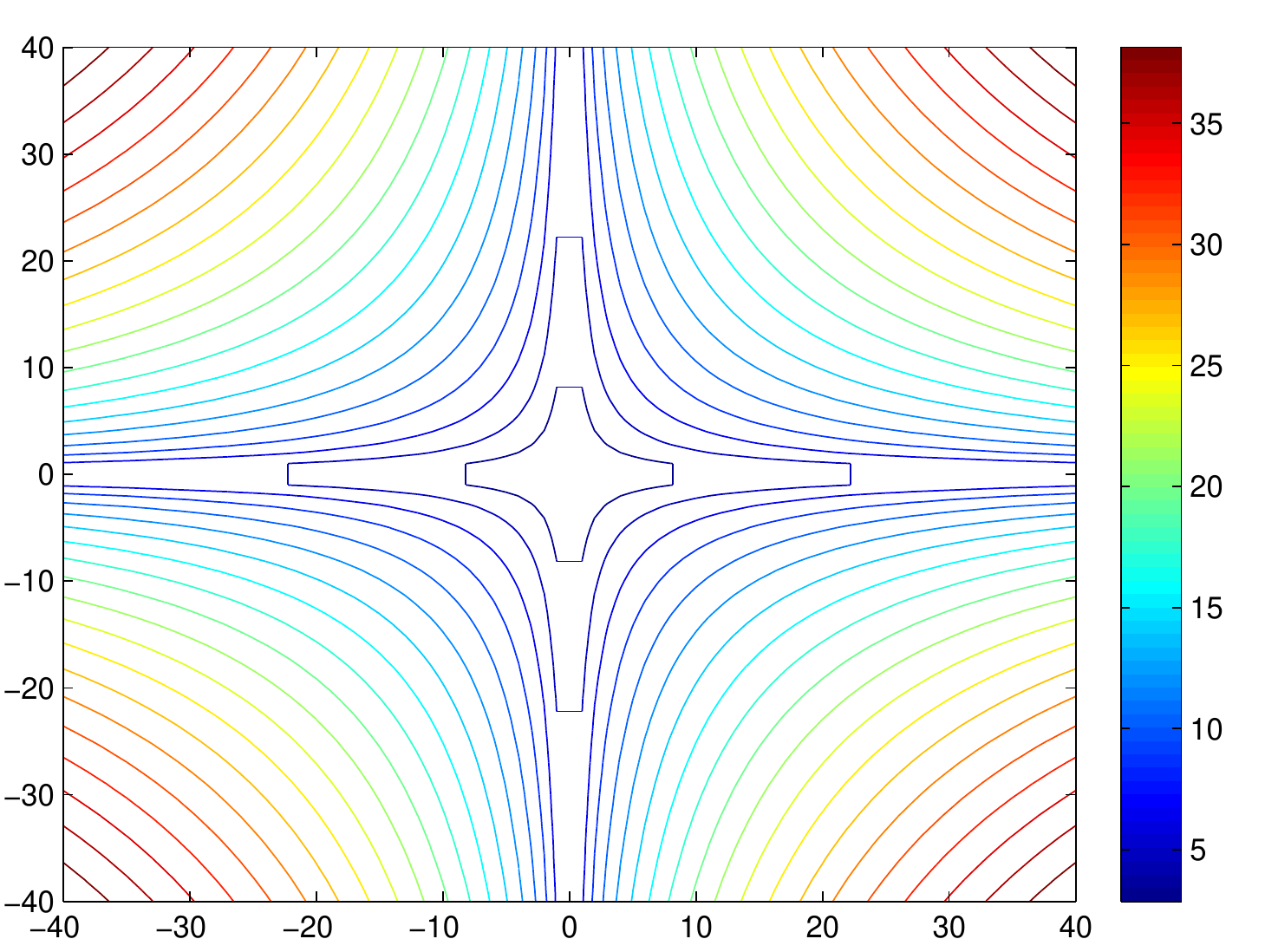} 
      \caption{Hyperbolic Fourier Ordering in Two Dimensions: A Contour Plot of $H_2$}
      \label{hyperbolic}
\end{figure}

\begin{lemma} \label{orderset2}
Let $\sigma: \mathbb{N} \to \mathbb{Z}^d$ correspond to the hyperbolic cross and let $h_d$ be as in (\ref{hyperbolicdecayrate}). Then we have
\be{ \label{hyperboliccrossZdecay}
\prod_{i=1}^d | \max(| \sigma(m)_i|,1)| \sim \frac{(d-1)!}{2^d} \cdot h_d(m)  \quad \mbox{as} \quad m \to \infty.  
}
Moreover, if $\rho_1$ is a standard ordering of $B_\rf$ and $\sigma: \mathbb{N} \to \mathbb{Z}^d$ corresponds to the hyperbolic cross. Then $\lambda_d^{-1} \circ \sigma$ is a hyperbolic ordering with respect to $\rho_1$.
\end{lemma}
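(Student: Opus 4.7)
The plan is to mirror the proof of Lemma \ref{orderset} part 3), with the extra bookkeeping needed to go from $\mathbb{N}^d$ to $\mathbb{Z}^d$, and then to convert the resulting asymptotic on $H_d(\sigma(m))$ into the hyperbolicity statement.

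First I would prove the asymptotic $H_d(\sigma(m)) \sim \tfrac{(d-1)!}{2^d} h_d(m)$. The key is to count the lattice points $T_d(N) := \#\{m \in \mathbb{Z}^d : H_d(m) \le N\}$ in terms of the quantities $S_k(N)$ from Lemma \ref{orderset}. Splitting $m \in \mathbb{Z}^d$ according to which coordinates are zero, one obtains
\[
T_d(N) \;=\; \sum_{k=0}^{d} \binom{d}{k}\, 2^{d-k}\, S_{d-k}(N),
\]
with the convention $S_0(N)=1$. Substituting the two-sided polynomial bounds from Lemma \ref{orderset} part 2) and collecting terms gives $N\underline{q}_d(\log N) \le T_d(N) \le N\overline{q}_d(\log N)$ for polynomials of degree $d-1$ whose common leading coefficient is $2^d/(d-1)!$, because only the $k=0$ summand contributes at the top order. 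From the consistency of $\sigma$ with $H_d$ we then have, exactly as in the proof of Lemma \ref{orderset} part 3),
\[
T_d\bigl(H_d(\sigma(m))-1\bigr) \;\le\; m \;\le\; T_d\bigl(H_d(\sigma(m))\bigr),
\]
and inverting via Lemma \ref{ordersimplify} part 2) with the polynomial $\tilde{f}(x)=\tfrac{2^d}{(d-1)!}\,x\log^{d-1}x + \text{lower order}$ yields $H_d(\sigma(m)) \sim \tfrac{(d-1)!}{2^d} h_d(m)$.

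For the second claim, I would use the elementary fact that a standard ordering $\rho_1$ of $B_\rf$ is comparable to $2\max(|\cdot|,1)$: the elements $\chi_j$ with $|j| < |k|$ fill the first $2|k|-1$ positions, so $2|k|-1 \le \rho_1^{-1}(\chi_k) \le 2|k|+1$ for $k \neq 0$ and $\rho_1^{-1}(\chi_0)=1$. In particular there are constants $c_1,c_2>0$ with $c_1 \max(|k|,1) \le \rho_1^{-1}(\chi_k) \le c_2 \max(|k|,1)$ for every $k \in \mathbb{Z}$. Taking products coordinate-wise and using $(\rho_1^d)^{-1}(\chi_{k_1}\otimes\cdots\otimes\chi_{k_d}) = (\rho_1^{-1}(\chi_{k_1}),\dots,\rho_1^{-1}(\chi_{k_d}))$ gives
\[
c_1^d\, H_d(\sigma(N)) \;\le\; \prod_{i=1}^{d}\bigl((\rho_1^d)^{-1} \circ \lambda_d^{-1} \circ \sigma(N)\bigr)_i \;\le\; c_2^d\, H_d(\sigma(N)).
\]
Combining with the asymptotic proved in the first step, the middle quantity is $\approx h_d(N)$, which is precisely the condition (\ref{hyperbolicdef}) required in Definition \ref{hyperbolicordering}.

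The main obstacle is the first step: carefully justifying that the lower-order contributions from zero-coordinate orthants really are swallowed by the same polynomial-error formalism as in Lemma \ref{orderset}, and that the inversion via Lemma \ref{ordersimplify} applies to the $T_d$ polynomial even after the $\pm 1$ shift. Once that counting is clean, the rest is a short computation: the second part reduces entirely to the comparison $\rho_1^{-1}(\chi_k) \asymp \max(|k|,1)$, which is essentially the definition of a standard ordering.
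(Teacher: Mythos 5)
Your proof is correct. For the asymptotic (\ref{hyperboliccrossZdecay}) you follow essentially the paper's route---count lattice points of the hyperbolic cross in $\bbZ^d$ via the counts $S_k$ from Lemma \ref{orderset}, sandwich $m$ between $T_d(H_d(\sigma(m))-1)$ and $T_d(H_d(\sigma(m)))$, and invert with Lemma \ref{ordersimplify}---but your counting step is a little cleaner: the exact orthant decomposition $T_d(N)=\sum_{k=0}^{d}\binom{d}{k}2^{d-k}S_{d-k}(N)$ delivers matching upper and lower polynomial bounds with leading coefficient $2^d/(d-1)!$ in one stroke, whereas the paper proves the lower bound by restricting to the $2^d$ open quadrants and the upper bound by a separate recursive peeling of the hyperplanes $\{m_i=0\}$ (identifying their intersections with $\mathcal{H}_{d-1}(n)$). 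For the second claim your argument genuinely diverges from the paper's: you verify condition (\ref{hyperbolicdef}) of Definition \ref{hyperbolicordering} directly from the combinatorial comparison $\rho_1^{-1}(\chi_k)\asymp\max(|k|,1)$ forced by consistency with $F_\rf$, while the paper detours through the coherence estimates of Theorem \ref{FourierWaveletResults}, the tensor characterization of Lemma \ref{generaltensor}, Lemma \ref{characterisationlemma}, and Theorem \ref{generaltensortheorem} to first establish strong optimality for the Fourier--wavelet pair and only then deduce hyperbolicity. Your route is preferable here: hyperbolicity with respect to $\rho_1$ is a purely index-theoretic property, and your proof makes no reference to any second basis, whereas the paper's argument makes the conclusion appear to depend on the wavelet coherence results. (One microscopic slip: consistency actually gives $2|k|\le\rho_1^{-1}(\chi_k)\le 2|k|+1$ for $k\ne 0$ rather than your lower bound $2|k|-1$, but this only strengthens your two-sided comparison.)
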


\begin{proof}
Let $R_d(n)$ denote the number of lattice points in the hyperbolic cross of size $n$ in $\bbZ^d$, namely
 \[ R_d(n) := \# \{ m \in \mathbb{Z}^d  :  \prod_{i=1}^d \max(| m_i|,1)  \le n \}.  \]
Call the set in the above definition $ \mathcal{H}_d(n)$. If we remove the hyperplanes $\{m_i=0 \}$ for every $i$ from $\mathcal{H}_d(n)$, we are left with $2^d$ quadrants in $\mathbb{Z}^d$ which are congruent to set in equation (\ref{simplehyperbolic}). From the second part of Lemma \ref{orderset} we therefore have
 \[ R_d(n)  \ge  2^d n \underline{p}_d( \log(n)). \]
 Next notice that the intersection of $\mathcal{H}_d(n)$ with each hyperplane $\{ m_i=0 \}$ can be identified with $\mathcal{H}_{d-1}(n)$ and so we also have the upper bound 
 \[ 
 \begin{aligned}
 R_d(n) & \le 2^d  n \overline{p}_d( \log(n)) + d \cdot R_{d-1}(n)  \quad \Rightarrow \quad R_{d}(n) \le n \overline{r}_d( \log (n)),
 \end{aligned}
 \]
 for some degree $d-1$ polynomial $\overline{r}_d$ with leading coefficient $\frac{2^{d}}{(d-1)!}$. Combining the upper and lower bounds we see that for some polynomials $\underline{r}_d, \overline{r}_d$ of degree $d-1$ with leading coefficient $ \frac{2^{d}}{(d-1)!}$ we have 
 \[ n \underline{r}_d (\log(n)) \le R_d(n) \le n \overline{r}_d( \log(n) ). \]
 Therefore for $m \in \mathbb{N}$ since 
 \[R_d(H_d(\sigma(m))-1) \le m \le R_d(H_d(\sigma(m))),\] 
 we have
 \[ 
\begin{aligned} 
 (H_d(\sigma(m))-1) & \underline{r}_d (\log(H_d(\sigma(m))-1)) \le m \le H_d(\sigma(m)) \overline{r}_d (\log(H_d(\sigma(m)))).
 \end{aligned} 
 \]
Consequently we can apply Lemma \ref{ordersimplify} to both sides to derive (\ref{hyperboliccrossZdecay}) like in the proof of Lemma \ref{orderset}.
 For the last part of the Lemma notice that since $\rho_1$ is a standard ordering then $\max(|\lambda_1 \circ \rho_1(N)|,1) \approx N$. This means that the bounds on $\mu(\pi_N U)$ in Theorem \ref{FourierWaveletResults} can be rephrased as (for some constants $C_1,C_2>0$)
 
 \[
 C_1 \cdot (\max(|n|,1))^{-1} \le \sup_{g \in B_\rw} | \langle \lambda_1^{-1}(n), g \rangle |^2 \le  C_2 \cdot (\max(|n|,1))^{-1}, \quad n \in \bbZ,
 \]
 and by Lemma \ref{generaltensor} this extends to the dD tensor case:
  \be{ \label{zhypcrosscharacterise}
 C^d_1 \cdot \prod_{i=1}^d (\max(|n_i|,1))^{-1} \le \sup_{g \in B^d_\rw} | \langle \lambda_d^{-1}(n), g \rangle |^2 \le  C^d_2 \cdot \prod_{i=1}^d (\max(|n_i|,1))^{-1}, \quad n \in \bbZ^d.
 }
 This describes a characterization of the optimal decay of $(B_\rf^d(\epsilon),B^d_\rw)$. Lemma \ref{characterisationlemma} tells us that $\lambda_d^{-1} \circ \sigma$ is strongly optimal for $(B_\rf^d(\epsilon),B^d_\rw)$, which by Theorem \ref{generaltensortheorem} is hyperbolic with respect to $\rho_1$.
\end{proof}
\end{example}

\begin{example} (Tensor Wavelet Ordering)
Now we look at an example of a less obvious hyperbolic ordering. We first introduce some notation to describe a tensor wavelet basis:
For $j \in \mathbb{N}, \ k \in \mathbb{Z}$ let $\phi^0_{j,k}:= \phi_k, \ \phi^1_{j,k}:= \psi_{j,k}$. Now for $s \in \{ 0,1 \}^d , \ j \in \mathbb{N}^d, \ k \in \mathbb{Z}^d$ define
\[ \Psi^s_{j,k}  :=  \bigotimes_{i=1}^d \phi^{s_i}_{j_i,k_i}.\] 
Then it follows that for $J \in \bbN$ fixed, we have
\begin{align} \label{tensorwaveletbasisdefine}
B^d_\rw := \left\{ \begin{array}{cc}   &  \mathrm{Supp}(\phi^{s_i}_{j_i,k_i}) \cap (-1,1) \neq \emptyset \quad \forall i , \\  \Psi^s_{j,k} :  &  s_i=0 \Rightarrow j_i=J, \quad s_i=1 \Rightarrow j_i\ge J \\ &  j \in \mathbb{N}^d, s \in \{ 0,1 \}^d , \ k \in \mathbb{Z}^d  \end{array} \right \} , 
\end{align}
 The same approach can be applied to the boundary wavelet basis $B_{\rb \rw}$ to generate a boundary tensor wavelet basis $B^d_{\rb \rw}$, although we must include the extra boundary terms, which can be done by letting $s \in \{ 0,1,2,3 \}^d$ where $\phi^2_{J,n}$ would be a boundary scaling function term and $\phi^3_{j,n}$ a boundary wavelet term.
\begin{lemma} \label{tensorwavelethyp}
Let $\rho_1$ be any leveled ordering of a one-dimensional Haar wavelet basis $B^d_{\rw}$. Setting $\overline{j}=\sum_{i=1}^d j_i$ define $F_\text{hyp}:B^d_{\rw} \to \mathbb{R}$ by the formula
\[ F_\text{hyp}( f) =  \overline{j} \quad   \mbox{if }  \quad f= \Psi^s_{j,k}. 
  \]
Then any ordering $\rho: \mathbb{N} \to B^d_{\rw}$ that is consistent with $F_\text{hyp}$ is a hyperbolic ordering with respect to $\rho_1$.
\end{lemma}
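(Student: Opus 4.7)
The plan is to verify the defining bounds (\ref{hyperbolicdef}) for $\rho$ directly, by combining a careful analysis of indices under a one-dimensional leveled Haar ordering with a combinatorial count of basis elements in each sublevel of $F_\text{hyp}$. First I would observe that for Haar ($p=1$) the set $B_\rw$ contains exactly $2^{J+1}$ scaling functions in $V_J$ and $2^{j+1}$ wavelets in $W_j$ for each $j \ge J$. Consequently, any leveled ordering $\rho_1$ places the elements of $W_j$ in the index window $(2^{j+1},2^{j+2}]$, which gives $\rho_1^{-1}(\phi^s_{j,k}) \approx 2^j$ up to universal constants (with the convention $j=J$ when $s=0$). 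Passing to the induced $d$-dimensional indexing $\rho_1^d$ then yields, for every $\Psi^s_{j,k} \in B_\rw^d$,
\[
\prod_{i=1}^d (\rho_1^d)^{-1}(\Psi^s_{j,k})_i \ = \ \prod_{i=1}^d \rho_1^{-1}(\phi^{s_i}_{j_i,k_i}) \ \approx \ 2^{\overline{j}}.
\]

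Next I would compute $N_d(\ell) := \#\{f \in B_\rw^d : F_\text{hyp}(f) \le \ell\}$. For each level $\ell \ge dJ$, I would partition by the pattern $s \in \{0,1\}^d$, letting $d' = |\{i : s_i = 1\}|$, and by the multi-index $(j_i)_{i: s_i=1}$ satisfying $\sum_{i: s_i=1} j_i = \ell - (d-d')J$ with each $j_i \ge J$; a standard stars-and-bars count gives $\binom{\ell - dJ + d' - 1}{d' - 1}$ such multi-indices, each contributing $2^{d+\ell}$ admissible shifts $k$. Summing first over $s$ with a fixed $d'$, then over the remaining free $j_i$, and finally over all levels $\ell' \le \ell$, the geometric factor $2^{\ell'}$ ensures the leading $d'=d$ pattern dominates and produces $N_d(\ell) \approx \ell^{d-1}\,2^{\ell}$.

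Once this count is established, consistency of $\rho$ with $F_\text{hyp}$ forces $N_d(\overline{j}-1) < N \le N_d(\overline{j})$ whenever $\rho(N)$ has $F_\text{hyp}$-value $\overline{j}$, so that $N \approx \overline{j}^{d-1}\,2^{\overline{j}}$. Setting $x = 2^{\overline{j}}$, this reads $N \approx x\,\log_2^{d-1}(x)$, which is exactly the function $f_d$ introduced in Section \ref{tensors} up to constants; part 2 of Lemma \ref{ordersimplify} then inverts this to yield $2^{\overline{j}} \approx h_d(N)$. Combining with the first step gives $\prod_i (\rho_1^d)^{-1}(\rho(N))_i \approx h_d(N)$, which is precisely the hyperbolic condition (\ref{hyperbolicdef}).

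The main obstacle will be the counting step: one must establish matching upper and lower asymptotics for $N_d(\ell)$, handle the degenerate cases ($d'=0$ and small $\ell$ near $dJ$) separately, and verify that the contributions from patterns with $d' < d$ are absorbed into the leading $\Theta(\ell^{d-1} 2^\ell)$ so that the constants in the final $\approx$ statement are genuinely independent of $N$. Everything else is a direct translation through definitions and a single appeal to Lemma \ref{ordersimplify}.
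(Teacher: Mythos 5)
Your proposal is correct and follows essentially the same route as the paper's proof: reduce $\prod_i(\rho_1^d)^{-1}(\rho(N))_i$ to $2^{\overline{j}}$ via the one-dimensional level counts, show the number of basis elements up to level $\ell$ is a degree-$(d-1)$ polynomial times $2^{\ell}$, sandwich $N$ by consistency, and invert with Lemma \ref{ordersimplify}. The only cosmetic difference is that you make the level count explicit via stars and bars where the paper simply records that $r_d$ is a polynomial of degree $d-1$ and sums the resulting series.
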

\begin{remark}
Such an ordering $\rho$ is used to implement a tensor wavelet basis in Section \ref{numericalsection}.
\end{remark}
\begin{remark}
For the sake of simplicity we only work with the Haar wavelet case, although we could cover the boundary wavelet case with the same argument.
\end{remark}
\begin{proof}
By recalling inequality (3.10) in \cite{onedimpaper} or by using Lemma \ref{levelgrowth} in the case $d=1$ we know that there are constants $C_1, C_2>0$ such that for $\rho_1(N) = \phi^{s}_{j,k}, s \in \{0,1\}, j \in \bbN, k \in \bbZ$,
\[
C_1 2^{ j} \le N \le C_2 2^{j}, \qquad N \in \bbN.
\]
Therefore, writing $\rho_1^d(m) = \Psi^{s(m)}_{j(m),k(m)}$,
\[ C^d_1 2^{ \overline{j(m)}} \le \prod_{i=1}^d m_i \le C^d_2 2^{ \overline{j(m)}}, \qquad m \in \bbN^d. \]
Consequently if we rewrite this with an actual ordering $\rho(N)= \Psi^{s(N)}_{j(N),k(N)}$ for $N \in \mathbb{N}$ we deduce
\begin{equation} \label{down2j} 
C^d_1 2^{ \overline{j(N)}} \le \prod_{i=1}^d \Big( (\rho_1^d)^{-1} \circ \rho(N) \Big)_i \le C^d_2 2^{ \overline{j(N)}} ,
\end{equation}
and so we have reduced the problem to determining how $\overline{j(N)}$ scales with $N$.  
Notice that from our ordering of the wavelet basis that $\overline{j(N)}$ is a monotonically increasing function in $N$ and moreover, for every value of $\overline{j(N)}$ there are $r_d(\overline{j(N)})2^{ \overline{j(N)}}$ terms in $B_w^d$ with this value of  $\overline{j(N)}$ in the wavelet basis, where
\[ 
\begin{aligned}
r_d(N):=  \# \Bigg\{ (j,s) \in \mathbb{N}^d \times \{0,1 \}^d : \quad \overline{j} = N , \quad j_i \ge J, \quad  (s_i-1)(j_i-J)=0 \quad \forall i=1,...,d \Bigg\},
\end{aligned}
\] 
This is where we are using that the support of the Haar wavelet is $[0,1]$ and so there are $2^j$ shifts of $\phi_{j,0}, \psi_{j,0}$ in $B_\rw$. Notice that $r_d(N)$ is a polynomial of degree $d-1$. With this in mind notice we can define, consistent for $n \in \mathbb{N}, \ n \ge J$,
\begin{align*} 
T_d(x) & ``=" \sum_{i=J}^x r_d(i)2^{i} := \ p_d(x) 2^{x} + \alpha_d ,
\end{align*}
for some degree $d-1$ polynomial $p_d$ and constant $\alpha_d$. This is possible by taking the formula for the geometric series expansion and differentiating repeatedly. By the consistency property of $\rho$ we deduce the inequality
\begin{align*}
 T_d( \overline{j(N)}-1)  \le  N   \le  T_d(\overline{j(N)}) \quad &\Rightarrow \quad \overline{j(N)}-1  \le  T_d^{-1}(N)  \le  \overline{j(N)} \\
&\Rightarrow  \quad 2^{\overline{j(N)}-1}  \le  2^{T_d^{-1}(N)}  \le  2^{\overline{j(N)}}.
\end{align*}
Notice that $2^{T_d^{-1}(x)}$ is the inverse function of $T_d( \log_2 x)$ which is of the form 
$ x \cdot p_d( \log_2x) + \alpha,$
Therefore, applying parts 2. \& 3. of Lemma \ref{ordersimplify} gives, for some constants $D_1,D_2 >0$ and $N$ large,
\begin{equation} \label{gasymp}
  \quad (1+ \epsilon_1(N)) \cdot D_1 \cdot 2^{\overline{j(N)}}  \le  g_d(N)   \le (1+\epsilon_2(N)) \cdot  D_2 \cdot 2^{\overline{j(N)}} ,
  \end{equation}
where $\epsilon_1(N), \epsilon_2(N) \to 0$ as $N \to \infty$. Combining this with (\ref{down2j}) shows that we have a hyperbolic ordering.
\end{proof}

\end{example}

\subsection{Plotting Tensor Coherences}

Let us consider a simple illustration of this theory applied to a 2D tensor Fourier-Wavelet case $(B^2_\rf,B^2_\rw)$. We can identify the 2D Fourier Basis $B^2_\rf$ with $\bbZ^2$ using the function $\lambda_2$,  so the row incoherences can also be identified with $\bbZ^2$ and therefore they can be imaged directly in 2D, as in Figure \ref{tensorincoherences}. 

\begin{figure}[!t]
\begin{center}
\begin{subfigure}[t]{0.32\textwidth}
\begin{center}
\includegraphics[width=\textwidth]{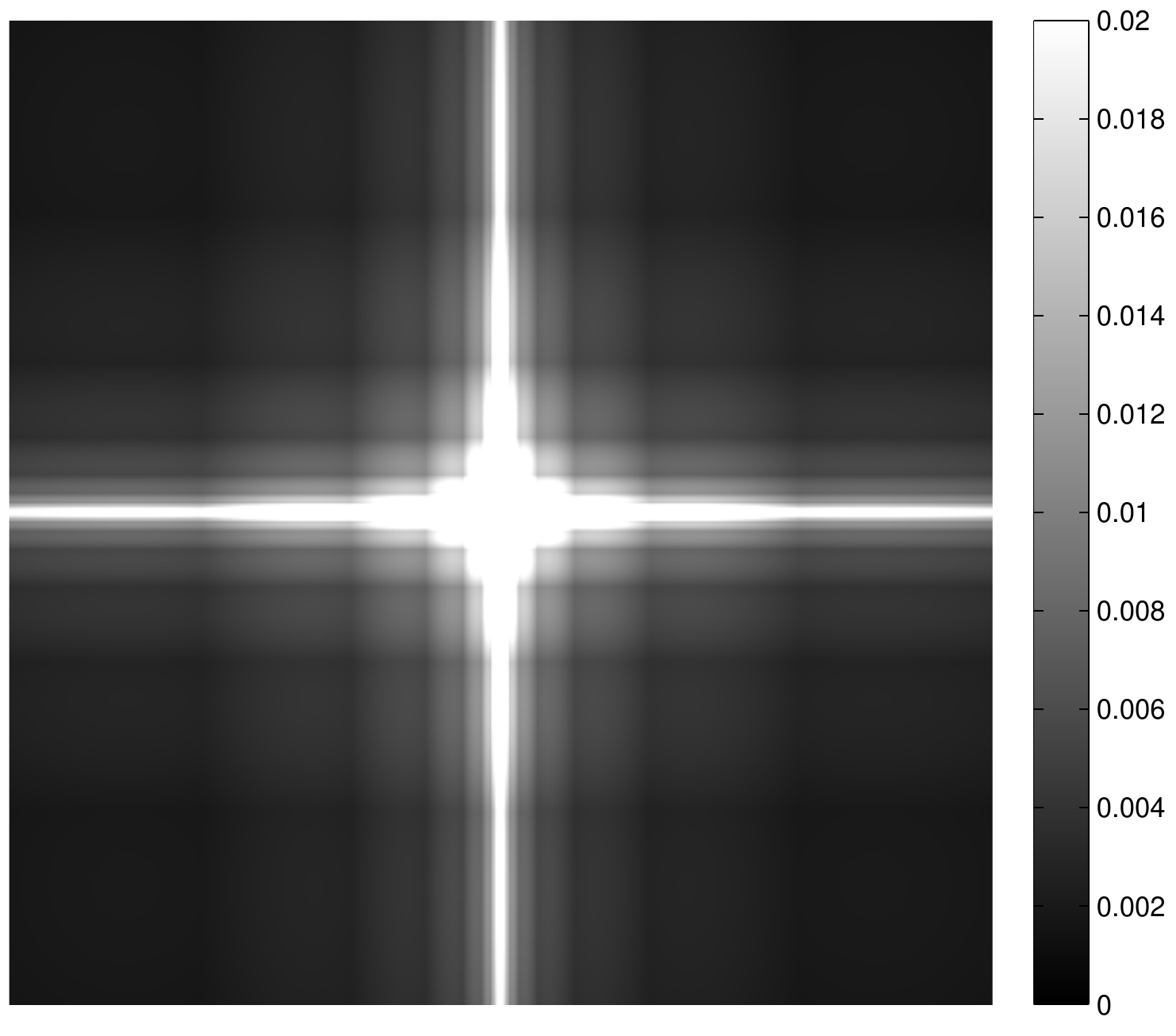}
 \caption{\footnotesize Original Coherences}
\end{center}
\end{subfigure}
\begin{subfigure}[t]{0.32\textwidth}
\begin{center}
 \includegraphics[width=\textwidth]{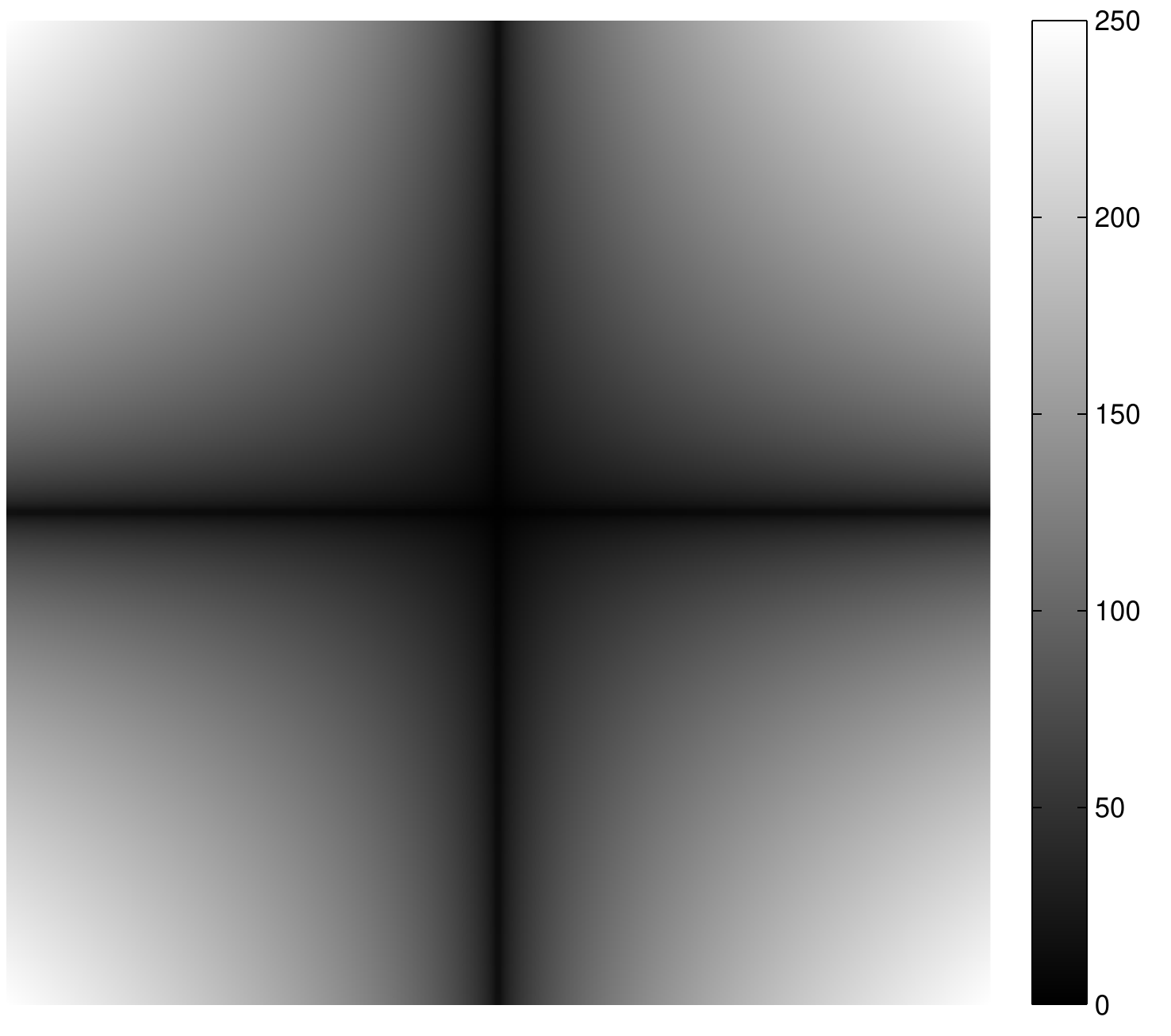} 
  \caption{\footnotesize Hyperbolic Scaling }
  \end{center}
\end{subfigure}
\begin{subfigure}[t]{0.32\textwidth}
\begin{center}
\includegraphics[width=\textwidth]{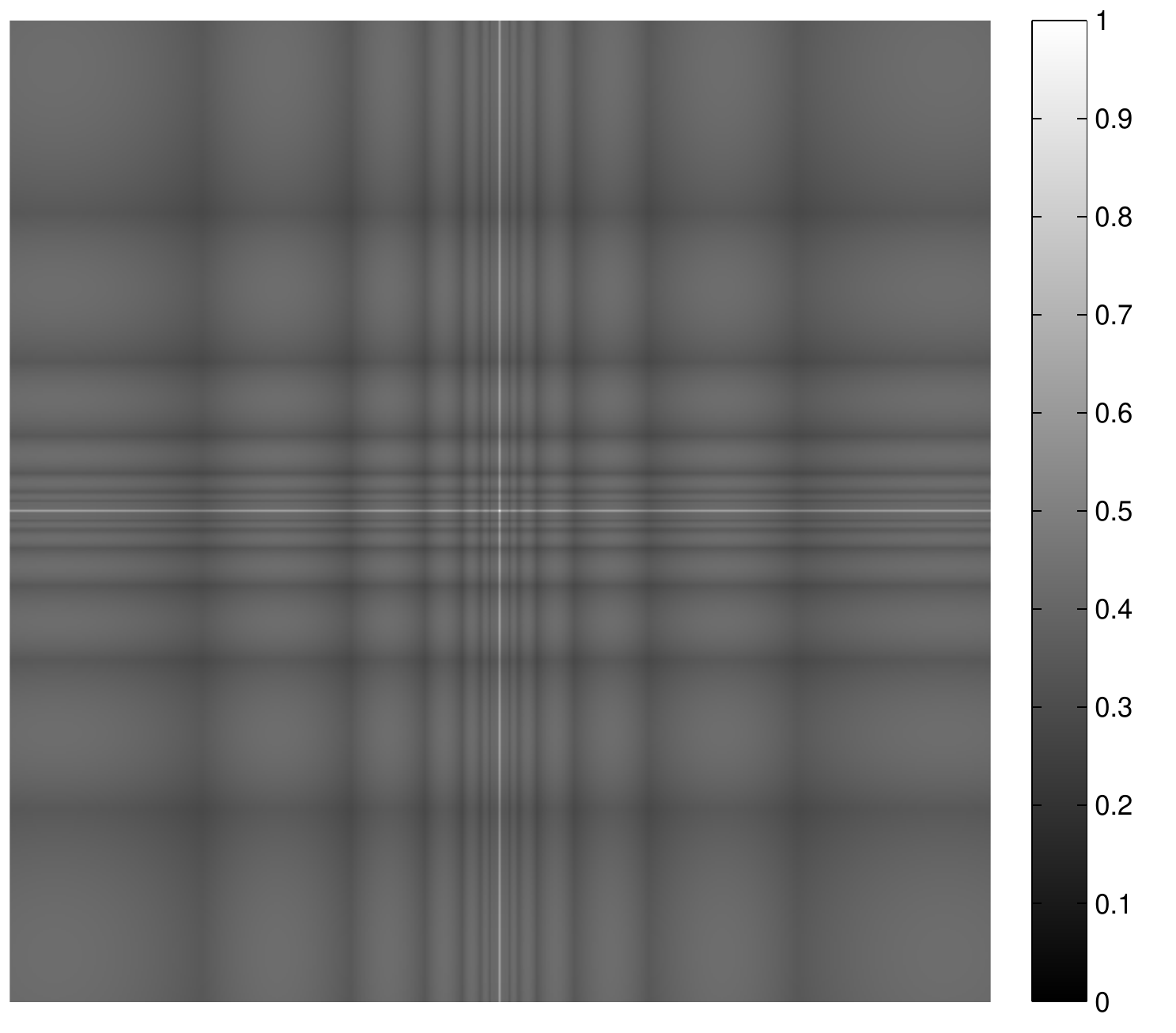} 
  \caption{\footnotesize Scaled Coherences \\ (Product of (a) \& (b)) }
  \end{center}
\end{subfigure}
\end{center}
\caption{2D Tensor Fourier - Tensor Haar Incoherences. We show the subset $\{-250,-249,...,249,250\}^2 \subset \bbZ^2$. Notice that the scaled coherences have no vanishing values (no pure black) and no values that blow up (no pure white, baring the center value which = 1) indicating that we have characterised the coherence in terms of the hyperbolic scaling used. Formally this is shown by equation (\ref{zhypcrosscharacterise}). The coherences shown in the Figure are square rooted to reduce contrast (i.e. we image $ \sqrt{\mu(\pi_N U)}$ instead of $\mu(\pi_N U)$).}
\label{tensorincoherences}
\end{figure}

\section{Multidimensional Fourier - Separable Wavelet Case: Proof of Theorem \ref{separablesummary}} \label{separable}

We repeat the notation of the one-dimensional case, with scaling function $\phi$ (in one dimension) \& Daubechies wavelet $\psi$:
\[ \phi_{j,k}(x) =2^{j/2}\phi(2^j x - k)  , \quad \psi_{j,k} (x)  =  2^{j/2} \psi(2^j x - k). \]
We can construct a d-dimensional scaling function $\Phi$ by taking the tensor product of $\phi$ with itself, namely
\[ \Phi(x) \ := \ \Big( \bigotimes_{j=1}^d \phi \Big)(x) = \prod_{j=1}^d \phi(x_j), \qquad x  \in \mathbb{R}^d, \]
which has corresponding multiresolution analysis $(\tilde{V}_j)_{j \in \mathbb{Z}}$ with diagonal scaling matrix $A \in \mathbb{R}^{d \times d}$ with $A_{i,j}=2 \delta_{i,j}$.

Let $\phi^0:=\phi, \  \phi^1:=\psi$ and for $s \in \{ 0,1 \}^d,  \ j \ge J, \ k \in \mathbb{Z}^{d}$ where $J \in \bbN$ is fixed define the functions 
\be{ \label{separabledefine}
\Psi^s_{j,d}:= \bigotimes_{i=1}^d \phi^{s_i}_{j,k_i}.
}
If we write (for $s \in \{ 0,1 \}^d \setminus \{0\}, \ j \ge J$) 
\[W^s_j := \overline{ \text{Span} \{ \Psi^s_{j,k}  :  k \in \mathbb{Z}^d \} }.\] Then it follows that 
\[ \tilde{V}_{j+1} = \tilde{V}_j \oplus \bigoplus_{s \in \{ 0,1 \}^d \setminus \{0\}}  W^s_j, \quad L^2(\mathbb{R}^d) =  \overline{\tilde{V}_J \oplus \bigoplus_{\substack{s \in \{ 0,1 \}^d \setminus \{0\}  \\ j \ge J}} W^s_j}.\]
This corresponds to taking $2^d-1$ wavelets for our basis in d dimensions (see \cite{dDwav}). As before we take the spanning functions from the above whose support has non-zero intersection with $[-1,1]^d$ as our basis $B_2$ (called a `separable wavelet basis'):
\begin{align} \label{separablewaveletbasisdefine}
B^d_{\text{sep}} := \left\{ \begin{array}{cc}   &  \mathrm{Supp}(\phi^{s_i}_{j,k_i}) \cap (-1,1) \neq \emptyset \quad \forall i , \\  \Psi^s_{j,k} :  &  s=0 \Rightarrow j=J, \\ &  j \in \mathbb{N}, s \in \{ 0,1 \}^d , \ k \in \mathbb{Z}^d  \end{array} \right \} , 
\end{align}

\begin{remark}
We can also construct a separable boundary wavelet basis in the same manner like in the one-dimensional case however, for the sake of simplicity, we stick to the above relatively simple construction throughout (although all the coherence results we cover here also hold for the separable boundary wavelet case as well).

\end{remark}

\subsection{Ordering the Separable Wavelet Basis: Proving Theorem \ref{separablesummary} Part (i)}
 \label{separablewaveletordering}

We note a few key equalities from the one-dimensional case that will come in handy:
\be{ \label{1Dequalities}
\mathcal{F}\phi_{j,k}( \omega )  = e^{-2\pi \ri  2^{-j} k \omega} 2^{-j/2} \mathcal{F} \phi(2^{-j} \omega), \qquad
\mathcal{F}\psi_{j,k}( \omega )  = e^{-2\pi \ri 2^{-j} k \omega} 2^{-j/2} \mathcal{F} \psi(2^{-j} \omega),
}
where $\mathcal{F}$ here denotes the Fourier Transform, i.e. for $f \in L^2(\bbR^d)$ we define
$$
\mathcal{F}f(\omega) = \int_{\mathbb{R}^d} f(x) e^{-2\pi i \omega \cdot x }  \, dx, \qquad \omega \in \bbR^d.
$$
Recall $\chi_k$ from Definition \ref{fourier}. We observe that by (\ref{separabledefine})
\begin{equation} \label{fourierprod}
\begin{aligned}
 \langle \Psi^s_{j,k} , \chi_n \rangle  & = \epsilon^{d/2} \cdot \mathcal{F} \Psi^s_{j,k}(\epsilon n)
 = \epsilon^{d/2} \prod_{i=1}^d \mathcal{F} \phi^{s_i}_{j,k_i} (\epsilon n_i), \qquad n \in \bbZ^d,
 \\  \Rightarrow & \sup_{n \in \bbN^d} |\langle \Psi^s_{j,k} , \chi_n \rangle|^2 = \epsilon^{d} 2^{-dj} \cdot \prod_{i=1}^d \sup_{n \in \bbN} |\mathcal{F} \phi^{s_i} (\epsilon 2^{-j} n)|^2.
 \end{aligned}
\end{equation}
By careful treatment of the product term we can determine the optimal decay of $(B^d_\text{sep}, B_\rf^d (\epsilon))$, using the following result:

\begin{proposition} \label{dDSeparableWaveletFourierCharacterisation}
There are constants $C_1,C_2>0$ such that for all $\epsilon \in I_{J,p}, \Psi^s_{j,k} \in B^d_\text{sep}$ we have 
\[
C_1 \cdot \epsilon^{d} 2^{-dj} \le  \sup_{n \in \bbN^d} |\langle \Psi^s_{j,k} , \chi_n \rangle|^2 \le C_2 \cdot \epsilon^{d} 2^{-dj}.
\]
Consequently, fixing $\epsilon$, the function $F_\text{power}: B^d_\text{sep} \to \bbR$ defined by $F_\text{power}(\Psi^s_{j,k})=2^{-dj}$ characterizes the optimal decay of $(B_\text{sep}^d, B_\rf^d(\epsilon))$.
\end{proposition}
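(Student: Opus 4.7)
The plan is to reduce the proposition to a one-dimensional, single-factor estimate and then invoke the known 1D result Theorem~\ref{FourierWaveletResults}. Equation (\ref{fourierprod}) already performs the algebraic factorisation
\[
\sup_n |\langle \Psi^s_{j,k}, \chi_n \rangle|^2 \;=\; \epsilon^d 2^{-dj} \cdot \prod_{i=1}^d \sup_n |\mathcal{F}\phi^{s_i}(\epsilon 2^{-j} n)|^2,
\]
so after separating the prefactor $\epsilon^d 2^{-dj}$ it will be enough to pinch each scalar quantity
\[
M_j^s := \sup_n |\mathcal{F}\phi^s(\epsilon 2^{-j} n)|^2, \qquad s \in \{0,1\},\; j \ge J,
\]
between two positive constants that do not depend on $j$ or $s$. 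Taking the product over $i=1,\ldots,d$ will then give $C_1^d \le \prod_i M_j^{s_i} \le C_2^d$, and substituting back will yield the stated two-sided bound. Choosing $\iota$ to be the identity on $B^d_\text{sep}$ and $F = \epsilon^d F_\text{power}$, the characterisation claim will then follow directly from Definition~\ref{Characterisation}(2).

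The two easy halves of the pinch are the uniform upper bound on $M_j^s$ and the lower bound in the scaling-function case. Since $\phi$ and $\psi$ are compactly supported, $\mathcal{F}\phi$ and $\mathcal{F}\psi$ are continuous and bounded on $\mathbb{R}$, so $M_j^s \le \max(\|\mathcal{F}\phi\|_\infty^2,\|\mathcal{F}\psi\|_\infty^2) < \infty$. For $s = 0$ I use the normalisation $\mathcal{F}\phi(0) = 1$ for the Daubechies scaling function: evaluating at $n = 0$ immediately gives $M_j^0 \ge 1$ uniformly in $j$.

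The main obstacle will be the lower bound $M_j^1 \ge c > 0$ for the wavelet case. Because $\mathcal{F}\psi(0) = 0$, the $n=0$ trick fails, and a direct argument would require quantifying how the lattice $\epsilon 2^{-j} \bbZ$ intersects the bulk of $|\mathcal{F}\psi|$---a delicate matter when $j$ is small and the lattice spacing is coarse. I will sidestep this by extracting the bound straight from the one-dimensional Theorem~\ref{FourierWaveletResults}. Fix a leveled ordering $\tau$ of $B_\rw$ and the associated matrix $U$. By (\ref{1Dequalities}), whenever $\tau(N) = \psi_{j,k}$ we have
\[
\mu(U\pi_N) \;=\; \sup_n |\langle \chi_n, \psi_{j,k} \rangle|^2 \;=\; \epsilon \cdot 2^{-j} \cdot M_j^1,
\]
and crucially this depends only on $j$, not on the shift $k$. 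Since $\tau$ is leveled and each level $j$ contributes a block of positions of size comparable to $2^j$, every $j \ge J$ admits some $N \approx 2^j$ with $\tau(N) = \psi_{j,k}$. The one-dimensional lower bound $\mu(U\pi_N) \ge C_1/N$ then translates into $M_j^1 \gtrsim \epsilon^{-1}$, uniformly in $j$.

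With the four single-factor bounds in hand, (\ref{fourierprod}) yields $C_1 \epsilon^d 2^{-dj} \le \sup_n |\langle \Psi^s_{j,k}, \chi_n \rangle|^2 \le C_2 \epsilon^d 2^{-dj}$, uniformly in $s, j, k$. For $\epsilon$ fixed, absorbing $\epsilon^d$ into the constants and applying Definition~\ref{Characterisation}(2) completes the characterisation of the optimal decay of $(B^d_\text{sep}, B^d_\rf(\epsilon))$ by $F_\text{power}$.
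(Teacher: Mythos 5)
Your argument is correct in substance, but it reaches the key lower bound by a genuinely different route from the paper. The paper also reduces to the single-factor quantities $\sup_n|\mathcal{F}\phi^{s}(\epsilon 2^{-j}n)|^2$ via (\ref{fourierprod}) and bounds them above by $\|\mathcal{F}\phi\|_\infty,\|\mathcal{F}\psi\|_\infty$, but for the lower bound it does \emph{not} invoke the 1D theorem: it observes that $\sup_n|\mathcal{F}\phi^{s}(\epsilon 2^{-j}n)|$ at level $j\ge J$ equals the level-$J$ quantity evaluated at the smaller parameter $\epsilon 2^{-(j-J)}\in I_{J,p}$, so uniformity in $j$ reduces to uniformity in $\epsilon$; it then extends $\epsilon\mapsto\sup_n|\mathcal{F}\phi^{s}(\epsilon 2^{-J}n)|$ continuously to the compact interval $I_{J,p}\cup\{0\}$ (the value at $0$ being the global supremum, via Riemann--Lebesgue), checks strict positivity at each point (else $\phi$ or $\psi$ would be orthogonal to a complete Fourier system on an interval containing its support), and concludes by compactness. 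Your route instead pulls the wavelet-factor bound out of Theorem \ref{FourierWaveletResults} together with the level-size count $N\approx 2^j$ for a leveled ordering; this is valid (the quantity $\mu(U\pi_N)=\epsilon 2^{-j}M_j^1$ is indeed independent of the shift $k$, and every level $j$ is hit at some $N\le D2^j$), and it is shorter given the 1D theorem. What you lose is uniformity in $\epsilon$: the constants in Theorem \ref{FourierWaveletResults} are stated for a fixed $\epsilon\in I_{J,p}$, so your pinch $M_j^1\gtrsim\epsilon^{-1}$ carries an a priori $\epsilon$-dependent constant, whereas the proposition's first display asserts constants uniform over all $\epsilon\in I_{J,p}$, which is exactly what the paper's compactness argument delivers. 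This does not affect the ``consequently'' clause, where $\epsilon$ is fixed, so your proof fully establishes the characterization of the optimal decay by $F_\text{power}$; your $n=0$ evaluation for the scaling-function factor is also legitimate since the supremum in Definition \ref{Characterisation} runs over the whole Fourier basis, which contains $\chi_0$.
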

\begin{proof}
Let $A=\max(\sup_{\omega \in \bbR} |\mathcal{F} \phi (\omega)|^2, \sup_{\omega \in \bbR} |\mathcal{F} \psi (\omega)|^2)$. Then (\ref{fourierprod}) gives us the upper bound
\[
\sup_{n \in \bbN^d} |\langle \Psi^s_{j,k} , \chi_n \rangle|^2 \le \epsilon^{d} 2^{-dj} \cdot A^d.
\]
This leaves the lower bound. This can be achieved if we can show that there exists constants $D_1, D_2>0$ such that for all $\epsilon \in I_{J,p}$\footnote{Notice that replacing $J$ with $j \ge J$ below would have been redundant.}
\be{ \label{fourierinfimum}
\mathcal{F}_1(\epsilon):=\sup_{n \in \bbN} |\mathcal{F} \phi (\epsilon 2^{-J} n)| \ge D_1, \quad \mathcal{F}_2(\epsilon):= \sup_{n \in \bbN} |\mathcal{F} \psi (\epsilon 2^{-J} n)| \ge D_2.
}
By the Riemann-Lebesgue Lemma the functions $\mathcal{F}_1, \mathcal{F}_2$ are continuous on $I_{J,p}$ and 
\[
\mathcal{F}_1(\epsilon) \to \sup_{\omega \in \bbR} |\mathcal{F} \phi (\omega)|>0 \quad \text{as} \quad \epsilon \to 0.
\]
Likewise for $\mathcal{F}_2$. Therefore $\mathcal{F}_1, \mathcal{F}_2$ can be extended to continuous functions over the closed interval $I_{J,p} \cup \{0\}$. Finally we notice that $\mathcal{F}_1(\epsilon)>0, \mathcal{F}_2(\epsilon)>0$ for every $\epsilon \in I_{J,p}$ otherwise we would deduce that $\phi$ or $\psi$ has no support in $[-1,1]$ since the span of $B_\rf^d(\epsilon)$ covers $L^2[-1,1]$. This means that the infimums over $I_{J,p} \cup \{0\}$ are attained and are strictly positive, proving (\ref{fourierinfimum}) and the lower bound.
\end{proof}
Let $F_\text{level}:B^d_\text{sep} \to \bbR$ be defined by $F_\text{level}(\Psi^s_{j,k})=j$. Lemma \ref{characterisationlemma} tells that an ordering that is consistent with $1/F_\text{power}$, i.e. consistent with $F_\text{level}$ will be strongly optimal.

\begin{definition} \label{sepleveled}
We say that an ordering $\rho: \mathbb{N} \to B^d_{\text{sep}}$ is `leveled' if it is consistent with $F_\text{level}$. 
\end{definition}

\begin{lemma} \label{levelgrowth}
Let $\rho: \mathbb{N} \to B^d_{\text{sep}}$ be leveled. Then there are constants $D_1, D_2>0$ such 
\be{ \label{levelgrowthequation}
D_1 \cdot N \le 2^{d F_\text{level}(\rho(N))} \le D_2 \cdot N.
}

\end{lemma}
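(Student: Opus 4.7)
The plan is to show that the cumulative number of basis elements of level at most $j$ grows like $2^{dj}$, and then read off the stated bound from the definition of a leveled ordering.

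First, I would count the shifts. For each coordinate $i$ and each $s_i \in \{0,1\}$, the support of $\phi^{s_i}_{j,k_i}$ is a dilate-translate of $[-p+1,p]$, so
\[ \#\{k_i \in \mathbb{Z}:\mathrm{Supp}(\phi^{s_i}_{j,k_i})\cap(-1,1)\ne\emptyset\} = 2^{j+1}+2p-2. \]
In particular, this count is independent of $s_i\in\{0,1\}$ (the shared support of $\phi$ and $\psi$ is the crucial point), and for all $j\ge J$ it is uniformly comparable to $2^{j}$; concretely $2^{j+1}\le\#\{k_i\}\le 2^{j+1}(1+(2p-2)2^{-J-1})$.

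Second, let $M_j$ be the number of elements $\Psi^s_{j,k}\in B^d_{\text{sep}}$ of level exactly $j$. Since the condition on $k$ factorises across coordinates, and there are $2^d-1$ admissible sign patterns $s$ for $j>J$ (respectively $2^d$ admissible patterns at the base level $j=J$, accounting for the pure scaling term $s=0$), there exist constants $c_1,c_2>0$ with $c_1\,2^{dj}\le M_j\le c_2\,2^{dj}$ for every $j\ge J$. Setting $N_j:=\sum_{j'=J}^{j}M_{j'}$ and summing the geometric series then gives constants $D_1',D_2'>0$ such that
\[ D_1'\,2^{dj}\le N_j\le D_2'\,2^{dj},\qquad j\ge J. \]

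Third, I would invoke the leveled property of $\rho$. If $\rho(N)=\Psi^s_{j,k}$, then every element of $B^d_\text{sep}$ of level strictly less than $j$ must be listed before position $N$, and $\rho(N)$ itself has level $j$, whence $N_{j-1}<N\le N_j$ (with $N_{J-1}:=0$). Substituting the geometric-sum bounds yields
\[ D_1'\,2^{d(j-1)}\le N\le D_2'\,2^{dj}\qquad(j>J), \]
and for the base case $j=J$ we simply have $1\le N\le N_J\le D_2'\,2^{dJ}$. In both cases this rearranges to $D_1 N\le 2^{dj}\le D_2 N$ with $D_1:=1/D_2'$ and $D_2:=\max(2^d/D_1',2^{dJ})$, which is the claimed estimate \eqref{levelgrowthequation}.

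There is no serious obstacle here: the argument is a direct counting estimate combined with the definition of a leveled ordering. The only points requiring mild care are ensuring that the per-level count bounds hold uniformly in $j\ge J$ (so the constants in the geometric sum do not degenerate), and handling the base level $j=J$ where the pure-scaling block $s=0$ contributes and $N_{j-1}$ is not defined.
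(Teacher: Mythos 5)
Your proposal is correct and follows essentially the same route as the paper: count the admissible shifts per coordinate as $2^{j+1}+2p-2\asymp 2^j$, deduce that the number of basis elements at level $j$ is $\asymp 2^{dj}$, and then use the consistency of the leveled ordering to sandwich $N$ between the cumulative counts $N_{j-1}$ and $N_j$, whose geometric sums are again $\asymp 2^{dj}$. The only cosmetic difference is that the paper bounds $N$ from below using just the count of level-$(j(N)-1)$ elements rather than the full cumulative sum, which changes nothing of substance.
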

\begin{proof}
Let $a \in \mathbb{N}$ denote the length of the support of $\phi, \psi$. Notice that for each $j \in \mathbb{N}$ and $s \in \{ 0,1 \}^d$, there are $(2^{j+1} +a-1)^d $ shifts of $\Psi^{s}_{j,0}$ whose support lies in $[-1,1]^d$.  For convenience we use the notation $j(N):= F_\text{level}(\rho(N))$ and shall also be using the simple bounds $ 2^{j(N)+1} \le 2^{j(N)+1}+a-1 \le  2^{j(N)+a}$. Now for every $N \in \bbN$ with $j(N)>J$, we must have had all the terms of the form $f \in B^d_{\text{sep}}, F_\text{level}(f)=j(N)-1$ come before $N$ in the leveled ordering and there are at least $(2^d-1) \cdot 2^{dj(N)}$ of these terms, implying that 
\[ (2^d-1) \cdot 2^{dj(N)} \le N.\]
This completes the upper bound for $j(N)>J$. Likewise for every $N \in \bbN$ with $j(N)\ge J$ there can be no more than 
 \[ 
2^d  \cdot \sum_{i=J}^{j(N)} 2^{d(i+a)} \le 2^d \cdot 2^{d(j(N)+a+1)}= 2^{d(a+2)} \cdot 2^{dj(N)},
 \] 
terms such that $F_\text{level}(f) \le j(N)$. This shows that $N \le 2^{d(a+2)} \cdot 2^{dj(N)}$, completing the upper bound for $j(N)>J$. Extending (\ref{levelgrowthequation}) to all $N \in \bbN$ (i.e. $j(N) \ge J$) is trivial since we have only omitted finitely many terms so a change of constants will suffice.
\end{proof}
\begin{corollary} \label{leveledresults}
Any ordering $\rho$ of $B^d_\text{sep}$ that is leveled is strongly optimal for the basis pair $(B^d_\text{sep},B^d_\rf(\epsilon))$. Furthermore, the optimal decay rate of $(B^d_\text{sep},B_\rf^d(\epsilon))$ is represented by the function $f(N)=N^{-1}$.
\end{corollary}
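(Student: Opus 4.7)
The plan is to assemble the corollary directly from the two preceding results: Proposition \ref{dDSeparableWaveletFourierCharacterisation} (which tells us the ``shape'' of the coherence in terms of the wavelet level) and Lemma \ref{levelgrowth} (which tells us how a leveled ordering converts levels into the index $N$), and then feed the output into the characterization machinery of Lemma \ref{characterisationlemma}.

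First I would take $\iota:B^d_\text{sep}\to B^d_\text{sep}$ the identity and set $F=F_\text{power}$, noting that Proposition \ref{dDSeparableWaveletFourierCharacterisation} says exactly that $F_\text{power}$ characterizes the optimal decay of $(B^d_\text{sep}, B^d_\rf(\epsilon))$ in the sense of Definition \ref{Characterisation}. Next, observe that an ordering $\sigma:\mathbb{N}\to B^d_\text{sep}$ is consistent with $1/F_\text{power}$ iff it is consistent with the map $\Psi^s_{j,k}\mapsto 2^{dj}$, which in turn (since $x\mapsto 2^{dx}$ is strictly increasing) is the same as consistency with $F_\text{level}$; in other words, leveled orderings are precisely the orderings consistent with $1/F_\text{power}$. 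Since $F_\text{power}(\rho(N))\to 0$ and the level sets of $F_\text{power}$ are finite, the hypotheses of part~1 of Definition \ref{Characterisation} are met.

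Next I would identify the representative $g$ of the fastest decay of $F_\text{power}$. For any leveled $\rho$, Lemma \ref{levelgrowth} gives
\[
D_1\cdot N \le 2^{d F_\text{level}(\rho(N))} \le D_2\cdot N,\qquad N\in\mathbb{N},
\]
so that
\[
F_\text{power}(\rho(N)) = 2^{-dF_\text{level}(\rho(N))} \approx N^{-1}.
\]
Hence $F_\text{power}\circ\rho\approx N^{-1}$, and by the uniqueness of representatives of the fastest decay (up to $\approx$), we may take $g(N)=N^{-1}$.

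Finally I would apply part~3 of Lemma \ref{characterisationlemma}: since $F_\text{power}$ characterizes the optimal decay of $(B^d_\text{sep},B^d_\rf(\epsilon))$ and $g(N)=N^{-1}$ represents the fastest decay of $F_\text{power}$, the optimal decay rate of the basis pair is represented by $f(N)=N^{-1}$, and any ordering $\rho$ is strongly optimal if and only if $F_\text{power}(\rho(N))\approx N^{-1}$. The display above shows this equivalence holds for every leveled ordering, completing the proof. There is no real obstacle here; the only mild care needed is in checking that the monotone relabeling $j\mapsto 2^{dj}$ genuinely identifies ``consistent with $F_\text{level}$'' with ``consistent with $1/F_\text{power}$,'' which is immediate from Definition \ref{consistent_ordering}.
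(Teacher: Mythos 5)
Your proposal is correct and follows essentially the same route as the paper: Proposition \ref{dDSeparableWaveletFourierCharacterisation} gives the characterization by $F_\text{power}$, Lemma \ref{levelgrowth} converts $2^{-dF_\text{level}(\rho(N))}$ into $N^{-1}$, and Lemma \ref{characterisationlemma} assembles the conclusion. The only difference is that you spell out the (correct, and worth noting) identification of ``consistent with $1/F_\text{power}$'' with ``leveled,'' which the paper states without comment just before Definition \ref{sepleveled}.
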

\begin{proof}
Lemma \ref{characterisationlemma} applied to Proposition \ref{dDSeparableWaveletFourierCharacterisation} tells us that $\rho$ is strongly optimal and moreover the optimal decay rate is represented by $F_\text{power}(\rho(N))$ which by Lemma \ref{levelgrowth} is of order $N^{-1}$.
\end{proof}

\subsection{Ordering the Fourier Basis: Proving Theorem \ref{separablesummary} Part (ii)} \label{linearproof}

We now want to find the optimal decay rate of $(B^d_\rf(\epsilon), B^d_{\text{sep}})$ which means looking at orderings of the Fourier basis. It might be tempting to try and extend the standard ordering definition from the one dimensional Fourier basis. Recall as well that, using the function $\lambda_d$ defined in (\ref{multidimlambda}), ordering $B^d_\rf(\epsilon)$ is equivalent to ordering $\bbZ^d$.

If we let $s \in \{ 0,1 \}^d , \ j \in \bbN, \ k \in \bbZ^d$, then in order to bound the coherence $\mu(\pi_NU)$ we need to be bounding terms of the form
\be{ \label{sepprodexample2}
| \langle  \Psi^{s}_{j,k} , \lambda_d^{-1} (n) \rangle |^2 
=   \epsilon^d 2^{-dj} \prod_{i=1}^d | \mathcal{F} \phi^{s_i}(2^{-j} \epsilon n_i) |^2.
}
In the one-dimensional case in \cite{onedimpaper} the following decay property of the Fourier transform of the scaling function $\phi$ was used:

\begin{lemma} \label{FTdecayLemma}
If $\phi$ is any Daubechies scaling function with corresponding mother wavelet $\psi$ then there exists a constant $K>0$ such that for all $\omega \in \bbR \setminus \{0 \}$,
\be{ \label{FTdecay}
| \mathcal{F} \phi(\omega)|, |\mathcal{F} \psi (\omega)| \le \frac{K}{|\omega|}.
}
Furthermore, suppose that for some $\alpha>0$ we have, for some constant $K>0$, the decay $| \mathcal{F} \phi(\omega)| \le K | \omega|^{-\alpha}$ for all $\omega \in \bbR \setminus \{0\}$. Then, for a larger constant $K>0$, $| \mathcal{F} \psi(\omega)| \le K | \omega|^{-\alpha}$ for all $\omega \in \bbR \setminus \{0\}$.
\end{lemma}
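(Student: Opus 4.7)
The plan leans on two standard facts about Daubechies wavelets that this paper has already invoked elsewhere: the refinement (two-scale) equation
\[
\phi(x) = \sum_k h_k\,\phi(2x-k), \qquad \psi(x) = \sum_k g_k\,\phi(2x-k),
\]
in which only finitely many coefficients are nonzero (since $\mathrm{Supp}\,\phi = \mathrm{Supp}\,\psi = [-p+1,p]$), and the bounded-variation property of $\phi$ and $\psi$ (immediate for Haar, and inherited for $p\ge 2$ from Daubechies' regularity theorem). Fourier transforming the refinement equations yields
\[
\mathcal{F}\phi(\omega) = m_0(\omega/2)\,\mathcal{F}\phi(\omega/2), \qquad \mathcal{F}\psi(\omega) = m_1(\omega/2)\,\mathcal{F}\phi(\omega/2),
\]
where $m_0, m_1$ are trigonometric polynomials, hence bounded on $\bbR$.

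For the first inequality, I would first observe that because $\phi$ is compactly supported and of finite total variation, Riemann--Stieltjes integration by parts gives, for every $\omega \ne 0$,
\[
|\mathcal{F}\phi(\omega)| \;=\; \left|\frac{1}{2\pi i \omega}\int e^{-2\pi i \omega x}\,d\phi(x)\right| \;\le\; \frac{\mathrm{TV}(\phi)}{2\pi|\omega|}.
\]
The corresponding bound on $|\mathcal{F}\psi(\omega)|$ then follows either from repeating the same argument with $\psi$ in place of $\phi$ (it too is compactly supported of bounded variation), or by plugging the just-obtained bound on $\mathcal{F}\phi$ into the identity $\mathcal{F}\psi(\omega) = m_1(\omega/2)\mathcal{F}\phi(\omega/2)$ and using $\|m_1\|_\infty < \infty$.

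For the second part I would use the same two-scale relation directly. Given the hypothesis $|\mathcal{F}\phi(\omega)| \le K|\omega|^{-\alpha}$ for all $\omega \ne 0$,
\[
|\mathcal{F}\psi(\omega)| \;=\; |m_1(\omega/2)|\cdot|\mathcal{F}\phi(\omega/2)| \;\le\; \|m_1\|_\infty \cdot K \cdot |\omega/2|^{-\alpha} \;=\; 2^{\alpha}\|m_1\|_\infty K\cdot|\omega|^{-\alpha},
\]
so the conclusion holds with enlarged constant $K' = 2^{\alpha}\|m_1\|_\infty K$.

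There is no serious obstacle here; the only steps that deserve explicit justification in the write-up are the bounded-variation property of $\phi$ (underpinning the unconditional first inequality) and the fact that $m_1$ is a bounded trigonometric polynomial (underpinning the conditional second inequality). Both are entirely standard from the Daubechies construction, so each can be recorded in a single line and the proof reduces to the two one-line computations displayed above.
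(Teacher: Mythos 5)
Your treatment of the second (conditional) statement is correct and is essentially the paper's own argument: the paper uses the identity $|\mathcal{F}\psi(2\omega)|=|m_0(\omega+1/2)\cdot\mathcal{F}\phi(\omega)|$ together with $|m_0|\le 1$, which is your two-scale computation with $m_1$ in place of $m_0(\cdot+1/2)$ (the two have equal modulus for Daubechies filters). The genuine gap is in the first, unconditional inequality. You base it on the assertion that every Daubechies scaling function has bounded variation, ``inherited for $p\ge 2$ from Daubechies' regularity theorem.'' That theorem yields H\"older continuity $\phi\in C^{\alpha_p}$, and for $p=2$ the sharp exponent is $\alpha_2\approx 0.55<1$; H\"older continuity with exponent below one does not imply bounded variation, so your justification fails precisely for the D4 wavelet. (For larger $p$ one has $\alpha_p>1$, hence $\phi\in C^1$ with compact support and BV follows; Haar is trivial.) Whether the D4 scaling function actually has finite total variation is a delicate question about its critical $L^1$-smoothness exponent and is not a one-line standard fact, so the integration-by-parts bound $|\mathcal{F}\phi(\omega)|\le \mathrm{TV}(\phi)/(2\pi|\omega|)$ is not available as stated. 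The same objection applies to your fallback of ``repeating the argument with $\psi$.''

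The paper avoids this entirely by citing Lemma 3.5 of the one-dimensional paper, whose (standard) proof works in the Fourier domain rather than the spatial domain: from $\mathcal{F}\phi(\omega)=\prod_{j\ge 1}m_0(2^{-j}\omega)$ and the Daubechies factorization $|m_0(\omega)|=|\cos\pi\omega|^{p}\,|\mathcal{L}(\omega)|$ with $\sup_\omega|\mathcal{L}(\omega)|\le 2^{p-1}$, the cosine factors telescope to $|\sin(\pi\omega)/(\pi\omega)|^{p}\lesssim|\omega|^{-p}$ while the $\mathcal{L}$ factors contribute at most a constant times $|\omega|^{\log_2\sup|\mathcal{L}|}\le|\omega|^{p-1}$, giving $|\mathcal{F}\phi(\omega)|\le K|\omega|^{-1}$ for every $p$. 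The bound for $\psi$ then follows from your (correct) second part with $\alpha=1$. If you replace the bounded-variation step by this product-formula argument, the rest of your proposal goes through.
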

\begin{proof}
The first result is a direct result of Lemma 3.5 in \cite{onedimpaper}. The last statement follows immediately from the equality (taken from equation (3.14) in \cite{onedimpaper}):
\begin{equation} \label{fourierscalingwavelet}
|\mathcal{F}\psi(2 \omega)|= |m_0(\omega + 1/2) \cdot \mathcal{F}\phi(\omega)|,
\end{equation}
where $m_0$ is the low pass filter corresponding to $\phi$ which satisfies $|m_0(\omega)| \le 1$ for all $\omega \in \bbR$.
\end{proof}

Therefore let us first consider the case where we use (\ref{FTdecay}) to bound every term in the product, giving us ($n \in \bbZ^d, n_i \neq 0, i=1,...,d$)

\be{ \label{hyperbolicbound}
| \langle  \Psi^{s}_{j,k} , \lambda_d^{-1} (n) \rangle |^2   \le \epsilon^d 2^{-dj} \prod_{i=1}^d \frac{K^{2}}{|\epsilon 2^j n_i|}  =\frac{K^{2d}}{ \prod_{i=1}^d | n_i|}.
}
Making adjustments to prevent dividing by zero by using $\sup_{\omega \in \bbR} \max(| \mathcal{F} \phi(\omega)|, | \mathcal{F} \psi(\omega)|)\le 1$ (for $\phi$ this follows from Proposition 1.11 in \cite{wav}. We extend this to $\psi$ using equation (\ref{fourierscalingwavelet})), this can then be rephrased as
\be{ \label{hyperbolicbound2}
\sup_{g \in B_\text{sep}^d} | \langle  g , \lambda_d^{-1} (n) \rangle |^2  \le \frac{\max (K^{2d},1)}{ \prod_{i=1}^d \max(| n_i|,1)}, \qquad n \in \bbZ^d.
}
This tells us that the function $F_\text{hyp}: \bbZ^d \to \bbR, F_\text{hyp}(n)= (\prod_{i=1}^d \max(|n_i|,1))^{-1}$ dominates the optimal decay of $(B_\rf^d, B_\text{sep}^d)$ (see Definition \ref{Characterisation} for the definition of domination).  
 Therefore if we want to maximise the utility of this bound then we should use an ordering $\sigma$ of $\bbZ^d$ so that  $\prod_{i=1}^d \max(|\sigma(N)_i|,1)$ is increasing, namely an ordering corresponding to the hyperbolic cross in $\bbZ^d$ (see Example \ref{hypcrossZd}). However, using such an ordering will not give us the $N^{-1}$ decay rate that we got from the one dimensional case:

\begin{proposition} \label{Hyperbolic4Separable}
Let $\sigma: \bbN \to \bbZ^d$ correspond to the hyperbolic cross in $\bbZ^d$ and define an ordering $\rho$ of $B^d_\rf(\epsilon)$ by $\rho:=\lambda_d^{-1} \circ \sigma$, where $\epsilon \in I_{J,p}$. Next let $U=[(B^d_\rf(\epsilon),\rho),(B^d_{\text{sep}},\tau)]$ for any ordering $\tau$ and fix $\epsilon$. Then there are constants $C_1, C_2 > 0$
\[  \frac{C_1 \log^{d-1}(N+1)}{N} \le \mu( Q_N U )  \le    \frac{C_2 \log^{d-1}(N+1)}{N}, \qquad N \in \bbN. \]
\end{proposition}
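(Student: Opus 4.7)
The plan is as follows. For the upper bound, combine the pointwise estimate (\ref{hyperbolicbound2}) with Lemma \ref{orderset2}. Writing $H_d(n) = \prod_{i=1}^d \max(|n_i|,1)$, (\ref{hyperbolicbound2}) gives $\mu(\pi_m U) \le C_0/H_d(\sigma(m))$, and Lemma \ref{orderset2} gives $H_d(\sigma(m)) \sim \frac{(d-1)!}{2^d} h_d(m)$ as $m \to \infty$. The resulting bound $\mu(\pi_m U) \lesssim \log^{d-1}(m+1)/m$ is eventually decreasing, so $\mu(Q_N U) = \sup_{m \ge N} \mu(\pi_m U) \lesssim \log^{d-1}(N+1)/N$ (absorbing finitely many small $N$ into the constant).

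For the lower bound, the strategy is to exhibit a specific index $m^* \ge N+1$ of size $\Theta(N)$ at which $\mu(\pi_{m^*} U)$ attains the target order. Pick $L \in \bbN$ of order $h_d(N)^{1/d}$ and large enough that $L^d > H_d(\sigma(N))$ (possible by Lemma \ref{orderset2}), and set $n^* = (L, L, \ldots, L) \in \bbZ^d$, so $H_d(n^*) = L^d$. Consistency of $\sigma$ with $H_d$ then forces $m^* := \sigma^{-1}(n^*) > N$, while $m^* \le R_d(L^d) \sim \frac{2^d}{(d-1)!} L^d \log^{d-1}(L^d) \sim N$ by Lemma \ref{orderset2}, giving $m^* = \Theta(N)$.

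To bound $\mu(\pi_{m^*} U)$ from below, fix an interval $(\alpha, \beta) \subset (0, \infty)$ with $\beta \ge 2\alpha$ on which $|\mathcal{F}\psi(\omega)| \ge c > 0$; this exists by continuity and nontriviality of $\mathcal{F}\psi$. For $L$ large there is an integer $j \ge J$ with $2^{-j}\epsilon L \in (\alpha, \beta)$, whence $2^{-j} \asymp 1/L$. Taking $s = (1, \ldots, 1)$ and any admissible $k$, equation (\ref{fourierprod}) yields
\[
|\langle \Psi^s_{j,k}, \chi_{n^*}\rangle|^2 = \epsilon^d \cdot 2^{-dj} \prod_{i=1}^d |\mathcal{F}\psi(2^{-j}\epsilon L)|^2 \ge \epsilon^d c^{2d} \cdot 2^{-dj} \gtrsim L^{-d} \sim h_d(N)^{-1} \sim \frac{\log^{d-1}(N+1)}{N},
\]
so $\mu(Q_N U) \ge \mu(\pi_{m^*} U) \gtrsim \log^{d-1}(N+1)/N$.

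The main obstacle is structural rather than technical: the lower bound must use a \emph{balanced} point $n^* = (L, \ldots, L)$. An axis-aligned choice such as $(M, 0, \ldots, 0)$ or $(M, 1, \ldots, 1)$ has $H_d = M$ but the corresponding inner product scales only like $M^{-d}$, since one can align $2^{-j}\epsilon n_i$ with a peak of $|\mathcal{F}\psi|$ in at most one direction while paying the factor $2^{-dj}$ across all $d$ dimensions. This would give $\log^{d(d-1)}(N)/N^d$, far below the target order. The balanced construction exploits that at a single dyadic scale $j \sim \log_2 L$, all $d$ arguments $2^{-j}\epsilon L$ coincide, enabling a factor $c^{2d}$ in every direction simultaneously. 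A minor ancillary point is the existence of an interval of positivity for $|\mathcal{F}\psi|$ with ratio $\ge 2$, which is a routine consequence of continuity and nontriviality.
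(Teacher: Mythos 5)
Your proof follows essentially the same route as the paper's: the upper bound via the pointwise hyperbolic estimate (\ref{hyperbolicbound2}) combined with the asymptotics of $H_d\circ\sigma$ from Lemma \ref{orderset2}, and the lower bound via a diagonal frequency $(L,\ldots,L)$ tested against a single $\Psi^s_{j,0}$ with $s=(1,\ldots,1)$ at the matching dyadic scale. The one point to repair is your justification of the octave $(\alpha,\beta)$ with $\beta\ge 2\alpha$ on which $|\mathcal{F}\psi|\ge c>0$: continuity and nontriviality alone do not give this (a continuous, not identically zero function can vanish in every octave, e.g.\ $\sin^{2}(\pi\log_{2}\omega)$), and the ratio $\ge 2$ is essential so that some $2^{-j}\epsilon L$ lands in the interval for \emph{every} large $L$ rather than only along a subsequence. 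The needed fact is exactly Lemma \ref{wavelower} (positivity of $|\mathcal{F}\psi|$ on the dyadic intervals $[2^{-(q+1)},2^{-q}]$), which is what the paper's own proof invokes at the same point; cite it and your argument is complete.
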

As this result is primarily for motivation, its proof is left to the appendix.

Since this approach gives us suboptimal results, we return to our bound of (\ref{sepprodexample2}). Instead of using (\ref{FTdecay}) on every term in the product, why not just use it once on the term that give us the best decay instead? To bound the remaining terms we can simply use $\sup_{\omega \in \bbR} \max(| \mathcal{F} \phi(\omega)|, | \mathcal{F} \psi(\omega)|)\le 1$ . This approach gives us the following bound
\be{ \label{linearbound1}
| \langle  \Psi^{s}_{j,k} , \lambda_d^{-1} (n) \rangle |^2   \le \epsilon^d 2^{-dj}  \cdot \min_{i=1,...d} \frac{K^{2}}{|\epsilon 2^j n_i|}
 =\epsilon^{d-1} 2^{-(d-1)j}  \cdot \frac{K^{2}}{ \max_{i=1,...,d} | n_i|}, \qquad n \in \bbZ^d.
}
As we shall see in Lemma \ref{normest}, choosing $\rho$ so that we maximise the growth of the $\max_{i=1,...,d} | n_i|$ leads to $ \max_{i=1,...,d} | n_i| \ge E \cdot N^{1/d}$ for some constant $E>0$ and so  (\ref{linearbound1}) is bounded above by $\text{constant} \cdot N^{-1/d}$, which is very poor decay. However, if we instead replace (\ref{FTdecay}) by the stronger condition

\begin{equation} \label{dDFTdecay}
 |\mathcal{F} \phi( \omega )|  \le  \frac{K}{| \omega |^{d/2}}  , \qquad \omega \in \bbR \setminus \{0 \} .
 \end{equation}
then we can obtain the following upper bound\footnote{noting that (\ref{dDFTdecay}) also holds for $\psi$ by (\ref{fourierscalingwavelet}).}
 \be{ \label{linearbound2}
| \langle  \Psi^{s}_{j,k} , \rho(N) \rangle |^2   \le \epsilon^d 2^{-dj}  \cdot \min_{i=1,...d} \frac{K^{2d}}{|\epsilon 2^j n_i|^d}
 =  \frac{K^{2d}}{ \max_{i=1,...,d} | n_i|^d}.
}
Let us write $\|n\|_\infty:=\max_{i=1,...,d} | n_i |$. The above can be rephrased as
\be{ \label{linearbound3}
\sup_{g \in B_\text{sep}^d} | \langle  g , \lambda_d^{-1}(n) \rangle |^2  \le \frac{\max(K^{2d},1)}{ \max(\|n\|_\infty^d,1)}, \qquad n \in \bbZ^d .
}
 Therefore we deduce that $F_\text{lin}: \bbZ^d \to \bbR, F_\text{lin}(n)=(\max(\|n\|_\infty^d,1))^{-1}$ dominates the optimal decay of $(B_\rf^d (\epsilon), B_\text{sep}^d)$. In fact in can be shown that $F_\text{lin}$ also \textit{characterizes} the optimal decay (i.e. a lower bound of the same form is possible) by using the following preliminary Lemma:

\begin{lemma} \label{wavelower}
For any compactly supported wavelet $\psi$ there exists an $R \in \mathbb{N}$ such that for all $q \ge R, \ (q \in \mathbb{N})$ we have 
\be{ \label{wavelowerequation}
L_q := \inf_{\omega \in [2^{-(q+1)},2^{-q}] } | \mathcal{F}\psi(\omega)| \ > \ 0. 
}
\end{lemma}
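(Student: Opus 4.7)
The plan is to exploit the compact support of $\psi$ to invoke analyticity of $\mathcal{F}\psi$, then reduce the statement to the observation that the zeros of a real-analytic, not-identically-zero function on an interval are isolated.

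First, since $\psi$ is compactly supported, $\psi \in L^1(\mathbb{R})$, so $\mathcal{F}\psi$ is continuous on $\mathbb{R}$. Moreover, by the Paley--Wiener theorem, $\mathcal{F}\psi$ extends to an entire function (of exponential type equal to half the length of the support of $\psi$). In particular, the restriction of $\mathcal{F}\psi$ to $\mathbb{R}$ is real-analytic. Because $\psi$ is nonzero, $\mathcal{F}\psi$ is not identically zero on $\mathbb{R}$.

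Next, I claim that the positive real zeros of $\mathcal{F}\psi$ accumulate only possibly at $0$ and at $+\infty$. On any compact subset of $(0, \infty)$, a nonzero analytic function has only finitely many zeros. So if I can show that $0$ is not an accumulation point of zeros then I will be done: there will exist $\delta > 0$ with $\mathcal{F}\psi(\omega) \neq 0$ for all $\omega \in (0, \delta)$. To see that $0$ is not an accumulation point, observe that $\mathcal{F}\psi$ is entire and not identically zero, so the point $0$ has a finite order of vanishing $N \geq 0$, i.e., $\mathcal{F}\psi(\omega) = c\, \omega^N + O(\omega^{N+1})$ near $\omega = 0$ with $c \neq 0$. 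This forces $\mathcal{F}\psi(\omega) \neq 0$ on a punctured neighborhood of $0$.

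Armed with such a $\delta > 0$, choose $R \in \mathbb{N}$ so that $2^{-R} < \delta$. For any $q \geq R$, the interval $[2^{-(q+1)}, 2^{-q}]$ is a compact subset of $(0, \delta)$, and on this interval the continuous function $|\mathcal{F}\psi|$ takes only strictly positive values. Hence its infimum $L_q$ is attained and is strictly positive.

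The only subtlety I anticipate is whether one needs to worry about the case $N = 0$ (i.e., $\mathcal{F}\psi(0) \neq 0$); but in that case the conclusion is even easier, following directly from continuity of $\mathcal{F}\psi$ at $0$ together with $\mathcal{F}\psi(0) \neq 0$. No extra wavelet structure (vanishing moments, orthogonality) is required for the argument—just the compact support of $\psi$ and the fact that $\psi \not\equiv 0$.
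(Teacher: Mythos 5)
Your proof is correct and complete. Note that the paper itself does not supply an argument for this lemma: its ``proof'' is a citation to Lemma 3.6 of the one-dimensional companion paper, so you have in effect filled in the missing details. The route you take --- compact support of $\psi$ gives $\psi \in L^1$, hence (Paley--Wiener) $\mathcal{F}\psi$ extends to an entire function, which is not identically zero since $\psi \neq 0$, so its zero at the origin has finite order and there is a punctured neighbourhood $(0,\delta)$ of $0$ free of zeros; then for $2^{-R} < \delta$ each compact interval $[2^{-(q+1)},2^{-q}]$ with $q \ge R$ lies in $(0,\delta)$ and the continuous, strictly positive function $|\mathcal{F}\psi|$ attains a strictly positive minimum there --- is the standard way to establish this fact, and every step checks out. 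Your closing remark is also apt: for a genuine wavelet one has $\mathcal{F}\psi(0)=\int\psi=0$, so the order of vanishing $N$ is at least $1$, but the argument is indifferent to the value of $N$ and uses nothing beyond compact support and $\psi \not\equiv 0$.
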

\begin{proof}
See Lemma 3.6 in \cite{onedimpaper}.
\end{proof}

\begin{proposition} \label{dDSeparableFourierWavelet}
We fix the choice of wavelet basis $B^d_\text{sep}$ and recall the function $\lambda_d : B^d_\rf (\epsilon) \to \bbZ^d$ from (\ref{multidimlambda}).

1.) \ \  Then there are constants $C_1(\phi)>0, D(J)>0$ such that for all $\epsilon \in I_{J,p}$ and $n \in \bbZ^d$ with $\| n \|_\infty \ge D \epsilon^{-1}$ we have
\be{ \label{linearbound4}
\sup_{g \in B_\text{sep}^d} | \langle  g , \lambda_d^{-1}(n) \rangle |^2  \ge \frac{C_1}{ \|n\|_\infty^d}.
}

Therefore  (by fixing $\epsilon$) the function $F_\text{lin}$ is dominated by the optimal decay of $(B^d_\rf(\epsilon), B^d_\text{sep})$.

2.) \ \ Suppose that $\phi$ satisfies (\ref{dDFTdecay}). Then there is a constant $C_2(\phi)>0$ such that for all $\epsilon \in I_{J,p}$ and $n \in \bbZ^d$,
\be{ \label{linearbound5}
\sup_{g \in B_\text{sep}^d} | \langle  g , \lambda_d^{-1}(n) \rangle |^2  \le \frac{C_2}{ \max(\|n\|_\infty^d,1)}.
}
Therefore  (by fixing $\epsilon$) the function $F_\text{lin}$ characterizes the optimal decay of $(B^d_\rf(\epsilon), B^d_\text{sep})$.
\end{proposition}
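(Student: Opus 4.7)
\emph{Plan.} Part 2 is the easier half: given $n \in \bbZ^d$ with $\|n\|_\infty \ge 1$, pick an index $i^\star$ attaining $|n_{i^\star}| = \|n\|_\infty$. In the product expression (\ref{sepprodexample2}) apply the hypothesis (\ref{dDFTdecay}) (which extends to $\psi$ with a possibly larger constant via (\ref{fourierscalingwavelet})) to the $i^\star$-th factor, and bound each of the remaining $d-1$ factors by $\max(\|\mathcal{F}\phi\|_\infty, \|\mathcal{F}\psi\|_\infty) \le 1$. The prefactor $\epsilon^d 2^{-dj}$ cancels precisely with $|\epsilon 2^{-j} n_{i^\star}|^{-d}$, leaving $K^2/\|n\|_\infty^d$; taking the supremum over $\Psi^s_{j,k} \in B^d_\text{sep}$ preserves the bound. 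The case $n=0$ is dealt with by orthonormality ($|\langle \Psi^s_{j,k},\chi_0\rangle|^2 \le 1$), which is absorbed by the $\max(\cdot, 1)$ in (\ref{linearbound5}).

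For part 1 the plan is to construct, for each $n$ with $\|n\|_\infty \ge D\epsilon^{-1}$, an explicit $\Psi^s_{j,k} \in B^d_\text{sep}$ realising (\ref{linearbound4}). First, invoke Lemma \ref{wavelower} to fix $R \in \bbN$ with $L_R := \inf_{[2^{-(R+1)},2^{-R}]}|\mathcal{F}\psi| > 0$, enlarging $R$ if necessary so that also $|\mathcal{F}\phi(\omega)| \ge 1/2$ for $|\omega| \le 2^{-(R+1)}$ (possible by continuity and $\mathcal{F}\phi(0) = \int\phi = 1$). Choose $i^\star$ with $|n_{i^\star}| = \|n\|_\infty$ and let $j$ be the unique integer with $\epsilon 2^{-j}|n_{i^\star}| \in (2^{-(R+1)}, 2^{-R}]$; taking $D := 2^{J-R}$ then forces $j \ge J$. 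Build $s$ coordinate-wise: set $s_{i^\star} = 1$, and for $i \ne i^\star$ set $s_i = 1$ if $\epsilon 2^{-j}|n_i| \ge 2^{-(R+1)}$ and $s_i = 0$ otherwise. For $s_i=1$ one has $\epsilon 2^{-j}|n_i| \in [2^{-(R+1)},2^{-R}]$ (using $|n_i| \le |n_{i^\star}|$), so $|\mathcal{F}\psi(\epsilon 2^{-j} n_i)| \ge L_R$; for $s_i=0$, $|\mathcal{F}\phi(\epsilon 2^{-j} n_i)| \ge 1/2$. Since $s_{i^\star}=1$, $s \ne 0$ so the constraint $s=0 \Rightarrow j = J$ is vacuous; pick any $k \in \bbZ^d$ with $\operatorname{Supp}(\phi^{s_i}_{j,k_i}) \cap (-1,1) \ne \emptyset$ for every $i$. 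Substituting into (\ref{sepprodexample2}) and using $2^{-j} \ge 2^{-R-1}\epsilon^{-1}\|n\|_\infty^{-1}$ yields
\[
|\langle \Psi^s_{j,k},\chi_n\rangle|^2 \;\ge\; \min(L_R,1/2)^d \cdot \epsilon^d 2^{-dj} \;\ge\; C_1 \|n\|_\infty^{-d},
\]
for a constant $C_1 = C_1(\phi,\psi,R,d) > 0$. To upgrade the pointwise bound (\ref{linearbound4}) to a genuine characterisation of the optimal decay, the finitely many remaining indices $1 \le \|n\|_\infty < D\epsilon^{-1}$ are absorbed into the constant, noting that $\sup_{g \in B^d_\text{sep}}|\langle g, \chi_n\rangle|^2 > 0$ for each such $n$ since $\chi_n$ lies in the closed span of $B^d_\text{sep}$.

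The main obstacle is the \emph{mixed} construction of $s$. Using the wavelet $\psi$ in every coordinate collapses the product (\ref{sepprodexample2}) whenever some $|n_i| \ll \|n\|_\infty$, because $\mathcal{F}\psi(0) = 0$; using the scaling function $\phi$ in every coordinate instead forces $s = 0$ and hence $j = J$, destroying the scale-adaptivity $2^{-j} \sim \epsilon^{-1}\|n\|_\infty^{-1}$ that drives the $\|n\|_\infty^{-d}$ rate. The separable basis's allowance of arbitrary $s \in \{0,1\}^d \setminus \{0\}$ at any admissible scale $j \ge J$ is precisely what makes this mixed construction feasible, and the need for $\|n\|_\infty \ge D\epsilon^{-1}$ reflects exactly the minimal scale required.
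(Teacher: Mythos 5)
Your proof is correct and follows essentially the same route as the paper's: part 2 is the single-factor application of (\ref{dDFTdecay}) already recorded in (\ref{linearbound2})--(\ref{linearbound3}), and part 1 matches the paper's construction of a witness $\Psi^s_{j,0}$ with $j$ chosen so that $\epsilon 2^{-j}\|n\|_\infty$ lands in a dyadic band where $|\mathcal{F}\psi|$ is bounded below via Lemma \ref{wavelower}, with the threshold $\|n\|_\infty \ge D\epsilon^{-1}$ arising exactly as you say from the constraint $j\ge J$. The only cosmetic differences are your mixed coordinatewise choice of $s$ (the paper simply sets $s_i=1$ in the maximal coordinate and $s_i=0$ elsewhere, bounding $|\mathcal{F}\phi|$ below on all of $[0,2^{-q}]$ by continuity and $|\mathcal{F}\phi(0)|=1$, which avoids your case split) and your explicit absorption of the finitely many indices with $\|n\|_\infty < D\epsilon^{-1}$ into the constant, a step the paper leaves implicit.
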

\begin{proof}

2.) \ \ Follows from (\ref{linearbound3}).

1.) \ \ If we set $j= \lceil \log_2 \epsilon \| n \|_\infty \rceil +q$ for some $q \in \bbN$ fixed we observe that $|  \epsilon  2^{-j} n_i | \in [0,2^{-q}]$ for every $i=1,...,d$ and, since we are using the max norm, $|  \epsilon  2^{-j} n_i | \in [2^{-q-1},2^{-q}]$ for at least one $i$, say $i'$ . Set $s_i=0$ for $i \neq i'$ and $s_{i'}=1$. Then, assuming $j \ge J$, by (\ref{sepprodexample2}) we have the lower bound.
\begin{equation} \label{lastcall}
\begin{aligned}
  &| \langle  \Psi^s_{j,0} , \lambda_d^{-1}(n) \rangle |^2  \ge \frac{2^{-d(q+1)}}{ \|n\|_\infty^d} \prod_{i=1}^d | \mathcal{F} \phi^{s_i} (\epsilon 2^{-j} n_i)|^2
 \\ & \qquad \ge \frac{ 2^{-d(q+1)}}{\|n\|_\infty^d} \cdot \inf_{ \omega \in (2^{-q-1},2^{-q}]}| \mathcal{F} \psi(\omega)|^2 
  \cdot
 \inf_{ \omega \in [0,2^{-q}]}| \mathcal{F} \phi (\omega)|^{2(d-1)}.
 \end{aligned}
 \end{equation}
Recall that by Lemma \ref{wavelower} there exists a $q \in \mathbb{N}$ such that $L_q>0$ and $\inf_{ \omega \in [0,2^{-q}]}| \mathcal{F} \phi (\omega)| >0 $\footnote{We are using the fact that $|\mathcal{F} \phi(0)|=1$ and continuity of $\mathcal{F} \phi$ here which follows from $\phi \in L^1(\bbR)$.} and therefore (\ref{linearbound5}) follows as long as $j \ge J$. 
To ensure that $j= \lceil \log_2 ( \epsilon \|n\|_\infty) \rceil + q$ satisfies $j \ge J$ we must therefore impose the constraint that $n$ is sufficiently large. $j \ge J$ is satisfied if
\[ J \le \log_2 ( \epsilon \|n\|_\infty) \quad \Rightarrow \quad \|n\|_\infty \ge  2^{J} \epsilon^{-1}. \]
\end{proof}

\begin{remark} \label{2doptimal}
If $d=2$ then (\ref{dDFTdecay}) always holds by Lemma \ref{FTdecayLemma}. This means we have characterized every 2D Separable wavelet case (for Daubechies Wavelets).
\end{remark}

\begin{remark}
A similar upper bound in two dimensions based on the norm of $n \in \mathbb{Z}^2$ has already been considered in a discrete framework for separable Haar wavelets \cite{discrete}.
\end{remark}
 
Let $F_\text{norm}(n):=\max(\|n\|_\infty,1)$ . By \ref{characterisationlemma} we know that if (\ref{dDFTdecay}) holds then the optimal decay of $(B_\rf^d,B^d_\text{sep})$ is determined by the fastest growth of $F_\text{norm}$. This motivates the following:
 
 \begin{lemma} \label{normest}
Let $\sigma : \mathbb{N} \rightarrow \mathbb{Z}^d$ be consistent with $F_\text{norm}$. Then there are constants $E_1 , E_2 >0$ such that
\begin{equation} \label{normestresult}
 E_1 \cdot N^{1/d}  \le  \max(\| \sigma(N) \|_\infty,1)  \le  E_2 \cdot N^{1/d}, \qquad \forall N \in \mathbb{N}.
 \end{equation}

\end{lemma}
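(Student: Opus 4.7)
The plan is to prove Lemma \ref{normest} by a direct counting argument: since $\sigma$ lists elements of $\mathbb{Z}^d$ in order of increasing $F_{\text{norm}}$, the number of indices $N$ consumed by the time $\sigma$ reaches radius $R$ is pinned down by the cardinality of the $\ell^\infty$-ball $B(R) := \{m \in \mathbb{Z}^d : \|m\|_\infty \le R\}$, which is the elementary quantity $(2R+1)^d$. Inverting this relationship gives $R \approx N^{1/d}$ up to multiplicative constants.

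First I would set $R := \max(\|\sigma(N)\|_\infty, 1) = F_{\text{norm}}(\sigma(N))$ and unpack the consistency definition \ref{consistent_ordering} to translate it into the two-sided bound
\begin{equation*}
\#\{m \in \mathbb{Z}^d : F_{\text{norm}}(m) < R\} \;+\; 1 \;\le\; N \;\le\; \#\{m \in \mathbb{Z}^d : F_{\text{norm}}(m) \le R\}.
\end{equation*}
For $R \ge 2$, the strict level set $\{F_{\text{norm}} < R\}$ coincides with $B(R-1)$ and the weak one with $B(R)$, so this reads $(2R-1)^d + 1 \le N \le (2R+1)^d$. Taking $d^{\text{th}}$ roots and rearranging immediately yields
\begin{equation*}
\tfrac{1}{2}(N^{1/d} - 1) \;\le\; R \;\le\; \tfrac{1}{2}(N^{1/d} + 1), \qquad R \ge 2,
\end{equation*}
which is exactly the desired $R \approx N^{1/d}$ once $N$ is sufficiently large.

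Next I would treat the small-$N$ regime separately. Because $F_{\text{norm}}$ is clipped to $1$ from below, all elements of $B(1)$ (there are $3^d$ of them) share the value $F_{\text{norm}} = 1$, so consistency does \emph{not} constrain their internal ordering; these indices $N \in \{1, \ldots, 3^d\}$ all yield $R = 1$. Since $N^{1/d} \le 3$ on this range, the desired inequality (\ref{normestresult}) holds for any constants $E_1 \le 3^{-1}$ and $E_2 \ge 1$, so only a change of constants is needed. Merging the two regimes and absorbing the additive $\pm 1$ into the multiplicative constants (valid because $N^{1/d} \to \infty$) completes the proof.

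The argument is largely mechanical; the only point requiring a little care is the boundary case where $F_{\text{norm}}$ is not injective on its minimum value, and checking that the inequality $(2R-1)^d + 1 \le N$ — which fails for $R = 1$ — does not actually need to hold there because we can absorb the first $3^d$ indices into the constants. So there is no real obstacle, just careful bookkeeping.
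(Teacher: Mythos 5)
Your proof is correct and follows essentially the same route as the paper's: sandwiching $N$ between the cardinalities $(2R-1)^d$ and $(2R+1)^d$ of the relevant $\ell^\infty$-balls via the consistency condition, then absorbing the finitely many small-$N$ cases into the constants. Your handling of the $R=1$ boundary case is in fact slightly more explicit than the paper's, but the argument is the same.
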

\begin{proof}
If $ \| \sigma(N) \|_\infty = L \ge 2$, then $\sigma$ must have enumerated beforehand all points $m$ in $ \mathbb{Z}^d$ with $\|m\|_\infty \le L-1$ and there are $(2L -1)^d$ of such points. This means that
\[
N \ge (2 L -1)^d \quad \Rightarrow \quad \| \sigma(N) \|_\infty \le \frac{N^{1/d}+1}{2}, \qquad N \in \bbN.
\]
which proves the upper bound when $ \| \sigma(N) \|_\infty = L \ge 2$. The lower bound is tackled similarly by noting $\sigma$ must first list all $m \in \bbZ^d$ with $\| m \|_\infty \le L$, including $\sigma(N)$ which shows
\[
N \le (2 L +1)^d \quad \Rightarrow \quad \| \sigma(N) \|_\infty \ge \frac{N^{1/d}-1}{2}, \qquad N \in \bbN.
\]
 This proves (\ref{normest}) for $ \| \sigma(N) \|_\infty = L \ge 2$. Extending this to all $ N \in \bbN$ is trivial since we have only omitted finitely many terms, so changing the constants will suffice since all terms are strictly positive.
\end{proof}

\begin{definition}[Linear Ordering]
Any ordering $\rho: \bbN \to B_\rf^d(\epsilon)$ such that $\sigma=\lambda_d \circ \rho$ satisfies (\ref{normestresult}) is called a `linear ordering'.
\end{definition}
 
\begin{corollary} \label{linearresults}
Assuming (\ref{dDFTdecay}) holds for the scaling function corresponding to $B^d_\text{sep}$, an ordering $\rho$ of $B_\rf^d(\epsilon)$ is strongly optimal for the basis pair $(B_\rf^d(\epsilon),B^d_\text{sep})$ if and only if it is linear. Furthermore, the optimal decay rate of $(B_\rf^d(\epsilon),B^d_\text{sep})$ is represented by the function $f(N)=N^{-1}$.
\end{corollary}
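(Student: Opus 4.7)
The plan is to assemble the corollary directly from the characterization established in Proposition~\ref{dDSeparableFourierWavelet} together with the counting estimate in Lemma~\ref{normest}, by invoking part~3 of Lemma~\ref{characterisationlemma}.

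First, I would take $S = \bbZ^d$ and $\iota = \lambda_d^{-1} : \bbZ^d \to B_\rf^d(\epsilon)$, and the dominating/dominated function
\[
F(n) = \frac{1}{\max(\|n\|_\infty^d,1)}, \qquad n \in \bbZ^d.
\]
Under the hypothesis (\ref{dDFTdecay}), Proposition~\ref{dDSeparableFourierWavelet} says that $F$ characterizes the optimal decay of $(B_\rf^d(\epsilon), B_\text{sep}^d)$ in the sense of Definition~\ref{Characterisation}; that is, there are constants $C_1,C_2>0$ with
\[
C_2 \cdot F(n) \le \sup_{g \in B_\text{sep}^d} |\langle \lambda_d^{-1}(n), g\rangle|^2 \le C_1 \cdot F(n), \qquad n \in \bbZ^d.
\]
We are therefore in the setting of Lemma~\ref{characterisationlemma}(3).

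Next I would identify a representative of the fastest decay of $F$. Pick any $\sigma : \bbN \to \bbZ^d$ consistent with $1/F$; equivalently, $\sigma$ is consistent with $F_\text{norm}(n)=\max(\|n\|_\infty,1)$ in the sense of Definition~\ref{consistent_ordering}. Lemma~\ref{normest} then gives constants $E_1,E_2>0$ with
\[
E_1 \cdot N^{1/d} \le \max(\|\sigma(N)\|_\infty,1) \le E_2 \cdot N^{1/d},\qquad N \in \bbN,
\]
so $F(\sigma(N)) = \max(\|\sigma(N)\|_\infty^d,1)^{-1} \approx N^{-1}$. Hence the decreasing function $f(N)=N^{-1}$ represents the fastest decay of $F$, and by Lemma~\ref{characterisationlemma}(3) the optimal decay rate of the basis pair $(B_\rf^d(\epsilon), B_\text{sep}^d)$ is represented by $f(N)=N^{-1}$.

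Finally, for the strong optimality characterization I would again use Lemma~\ref{characterisationlemma}(3), which tells us that an ordering $\rho$ of $B_\rf^d(\epsilon)$ is strongly optimal if and only if
\[
F(\lambda_d \circ \rho(N)) \approx N^{-1}, \qquad N \in \bbN,
\]
i.e.\ $\max(\|\lambda_d \circ \rho(N)\|_\infty,1) \approx N^{1/d}$. This is precisely the condition that $\rho$ be a linear ordering (in the $\|\cdot\|_\infty$ norm, which is equivalent to any other norm on $\bbR^d$ up to constants, so it matches Definition of ``linear ordering'' as formulated via (\ref{linearrough})). There is no serious obstacle here: all the analytic work has already been done in Proposition~\ref{dDSeparableFourierWavelet} and Lemma~\ref{normest}; the only thing to check is that the ``consistent with $1/F$'' condition in Lemma~\ref{characterisationlemma} can be replaced by the more flexible ``$\approx$'' condition of linearity, which is exactly the content of part~3 of that lemma.
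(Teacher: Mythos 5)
Your proposal is correct and follows essentially the same route as the paper's own proof: both apply the characterization from Proposition~\ref{dDSeparableFourierWavelet} through Lemma~\ref{characterisationlemma}, use Lemma~\ref{normest} to identify $N^{-1}$ as the fastest decay of $F_\text{lin}$, and read off the strong-optimality criterion $F_\text{norm}(\lambda_d\circ\rho(\cdot)) \approx N^{1/d}$, which is exactly the definition of a linear ordering. No gaps.
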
 
\begin{proof}
If we apply part 2.) of Proposition \ref{dDSeparableFourierWavelet} to Lemma \ref{characterisationlemma} we kow that if $\sigma: \bbN \to \bbZ^d$ is consistent with $1/F_\text{lin}=F_\text{norm}^d$, i.e. consistent with $F_\text{norm}$, then $F_\text{lin}(\sigma(\cdot))=1/F_\text{norm}^d(\sigma(\cdot))$ represents the optimal decay rate. Lemma \ref{normest} tells us that this optimal decay is $1/(N^{1/d})^d=1/N$. Furthermore, Lemma \ref{characterisationlemma} says that an ordering $\rho$ is strongly optimal for $(B_\rf^d(\epsilon),B^d_\text{sep})$ if and only if $F_\text{lin}(\lambda_d \circ \rho(\cdot)) \approx F_\text{lin}(\sigma(\cdot))$ which holds if and only if $F_\text{norm}(\lambda_d \circ \rho(\cdot)) \approx F_\text{norm}(\sigma(\cdot))$, namely $\rho$ is linear.
\end{proof}
 
Corollary \ref{linearresults} gives us the same optimal decay as in one dimension, which is in contrast to the multidimensional tensor case, where the best we can do is have $d-1$ extra log factors. We can use this result to cover the two dimensional case in full:
\begin{corollary} \label{twodimresults}
In 2D the optimal decay rate of $(B^2_\rf,B^2_\text{sep})$ is represented by $f(N)= N^{-1}$. This optimal decay rate is obtained by using a linear ordering. In fact an ordering $\rho$ of $B^2_\rf$ is strongly optimal in 2D if and only if it is linear. 
\end{corollary}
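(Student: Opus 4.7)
The plan is to deduce this corollary as an immediate specialization of Corollary \ref{linearresults} to dimension $d=2$. Recall that Corollary \ref{linearresults} gives precisely the conclusion we want (optimal decay rate $N^{-1}$ attained exactly by linear orderings), but it carries the hypothesis that the decay condition (\ref{dDFTdecay}) holds for the generating scaling function $\phi$, i.e.\ $|\mathcal{F}\phi(\omega)| \le K|\omega|^{-d/2}$ for $\omega \neq 0$. So the only thing to check is that, when $d=2$, this hypothesis is automatic for any Daubechies scaling function underlying $B^2_\text{sep}$.

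For $d=2$ the condition (\ref{dDFTdecay}) reads $|\mathcal{F}\phi(\omega)| \le K|\omega|^{-1}$, which is exactly the universal bound supplied by the first part of Lemma \ref{FTdecayLemma}. This observation is already recorded as Remark \ref{2doptimal}. Thus the hypothesis of Corollary \ref{linearresults} is satisfied unconditionally in two dimensions for any choice of Daubechies generator.

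With that verified, Corollary \ref{linearresults} applied with $d=2$ yields: the optimal decay rate of $(B_\rf^2(\epsilon),B^2_\text{sep})$ is represented by $f(N)=N^{-1}$, and an ordering $\rho$ of $B^2_\rf$ is strongly optimal for $(B^2_\rf,B^2_\text{sep})$ if and only if $\sigma = \lambda_2 \circ \rho$ satisfies the linear growth estimate (\ref{normestresult}) from Lemma \ref{normest}, i.e.\ if and only if $\rho$ is linear. This is exactly the statement of the corollary.

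There is no real obstacle here: the substantive work has already been carried out in Propositions \ref{dDSeparableFourierWavelet} and Lemma \ref{normest} and packaged in Corollary \ref{linearresults}. The only content specific to $d=2$ is the automatic verification of the Fourier-decay hypothesis via Lemma \ref{FTdecayLemma}, which is essentially a one-line check. Accordingly the proof will be a two-line invocation: cite Remark \ref{2doptimal} (or equivalently Lemma \ref{FTdecayLemma}) to discharge (\ref{dDFTdecay}) in 2D, then apply Corollary \ref{linearresults}.
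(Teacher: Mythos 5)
Your proposal is correct and matches the paper's own proof exactly: the paper likewise observes via Lemma \ref{FTdecayLemma} that the decay condition (\ref{dDFTdecay}) holds automatically when $d=2$, and then applies Corollary \ref{linearresults} directly. No further comment is needed.
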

\begin{proof}
Using Lemma \ref{FTdecayLemma} we observe that the decay condition (\ref{dDFTdecay}) holds automatically if $d=2$. Therefore we may apply Corollary \ref{linearresults} directly.
\end{proof}
This result \emph{does not extend to higher dimensions:}
 
\begin{example} \label{3DHaar}
If we do not have condition (\ref{dDFTdecay}) then our argument can break down very badly:
For Haar wavelets we have an explicit formula for the Fourier transform of the one-dimensional mother wavelet,
\[ \mathcal{F} \phi( \omega) = \frac{ \exp(2 \pi i \omega)-1}{2 \pi i \omega }. \]
Therefore we have that (\ref{dDFTdecay}) is not satisfied for $d \ge 3$ and furthermore we have (for $\epsilon<1$ and $J \in \bbN$ fixed)
\begin{equation} \label{FTdecayfail}
 | \mathcal{F} \phi( \epsilon 2^{-J} k ) | \ge \frac{1}{2\pi \epsilon k} ,
 \end{equation}
for infinitely many $k \in \mathbb{N}$. 
Now consider the case of $d$D separable Haar wavelets with a linear ordering $\rho$ of the Fourier Basis. Then, for $m \in \mathbb{N}$ such that $\lambda_d \circ \rho(m)=(\lambda_d \circ \rho(m)_1,0,\cdots,0)$ we know that by (\ref{FTdecayfail}) there are infinitely many $m$ such that
\be{ \label{poorlowerbound}
\begin{aligned} 
| \langle \Phi , \rho(m) \rangle |^2 & = \epsilon^d |\mathcal{F} \phi(\epsilon 2^{-J} \lambda_d \circ \rho(m)_1)|^2 \cdot |\mathcal{F} \phi(0)|^2(d-1)
\\ & \ge \epsilon^d \cdot \frac{1}{(2\pi \epsilon |\lambda_d \circ \rho(m)_1|)^2} \ge \frac{\epsilon^{d-2} E}{4 \pi^2 m^{2/d}},
\end{aligned}
}
for some constant $E$ using Lemma \ref{normest}. Therefore an upper bound of the form $\text{Constant} \cdot N^{-1}$ is not possible for a linear scaling scheme if $d \ge 3$. 
This can be rectified by applying a semi-hyperbolic scaling scheme, as in the next subsection.
\end{example}

\subsection{Examples of Linear Orderings - Linear Scaling Schemes} \label{linearsection}

A wide variety of sampling schemes that are commonly used happen to be linear. In particular we demonstrate that sampling according to how a shape scales linearly from the origin always corresponds to a linear ordering (see Figure \ref{linearscalingimages}):

\begin{figure}[!t]
\begin{center}
\begin{subfigure}[t]{0.4\textwidth}
\begin{center}
\includegraphics[width=\textwidth]{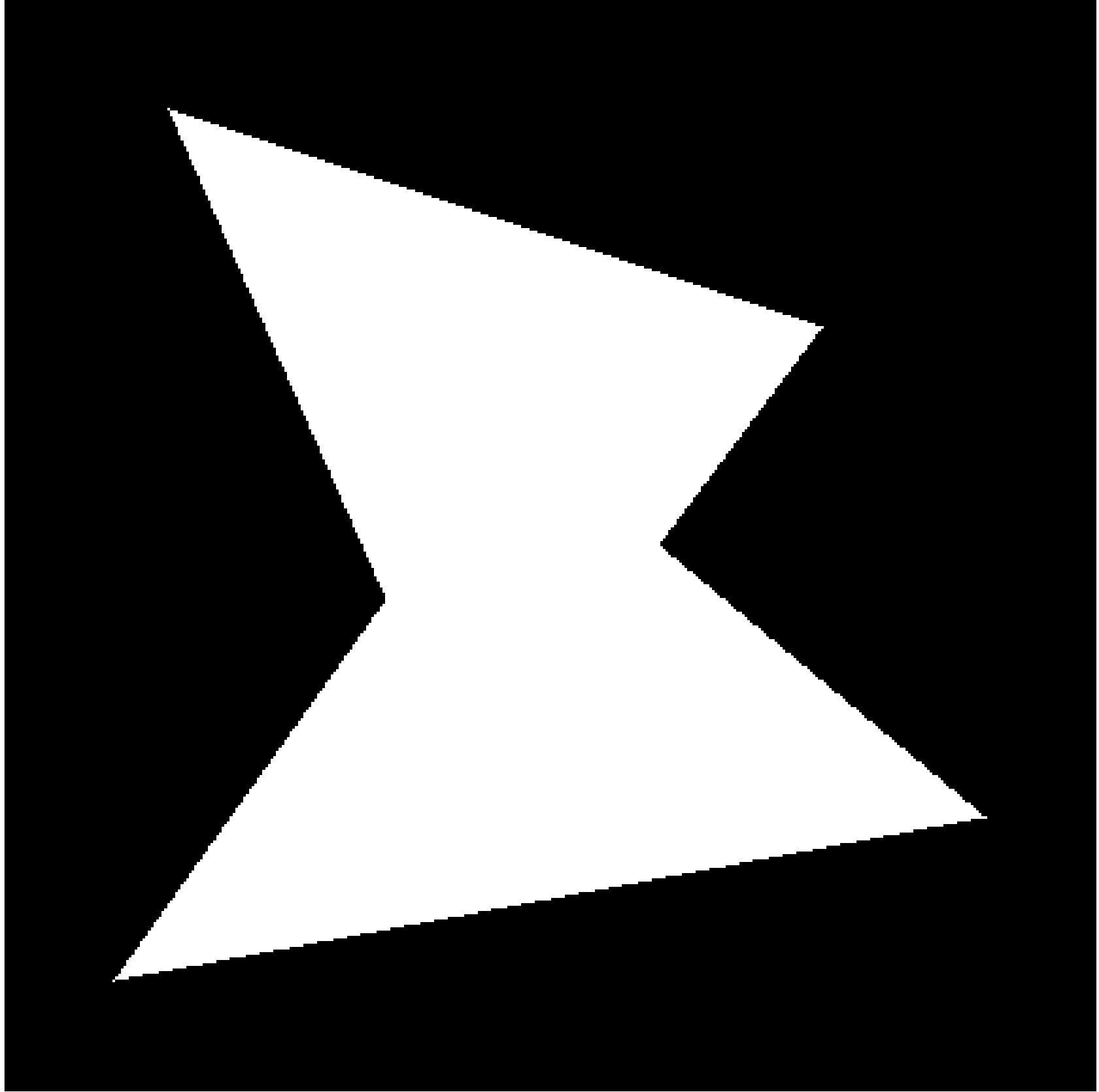}
 \caption{\footnotesize Scaling shape}
\end{center}
\end{subfigure}
\begin{subfigure}[t]{0.4\textwidth}
\begin{center}
 \includegraphics[width=\textwidth]{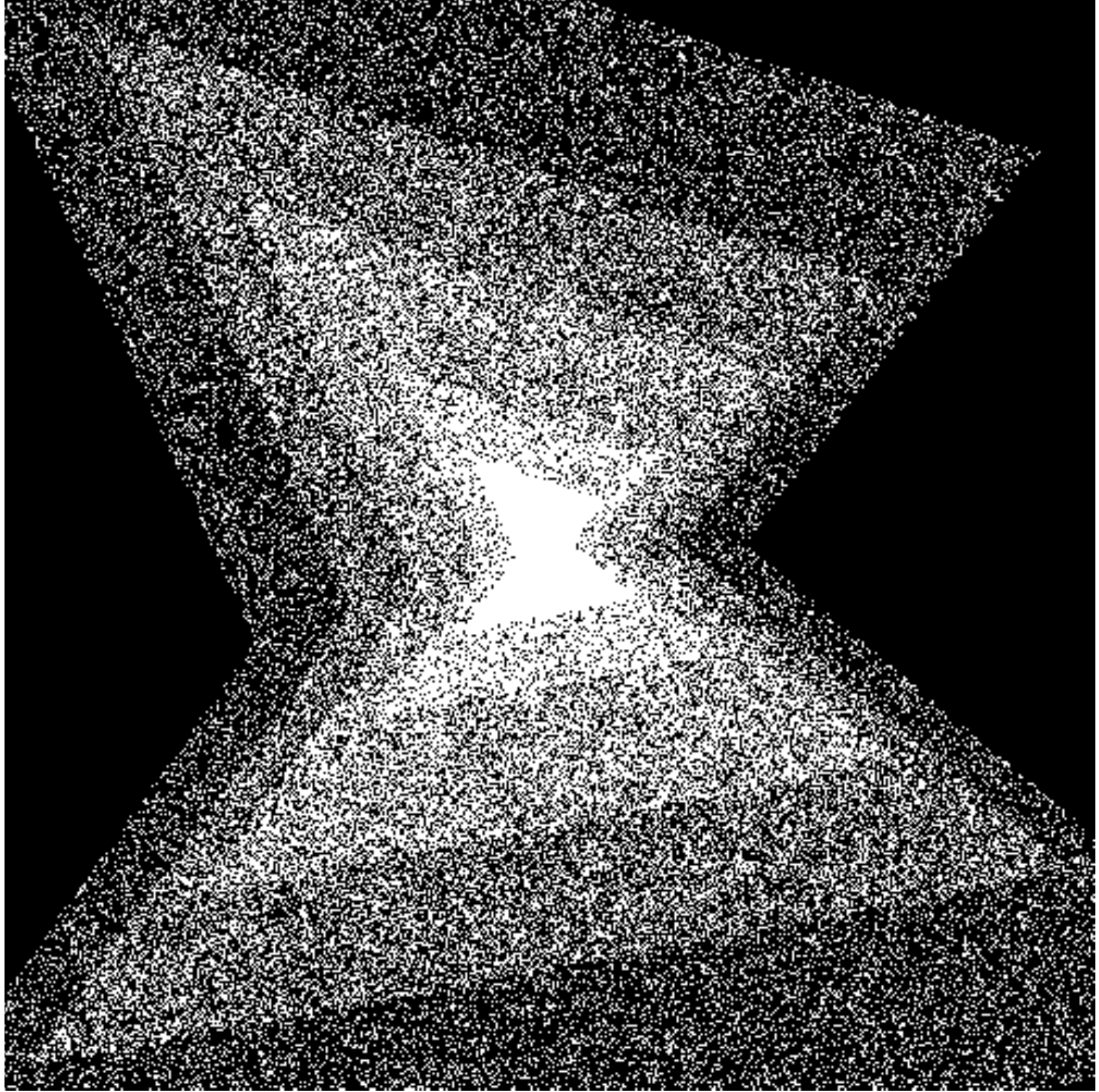} 
  \caption{\footnotesize Sampling according to a linear scaling scheme with scaling shape (a)}
  \end{center}
\end{subfigure}
\end{center}
\caption{A Simple Linear Scaling Scheme} 
\label{linearscalingimages}
\end{figure}

\begin{definition} \label{lineardef}
Let $D \subset \mathbb{R}^d$ be bounded with $0$ in its interior and define $S_{D}: \mathbb{Z}^d: \to \mathbb{R}$
\[ S_{D}(x) := \inf \big\{ \kappa >0 :  x \in \kappa D \ \big\}. \]
An ordering $\sigma: \bbN \to \bbZ^d$ is said to `correspond to a linear scaling scheme with scaling shape $D$' if it is consistent with $S_D$. Furthermore, an ordering $\rho : \mathbb{N} \rightarrow B^d_\rf(\epsilon)$ is said to `correspond to a linear scaling scheme with scaling shape $D$' if it is consistent with $S_D \circ \lambda_d$.

\end{definition}

\begin{remark} \label{linearnorm}
If we put a norm $\| \cdot \|$ on $\mathbb{Z}^d$ and take an ordering consistent with this norm then this ordering corresponds to a linear scaling scheme with scaling shape $\{ x \in \mathbb{R}^d \ : \ \|x\|=1 \}.$
\end{remark}
\begin{lemma} \label{normestold}
Let $\rho: \bbN \to B^d_\rf(\epsilon)$ corresponds to a linear scaling scheme with scaling shape $D$. Then $\rho$ is linear.

\end{lemma}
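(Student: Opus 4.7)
The plan is to reduce the claim to the case of an ordering consistent with the sup-norm on $\bbZ^d$ (which is already covered by Lemma \ref{normest}) by showing that the gauge $S_D$ is equivalent to $\|\cdot\|_\infty$ up to multiplicative constants on $\bbZ^d$. First, because $D$ is bounded and $0 \in \mathrm{int}(D)$, I would fix $r,R>0$ with $[-r,r]^d \subseteq D \subseteq [-R,R]^d$. From the outer inclusion, any $\kappa' > S_D(m)$ gives $m/\kappa' \in D \subseteq [-R,R]^d$ and hence $\|m\|_\infty \le R\kappa'$; sending $\kappa' \searrow S_D(m)$ yields $\|m\|_\infty \le R\,S_D(m)$. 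From the inner inclusion, any $\kappa' < S_D(m)$ gives $m/\kappa' \notin [-r,r]^d$, hence $\|m\|_\infty > r\kappa'$; sending $\kappa' \nearrow S_D(m)$ yields $\|m\|_\infty \ge r\,S_D(m)$.

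With this equivalence in hand, I would mimic the counting argument of Lemma \ref{normest}. Setting $\sigma = \lambda_d \circ \rho$ and, for fixed $N \in \bbN$, writing $m = \sigma(N)$ and $\kappa = S_D(m)$, consistency of $\sigma$ with $S_D$ forces
\[
\{\sigma(1),\dots,\sigma(N)\} \subseteq \{m' \in \bbZ^d : S_D(m') \le \kappa\} \subseteq [-R\kappa,R\kappa]^d \cap \bbZ^d,
\]
so $N \le (2R\kappa+1)^d$, and therefore $\|\sigma(N)\|_\infty \ge r\kappa \ge r(N^{1/d}-1)/(2R)$. In the other direction, for $r\kappa \ge 1$ let $L$ be the largest integer strictly below $r\kappa$; every $m' \in \bbZ^d$ with $\|m'\|_\infty \le L$ satisfies $S_D(m') \le L/r < \kappa$, and so is enumerated strictly before position $N$ (none of these is $m$ itself, since $\|m\|_\infty \ge r\kappa > L$). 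This gives $(2L+1)^d \le N-1$, and since $L \ge r\kappa - 1$ I obtain $\kappa \le (N^{1/d}+1)/(2r)$, hence $\|\sigma(N)\|_\infty \le R(N^{1/d}+1)/(2r)$.

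Combining both bounds, for all $N$ sufficiently large so that $r\kappa \ge 1$ one has $C_1 N^{1/d} \le \|\sigma(N)\|_\infty \le C_2 N^{1/d}$ with explicit constants $C_1,C_2>0$. Only finitely many small $N$ remain; among these $N=1$ forces $\sigma(1)=0$ (since $0$ is the unique zero of $S_D$ on $\bbZ^d$, as $D$ is bounded), so $\max(\|\sigma(1)\|_\infty,1)=1$, and for the remaining $N$ every quantity is strictly positive, so the required inequalities $E_1 N^{1/d} \le \max(\|\sigma(N)\|_\infty,1) \le E_2 N^{1/d}$ hold after enlarging $E_2$ and shrinking $E_1$. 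The main obstacle, which is largely bookkeeping, is to handle the switch from $\|\cdot\|_\infty$ to $\max(\|\cdot\|_\infty,1)$ uniformly, and to avoid circular reasoning about ties in $S_D$ by phrasing all estimates in terms of its sub- and super-level sets rather than the ordering $\sigma$ itself.
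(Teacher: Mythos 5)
Your proof is correct and follows essentially the same route as the paper's: sandwich the scaling shape $D$ between two scaled hypercubes, deduce that $S_D$ is equivalent to $\|\cdot\|_\infty$ up to constants, and then repeat the lattice-point counting argument of Lemma \ref{normest}. Your write-up is in fact slightly more careful than the paper's (explicit treatment of the infimum defining $S_D$, of the $\max(\cdot,1)$ adjustment, and of small $N$), but the underlying idea is identical.
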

\begin{proof}
Let $\sigma=\lambda_d \circ \rho$.  Because the scaling shape $D$ is bounded and contains $0$ in its interior we have that there exists constants $C_1,C_2>0$ such that $C_1 \mathcal{S} \subset D \subset C_2 \mathcal{S}$ where $\mathcal{S}$ is defined to be the unit hypercube, i.e.
$ \mathcal{S} :=  \{ x \in \mathbb{R}^d  :  \|x\|_\infty=1 \}.$
Therefore  if $ \| \sigma(N) \|_\infty = L$, then since $D \subset C_2 \mathcal{S}$ we have that $S_D(\sigma(N))\ge L C_2^{-1}$. Applying this to $C_1 \mathcal{S} \subset D$  we deduce that $\sigma$ must have enumerated beforehand all points $m$ in $ \mathbb{Z}^d$ with $\|m\|_\infty<LC_1 C_2^{-1}$ and there are at least $(2 ( L C_1 C_2^{-1} -1 )+1)^d$ of such points. This means that
\[
N \ge (2 ( \| \sigma(N) \|_\infty C_1 C_2^{-1} -1 ) +1)^d \quad \Rightarrow \quad \| \sigma(N) \|_\infty \le \frac{N^{1/d}+1}{2 C_1 C_2^{_1}} \le \frac{N^{1/d}}{C_1 C_2^{-1}}, \qquad N \in \bbN.
\]
which proves the upper bound. The lower bound is tackled similarly to prove (\ref{normestresult}).
\end{proof}

\subsection{2D Separable Incoherences}

By Remark \ref{2doptimal} we have shown that linear orderings are strongly optimal for all 2D Fourier - separable wavelet cases, so this is a good point to have a quick look at a few of these in Figure \ref{separableincoherences}. 

\begin{figure}[!t]
\begin{center}
\begin{subfigure}[t]{0.32\textwidth}
\begin{center}
\includegraphics[width=\textwidth]{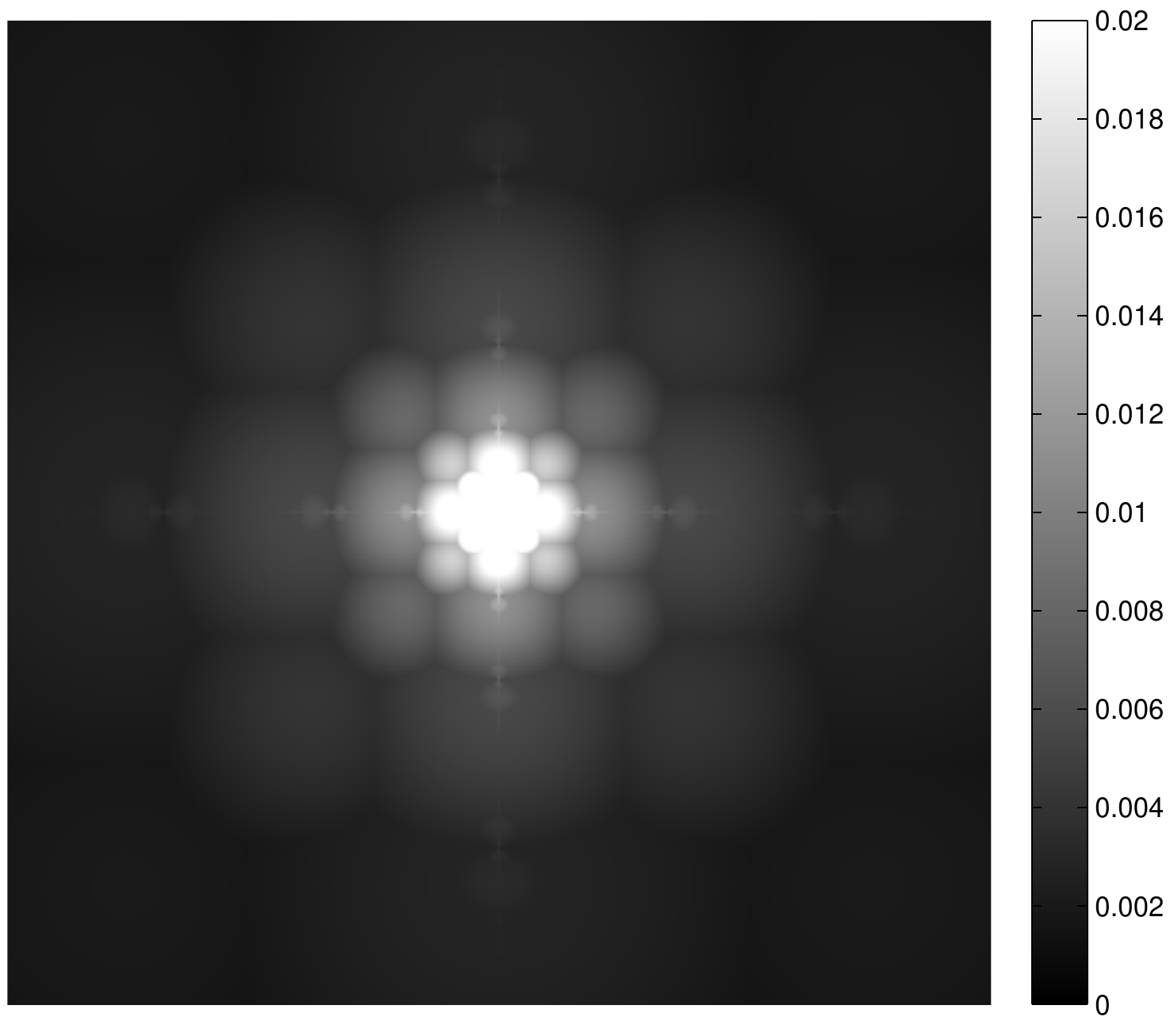}
 \caption{\footnotesize Haar Coherences}
\end{center}
\end{subfigure}
\begin{subfigure}[t]{0.32\textwidth}
\begin{center}
 \includegraphics[width=\textwidth]{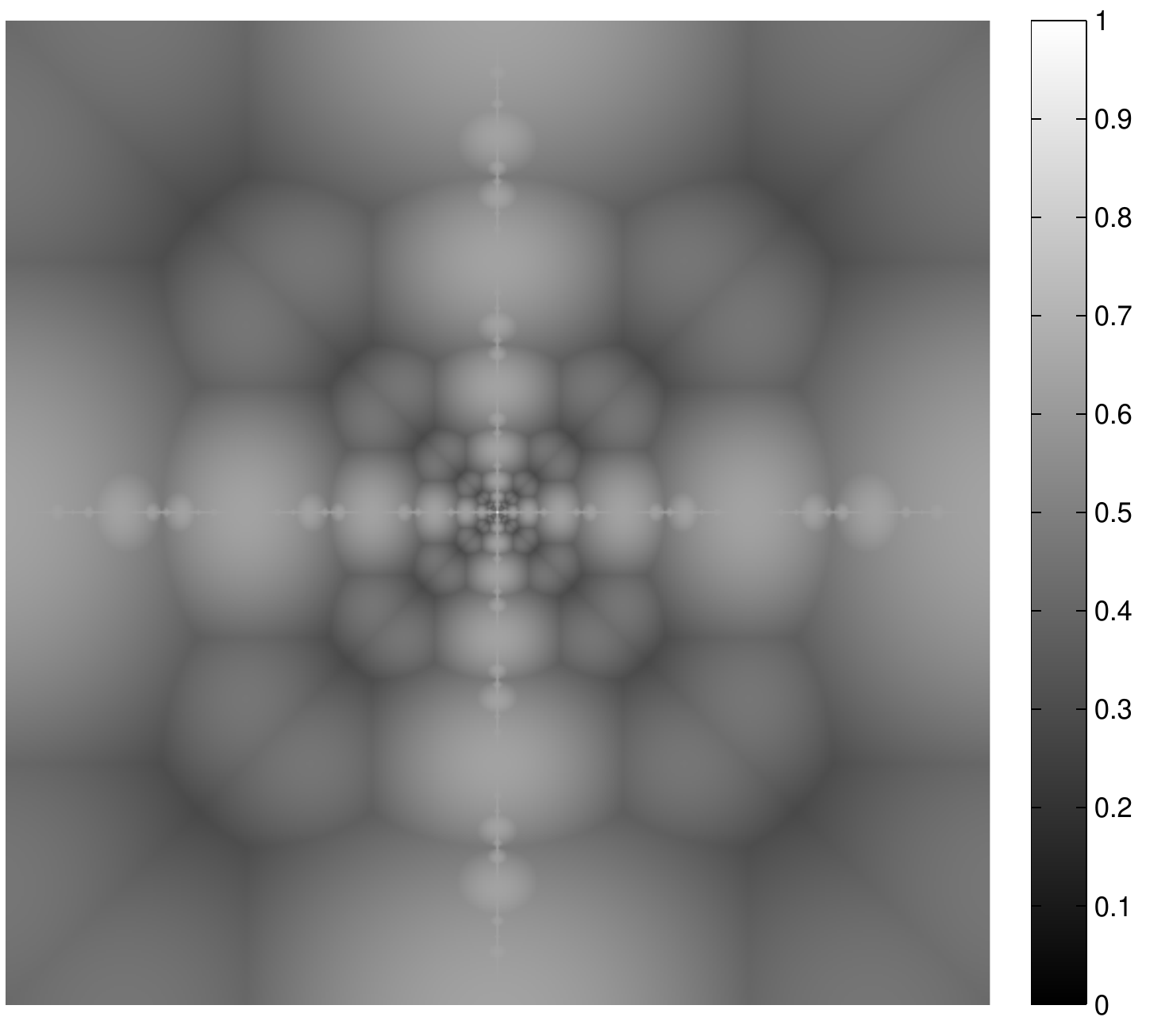} 
  \caption{\footnotesize Scaled Haar Coherences}
  \end{center}
\end{subfigure}
\begin{subfigure}[t]{0.32\textwidth}
\begin{center}
\includegraphics[width=\textwidth]{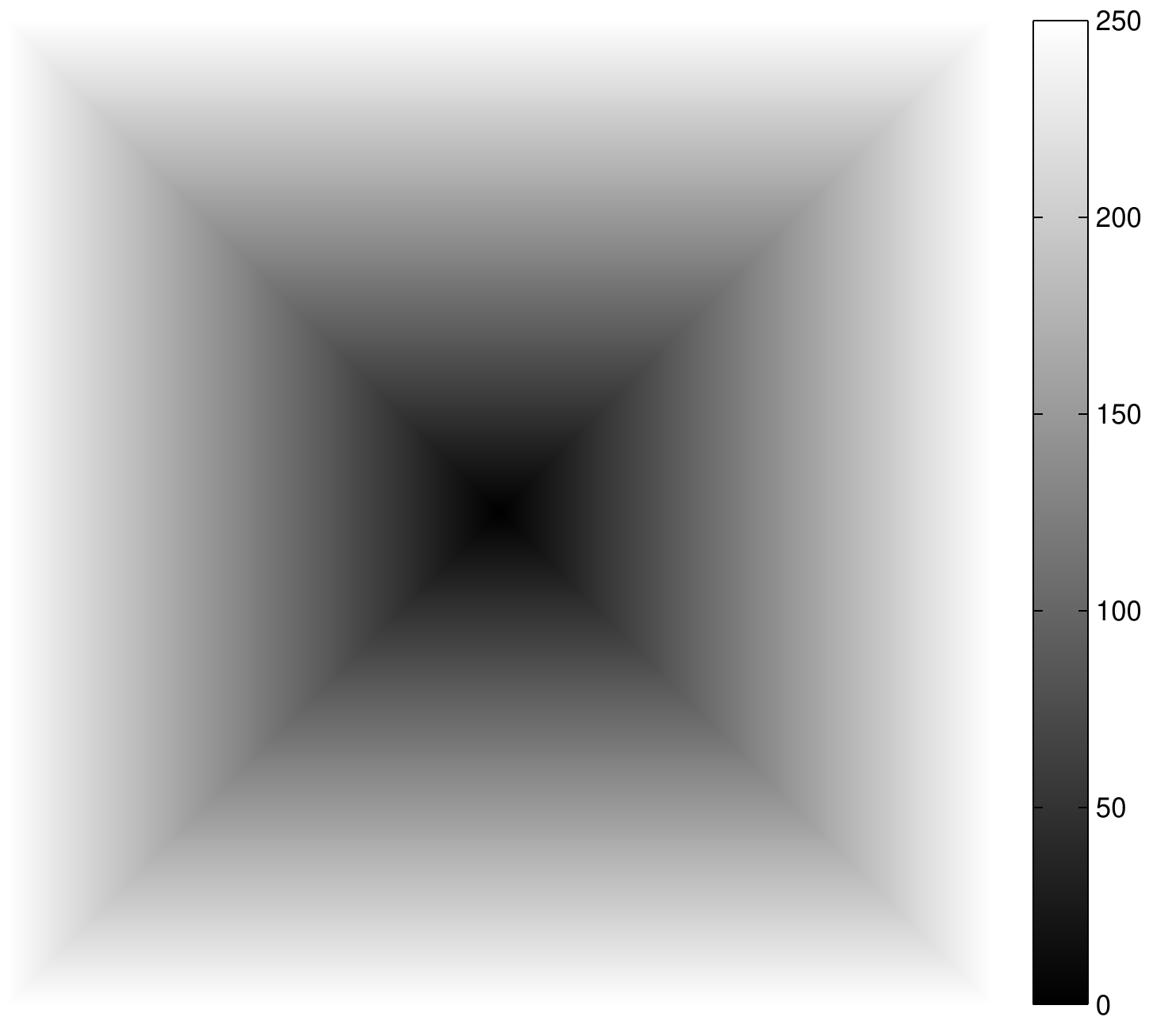} 
  \caption{\footnotesize Linear Scaling used for Both Bases}
  \end{center}
\end{subfigure}
\begin{subfigure}[t]{0.32\textwidth}
\begin{center}
\includegraphics[width=\textwidth]{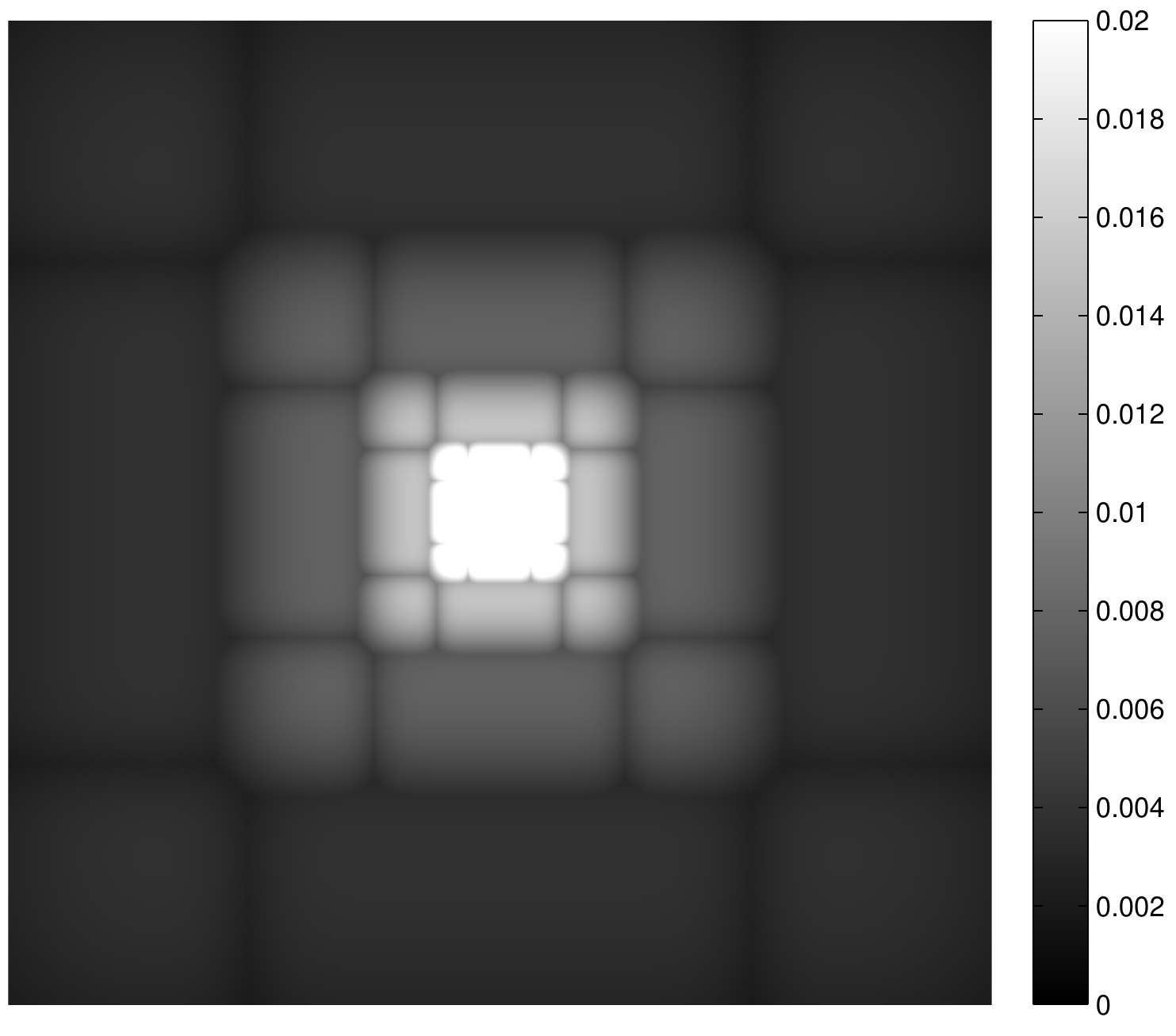}
 \caption{\footnotesize Daubechies16 Coherences }
\end{center}
\end{subfigure}
\begin{subfigure}[t]{0.32\textwidth}
\begin{center}
 \includegraphics[width=\textwidth]{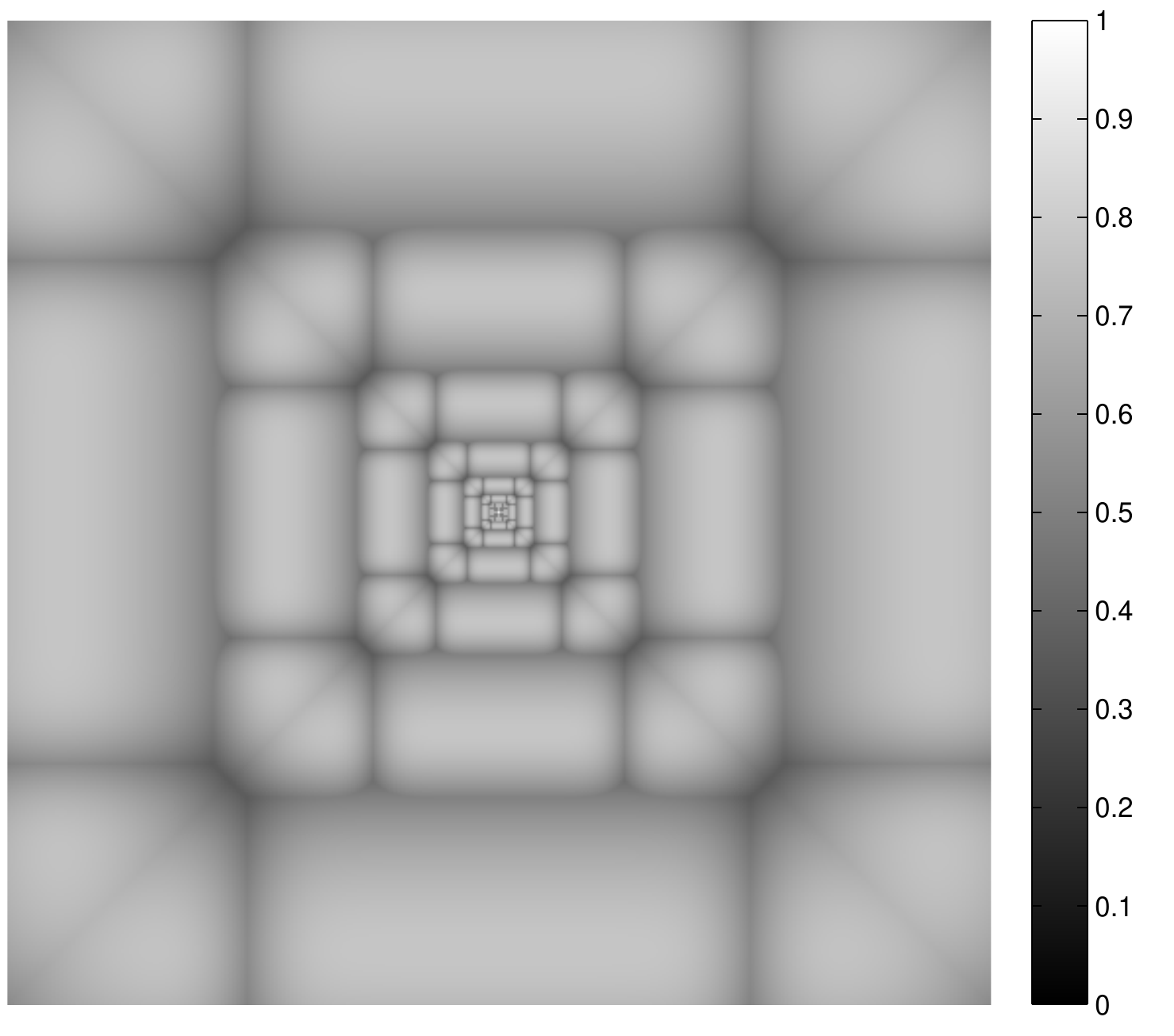} 
  \caption{\footnotesize  Scaled \\ Daubechies16  Coherences }
  \end{center}
\end{subfigure}
\begin{subfigure}[t]{0.32\textwidth}
\begin{center}
\end{center}
\end{subfigure}
\end{center}
\caption{2D Fourier -  Separable Wavelet Coherences. We show the subset $\{-250,-249,...,249,250\}^2 \subset \bbZ^2$. Notice again that the scaled coherences are bounded above zero and below 1 indicating that we have characterised the incoherence in terms of the linear scaling used, as shown in Proposition \ref{dDSeparableFourierWavelet}. The incoherences shown in the Figure are square rooted to reduce contrast.}

\label{separableincoherences}

\end{figure}

\subsection{Semi-Hyperbolic Orderings: Proof of Theorem \ref{separablesummary} Part (iii)} \label{semihypsection}

By Example \ref{3DHaar} we know that if (\ref{dDFTdecay}) does not hold then our approach of using a linear ordering can fail. We therefore return once more to (\ref{sepprodexample2}). Let us now try to use an approach that is halfway between our two previous linear/hyperbolic approaches. Let $r \in \{1,...,d-1\}$ be fixed. We shall first impose a decay condition that is stronger than (\ref{FTdecay}) but weaker than (\ref{dDFTdecay}):
\be{ \label{semiFTdecay}
|\mathcal{F}\phi (\omega)| \le \frac{K}{| \omega|^{d/2r} }, \qquad \omega \in \mathbb{R} \setminus \{ 0 \} .
}
Instead of just taking out the dominant term of the product in (\ref{sepprodexample2}), let us take out the $r$ smallest terms:
\be{ \label{semihypbound}
\begin{aligned}
| \langle  \Psi^{s}_{j,r} , \lambda_d^{-1}(n) \rangle |^2  & \le \epsilon^d 2^{-dj} \cdot \min_{\substack{i_1,...,i_r \in \{1,...,d\} \\ i_1<...<i_r}} \prod_{r=1}^r \frac{K^{2}}{|\epsilon 2^j n_{i_r}|}^{d/r}
\\ & =  K^{2r} \cdot \Big( \max_{\substack{i_1,...,i_r \in \{1,...,d\} \\ i_1<...<i_r}} \prod_{r=1}^r | n_{i_r}| \Big)^{-d/r}, \qquad n \in \bbZ^d, n_i \neq 0, i=1,...,d.
\end{aligned}
}
Again we can extend this bound to all $n \in \bbZ^d$:
\be{ \label{semihypbound2}
\sup_{g \in B_\text{sep}^d} | \langle  g , \lambda_d^{-1}(n) \rangle |^2  \le \max(K^{2r},1) \cdot  \Big( \max_{\substack{i_1,...,i_r \in \{1,...,d\} \\ i_1<...<i_r}} \prod_{r=1}^r \max(| n_{i_r}|,1) \Big)^{-d/r}, \qquad n \in \bbZ^d.
}
We deduce that the function $F_{\text{hyp},r}: \bbZ^d \to \bbR, F_{\text{hyp},r}(n)=\Big( \max_{\substack{i_1,...,i_r \in \{1,...,d\} \\ i_1<...<i_r}} \prod_{r=1}^r \max(| n_{i_r}|,1) \Big)^{-d/r}$ dominates the optimal decay of of $(B_\rf^d, B_\text{sep}^d)$.

\begin{definition} \label{semihyperbolic}
Let us define, for $r,d \in \bbN, r \le d$ the function
\[ H_{d,r}(n):= \max_{\substack{i_1,...,i_r \in \{1,...,d\} \\ i_1<...<i_r}}
\prod_{j=1}^r \max(|n_{i_j}|,1) , \qquad n \in \bbZ^d.
\]
Then we say an ordering $\sigma: \bbN \to \bbZ^d$ is `semi-hyperbolic of order $r$ in $d$ dimensions' if it is consistent with $H_{d,r}$.
\end{definition}
Figure \ref{Consistent3D} presents some isosurface plots of $H_{3,r}$ for the various values of $r$
Notice that a semi-hyperbolic ordering of order d in d dimensions corresponds to the hyperbolic cross in $\bbZ^d$ (see Example \ref{hypcrossZd}). Furthermore, if $\sigma: \bbN \to \bbZ^d$ is a semi-hyperbolic ordering of order $1$ in $d$ dimensions then, by Remark \ref{linearnorm}, $\sigma$ corresponds to a linear scaling scheme because $H_{d,1}(n)= \|n\|_\infty$ for the componentwise max norm $\| \cdot \|_\infty$ on $\bbR^d$. Like in the linear and hyperbolic cases discussed in the previous sections, we want to determine how $H_{d,r}(\sigma(n))$ scales with $n \in \bbN$. 

\begin{figure}[!t]
\begin{center}
\begin{subfigure}[t]{0.32\textwidth}
\begin{center}
\includegraphics[width=\textwidth]{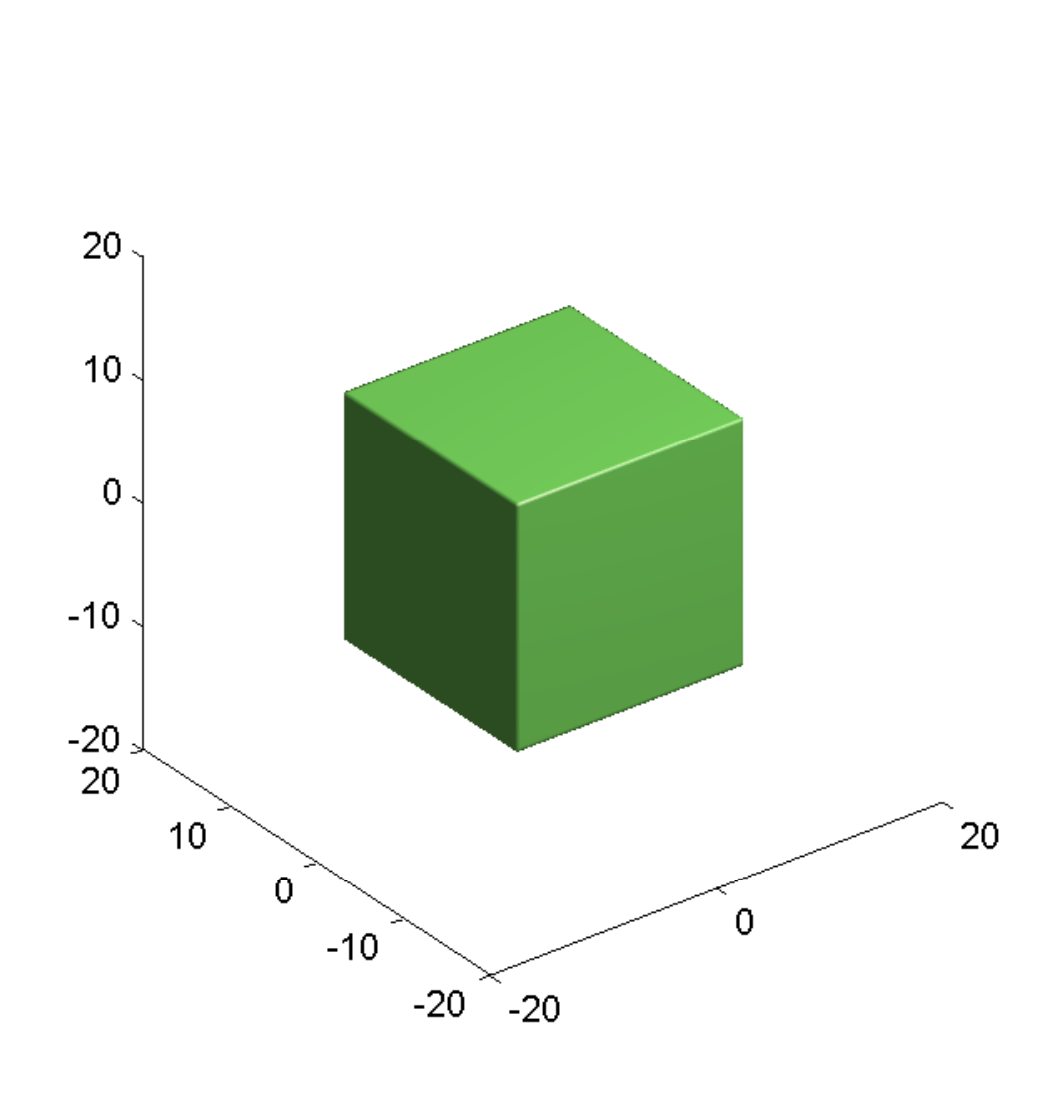}
 \caption{\footnotesize Case $r=1$ (Linear) ;  \\ Isosurface value=10. }
\end{center}
\end{subfigure}
\begin{subfigure}[t]{0.32\textwidth}
\begin{center}
 \includegraphics[width=\textwidth]{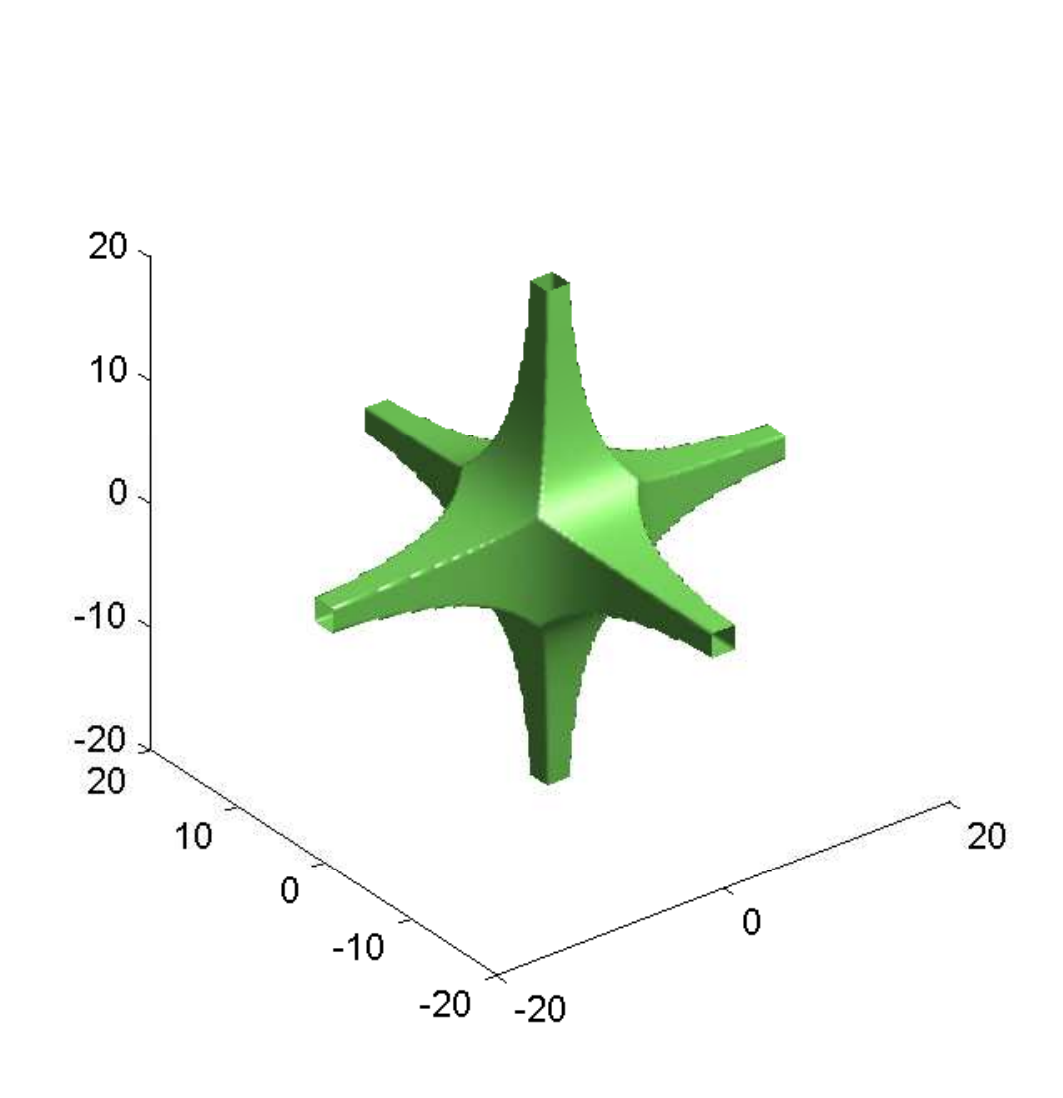} 
  \caption{\footnotesize Case $r=2$ (Semi-Hyperbolic) ; \\ Isosurface value=20. }
  \end{center}
\end{subfigure}
\begin{subfigure}[t]{0.32\textwidth}
\begin{center}
\includegraphics[width=\textwidth]{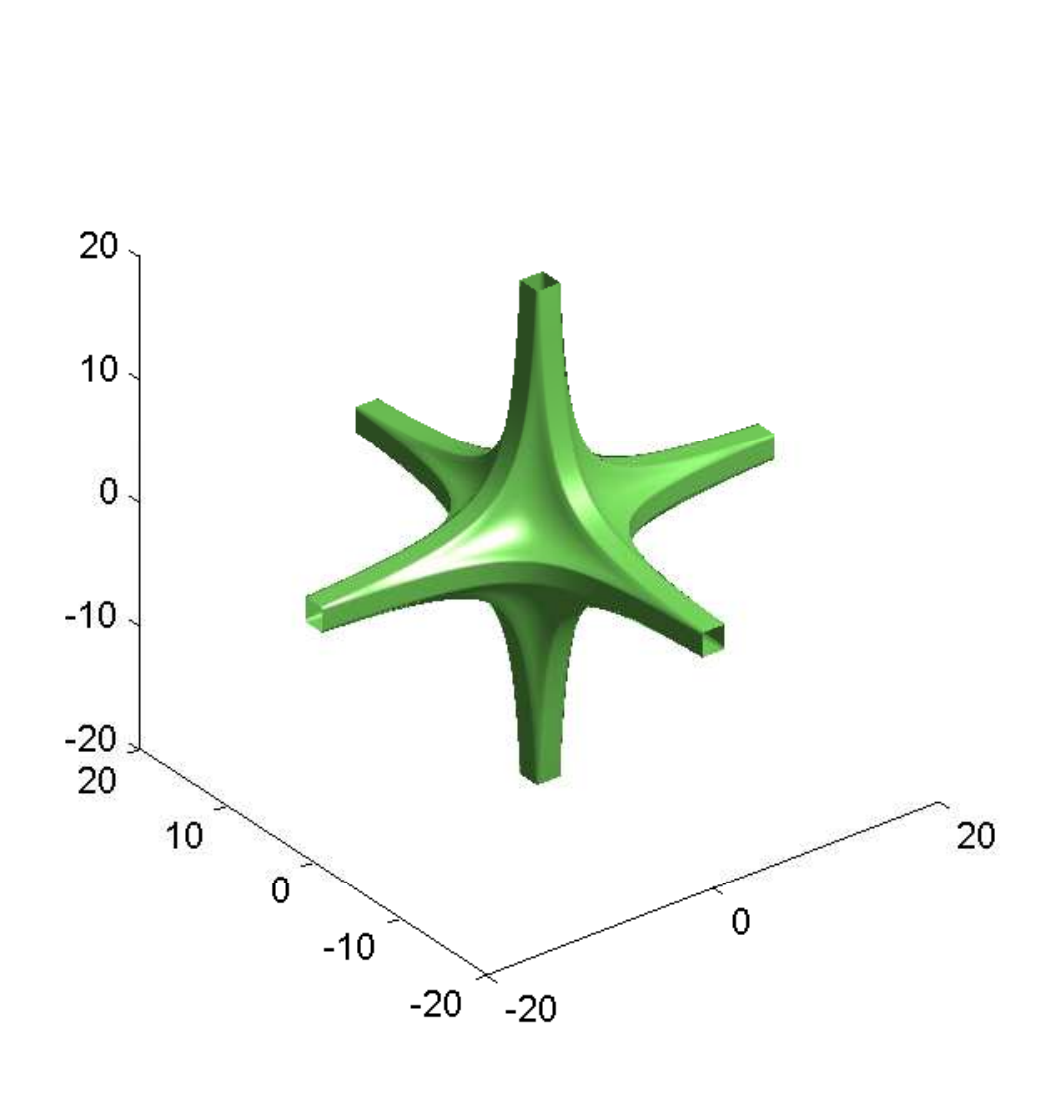} 
  \caption{\footnotesize Case $r=3$ (Hyperbolic) ; \\ Isosurface value=20. }
  \end{center}
\end{subfigure}
\end{center}
\caption{Isosurfaces of $H_{3,r}$, $r=1,2,3$ describing the three types of ordering available in 3D} 
\label{Consistent3D}
\end{figure}

\begin{lemma} \label{semihypbehaviour}
1). \ \ Let $r,d\in \bbN, r \le d-1$ be fixed. Let us define
\[ S_{d,r}(n):= \# \{ m \in \bbZ^d : H_{d,r}(m) \le n \}, \qquad n \in \bbN.\]
Then there is a constant $C>0$ such that
\[ n^{d/r} \le S_{d,r}(n) \le C \cdot n^{d/r}, \qquad n \in \mathbb{N}. \]
2). \ \ If $\sigma: \bbN \to \bbZ^d$ is semi-hyperbolic of order $r$ with $r \le d-1$ then there are constants $C_1,C_2>0$ such that
\[ C_1 \cdot n^{r/d} \le H_{d,r}(\sigma(n)) \le C_2 \cdot n^{r/d}, \qquad n \in \bbN. \]
\end{lemma}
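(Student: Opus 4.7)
My plan is to establish Part 1 first, since Part 2 follows from it by a standard sandwich argument using consistency.

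The lower bound in Part 1 is essentially trivial: any $m \in \bbZ^d$ with $|m_i| \le n^{1/r}$ for every $i$ satisfies $H_{d,r}(m) \le n^{r/r} = n$, and there are $(2\lfloor n^{1/r}\rfloor+1)^d \ge n^{d/r}$ such lattice points for $n \ge 1$. The upper bound is the main work. I would first reduce from $\bbZ^d$ to $\bbN^d$ (losing a factor $3^d$, since each coordinate has at most three preimages giving a fixed value of $\max(|m_j|,1)$), then sort the entries in nondecreasing order and absorb a $d!$. The free smaller entries $a_1 \le \ldots \le a_{d-r} \le a_{d-r+1}$ contribute at most $\lesssim a_{d-r+1}^{d-r}$ possibilities, so the problem reduces to bounding
\[
T := \sum_{1 \le c_1 \le \ldots \le c_r,\ c_1 \cdots c_r \le n} c_1^{d-r}.
\]

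For each fixed $c_1 = c$ (necessarily $c \le n^{1/r}$), I would rescale $c_i = c\, y_i$ and use the standard hyperbolic-cross volume bound $\mathrm{vol}\{y \in [1,\infty)^{r-1} : \prod y_i \le M\} \lesssim M \log^{r-2}(M)$ to obtain
\[
\#\{(c_2,\ldots,c_r) \in \bbN^{r-1}: c_i \ge c,\ \prod_i c_i \le n/c\} \ \lesssim\ \frac{n}{c}\,\log^{r-2}\!\Big(\frac{n}{c^{r}}\Big).
\]
It is crucial that the logarithm involves $n/c^r$ rather than $n/c$: the naive bound via $S_{r-1}(n/c)$ from Lemma \ref{orderset} would produce a $\log^{r-2} n$ loss that cannot be absorbed later. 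Summing against $c^{d-r}$ and substituting $t = \log(n/c^r)$ transforms the resulting sum (up to constants) into $\tfrac{n^{d/r}}{r}\int_0^{\log n} e^{-(d-r)t/r}\, t^{r-2}\, dt$. Since $r \le d-1$ makes $(d-r)/r > 0$, the exponential decay dominates and the integral is uniformly bounded in $n$, yielding $T \lesssim n^{d/r}$.

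For Part 2, consistency of $\sigma$ with $H_{d,r}$ means that among the elements with equal $H_{d,r}$-value the order is arbitrary, but all elements with strictly smaller value are enumerated first. Writing $v_n := H_{d,r}(\sigma(n))$, this gives the sandwich $S_{d,r}(v_n - 1) < n \le S_{d,r}(v_n)$; inserting the Part 1 bounds $n^{d/r} \le S_{d,r}(n) \le C n^{d/r}$ immediately produces constants $C_1, C_2 > 0$ with $C_1 n^{r/d} \le v_n \le C_2 n^{r/d}$. The main obstacle throughout is removing the apparent logarithmic factor in Part 1, which is exactly where the rescaling $c_i = c\, y_i$ and the hypothesis $r \le d-1$ are both essential.
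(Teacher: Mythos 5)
Your proof is correct, but your argument for the upper bound in Part 1 takes a genuinely different route from the paper's. The paper proceeds by induction on $r$: if some coordinate $m_i$ exceeds $n^{1/r}$ then $m_i\cdot H_{d-1,r-1}(\tilde m_i)\le H_{d,r}(m)\le n$, so $S_{d,r}(n)$ is bounded by $n^{d/r}$ plus $d$ copies of $\sum_{i\ge n^{1/r}} S_{d-1,r-1}(\lfloor n/i\rfloor)$, and the inductive bound $S_{d-1,r-1}(M)\lesssim M^{(d-1)/(r-1)}$ makes the sum converge precisely because $r\le d-1$ forces $(d-1)/(r-1)>1$. You avoid induction altogether: after sorting, you fix the pivot $c_1$ (the smallest of the top $r$ entries), bound the $d-r$ small entries by $c_1^{d-r}$, and count the remaining large entries via a hyperbolic-cross volume with the shifted base point $c_i\ge c_1$; the rescaling $c_i=c_1y_i$ turns the logarithm into $\log^{r-2}(n/c_1^{r})$, which degenerates exactly in the regime $c_1\approx n^{1/r}$ that dominates the sum, and $r\le d-1$ reappears as the convergent factor $e^{-(d-r)t/r}$. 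Your diagnosis that the naive bound via $S_{r-1}(n/c)$ from Lemma \ref{orderset} loses an unabsorbable $\log^{r-2}n$ is accurate, and the shifted-base-point rescaling is the right cure; the paper's recursion sidesteps this issue because at each stage it only divides out a single large coordinate. Two routine points to tidy in a write-up: the count of $(c_2,\ldots,c_r)$ should read $\lesssim (n/c)\bigl(1+\log^{r-2}(n/c^{r})\bigr)$, since the pure logarithm can vanish while the count is still $1$ (the extra term contributes $n\sum_{c\le n^{1/r}} c^{d-r-1}\lesssim n^{d/r}$, so it is harmless), and the sum-to-integral step is most safely justified by a dyadic decomposition in $c$ because the summand is not monotone. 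Part 2 is the same sandwich argument as in the paper. Overall the two approaches are of comparable length; yours is non-recursive and displays the role of $r\le d-1$ in a single convergent integral, while the paper's induction reuses the hyperbolic-cross counting machinery of Lemma \ref{orderset} already in place.
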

\begin{proof}
1). \ \ For notational simplicity we prove the same bounds but with $S_{d,r}$ replaced by the smaller set
\[ \tilde{S}_{d,r}(n):= \# \{ m \in \bbN^d : H_{d,r}(m) \le n \}, \qquad n \in \bbN. \] 
The same bounds for $S_{d,r}$ then follows immediately, albeit with a larger constant $C>0$. The lower bound is straightforward since the set defining $\tilde{S}_{d,r}(n)$ contains the set $\{ m \in \bbN^d : m_i \le n^{1/r}, \ i=1,..,d \}$.
We prove the upper bound by induction on $r$. The case $r=1$ is clear because $\tilde{S}_{d,1}(n)$
 is simply the number of points inside a $d$-dimensional hypercube with side length $n$. Suppose the result holds for $r=r'-1$. We use the following set inclusion:
 \be{
 \begin{aligned}
  \{ m \in \bbZ^d : H_{d,r'}(m) \le n \} \subset & \{ m \in \bbN^d : m_i \le n^{1/r'}, \ i=1,..,d \}
  \\ & \cup \bigcup_{i=1}^d \{m \in \bbN^d : n^{1/r'} \le m_i \le n, \ H_{d-1,r'-1}(\tilde{m}_i) \le n/m_i \},
  \end{aligned}
 }
where $\tilde{m}_i$ here refers to $m$ with the $i$th entry removed. The cardinality of the first set on the right is just $n^{d/r'}$ and so we are done if we can show that for some constant $C>0$,
\[ \# \{m \in \bbN^d : n^{1/r'} \le m_1 \le n, \ H_{d-1,r'-1}((m_2,...,m_d)) \le n/m_1 \} \le C n^{d/r'}, \qquad n \in \bbN \]
We achieve this by applying our inductive hypothesis:
\be{
\begin{aligned}
 \# \{m & \in \bbN^d : n^{1/r'} \le m_1 \le n, \ H_{d-1,r'-1}((m_2,...,m_d)) \le n/m_1 \} \\
 & \le \sum_{  i= \lfloor n^{1/r'} \rfloor} ^n  S_{d-1,r'-1}(\lfloor n/i \rfloor) \le C' \cdot \sum_{ i= \lfloor n^{1/r'} \rfloor}^n \big( n/i \big)^{(d-1)/(r'-1)}
 \\ & \le C'n^{(d-1)/(r'-1)} \cdot \int_{n^{1/r'}-2}^n x^{-(d-1)/(r'-1)} \, dx 
 \\ & \le C'n^{(d-1)/(r'-1)} \cdot (n^{1/r'}-2)^{(1-(d-1)/(r'-1))}, \qquad ( \text{noting } r' \le d-1)
 \end{aligned}
 }
 where $C'>0$ is some constant. We can replace $(n^{1/r'}-2)$ by $n^{1/r'}$ in the above by changing the constant $C'$ and assuming $n>2^{r'}$. Finally, we notice that the exponents add to the desired expression:
 \[ \frac{d-1}{r-1} + \frac{1}{r'}\Big( 1- \frac{d-1}{r'-1} \Big)= \frac{d-1}{r'-1} - \frac{ d-r'}{r'(r'-1)} = \frac{d}{r'}.\]
 This gives the required upper bound for $n>2^{r'}$. Since the terms involved are all positive, we can just increase the constant $C'$ to include the cases $n \le 2^{r'}$. This shows that the result holds for $r=r'$ and the induction argument is complete.
 
 2.) \ \  By consistency we know that
 \[ S_{d,r}(H_{d,r}(\sigma(n))-1) \le n \le S_{d,r}(H_{d,r}(\sigma(n))), \qquad n \in \bbN \]
 and therefore we can directly apply part 1 to deduce
 \[ ( H_{d,r}(\sigma(n))-1)^{d/r} \le n \le C \cdot ( H_{d,r}(\sigma(n)))^{d/r}, \qquad n \in \bbN, \]
 from which the result follows.
\end{proof}
Armed with this result, we can now completely tackle the separable wavelet case.

\begin{theorem} \label{semihyperbolicthm}
Suppose that the scaling function $\phi$ corresponding to the separable wavelet basis $B^d_{\text{sep}}$, satisfies (\ref{semiFTdecay}) for some constant $K\ge 0$ and $r \in \{1,...,d-1\}$.
Next let $\sigma: \bbN \to \bbZ^d$ be semi-hyperbolic of order $r$ in $d$ dimensions and $\rho:=\lambda_d^{-1} \circ \sigma$. Finally, we let $U=[(B^d_\rf(\epsilon),\rho),(B^d_{\text{sep}}, \tau)]$, where $\tau$ is an ordering of $B^d_{\text{sep}}$. Let us also fix $\epsilon \in I_{J,p}$. Then there are constants $C_1, C_2$ such that 
\[ \frac{C_1}{N} \le \mu(Q_N U)  \le \frac{C_2}{N},  \quad N \in \bbN. \]
Furthermore it follows that the ordering $\rho$ is optimal for the basis pair $(B^d_\rf(\epsilon),B^d_\text{sep})$.
\end{theorem}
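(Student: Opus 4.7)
The plan is to split the statement into three pieces: the upper bound $\mu(Q_N U)\le C_2/N$, the matching lower bound $\mu(Q_NU)\ge C_1/N$, and the optimality of $\rho$. The upper bound is essentially bookkeeping assembled from machinery already in place. The real work is the lower bound; the optimality claim then follows from a pigeonhole-style application of the same lower-bound construction to arbitrary orderings.

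\textbf{Upper bound.} Since $\phi$ satisfies (\ref{semiFTdecay}), Lemma \ref{FTdecayLemma} propagates the same decay to $\psi$, so the derivation leading to (\ref{semihypbound2}) is valid and yields $\sup_{g\in B^d_\text{sep}}|\langle g,\lambda_d^{-1}(n)\rangle|^2 \le C\cdot H_{d,r}(n)^{-d/r}$ for all $n\in\bbZ^d$. Combining this with Lemma \ref{semihypbehaviour}(2), which gives $H_{d,r}(\sigma(M))\ge C'\cdot M^{r/d}$, one obtains $\mu(\pi_M U)\le C''/M$. Since $\mu(Q_N U)=\sup_{M\ge N}\mu(\pi_M U)$ and the bound $C''/M$ is decreasing in $M$, the upper bound $\mu(Q_N U)\le C''/N$ follows at once.

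\textbf{Lower bound.} The idea is to exhibit, for each $N$, a single Fourier mode $n=\sigma(m)$ with $m\ge N$ and $m$ comparable to $N$ such that some explicit separable wavelet $\Psi^s_{j,k}\in B^d_\text{sep}$ satisfies $|\langle \Psi^s_{j,k},\lambda_d^{-1}(n)\rangle|^2\gtrsim 1/m$. The construction: fix $q_0\in\bbN$ large enough that both $L_{q_0}>0$ (Lemma \ref{wavelower}) and $\inf_{\omega\in[0,2^{-q_0}]}|\mathcal{F}\phi(\omega)|>0$ (possible since $\mathcal{F}\phi(0)=1$ and $\mathcal{F}\phi$ is continuous). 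For a parameter $t\in\bbN$, take the "diagonal-in-$r$-coordinates" frequency
\[
n^{(t)}:=(\underbrace{t,\ldots,t}_{r},0,\ldots,0)\in\bbZ^d,\qquad H_{d,r}(n^{(t)})=t^r.
\]
Choose $s=(1,\ldots,1,0,\ldots,0)$ with $r$ ones matching the nonzero coordinates and $j=q_0+\lceil\log_2(\epsilon t)\rceil$, so that $\epsilon 2^{-j}t\in[2^{-q_0-1},2^{-q_0}]$. Evaluating (\ref{sepprodexample2}) gives
\[
|\langle\Psi^s_{j,0},\lambda_d^{-1}(n^{(t)})\rangle|^2
=\epsilon^d 2^{-dj}|\mathcal{F}\psi(\epsilon 2^{-j}t)|^{2r}\cdot|\mathcal{F}\phi(0)|^{2(d-r)}
\ge C\,L_{q_0}^{2r}\,t^{-d},
\]
the $t^{-d}$ arising from $2^{-dj}\asymp(\epsilon t)^{-d}$ and the $\epsilon^d$ factor cancelling. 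By Lemma \ref{semihypbehaviour}(2) the position $m(t)$ of $n^{(t)}$ under $\sigma$ satisfies $m(t)\asymp t^d$. Given $N$, selecting $t=\lceil(CN)^{1/d}\rceil$ with $C$ large ensures $m(t)\ge N$ and $m(t)\le C''N$; then $\mu(\pi_{m(t)}U)\ge c/t^d\ge c'/N$, yielding $\mu(Q_NU)\ge c'/N$.

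\textbf{Optimality of $\rho$.} Let $\rho'$ be any other ordering of $B^d_\rf(\epsilon)$ and $U':=[(B^d_\rf(\epsilon),\rho'),(B^d_\text{sep},\tau)]$. Given $N$, consider the family $\{n^{(t)}:t=t_0,\ldots,t_0+N\}$ for $t_0$ of order $N^{1/d}$: this is a set of $N+1$ distinct frequencies, so at least one of them, say $n^{(t^\ast)}$, is placed by $\rho'$ at some index $M\ge N$. The inner-product bound above for $n^{(t^\ast)}$ is independent of which ordering is used, so $\mu(\pi_M U')\ge c/(t^\ast)^d\gtrsim 1/N$, hence $\mu(Q_N U')\gtrsim 1/N\asymp\mu(Q_N U)$. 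This shows $\rho\prec\rho'$ in the sense of Definition \ref{fasterdecay}, so $\rho$ is optimal.

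\textbf{Main obstacle.} The delicate step is the lower bound in the second paragraph. One must exhibit a single scale $j$ and sign pattern $s$ that simultaneously (i) puts $r$ arguments of $\mathcal{F}\psi$ in a region of non-vanishing (requiring the nonzero $n_i$ to be equal, not merely comparable), and (ii) keeps the remaining $d-r$ arguments at $0$ where $\mathcal{F}\phi$ equals $1$. Using the degenerate axis-aligned diagonal $n^{(t)}$ is what makes these two requirements compatible; any attempt to exploit generic $n$ fails precisely because we have no lower bound on $|\mathcal{F}\psi|$ at large arguments, which is exactly the asymmetry between the decay hypothesis (\ref{semiFTdecay}) and what is needed for matching lower bounds.
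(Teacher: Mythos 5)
Your upper bound and your lower bound for the semi-hyperbolic ordering itself are both sound: the upper bound is the same argument as the paper's (domination by $F_{\text{hyp},r}$ plus Lemma \ref{semihypbehaviour}), and your diagonal construction $n^{(t)}=(t,\dots,t,0,\dots,0)$ correctly produces a frequency at position $m(t)\asymp t^d$ with coherence $\gtrsim t^{-d}$, which gives $\mu(Q_NU)\gtrsim N^{-1}$ for \emph{this} ordering. The genuine gap is in the optimality argument. You pigeonhole over the family $\{n^{(t)}:t=t_0,\dots,t_0+N\}$, but these $N+1$ frequencies have wildly different magnitudes: the element $n^{(t^\ast)}$ that an adversarial ordering $\rho'$ pushes to index $M\ge N$ may have $t^\ast$ as large as $t_0+N\asymp N$, so your bound gives only $\mu(\pi_MU')\gtrsim (t^\ast)^{-d}\asymp N^{-d}$, not $N^{-1}$. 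The defect cannot be repaired within the diagonal family: to conclude $\gtrsim N^{-1}$ you need $t^\ast\lesssim N^{1/d}$, but there are only $O(N^{1/d})$ diagonal frequencies with $t\lesssim N^{1/d}$, so $\rho'$ can simply place all of them among its first $N$ indices and your pigeonhole finds nothing.

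The missing ingredient is a lower bound of the form $\sup_{g\in B^d_\text{sep}}|\langle g,\lambda_d^{-1}(n)\rangle|^2\ge C\,\|n\|_\infty^{-d}$ for \emph{every} $n$ with $\|n\|_\infty$ large — this is exactly Proposition \ref{dDSeparableFourierWavelet} part 1, which the paper feeds into Lemma \ref{characterisationlemma} part 2 to conclude that every ordering satisfies $\mu(Q_NU')\gtrsim N^{-1}$, whence optimality is automatic once the upper bound is in hand. Your ``main obstacle'' paragraph asserts that the generic-$n$ lower bound is unobtainable because it would require controlling $|\mathcal{F}\psi|$ at several incomparable arguments; this is not so. One chooses $j=\lceil\log_2(\epsilon\|n\|_\infty)\rceil+q$ and places the single wavelet factor $\psi$ only in the coordinate $i'$ attaining $\|n\|_\infty$ (so its argument lands in $[2^{-q-1},2^{-q}]$ where Lemma \ref{wavelower} applies), while every other coordinate carries the scaling function $\phi$ with argument in $[0,2^{-q}]$, where $|\mathcal{F}\phi|$ is bounded below by continuity and $\mathcal{F}\phi(0)=1$ — no equality or comparability of the $n_i$ is needed. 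With that proposition in place (there are $>N$ lattice points with $\|n\|_\infty\le CN^{1/d}$, so one of them sits at index $>N$ under any $\rho'$), your pigeonhole strategy does go through; without it, the optimality claim is unproven.
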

\begin{proof}  
Applying part 1.) from Proposition \ref{dDSeparableFourierWavelet} (with $\epsilon$ fixed) to part 2.)  of Lemma \ref{characterisationlemma} immediately gives us the lower bound for the semihyperberbolic ordering since this bound also holds for the optimal decay rate. Furthermore this lower bound holds for any other ordering and therefore if we have the upper bound then the ordering $\rho$ is automatically optimal. We now focus on the upper bound. 

By (\ref{semihypbound2}) we know that the optimal decay of $(B_\rf^d(\epsilon),B^d_\text{sep})$ is dominated by $F_\text{hyp,r}$. Therefore by part 1.) of Lemma \ref{characterisationlemma} if $\sigma: \bbN \to \bbZ^d$ is consistent with $1/F_\text{hyp,r}$, i.e. $\sigma$ is semihyperbolic of order $r$ then we can bound the row incoherence $\mu(\pi_N U)$ by $F_\text{hyp,r}(\sigma(N))= H_{d,r}^{-d/r}(\sigma(N)) \approx (N^{r/d})^{-d/r}=N^{-1}$ by Lemma \ref{semihypbehaviour}. Since $N^{-1}$ is decreasing this bound extends to $\mu(Q_N U)$.
\end{proof}

Finally we can summarise our results on the $(B_\rf^d (\epsilon),B^d_{\text{sep}})$ case as follows:
\begin{theorem} \label{SeparableResults}
Let $\rho$ be a Linear ordering of the d-dimensional Fourier basis $B_\rf^d(\epsilon)$ with $\epsilon \in I_{J,p}$, $\tau$ a leveled ordering of the d-dimensional separable wavelet basis $B^d_{\text{sep}}$ and $U=[(B_\rf^d(\epsilon),\rho),(B^d_{\text{sep}},\tau)]$. Furthermore, suppose that the decay condition (\ref{FTdecay}) holds for the wavelet basis. Then, keeping $\epsilon>0$ fixed, we have, for some constants $C_1, C_2>0$ the decay
\be{ \label{dDSeparableFourierPolynomialLinearBounds}  
\frac{C_1}{N} \le \mu(\pi_N U), \ \mu(U \pi_N), \le \frac{C_2}{N} \qquad \forall N \in \mathbb{N}.
}
Let us now instead replace $\rho$ by a semi-hyperbolic ordering of order $r$ in $d$ dimensions with $r \in \{1,....,d-1\}$ and assume the weaker decay condition (\ref{semiFTdecay}). Then, keeping $\epsilon>0$ fixed, we have, for some constants $C_1, C_2>0$ the decay
\be{ \label{dDSeparableFourierPolynomialSemiHypBounds}  
\frac{C_1}{N} \le \mu(Q_N U), \ \mu(U \pi_N), \le \frac{C_2}{N} \qquad \forall N \in \mathbb{N},
}
and furthermore $\rho$ is optimal for the basis pair $(B_\rf^d(\epsilon),B^d_{\text{sep}})$. Since, for any separable Daubechies wavelet basis, (\ref{semiFTdecay}) always holds for $r=d-1$  any semi-hyperbolic ordering of order $d-1$ in $d$ dimensions will produce (\ref{dDSeparableFourierPolynomialSemiHypBounds}).
\end{theorem}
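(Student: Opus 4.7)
The plan is to recognise that Theorem~\ref{SeparableResults} is essentially a consolidation of material already established in Section~\ref{separable}: the substantive analytic work lives in Propositions~\ref{dDSeparableWaveletFourierCharacterisation} and~\ref{dDSeparableFourierWavelet}, Lemma~\ref{semihypbehaviour}, and Theorem~\ref{semihyperbolicthm}. The proof proceeds by assembling these pieces and managing the row/column symmetry.

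For the first assertion, I would treat the row and column incoherences separately. Since $\rho$ is linear, Corollary~\ref{linearresults} identifies $\rho$ as strongly optimal for the basis pair $(B_\rf^d(\epsilon), B_\text{sep}^d)$ with optimal decay represented by $f(N)=N^{-1}$; by Definition~\ref{strongoptimality} this is exactly the two-sided estimate $\mu(\pi_N U)\approx N^{-1}$, from which $\mu(Q_N U)\approx N^{-1}$ follows since $N^{-1}$ is decreasing. (Strictly speaking this uses the stronger decay (\ref{dDFTdecay}) rather than (\ref{FTdecay}) to obtain the upper bound via (\ref{linearbound3}).) For the column bound, I would view $U$ as encoding the reversed pair $(B^d_\text{sep},B_\rf^d(\epsilon))$ — its columns are indexed by the leveled ordering $\tau$ — and apply Corollary~\ref{leveledresults}, which says precisely that a leveled ordering of $B_\text{sep}^d$ is strongly optimal for this reversed pair with optimal decay $N^{-1}$, giving $\mu(U\pi_N)\approx N^{-1}$.

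For the second assertion, Theorem~\ref{semihyperbolicthm} directly provides $\mu(Q_N U)\approx N^{-1}$ together with the optimality of $\rho$, under the intermediate decay hypothesis (\ref{semiFTdecay}) for the chosen $r\in\{1,\ldots,d-1\}$. The column bound $\mu(U\pi_N)\approx N^{-1}$ is unchanged and again follows from Corollary~\ref{leveledresults}, which is insensitive both to the ordering $\rho$ on the Fourier side and to any decay hypothesis on $\phi$ beyond what is baked into (\ref{FTdecay}). The closing sentence about $r=d-1$ being universally available for Daubechies bases is a one-liner: the exponent $d/(2(d-1))$ satisfies $d/(2(d-1))\le 1$ for every integer $d\ge 2$, so the baseline decay $|\mathcal{F}\phi(\omega)|\le K/|\omega|$ from Lemma~\ref{FTdecayLemma} automatically implies (\ref{semiFTdecay}) with $r=d-1$ (absorbing the region $|\omega|\le 1$ into the constant using boundedness of $\mathcal{F}\phi$).

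There is no genuine obstacle, since every ingredient is already in place; the only point needing care is the row/column bookkeeping that lets one transport the leveled-ordering result from the pair $(B_\text{sep}^d,B_\rf^d(\epsilon))$ to the column incoherence of the matrix $U=[(B_\rf^d(\epsilon),\rho),(B_\text{sep}^d,\tau)]$. This amounts to the identity $\mu(U\pi_N)=\mu(\pi_N U^{*})$ and the observation that $U^{*}$ is the change-of-basis matrix for the reversed pair with the roles of $\rho$ and $\tau$ swapped, so the hypotheses of Corollary~\ref{leveledresults} apply verbatim.
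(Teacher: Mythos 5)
Your proposal is correct and follows essentially the same route as the paper, which likewise just assembles Corollaries \ref{leveledresults} and \ref{linearresults} for the linear case, Corollary \ref{leveledresults} and Theorem \ref{semihyperbolicthm} for the semi-hyperbolic case, and the observation that the universal bound $|\mathcal{F}\phi(\omega)|\le K|\omega|^{-1}$ implies (\ref{semiFTdecay}) for $r=d-1$. Your parenthetical remark that the first part really needs (\ref{dDFTdecay}) rather than (\ref{FTdecay}) is a valid catch -- the hypothesis as printed is vacuous and would contradict Example \ref{3DHaar} for $d\ge 3$ Haar, so the intended condition is indeed (\ref{dDFTdecay}), exactly as Corollary \ref{linearresults} requires.
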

\begin{proof}
(\ref{dDSeparableFourierPolynomialLinearBounds}) follows from Corollaries \ref{leveledresults} and \ref{linearresults}. (\ref{dDSeparableFourierPolynomialSemiHypBounds}) follows from Corollary \ref{leveledresults} and Theorem \ref{semihyperbolicthm}. To show that (\ref{dDSeparableFourierPolynomialSemiHypBounds}) always holds for a $d-1$ degree semi-hyperbolic ordering in $d$ dimensions, we note that the weakest decay on the scaling function $\phi$ is $| \mathcal{F} \phi (\omega)| \le K \cdot | \omega |^{-1}$ (see Lemma \ref{FTdecayLemma}) and therefore (\ref{semiFTdecay}) is automatically satisfied for $r=d-1$.
\end{proof}

\subsection{Optimal Orderings \& Wavelet Smoothness} \label{orderingsandsmoothness}

Theorem \ref{SeparableResults} demonstrates how certain degrees of smoothness, in terms of decay of the Fourier transform, allows us to show certain orderings are optimal and this smoothness requirement becomes increasingly more demanding as the dimension increases. But if a certain ordering is optimal for the basis pair $(B_\rf^d, B^d_\text{sep})$, does this mean that the wavelet must also have some degree of smoothness as well? The answer to this question turns out to be yes, and it is the goal of this section to prove this result.

We shall rely heavily on the following simple result from \cite[Thm. 9.4]{korner}:

\begin{theorem} \label{kornertheorem}
Let\footnote{$\bbR/\bbZ$ denotes the unit circle which we write as $[0,1)$ with the quotient topology induced by $M: \bbR \to [0,1), M(x)=x (\text{mod} \ 1)$.} $f: \bbR/\bbZ \to \bbC$ be continuous and for $k \in \bbZ$ define $\hat{f}(k)= \int_0^1 f(x) \exp(2 \pi \ri k x) \, dx$. If $\sum_{k=-\infty}^\infty |k||\hat{f}(k)| < \infty$ then $f \in C^1$. Consequently, using $\widehat{f'}(k)=(2 \pi \ri k)^{-1} \cdot \hat{f}(k)$, if $\sum_{k=-\infty}^\infty |k|^n|\hat{f}(k)| < \infty$ then $f \in C^n$.
\end{theorem}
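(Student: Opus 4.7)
The plan is to reduce the problem to a standard term-by-term differentiation argument for uniformly convergent Fourier series. First I would observe that the hypothesis $\sum_{k} |k||\hat{f}(k)| < \infty$ implies $\sum_{k \neq 0} |\hat{f}(k)| \le \sum_{k \neq 0} |k||\hat{f}(k)| < \infty$, so adding the $k=0$ term, the Fourier series $\sum_k \hat{f}(k) e^{2\pi \ri k x}$ converges absolutely and uniformly on $\bbR/\bbZ$. Since $f$ is continuous, a standard application of Fej\'er's theorem (or Parseval, combined with continuity) identifies the uniform limit with $f$ itself.

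Next I would define the candidate derivative as the formally differentiated series
\[
g(x) := \sum_{k \in \bbZ} (2\pi \ri k)\, \hat{f}(k)\, e^{2\pi \ri k x}.
\]
The hypothesis $\sum_k |k||\hat{f}(k)| < \infty$ guarantees that this series converges absolutely and uniformly, so $g$ is continuous on $\bbR/\bbZ$. Writing $S_N(x) = \sum_{|k| \le N} \hat{f}(k) e^{2\pi \ri k x}$, I have $S_N \to f$ uniformly and $S_N' \to g$ uniformly. By the classical theorem on term-by-term differentiation (uniform convergence of $S_N'$ plus convergence of $S_N$ at one point gives $f' = \lim S_N' = g$), we conclude $f$ is differentiable with $f' = g$ continuous, i.e.\ $f \in C^1$.

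For the iteration to $C^n$, I would note that from $f \in C^1$ and the integration-by-parts identity for Fourier coefficients we get $\widehat{f'}(k) = 2\pi \ri k \,\hat{f}(k)$ (so the statement in the excerpt is a minor typo: the factor is $2\pi \ri k$, not its reciprocal). Hence the hypothesis $\sum_k |k|^n |\hat{f}(k)| < \infty$ translates into $\sum_k |k|^{n-1}|\widehat{f'}(k)| < \infty$, and applying the $C^1$ case to $f'$ yields $f' \in C^{n-1}$, hence $f \in C^n$ by induction on $n$.

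The main obstacle is minor: just verifying the hypotheses of the term-by-term differentiation theorem (uniform convergence of both the series and the differentiated series) and making sure the identity $\widehat{f'}(k) = 2\pi \ri k\, \hat{f}(k)$ is applied to a function that is already known to be $C^1$ at each step, so that integration by parts on the circle is legitimate. Once these checks are in place the argument is essentially immediate.
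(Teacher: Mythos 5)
Your proof is correct and is essentially the standard argument: the paper gives no proof of its own for this statement, simply citing K\"orner's book, and the cited proof is exactly this term-by-term differentiation of the absolutely and uniformly convergent Fourier series (with Fej\'er's theorem identifying the sum with $f$). You are also right to flag the displayed identity $\widehat{f'}(k)=(2\pi \ri k)^{-1}\cdot\hat{f}(k)$ as a typo for $\widehat{f'}(k)=\pm 2\pi \ri k\,\hat{f}(k)$ (the sign depending on the paper's convention of a positive exponent in the definition of $\hat f$), which is the form the induction actually requires.
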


Now the main result itself:

\begin{theorem} \label{ordering2smooth}
Let $\sigma : \bbN \to \bbZ^d$ be semihyperbolic of order $r<d$ in $d$ dimensions and let $\rho:= \lambda_d^{-1} \circ \sigma : \bbN \to B_\rf^d(\epsilon)$ where $\epsilon \in I_{J,p}$. Then if $\rho$ is optimal for the basis pair $(B_\rf^d(\epsilon),B_\text{sep}^d)$ then $\phi \in C^l$ for any $l \in \bbN \cup \{0\}$ with $l+1 < d/2r$.
\end{theorem}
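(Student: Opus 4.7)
The plan is to turn the optimality assumption into a pointwise decay bound on $\mathcal{F}\phi$ along a coordinate axis of Fourier space, then feed that into Korner's theorem (Theorem \ref{kornertheorem}) via a periodization of $\phi$.

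First I would use that the optimal decay rate of $(B^d_\rf(\epsilon), B^d_\text{sep})$ is $f(N) = N^{-1}$. By Theorem \ref{SeparableResults}, a semihyperbolic ordering of order $d-1$ always attains this decay (the required condition (\ref{semiFTdecay}) with $r = d-1$ is automatic from Lemma \ref{FTdecayLemma}), and it cannot be improved. Optimality of $\rho$ therefore yields $\mu(\pi_N U) \le \mu(Q_N U) \lesssim N^{-1}$. Now choose as test function $\Phi := \Psi^0_{J,0} = \bigotimes_{i=1}^d \phi_{J,0} \in B^d_\text{sep}$. Combining (\ref{fourierprod}) and (\ref{1Dequalities}) gives
\[
|\langle \Phi , \chi_n \rangle|^2 = \epsilon^d \, 2^{-dJ} \prod_{i=1}^d | \mathcal{F}\phi(\epsilon 2^{-J} n_i) |^2, \qquad n \in \bbZ^d.
\]
Specialize to the axis direction $n = (k, 0, \ldots, 0)$; using $\mathcal{F}\phi(0) = 1$ only the first factor survives. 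Since $H_{d,r}((k,0,\ldots,0)) = |k|$ for $|k| \ge 1$, Lemma \ref{semihypbehaviour} gives $\sigma^{-1}((k,0,\ldots,0)) \gtrsim |k|^{d/r}$. Feeding this into $|\langle \Phi, \chi_{(k,0,\ldots,0)}\rangle|^2 \le \mu(\pi_{\sigma^{-1}(n)} U) \lesssim \sigma^{-1}(n)^{-1}$ gives
\[
|\mathcal{F}\phi(\epsilon 2^{-J} k)| \lesssim |k|^{-d/(2r)}, \qquad k \in \bbZ \setminus \{0\}.
\]

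Next I would periodize $\phi$. Set $T := 2^J / \epsilon$; the constraint $\epsilon \in I_{J,p}$ forces $T \ge 2^{J+1} + 4(p-1) > 2p$, so the support $[-p+1,p]$ of $\phi$ fits strictly inside a single period. The periodization $\phi_\text{per}(x) := \sum_{k \in \bbZ} \phi(x - kT)$ is therefore well-defined, agrees with $\phi$ on the interior of one period, and its Fourier coefficients on $\bbR / T\bbZ$ are
\[
c_k = \frac{1}{T} \int_{\bbR} \phi(x) \, e^{-2\pi \ri k x / T} \, dx = \frac{1}{T} \mathcal{F}\phi(\epsilon 2^{-J} k),
\]
so $|c_k| \lesssim |k|^{-d/(2r)}$ from the previous step. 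After rescaling $\bbR/T\bbZ$ to $\bbR/\bbZ$ (which preserves these coefficients), Theorem \ref{kornertheorem} applies: the sum $\sum_k |k|^l |c_k|$ converges precisely when $l + 1 < d/(2r)$, giving $\phi_\text{per} \in C^l$ and hence $\phi \in C^l$.

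The main subtlety is the first paragraph: one has to pick the axis direction $n = (k, 0, \ldots, 0)$ so that $H_{d,r}$ collapses to the scalar $|k|$, thereby turning the coarse semihyperbolic index estimate into the sharpest possible one-dimensional decay bound on $\mathcal{F}\phi$. A choice with more nonzero coordinates would increase $H_{d,r}(n)$ and dilute the bound. For the boundary case $l = 0$, Korner's statement as quoted presumes continuity, but here absolute summability $\sum_k |c_k| < \infty$ already gives a continuous representative of $\phi_\text{per}$ by uniform convergence of the Fourier series, so $\phi \in C^0$ still follows.
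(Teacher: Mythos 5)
Your proposal is correct and follows essentially the same route as the paper: use optimality plus the semihyperbolic index count (Lemma \ref{semihypbehaviour}) to extract the decay $|\mathcal{F}\phi(\epsilon 2^{-J}k)|\lesssim |k|^{-d/(2r)}$ along a coordinate axis with the test function $\Psi^0_{J,0}$, periodize $\phi$ (the paper's $\varphi(x)=\phi(2^J\epsilon^{-1}(x-1/2))$ is your rescaled periodization), and apply Theorem \ref{kornertheorem}. The only divergence is the continuity hypothesis in Korner's theorem: the paper splits into $\phi\in C^0$ versus the Haar case (ruled out via (\ref{FTdecayfail})), whereas you observe that absolute summability of the coefficients already forces a continuous representative by uniform convergence --- a slightly cleaner patch of the same step.
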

\begin{proof}
By Theorem \ref{SeparableResults} we know that the optimal decay rate for the basis pair is $N^{-1}$, therefore if $\rho$ is optimal we must have, for some constant $C_1>0$, the bound
\[
\sup_{g \in B^d_\text{sep}} | \langle g, \lambda_d^{-1} \circ \sigma(N) \rangle |^2 \le C_1 \cdot N^{-1}, \qquad N \in \bbN.
\]
Next since $\sigma$ is semihyperbolic we also know that, by Lemma \ref{semihypbehaviour}, there is a constant $C_2>0$ such that 
\[
 H_{d,r}(\sigma(N)) \le C_2 \cdot N^{r/d}, \qquad N \in \bbN.
\]
Consequently we deduce,
\be{ \label{conversion}
\begin{aligned}
\sup_{g \in B^d_\text{sep}} | \langle g, \lambda_d^{-1} \circ \sigma(N) \rangle |^2 \le C_1 \cdot N^{-1} \le C_1 C_2^{-d/r} \cdot H^{-d/r}_{d,r}(\sigma(N)), \qquad N \in \bbN.
\\
\Rightarrow \sup_{g \in B^d_\text{sep}} | \langle g, \lambda_d^{-1}(n) \rangle |^2 \le \frac{C_1 C_2^{-d/r}}{ \big( \max_{\substack{i_1,...,i_r \in \{1,...,d\} \\ i_1<...<i_r}}
\prod_{j=1}^r \max(n_{i_j},1) \big)^{d/r}}, \qquad n \in \bbZ^d.
\end{aligned}
}
Letting $g=\Psi^s_{J,0}$ where $s=\{0,...,0\}$ and $n=(k,0,...,0)$ for $k \in \bbZ$ we see that (\ref{conversion}) becomes 
\be{ \label{phiprior}
\epsilon^d 2^{-dJ} | \mathcal{F}\phi(2^{-J} \epsilon k) |^2 \le \frac{C_1 C_2^{-d/r}}{\max(|k|,1)^{d/r}},  \qquad k \in \bbZ. 
}
Since the scaling function $\phi$ has compact support in $[-p+1,p]$ and $\epsilon \in I_{J,p}$, $\phi_{J,0}$ can be viewed as a function on $\bbR/\bbZ$ and (\ref{phiprior}) describes a bound on the Fourier coefficients of $\phi$. Formally, if we write $\varphi(x):=\phi(2^J \epsilon^{-1}(x-1/2))$, then since $\epsilon \in I_{J,p}$ we have that $\varphi$ is supported in $[0,1]$ and (\ref{phiprior}) becomes, for some constant $D(\epsilon,J,p)>0$:
\[
| \mathcal{F}\varphi(k) |^2 = | \widehat{\varphi}(k)|^2 \le \frac{D}{\max(|k|,1)^{d/r}},  \qquad k \in \bbZ. 
\] 
If $\phi \in C^{0}$ then the result follows from Theorem \ref{kornertheorem}. If $\phi \notin C^0$, i.e. $\phi$ corresponds to a Haar wavelet basis, then (\ref{phiprior}) cannot hold with $d/2r > 1$ as this would contradict (\ref{FTdecayfail}).
\end{proof}

\begin{corollary} \label{hyperbolictendency}
Let the scaling function $\phi$ corresponding to the Daubechies wavelet basis $B^d_\text{sep}$ be fixed. Then for every order $r \in \bbN$, there exists a dimension $d' \in \bbN, d'>r$ such that for all $d \ge d'$, we have that a semihyperbolic ordering $\sigma $ of order $r$ in $d$ dimensions is such that $\rho= \lambda_d^{-1} \circ \sigma$ is not optimal for the basis pair $(B_\rf^d(\epsilon),B^d_\text{sep})$.
\end{corollary}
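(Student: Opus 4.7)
The strategy is to contradict Theorem \ref{ordering2smooth} for dimensions $d$ sufficiently large relative to the fixed regularity of $\phi$. Since $\phi$ is a fixed Daubechies scaling function, it has finite smoothness: by the classical regularity theory for Daubechies wavelets (whose H\"older regularity grows only linearly with the number of vanishing moments), there is an integer $m = m(\phi) \in \{-1, 0, 1, 2, \ldots\}$ such that $\phi \notin C^{m+1}$, where the convention $m = -1$ covers the Haar case $\phi \notin C^0$.

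Given $r \in \bbN$, I would set $d' := 2r(m+2) + 1$. Since $m \ge -1$, this satisfies $d' \ge 2r + 1 > r$, so the hypothesis $d' > r$ is automatic. For any $d \ge d'$ one has $d/(2r) > m+2$, so the integer $\ell := m+1$ belongs to $\bbN \cup \{0\}$ and satisfies $\ell + 1 < d/(2r)$. Suppose, towards a contradiction, that $\rho = \lambda_d^{-1} \circ \sigma$ is optimal for the basis pair $(B_\rf^d(\epsilon), B^d_\text{sep})$. Then Theorem \ref{ordering2smooth}, applied with this $\ell$, would force $\phi \in C^{m+1}$, contradicting the definition of $m$. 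Hence $\rho$ is not optimal, which is what we wanted.

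The Haar case deserves a brief remark: there $m = -1$, so $\ell = 0$ and the conclusion of Theorem \ref{ordering2smooth} would assert $\phi \in C^0$, which already fails; the proof of that theorem handles this subcase directly via the lower bound \eqref{FTdecayfail} on $|\mathcal{F}\phi|$, so the contradiction is still genuine and not vacuous. The only non-trivial ingredient in the whole argument is the finiteness of the smoothness index $m(\phi)$, which is an intrinsic feature of compactly supported Daubechies wavelets rather than a byproduct of our estimates; beyond citing that fact, the proof reduces to a two-line deduction from the previously established Theorem \ref{ordering2smooth} together with an elementary choice of $d'$, so there is no serious technical obstacle.
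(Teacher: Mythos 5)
Your proof is correct and is essentially the paper's argument: the paper also deduces the corollary from Theorem \ref{ordering2smooth} together with the fact that no compactly supported wavelet is infinitely smooth, phrasing it as ``otherwise $\phi \in C^\infty$.'' Your version merely makes the choice $d' = 2r(m+2)+1$ explicit, which is a harmless (and slightly more quantitative) restatement of the same idea.
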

\begin{proof}
If this result was not true then we would deduce by Theorem \ref{ordering2smooth} that the wavelet $\phi$ satisfies $\phi \in C^{\infty}$, which is a contradiction because no compactly supported wavelet can be infinitely smooth \cite[Thm. 3.8]{wav}.
\end{proof}

\subsection{Hierarchy of Semihyperbolic Orderings}

One other notable point from Theorem \ref{SeparableResults} is that we can have multiple values of $r$ such that if $\sigma$ is semi-hyperbolic of order $r$ in $d$ dimensions then $\rho= \lambda_d^{-1} \circ \sigma$ is optimal for the basis pair $(B_\rf^d, B^d_\text{sep})$, so which one should we choose? We know that in the case of sufficient smoothness linear orderings are strongly optimal and therefore this suggests that the lower the order $r$ the stronger the optimality result. We now seek to prove this conjecture.

\begin{lemma} \label{semihypinequality}
Let $r,r',d \in \bbN, r \le r' \le d$. Then for all $n \in \bbZ^d$ we have that $H_{d,r'}^r(n) \le H_{d,r}^{r'}(n)$.
\end{lemma}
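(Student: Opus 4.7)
Proof plan for Lemma \ref{semihypinequality}:

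The plan is to reduce the inequality to a statement about averages of a decreasing sequence. First, I would observe that the values of $H_{d,r}(n)$ and $H_{d,r'}(n)$ depend only on the multiset $\{\max(|n_i|,1) : i = 1, \ldots, d\}$. So without loss of generality, set $m_i := \max(|n_i|,1)$ and relabel so that $m_1 \ge m_2 \ge \cdots \ge m_d \ge 1$. Then the maximum defining $H_{d,k}$ is attained by picking the $k$ largest coordinates, giving $H_{d,k}(n) = \prod_{j=1}^{k} m_j$ for any $1 \le k \le d$.

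With this reduction, the inequality $H_{d,r'}^r(n) \le H_{d,r}^{r'}(n)$ becomes
\[
\Big(\prod_{j=1}^{r'} m_j\Big)^{r} \le \Big(\prod_{j=1}^{r} m_j\Big)^{r'}.
\]
Since all $m_j \ge 1$ we may take logarithms and set $a_j := \log m_j \ge 0$ with $a_1 \ge a_2 \ge \cdots \ge a_{r'} \ge 0$. The inequality reduces to
\[
r \sum_{j=1}^{r'} a_j \le r' \sum_{j=1}^{r} a_j,
\]
i.e.\ the average of the first $r$ terms of a decreasing nonnegative sequence dominates the average of the first $r'$ terms.

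To prove this last inequality, I would rearrange it as
\[
r'\sum_{j=1}^{r} a_j - r\sum_{j=1}^{r'} a_j = (r'-r)\sum_{j=1}^{r} a_j - r\sum_{j=r+1}^{r'} a_j,
\]
and then use monotonicity of $(a_j)$ twice: since $a_j \ge a_{r+1}$ for each $j \le r$, we have $\sum_{j=1}^{r} a_j \ge r \cdot a_{r+1}$, hence $(r'-r)\sum_{j=1}^{r} a_j \ge r(r'-r) a_{r+1}$; and since $a_{r+1} \ge a_j$ for $j \ge r+1$, we have $(r'-r) a_{r+1} \ge \sum_{j=r+1}^{r'} a_j$. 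Chaining these gives $(r'-r)\sum_{j=1}^{r} a_j \ge r \sum_{j=r+1}^{r'} a_j$, which is exactly what is required. This proof is essentially elementary; there is no serious obstacle, the only subtlety being the initial reduction that relabels coordinates and identifies $H_{d,k}$ with the product of the $k$ largest values of $\max(|n_i|,1)$, which justifies passing from the combinatorial $\max$-over-subsets definition to a simple sorted product.
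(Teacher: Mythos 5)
Your proof is correct and is essentially the paper's argument: the paper also sorts the coordinates, identifies $H_{d,k}(n)$ with the product of the $k$ largest values of $\max(|n_i|,1)$, and then compares the $r(r'-r)$ factors (counted with multiplicity) on each side term by term, which is exactly your averaging inequality written multiplicatively instead of after taking logarithms. No gaps.
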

\begin{proof}
Let $n \in \bbZ^d$ be fixed. For each $j=1,..d$ let $i_j$ denote the $j$th largest terms of the form $\max(|n_{i_j}|,1)$. Observe that 
\[
H_{d,r}(n)= \prod_{j=1}^r \max(|n_{i_j}|,1), \quad H_{d,r'}(n)= \prod_{j=1}^{r'} \max(|n_{i_j}|,1),
\]
\[
\Rightarrow \frac{ H_{d,r}^{r'}(n)}{H_{d,r'}^r(n)} = \frac{\prod_{j=1}^r \max(|n_{i_j}|,1)^{r'-r}}{\prod_{j=r+1}^{r'}  \max(|n_{i_j}|,1)^r}.
\]
Finally we observe that the numerator and denominator have the same number ($r(r'-r)$) of terms in the product and that each term in the numerator is greater than each term in the denominator, proving the inequality.
\end{proof}

\begin{corollary} \label{hierarchycorollary}
Let $r,r',d \in \bbN, r \le r' < d$ and $\sigma, \sigma'$ be semihyperbolic of orders $r,r'$ in $d$ dimensions respectively. If $\rho=\lambda_d^{-1} \circ \sigma$ is optimal for the basis pair $(B_\rf^d(\epsilon), B_\text{sep}^d)$ then so is $\rho'=\lambda_d^{-1} \circ \sigma'$.

\end{corollary}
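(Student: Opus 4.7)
The plan is to show $\rho' \prec \rho$ in the sense of Definition \ref{fasterdecay}; combined with the hypothesized optimality of $\rho$ and the (immediate) transitivity of $\prec$, this will yield optimality of $\rho'$. Fix any ordering $\tau$ of $B^d_\text{sep}$ and set $U := [(B^d_\rf(\epsilon), \rho), (B^d_\text{sep}, \tau)]$ and $U' := [(B^d_\rf(\epsilon), \rho'), (B^d_\text{sep}, \tau)]$.

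First I would pin down the optimal decay rate as $f(N) \approx N^{-1}$. Since Lemma \ref{FTdecayLemma} gives $|\mathcal{F}\phi(\omega)| \le K/|\omega|$ for any Daubechies scaling function, the decay condition (\ref{semiFTdecay}) is automatic at $r = d-1$, so Theorem \ref{semihyperbolicthm} applies to furnish an ordering with $\mu(Q_N U) \approx N^{-1}$ that is optimal for $(B^d_\rf(\epsilon), B^d_\text{sep})$. Consequently the hypothesized optimality of $\rho$ forces
\[ c_0/N \le \mu(Q_N U) \le C_0/N, \qquad N \in \bbN, \]
for some $c_0, C_0 > 0$.

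Next I would extract a pointwise bound. From $\mu(\pi_N U) \le \mu(Q_N U) \le C_0/N$, substituting $n = \sigma(N)$ and invoking part 2 of Lemma \ref{semihypbehaviour} (which gives $\sigma^{-1}(n) \ge c_1 H_{d,r}(n)^{d/r}$) produces
\[ \sup_{g \in B^d_\text{sep}} |\langle g, \lambda_d^{-1}(n)\rangle|^2 \le C_1 H_{d,r}(n)^{-d/r}, \qquad n \in \bbZ^d. \]
Lemma \ref{semihypinequality} gives $H_{d,r'}(n)^r \le H_{d,r}(n)^{r'}$, which rearranges (take the $(1/(rr'))$-th root, then the $d$-th power, then the reciprocal) to $H_{d,r}(n)^{-d/r} \le H_{d,r'}(n)^{-d/r'}$, yielding the weaker but still valid bound
\[ \sup_{g \in B^d_\text{sep}} |\langle g, \lambda_d^{-1}(n)\rangle|^2 \le C_1 H_{d,r'}(n)^{-d/r'}, \qquad n \in \bbZ^d. \]
Specializing to $n = \sigma'(N)$ and applying part 2 of Lemma \ref{semihypbehaviour} again, this time to $\sigma'$ (which gives $H_{d,r'}(\sigma'(N))^{d/r'} \ge c_2 N$), yields $\mu(\pi_N U') \le C_2/N$, and hence $\mu(Q_N U') = \sup_{m \ge N} \mu(\pi_m U') \le C_2/N$. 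Combined with the earlier lower bound $\mu(Q_N U) \ge c_0/N$ this gives $\mu(Q_N U') \le (C_2/c_0)\,\mu(Q_N U)$, i.e.\ $\rho' \prec \rho$, which is what was needed.

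The main subtlety is keeping the direction of Lemma \ref{semihypinequality} straight: the inequality $H_{d,r}(n)^{-d/r} \le H_{d,r'}(n)^{-d/r'}$ says that the bound coming from the lower-order $r$ is the \emph{stronger} one. The approach succeeds precisely because this extra strength (extracted from the optimality of $\rho$) can be safely relaxed to a bound in terms of $H_{d,r'}$, after which the matching growth $H_{d,r'}(\sigma'(N))^{d/r'} \approx N$ preserves the $N^{-1}$ rate under $\sigma'$.
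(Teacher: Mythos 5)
Your argument is correct and follows essentially the same route as the paper: the paper's proof likewise extracts the pointwise bound $\sup_{g}|\langle g,\lambda_d^{-1}(n)\rangle|^2 \lesssim H_{d,r}^{-d/r}(n)$ from the optimality of $\rho$ together with Lemma \ref{semihypbehaviour} (this is its equation (\ref{conversion})), converts it via Lemma \ref{semihypinequality} into a bound in terms of $H_{d,r'}^{-d/r'}$, and applies Lemma \ref{semihypbehaviour} to $\sigma'$ to recover the $N^{-1}$ rate. You merely spell out the derivation of (\ref{conversion}) and the final comparison step in more detail than the paper does.
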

\begin{proof}
Recalling (\ref{conversion}) we know that there is a constant $C>0$ such that 
\[
\sup_{g \in B^d_\text{sep}} | \langle g, \lambda_d^{-1} \circ \sigma(N) \rangle |^2 \le C \cdot H^{-d/r}_{d,r}(\sigma(N)), \qquad N \in \bbN,
\]
\[
\Rightarrow  \sup_{g \in B^d_\text{sep}} | \langle g, \lambda_d^{-1} \circ \sigma'(N) \rangle |^2 \le C' \cdot H^{-d/r'}_{d,r'}(\sigma'(N)), \qquad N \in \bbN,
\]
where we have used Lemma \ref{semihypinequality} on the second line. We then apply Lemma \ref{semihypbehaviour} to deduce the result.

\end{proof}

\begin{figure}[!t]
\begin{center}
\begin{subfigure}[t]{0.49\textwidth}
\begin{center}
\includegraphics[width=\textwidth]{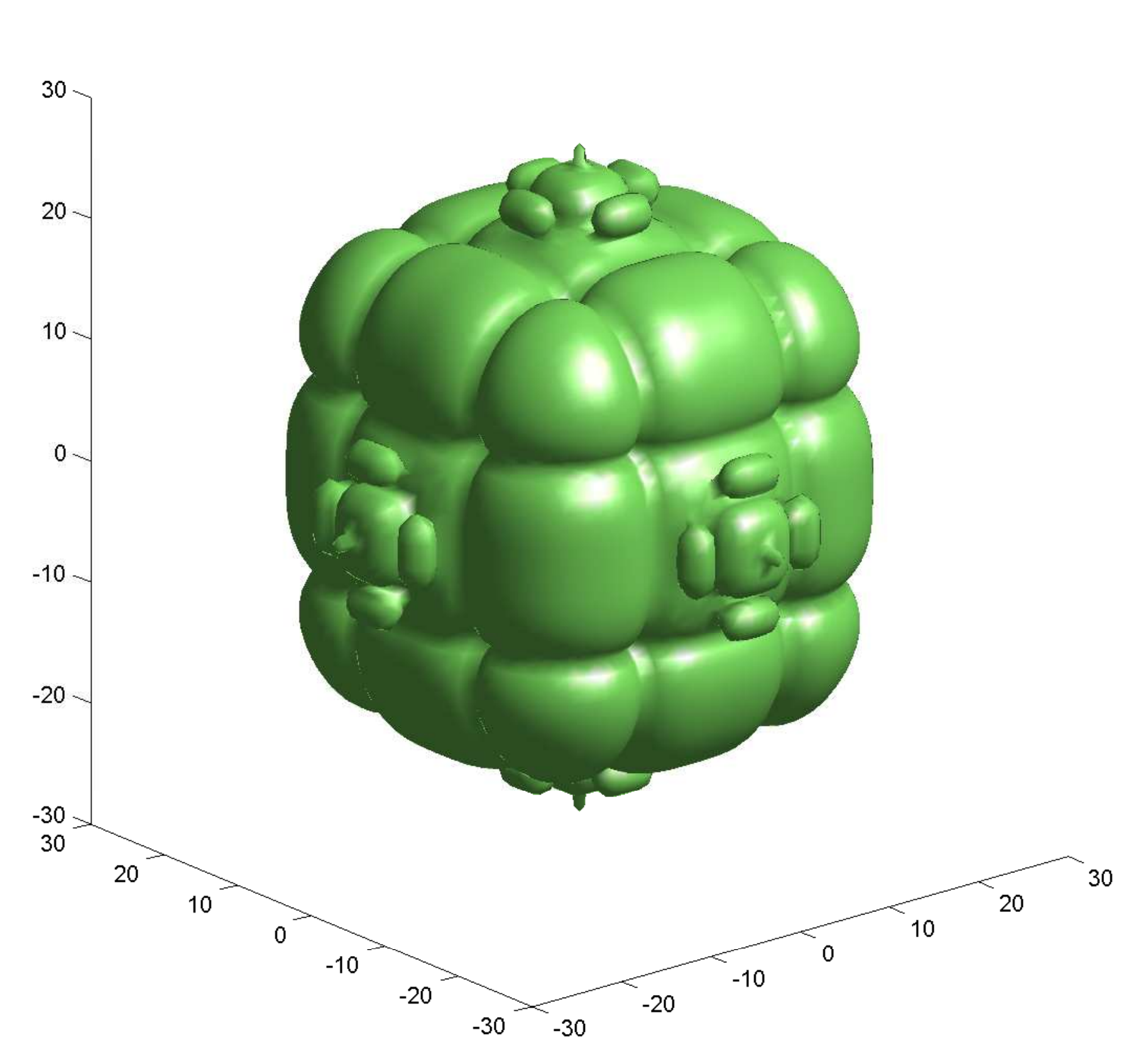}
 \caption{\footnotesize Daubechies4 - Isosurface Value $ = 5 \cdot 10^{-3}$}
\end{center}
\end{subfigure}
\begin{subfigure}[t]{0.49\textwidth}
\begin{center}
 \includegraphics[width=\textwidth]{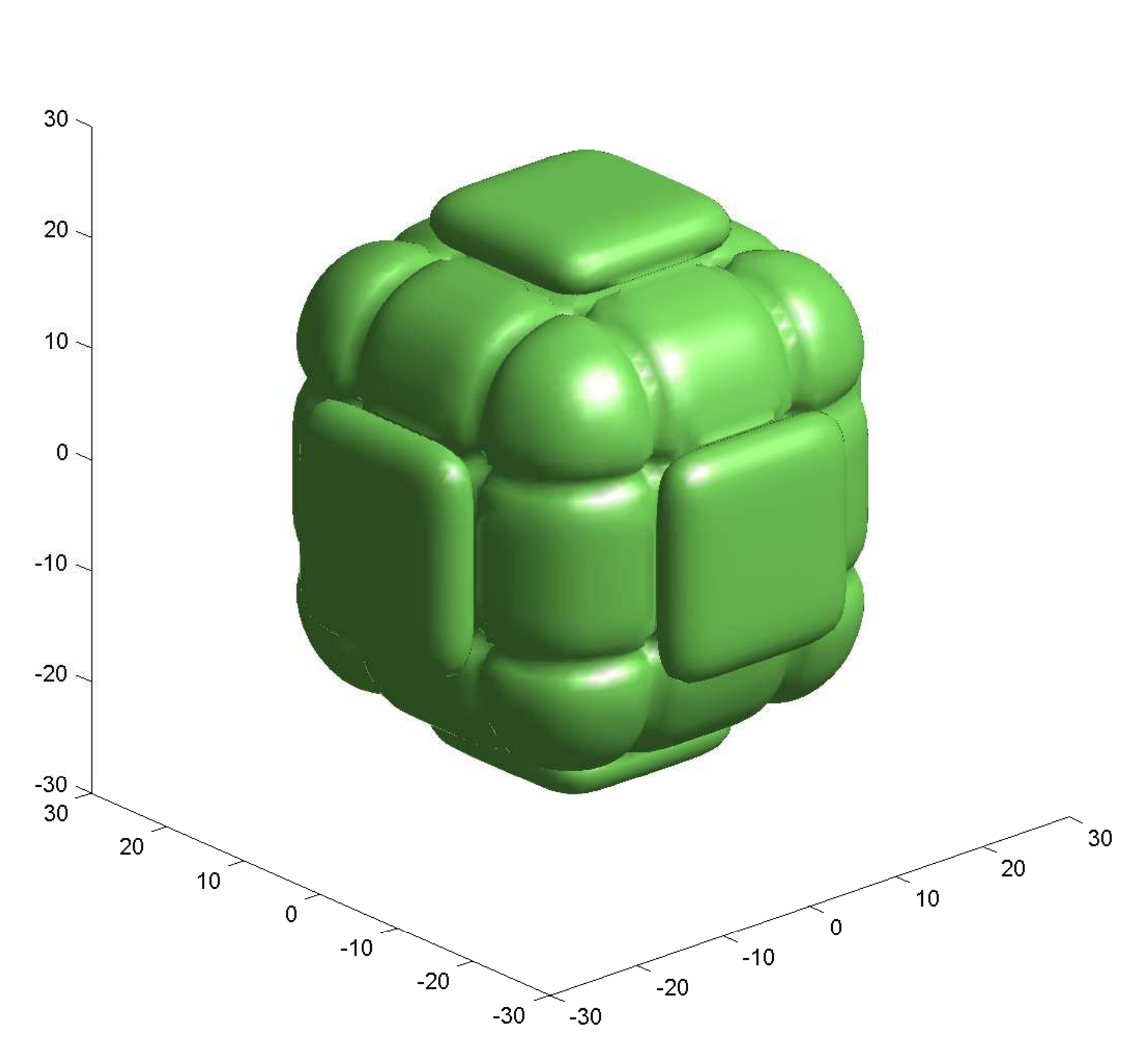} 
  \caption{\footnotesize Daubechies8 - Isosurface Value $ = 5 \cdot 10^{-3}$}
  \end{center}
\end{subfigure}
\begin{subfigure}[t]{0.49\textwidth}
\begin{center}
\includegraphics[width=\textwidth]{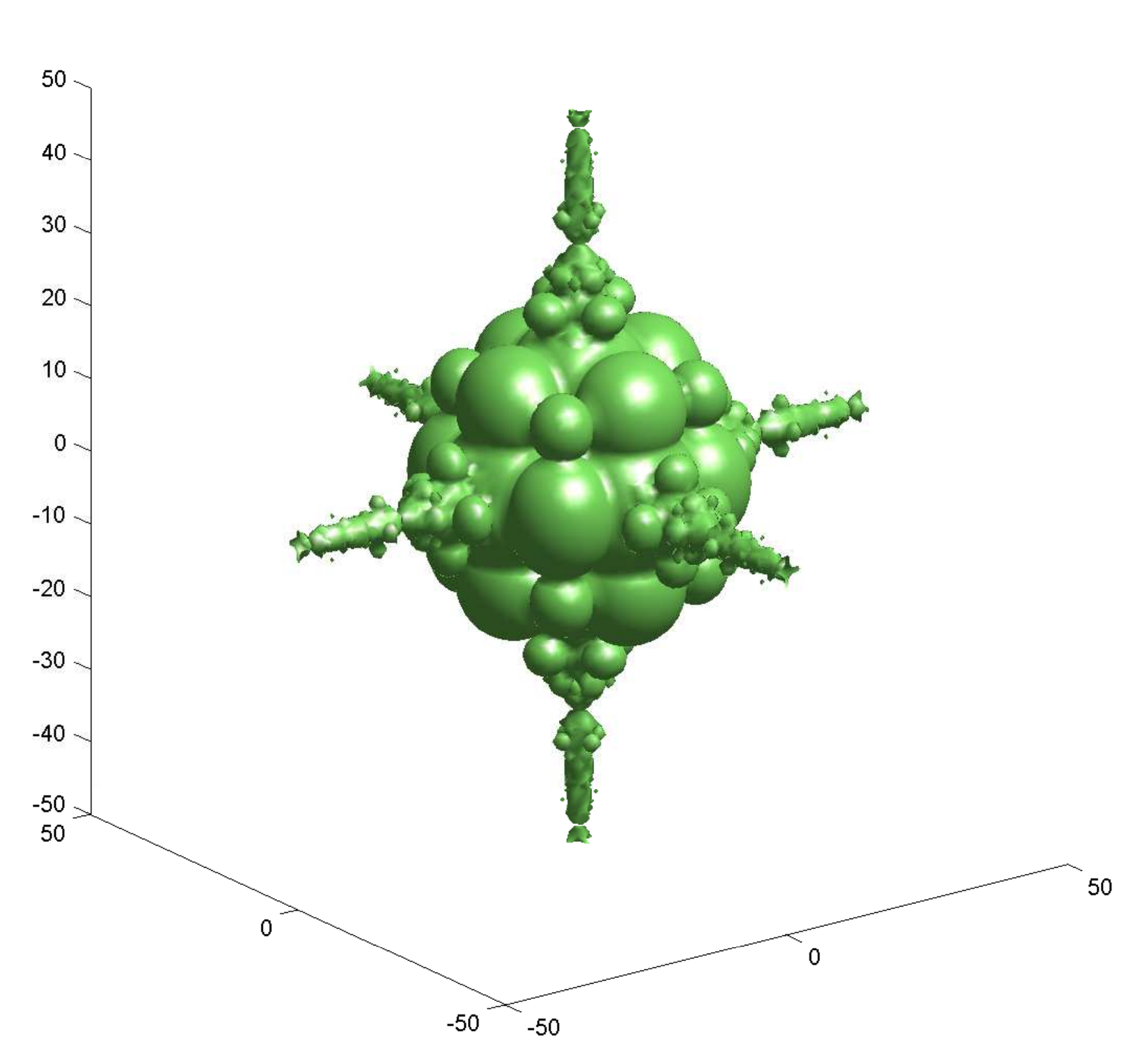} 
  \caption{\footnotesize Haar - Isosurface Value $ = 5 \cdot 10^{-3}$}
  \end{center}
\end{subfigure}
\begin{subfigure}[t]{0.49\textwidth}
\begin{center}
\includegraphics[width=\textwidth]{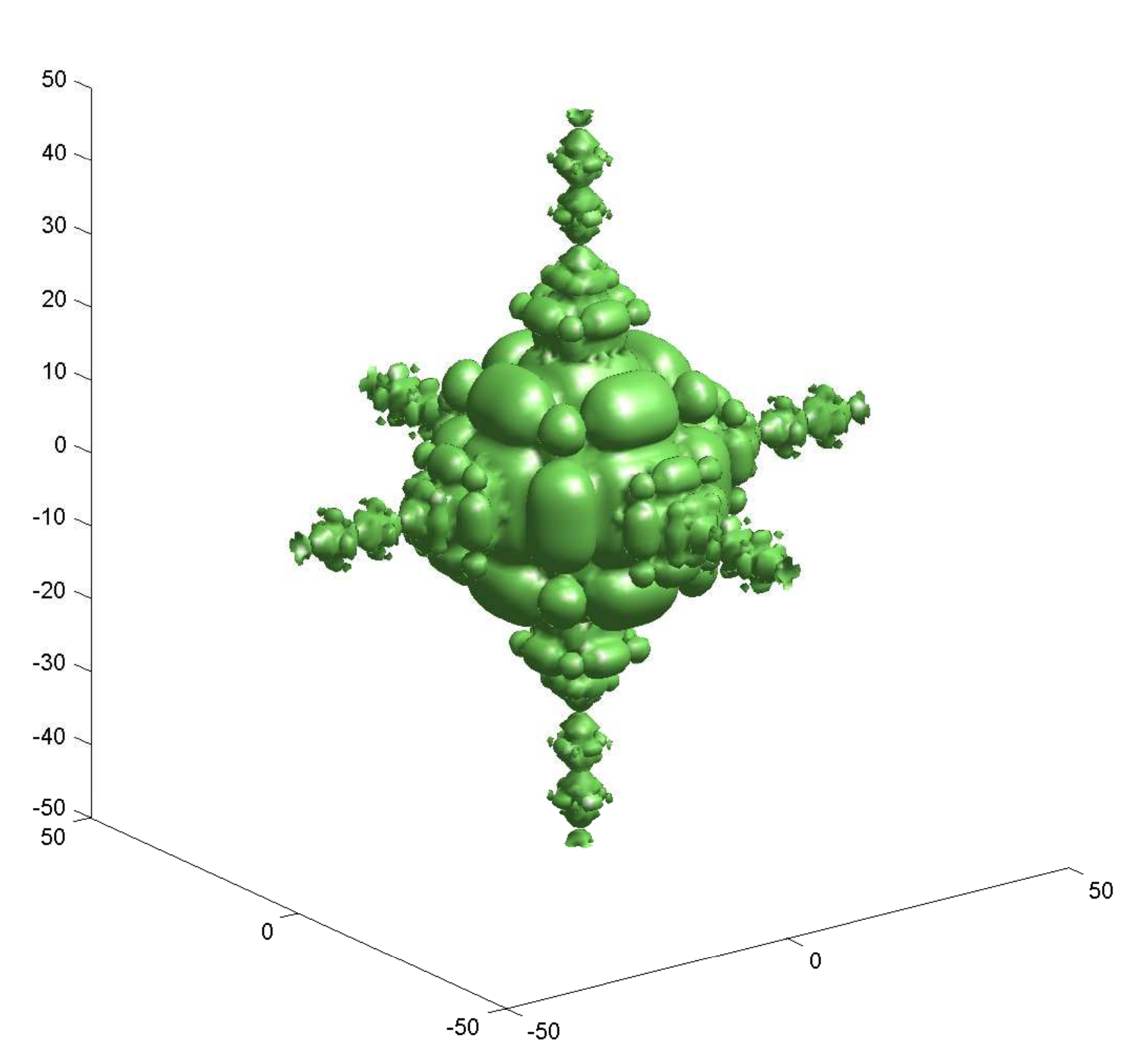}
 \caption{\footnotesize 3D slice of 5D Daubechies4 - Isosurface Value $ = 5 \cdot 10^{-4}$ }
\end{center}
\end{subfigure}
\end{center}
\caption{3D Fourier -  Separable Wavelet Incoherence Isosurface Plots. We draw the isosurface plots over the subset $\{-50,-49,...,49,50\}^3 \subset \bbZ^3$. These pictures should be compared with the ordering plots in Figure \ref{Consistent3D}. Notice that for the smoother wavelets in (a) \& (b), the growth matches that of a linear ordering however the 3D Haar case lacks this smoothness, resulting in semi-hyperbolic scaling in (c). If we keep the wavelet basis fixed and let the dimension increase, the scaling becomes increasingly hyperbolic, as seen in (d) and proved in Corollary \ref{hyperbolictendency}. }

\label{3dseparableincoherences}

\end{figure}  

\begin{remark}
Corollary \ref{hierarchycorollary} tells us that if there are several orders $r$ that give us optimality then the smallest $r$ possible, say $r^*$, is the strongest result.
\end{remark}

\subsection{3D Separable Incoherences}

We have found optimal orderings for every multidimensional Fourier- separable wavelet case however, we have not shown that (apart from in the linear case with sufficient Fourier decay) that the ordering is strongly optimal and we have not characterized the decay. Therefore it is of interest to see how the incoherence scales in further detail by directly imaging them in 3D. We do this by drawing levels sets in $\bbZ^3$, as seen in Figure \ref{3dseparableincoherences}.

\section{Asymptotic Incoherence and Compressed Sensing in Levels} \label{numericalsection}

We now return to the original compressed sensing problem which was described in the introduction of this paper and aim to study how asymptotic incoherence can influence the ability to subsample effectively. We shall be  working  exclusively in 2D for this section.

Consider the problem of reconstructing a function $f \in L^2([-1,1]^2)$ from its samples $\{ \langle f, g \rangle : g \in B^2_\rf (2^{-1}) \} $. The function $f$ is reconstructed as follows: Let $U:=[(B^2_\rf(2^{-1}),\rho), (B_2, \tau)]$ for some orderings $\rho, \tau$ and a reconstruction basis $B_2$. The number $2^{-1}$ is present here to ensure the span of $B_\rf$ contains $L^2([-1,1]^2)$.  Next let $\Omega \subset \bbN$ denote the set of subsamples from $B^2_\rf(2^{-1})$ (indexed by $\rho$), $P_\Omega$ the projection operator onto $\Omega$ and $\hat{f}:=( \langle f , \rho(m) \rangle )_{m \in \bbN}$. We then attempt to approximate $f$ by $\sum_{n=1}^\infty \tilde{x}_n \tau(n)$ where $\tilde{x} \in \ell^1(\bbN)$ solves the optimisation problem
\be{ \label{basicl1full}
\min_{x \in \ell^1(\bbN)} \| x \|_1  \quad \text{subject to} \quad P_\Omega Ux= P_\Omega \hat{f} .
}
\begin{figure}[!t]
\begin{center}
\begin{subfigure}[t]{0.3\textwidth}
\begin{center}
\includegraphics[width=\textwidth]{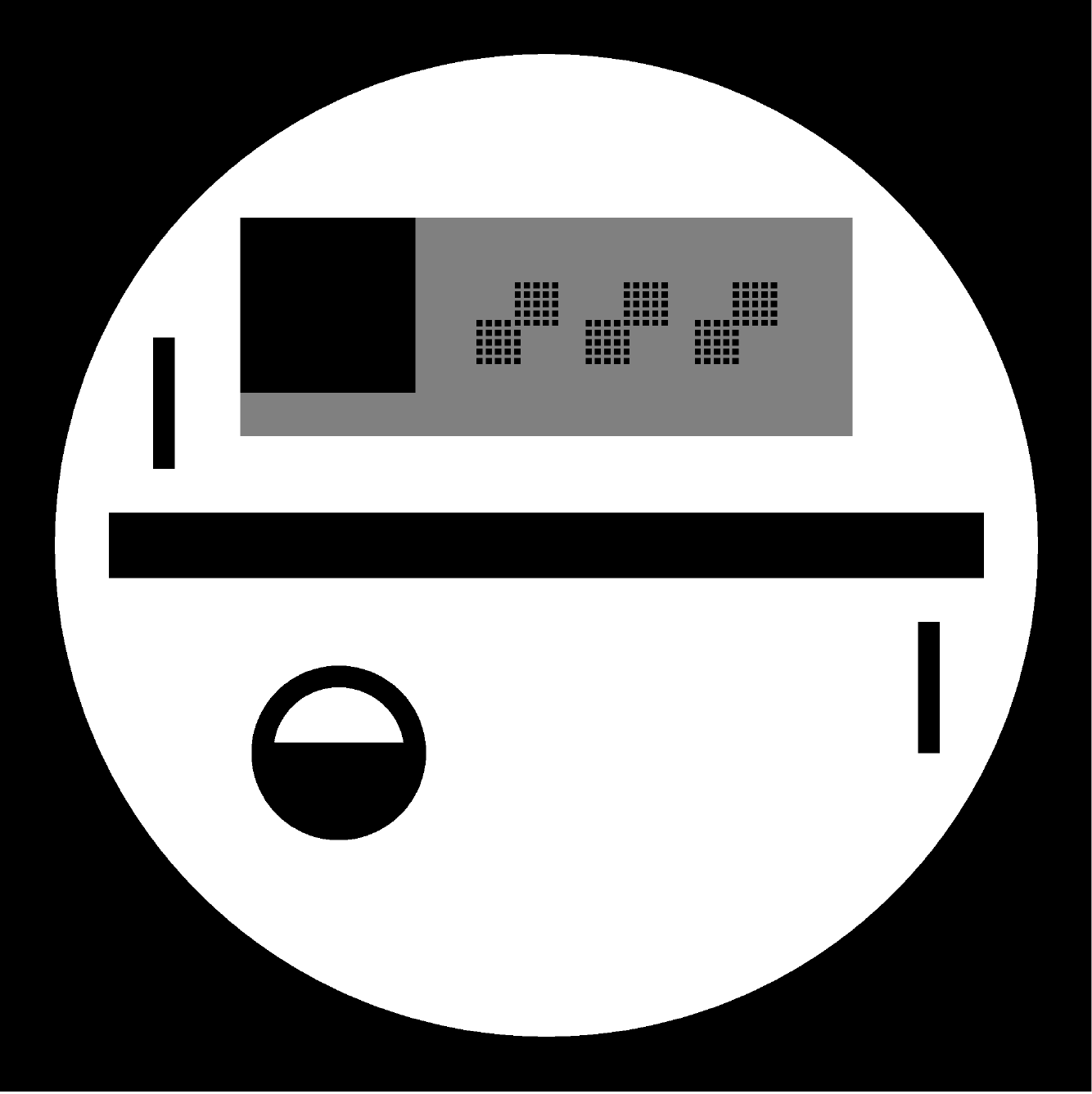}
 \caption{\footnotesize Rasterized Phantom \\ Resolution = $2^{12} \times 2^{12}$}
\end{center}
\end{subfigure}
\begin{subfigure}[t]{0.3\textwidth}
\begin{center}
 \includegraphics[width=\textwidth]{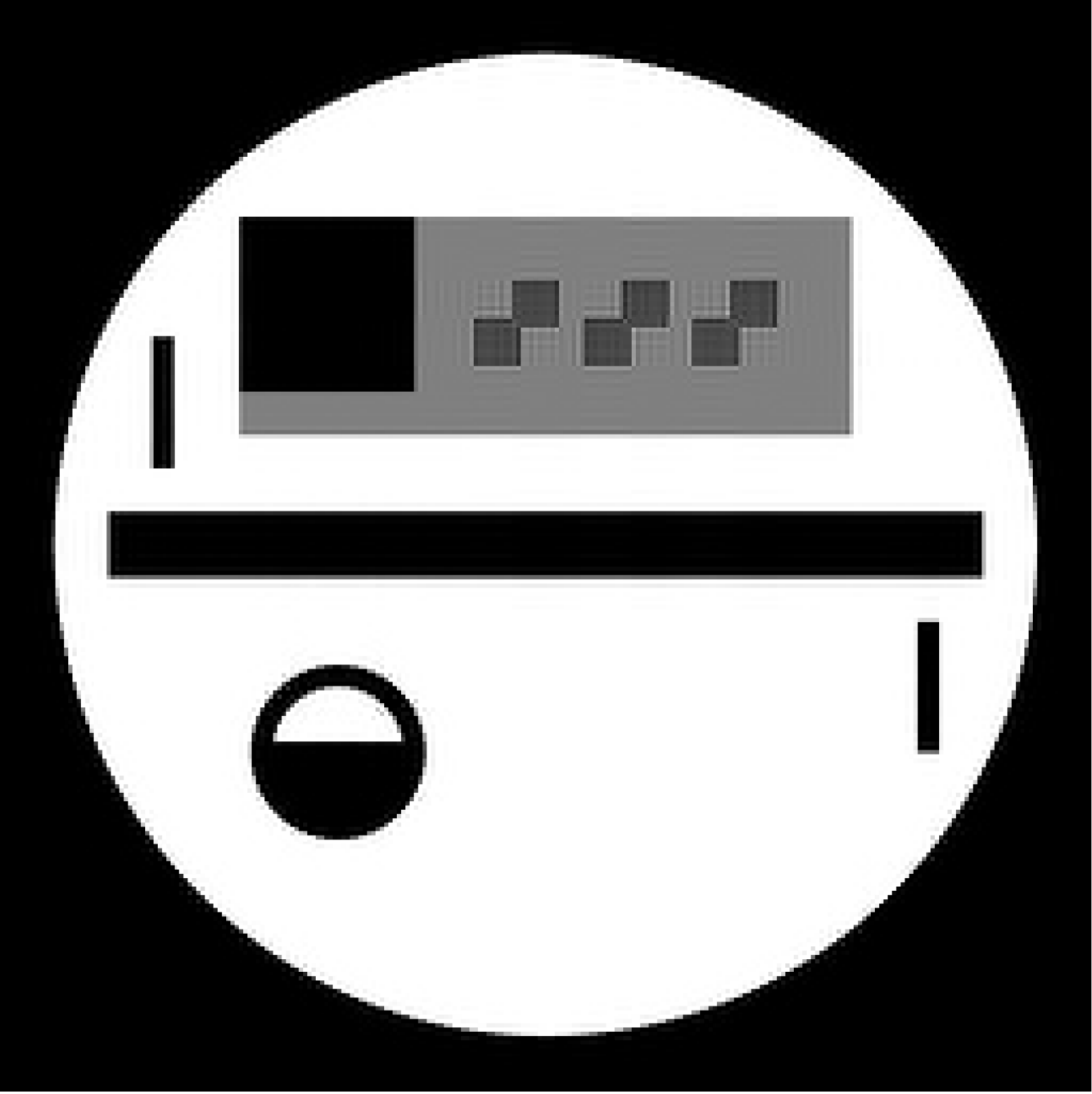} 
  \caption{\footnotesize Reconstruction from pattern A \\ $L^1$ error = 0.0735}
  \end{center}
\end{subfigure}
\begin{subfigure}[t]{0.3\textwidth}
\begin{center}
\includegraphics[width=\textwidth]{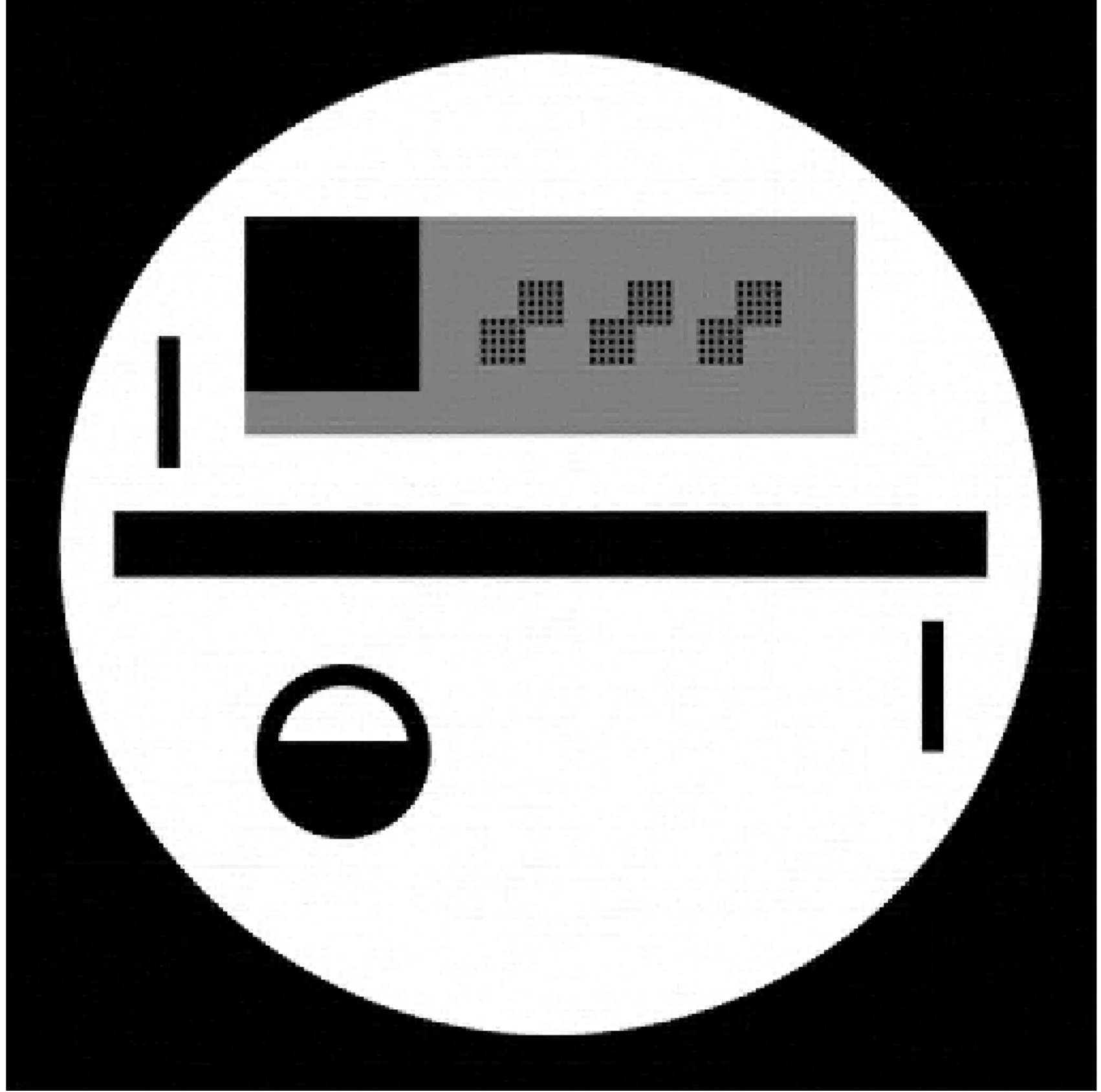} 
  \caption{\footnotesize Reconstruction from pattern B \\ $L^1$ error = 0.0620}
  \end{center}
\end{subfigure}
\begin{subfigure}[t]{0.3\textwidth}
\begin{center}
\includegraphics[width=\textwidth]{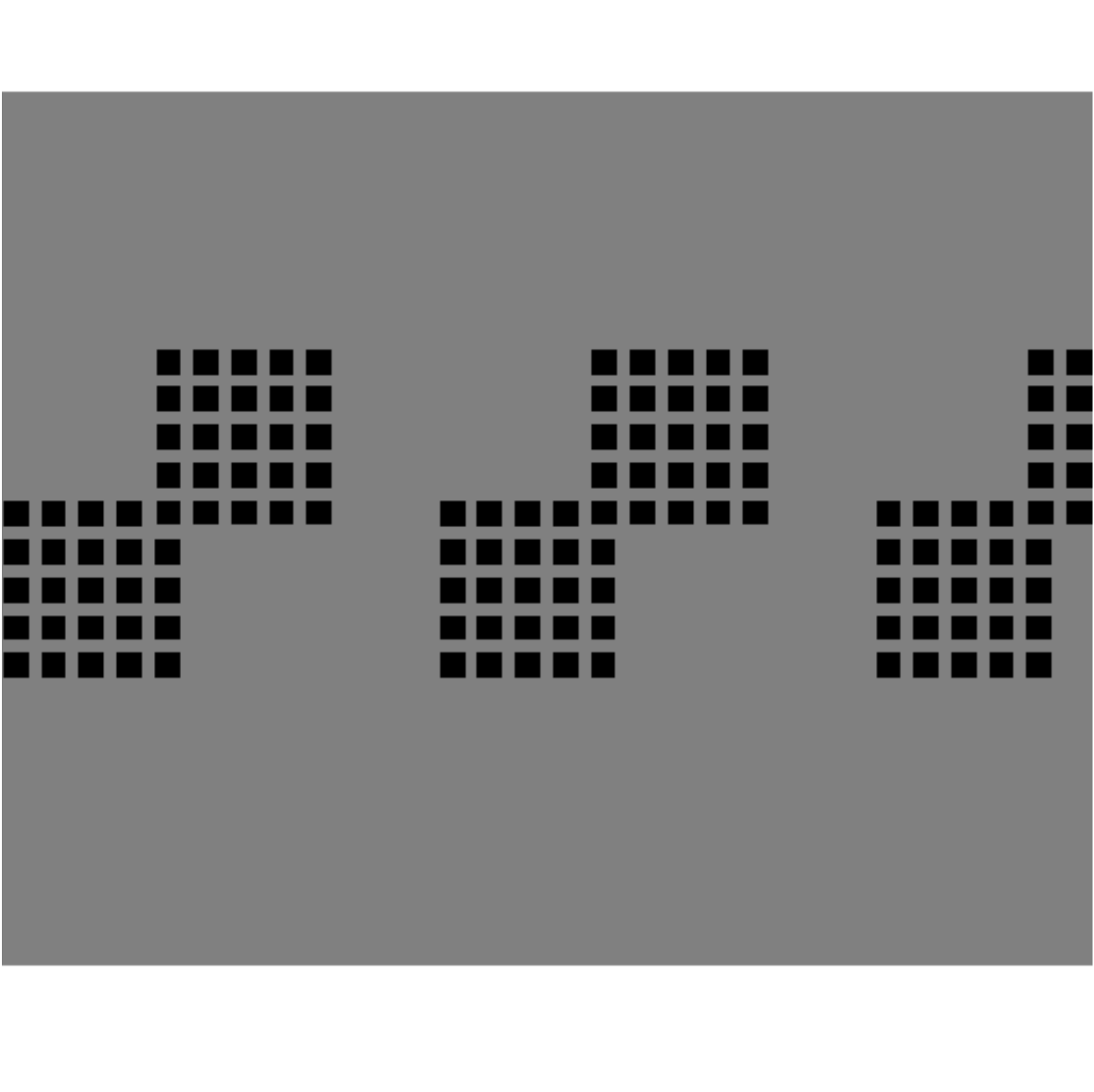}
 \caption{\footnotesize Rasterized Phantom - Closeup}
\end{center}
\end{subfigure}
\begin{subfigure}[t]{0.3\textwidth}
\begin{center}
 \includegraphics[width=\textwidth]{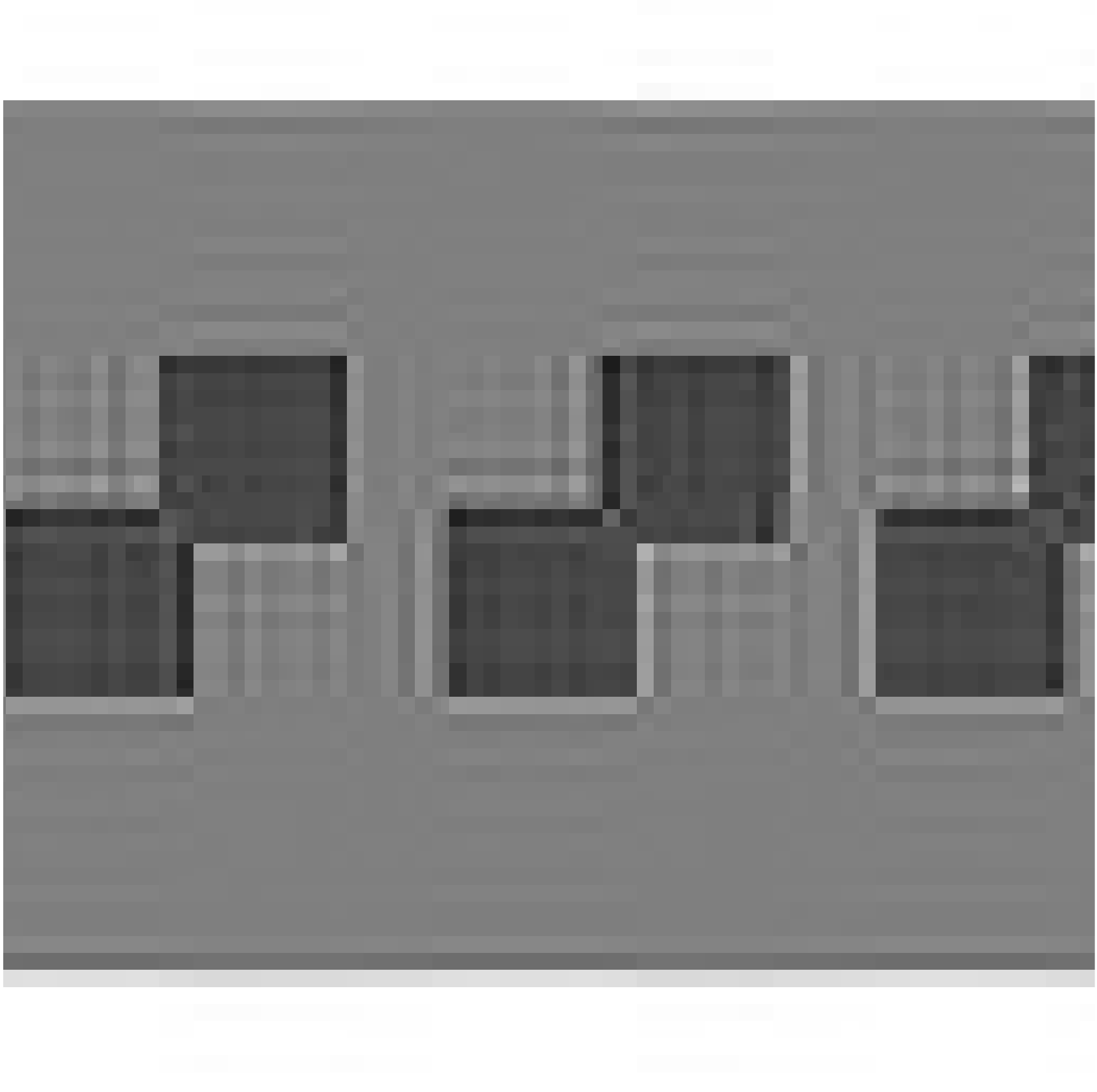} 
  \caption{\footnotesize Closeup of (b)}
  \end{center}
\end{subfigure}
\begin{subfigure}[t]{0.3\textwidth}
\begin{center}
\includegraphics[width=\textwidth]{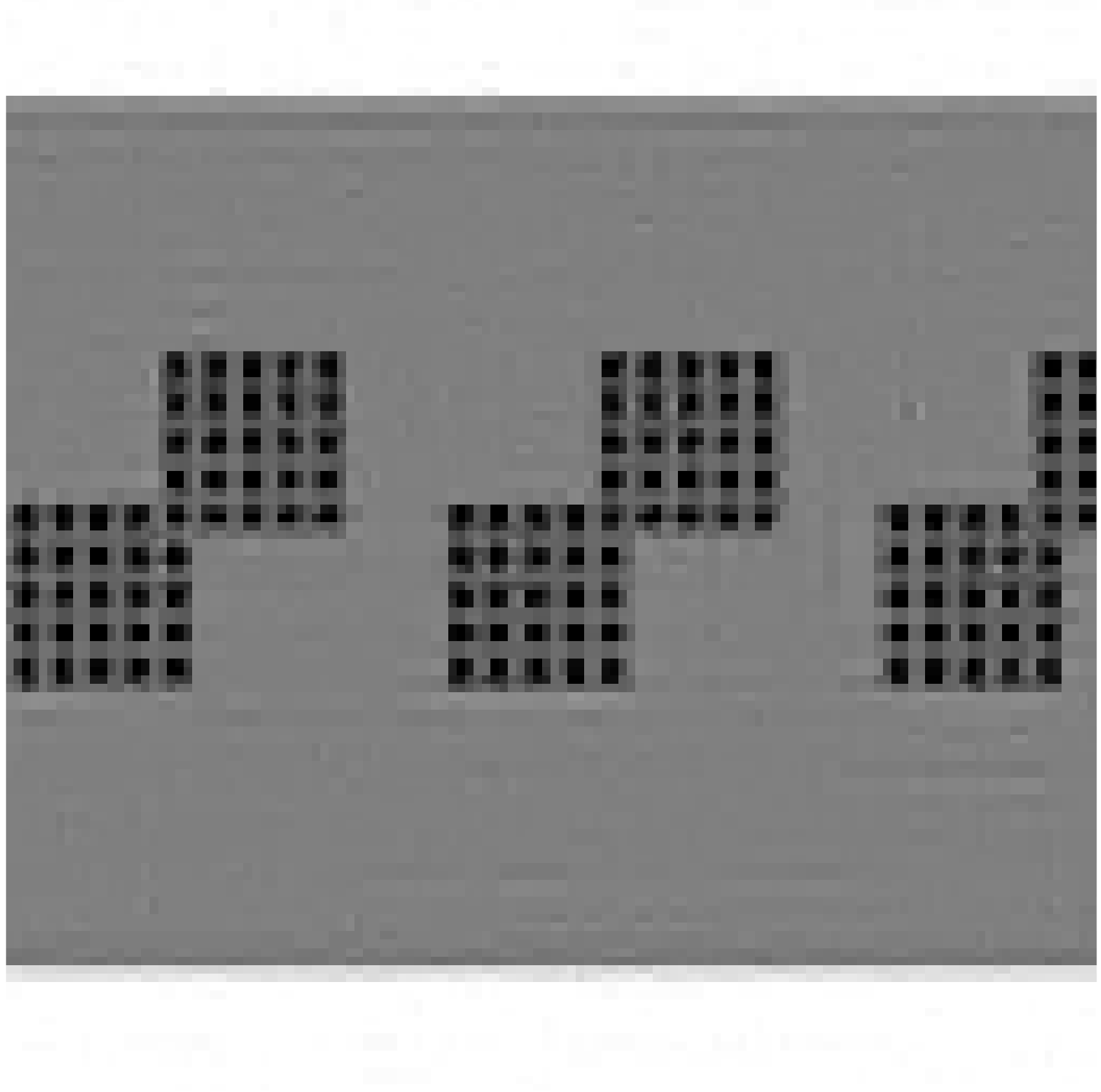} 
  \caption{\footnotesize Closeup of (c)}
  \end{center}
\end{subfigure}
\begin{subfigure}[t]{0.3\textwidth}
\begin{center}

  \end{center}
\end{subfigure}
\begin{subfigure}[t]{0.3\textwidth}
\begin{center}
\includegraphics[width=\textwidth]{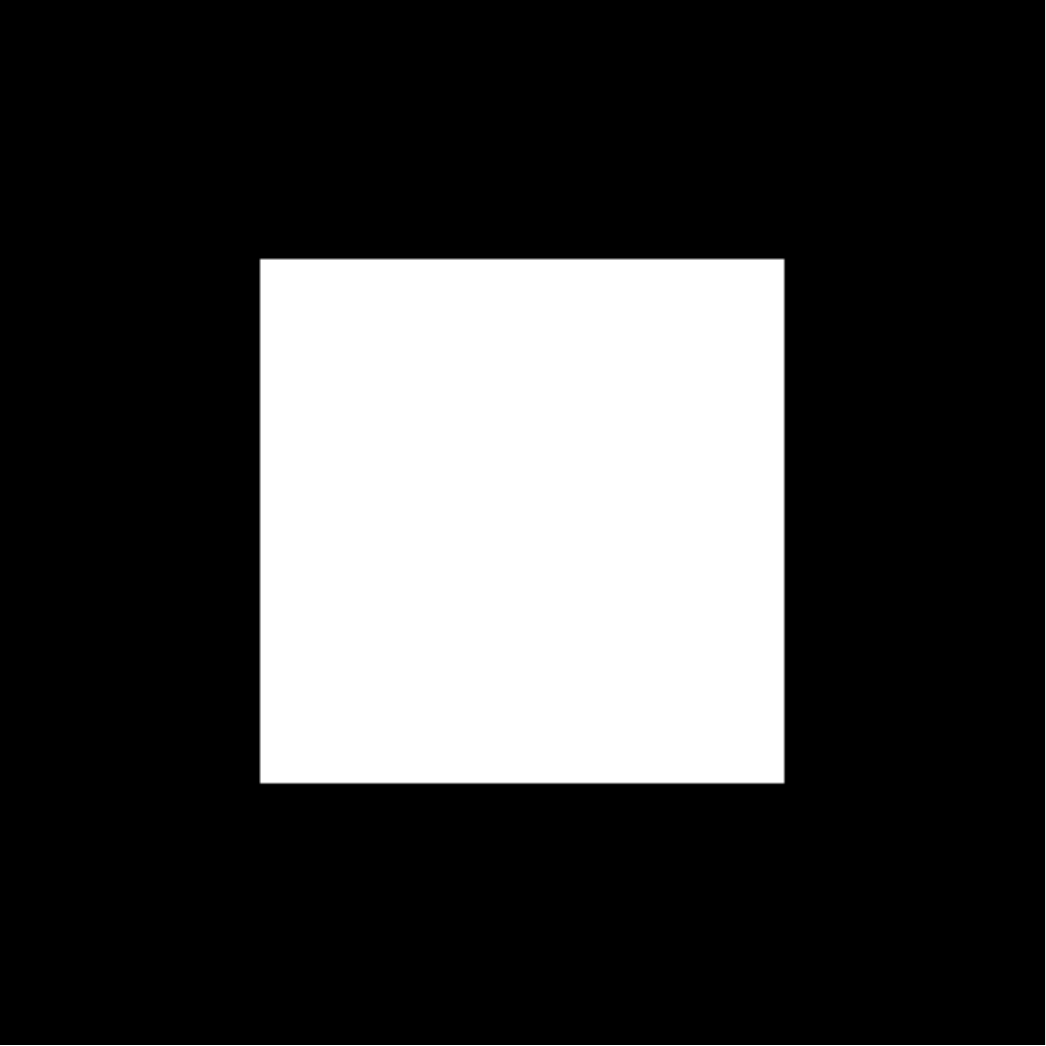}
 \caption{\footnotesize Sampling Pattern A \\ Number of Samples: 40401}
\end{center}
\end{subfigure}
\begin{subfigure}[t]{0.3\textwidth}
\begin{center}
 \includegraphics[width=\textwidth]{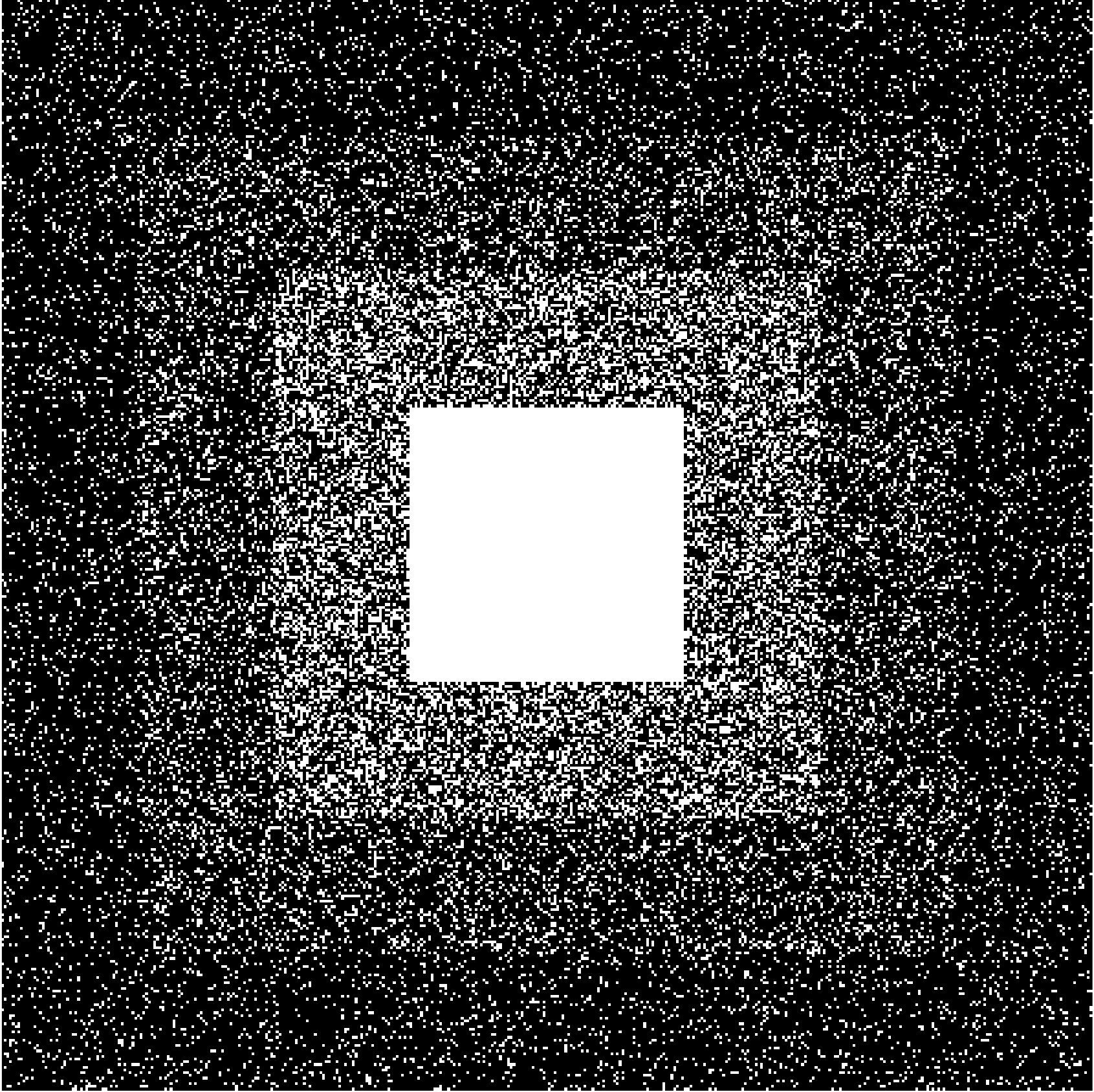} 
  \caption{\footnotesize  Sampling Pattern B \\ Number of Samples: 39341}
  \end{center}
\end{subfigure}
\end{center}
\caption{Simple Resolution Phantom Experiment. Samples are from the subset $\{-200,-199,...,199,200\}^2 \subset \bbZ^2$. Notice that the checkerboard feature are captured by the leveled sampling pattern but not by pattern (a), even though it uses fewer samples. Reconstructions are at a resolution of $2^{10} \times 2^{10}$.}
\label{resphantimages}
\end{figure}

Since the optimisation problem is infinite dimensional we cannot solve it numerically so instead we proceed as in \cite{BAACHGSCS} and truncate the problem, approximating $f$ by 
$\sum_{n=1}^R \tilde{x}_n \tau(n)$ (for $R \in \bbN$ large) where $\tilde{x} = (\tilde{x}_n)_{n=1}^R$ now solves the optimisation problem
\be{ \label{basicl1}
\min_{x \in \bbC^R} \| x \|_1  \quad \text{subject to} \quad P_\Omega U P_R x= P_\Omega \hat{f} .
}
We shall be using the SPGL1 package \cite{SPGL} to solve (\ref{basicl1}) numerically.

\subsection{Demonstrating the Benefits of Multilevel Subsampling}

We shall first demonstrate directly how subsampling in levels is beneficial in situations with asymptotic incoherence ($\mu(Q_N U) \to 0$) but poor global incoherence ($\mu(U)$ is relatively large). The image $f$ that we will attempt to reconstruct is made up of regions defined by Bezier curves with one degree of smoothness, as in \cite{GLPU}. This image is intended is model a resolution phantom\footnote{`resolution' here refers to `resolving' a signal from a MRI device.} which is often used to calibrate MRI devices \cite{resphantom}. A rasterization of this phantom is provided in image (a) of Figure \ref{resphantimages}.

We reconstruct with 2D separable Haar wavelets, ordered according to its resolution levels, from a base level of 0 up to a highest resolution level of 8. The Fourier basis is ordered by the linear consistency function $H_{2,1}$, which gives us a square leveling structure when viewed in $\bbZ^2$. We choose these orderings because we know that they are both strongly optimal for the corresponding bases, and therefore should allow reasonable degrees of subsampling when given an (asymptotically) sparse problem.

By looking at Figure \ref{resphantimages}, we observe that subsampling in levels (pattern (b)) allows to pick up features that would be otherwise impossible from a direct linear reconstruction from the first number of samples (pattern (a)) and moreover the $L^1$ error is smaller.

\subsection{Tensor vs Separable - Finding a Fair Comparison}

We would like to study how different asymptotic incoherence behaviours can impact how well one can subsample. In 2D it would be unwise to compare 2 different separable wavelet bases, since we know that they have the same optimal orderings and decay rates in 2D (see Corollary \ref{linearresults}). Therefore we are left with comparing a separable wavelet basis to a tensor basis. The incoherence decay rates for the 2D Haar cases are shown in the table below for Linear and Hyperbolic orderings of the Fourier basis $B^2_\rf$:

\begin{table}[H] \label{incoherencetable2D}
\begin{center}
\large
\begin{tabular}{c|cc}
\hline
\multicolumn{3}{c}{2D Haar Basis Incoherence Decay Rates} \\
\hline
 Ordering  & Tensor & Separable  \\
\hline
Linear      & $ N^{-1/2} $   & $ N^{-1} $     \\
   Hyperbolic       & $\log(N+1) \cdot N^{-1}$        & $\log(N+1) \cdot N^{-1} $     \\
\hline
\end{tabular}
\end{center}
\caption{ \footnotesize The decay rates for the hyperbolic case comes from Theorem \ref{TensorResultsWavelet} and Proposition \ref{Hyperbolic4Separable}. For the linear case, the separable result comes from Theorem \ref{SeparableResults} and the tensor result can be deduced from Lemma \ref{normest} applied to (\ref{zhypcrosscharacterise}), although we do not provide the details here. 
}
\end{table}

Observe that for linear orderings, there is a large discrepancy between the decay rates, however they are the same for hyperbolic orderings. Therefore, comparing separable and tensor reconstructions appears to be a good method for testing the behaviour of differing speeds of asymptotic incoherence. 

However, there is one serious problem, namely the choice of image $f$ that we would like to reconstruct. Recall from (\ref{conditions31_levels}) that the ability to subsample depends on both the coherence structure of the pair of bases and the sparsity structure of the function $f$ we are trying to reconstruct. Ideally, to isolate the effect of asymptotic incoherence we would like to choose an $f$ that has the same sparsity structure in both a tensor and separable wavelet basis. If $f$ was chosen to be the resolution phantom like before then the tensor wavelet approximation would be a poor comparison to that of the separable wavelet reconstruction (due to a poor resolution structure). Therefore we need to choose a function that we expect to reconstruct well in tensor wavelets, for example a tensor product of one dimensional functions.

\begin{figure}[!t]
\begin{center}
\begin{subfigure}[t]{0.32\textwidth}
\begin{center}
\includegraphics[width=\textwidth]{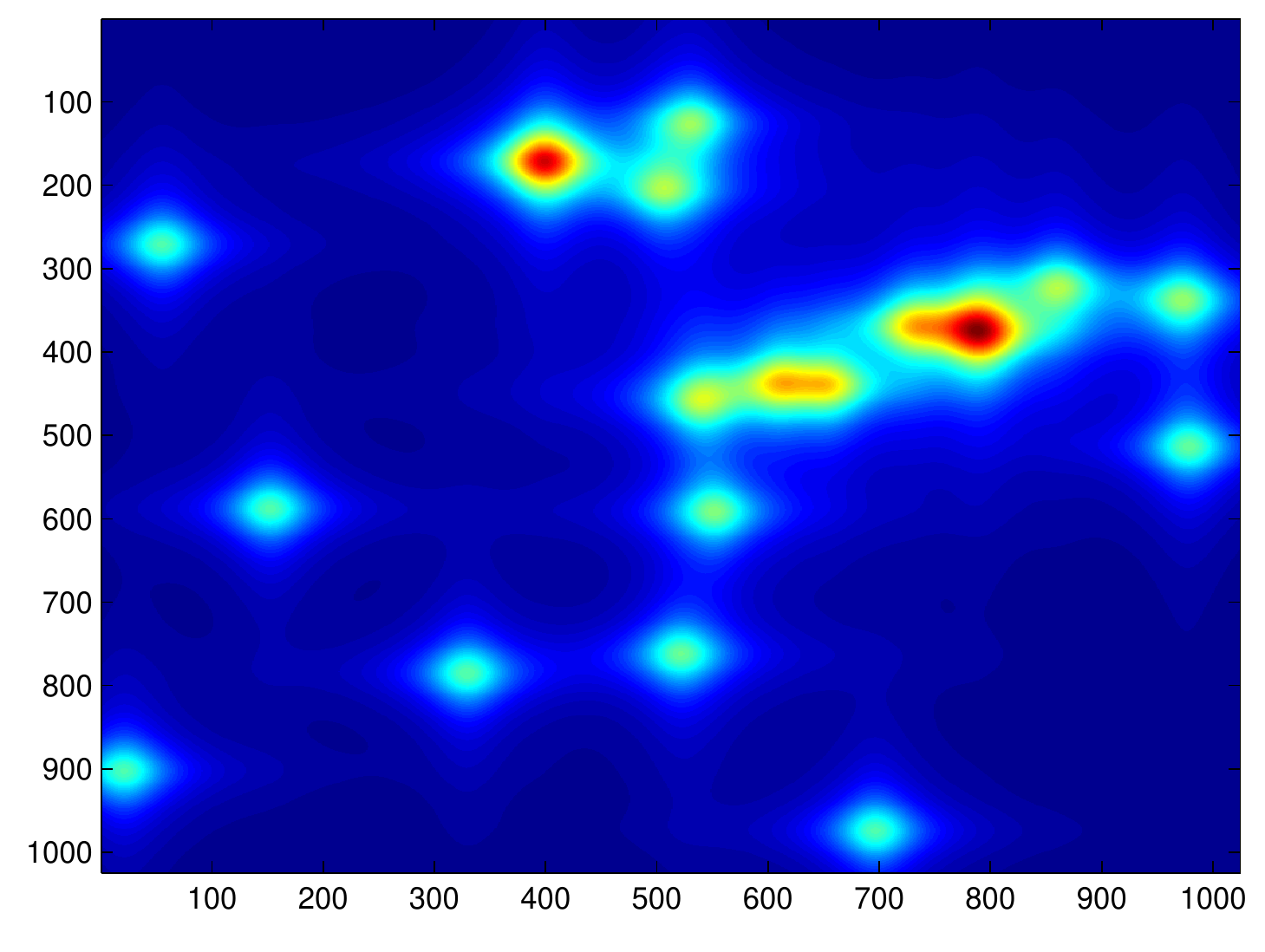}
 \caption{\footnotesize Rasterized Spectrum}
\end{center}
\end{subfigure}
\begin{subfigure}[t]{0.32\textwidth}
\begin{center}
 \includegraphics[width=\textwidth]{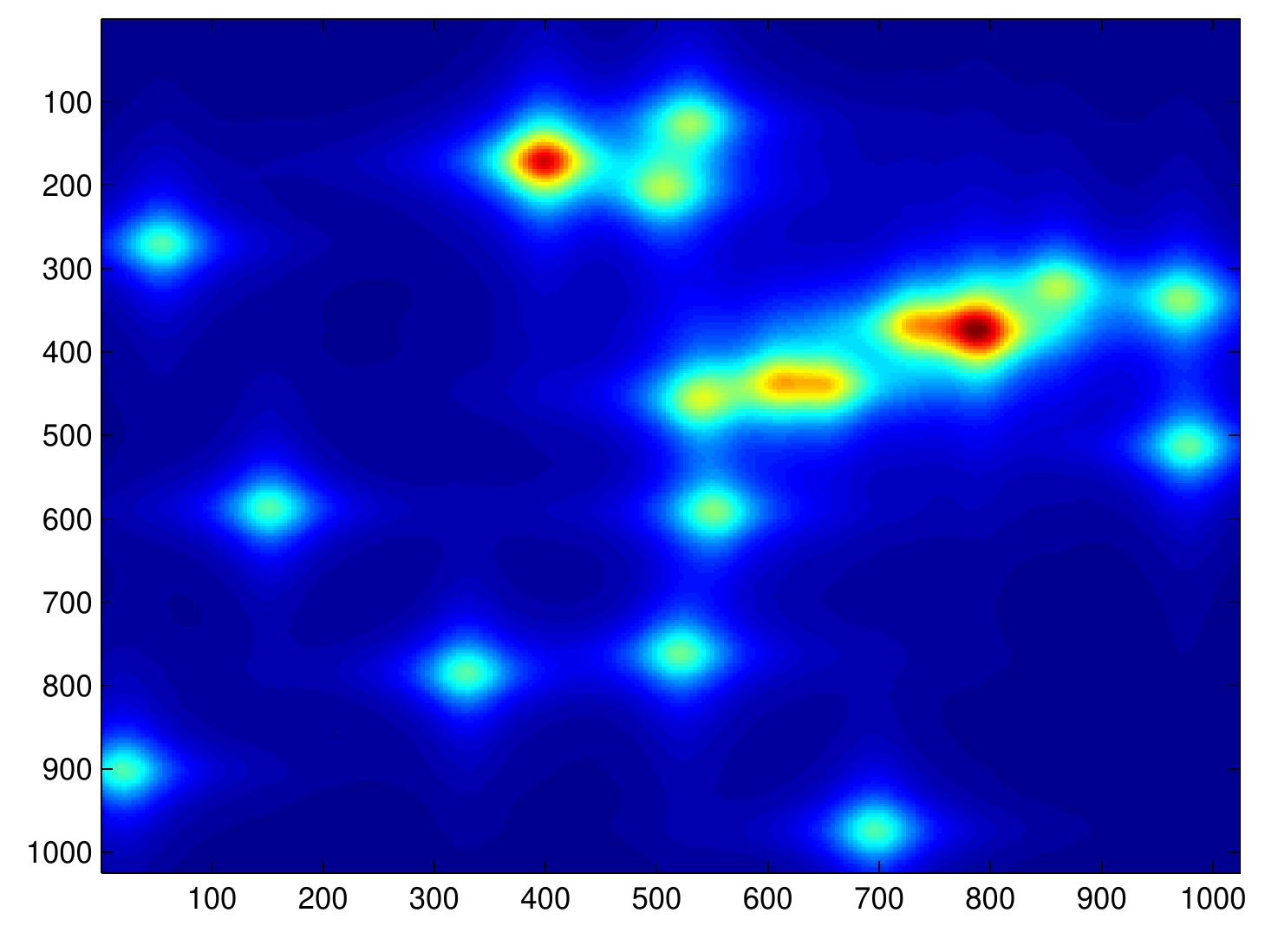} 
  \caption{\footnotesize Separable Reconstruction \\ $L^1$ error = 0.0157}
  \end{center}
\end{subfigure}
\begin{subfigure}[t]{0.32\textwidth}
\begin{center}
\includegraphics[width=\textwidth]{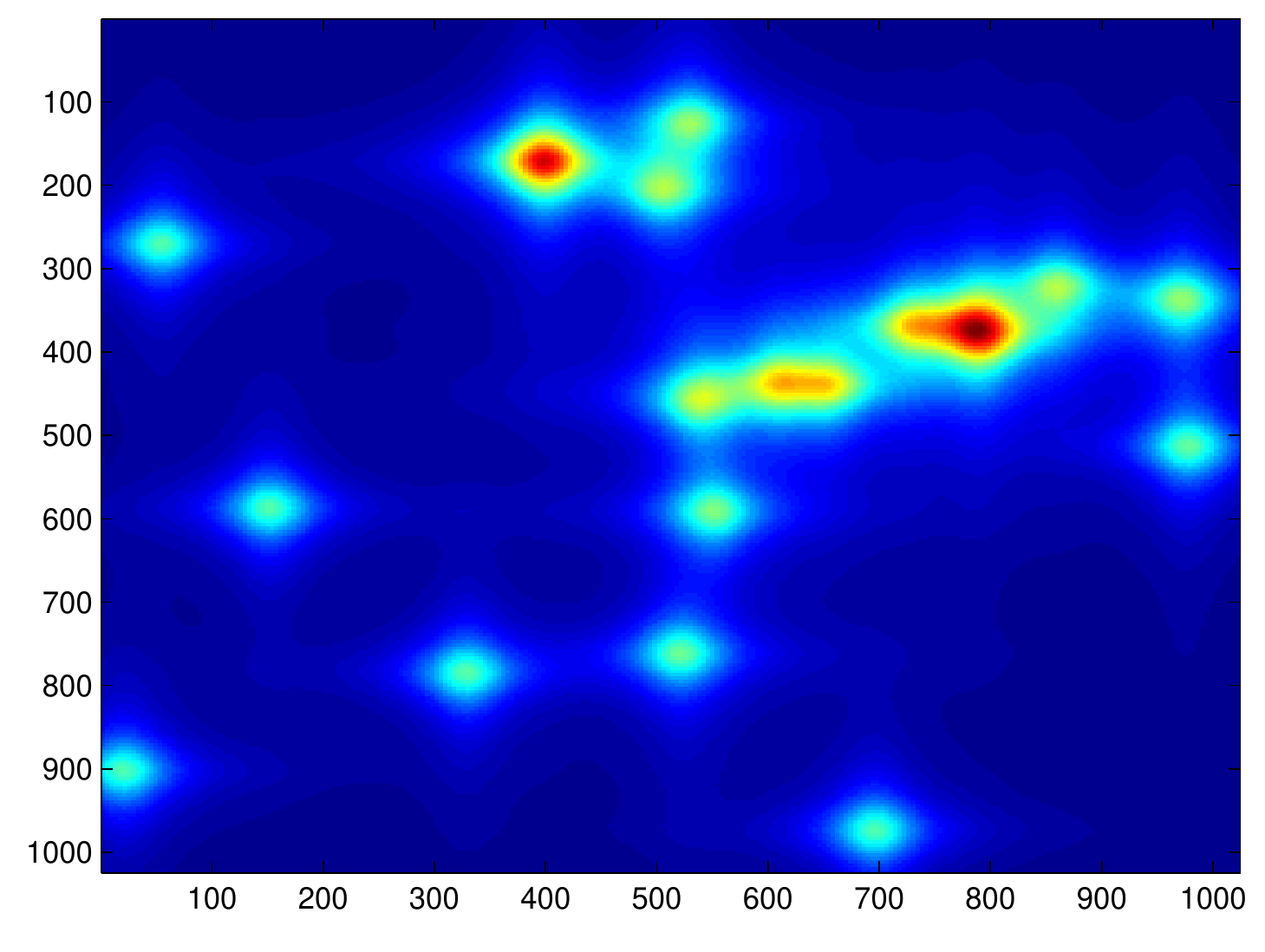} 
  \caption{\footnotesize Tensor Reconstruction \\ $L^1$ error = 0.0159}
  \end{center}
\end{subfigure}
\end{center}
\caption{Spectrum Model and `Full Sampling' Reconstructions. Reconstructions uses all samples from the subset $\{-200,-199,...,199,200\}^2 \subset \bbZ^2$. Images are at a resolution of $2^{10} \times 2^{10}$. Haar wavelets are used for tensor and separable cases. Observe that both reconstructions match the original very closely and have similar $L^1$ approximation errors.}
\label{spectrumsetup}
\end{figure}
Such an example is provided by NMR spectroscopy \cite[Eqn. (5.24)]{spindynamics}. A 2D spectrum is sometimes modelled as a product of 1D Lorentzian functions:
\be{ \label{spectrumdefine}
\begin{aligned}
f(x) & = \sum_{i=1}^r L_{2,p(i),s(i)}(x), \quad x,p(i),s(i) \in \bbR^2,
\\
L_{2,p,s}(x) & = L_{p_1,s_1}(x_1) \cdot L_{p_2,s_2}(x_2), \quad x,p,s \in \bbR^2
\\
L_{p,s} & = \frac{s}{s^2 + (x -p)^2}, \quad x,p,s \in \bbR.
\end{aligned}
}
We consider a specific spectrum $f$ of the above form. By looking at Figure \ref{spectrumsetup} we observe that, without any subsampling from the subset $\{-200,-199,...,199,200\}^2 \subset \bbZ^2$, the tensor and separable Haar wavelet reconstructions have almost identical $L^1$ errors, suggesting that this problem does not bias either reconstruction basis. We order the tensor and separable reconstruction bases using their corresponding level based orderings, which are defined in Lemma \ref{tensorwavelethyp} and Definition \ref{sepleveled} respectively. For separable wavelets we start at a base level of $J=0$ and stop at level 8 (so we truncate at the first $2^{10} \times 2^{10}$ wavelet coefficients) and for tensor wavelets we start at level
$J=0$ and stop at level 10 (when the problem was truncated at higher wavelet resolutions the improvement in reconstruction quality was negligible).

\begin{figure}[t]
\begin{center}
\begin{subfigure}[t]{0.3\textwidth}
\begin{center}
\includegraphics[width=\textwidth]{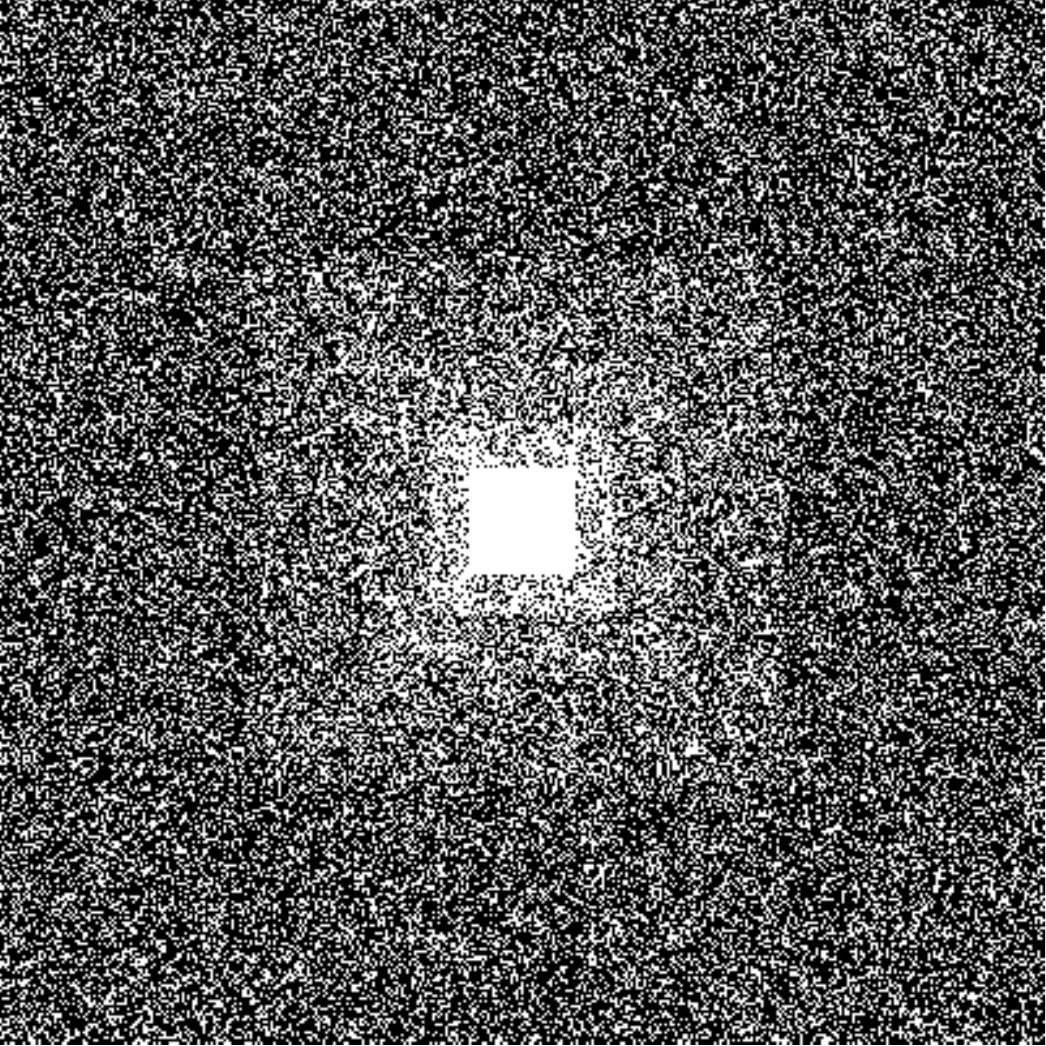}
 \caption{\footnotesize Linear Sampling Pattern}
\end{center}
\end{subfigure}
\begin{subfigure}[t]{0.3\textwidth}
\begin{center}
 \includegraphics[width=\textwidth]{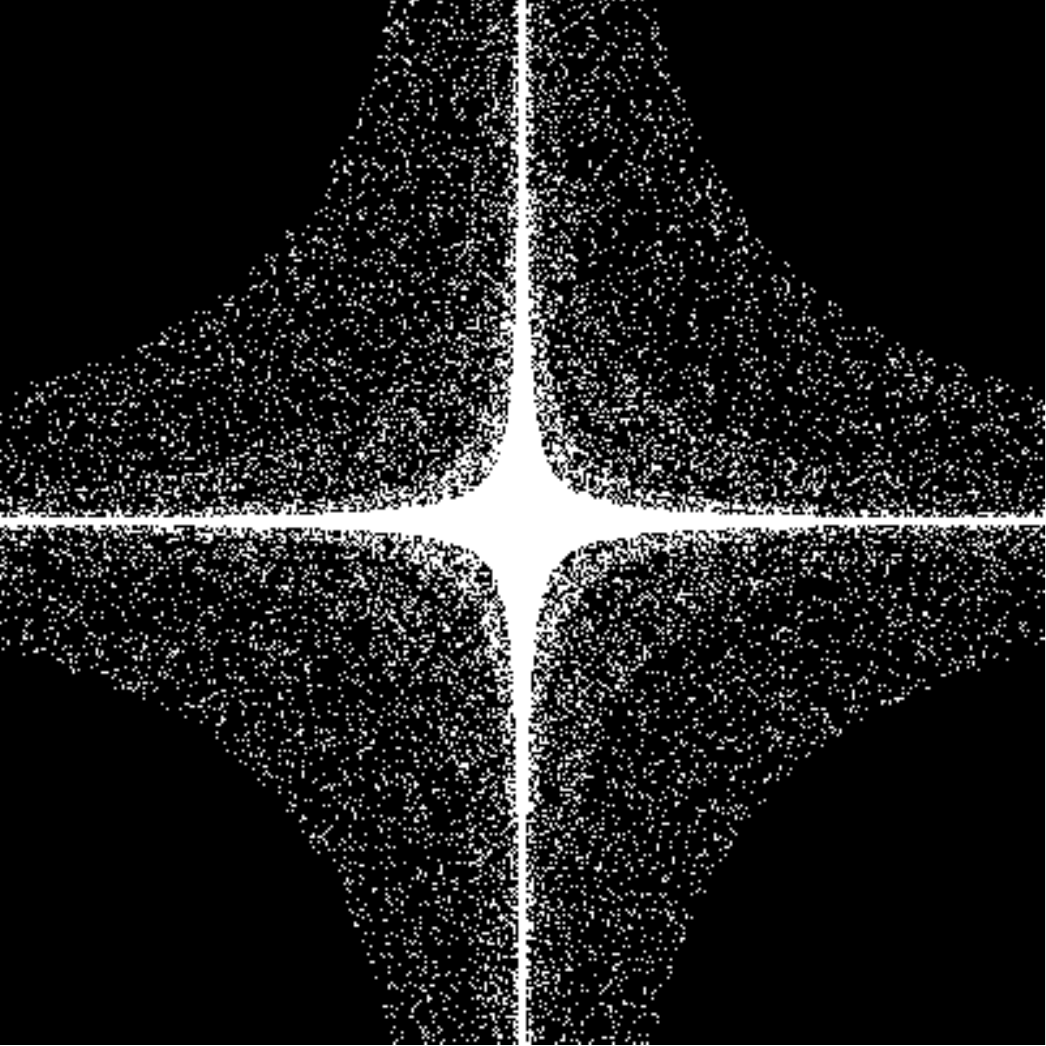} 
  \caption{Hyperbolic Sampling Pattern \\ (Boxed in)}
  \end{center}
\end{subfigure}
\end{center}
\caption{Sampling Patterns. Samples are from the subset $\{-200,-199,...,199,200\}^2 \subset \bbZ^2$. White indicates sample is taken.}
\label{spectrumsamples}
\end{figure}

\begin{figure}[!h]
\begin{center}
\begin{subfigure}[t]{0.4\textwidth}
\begin{center}
\includegraphics[width=\textwidth]{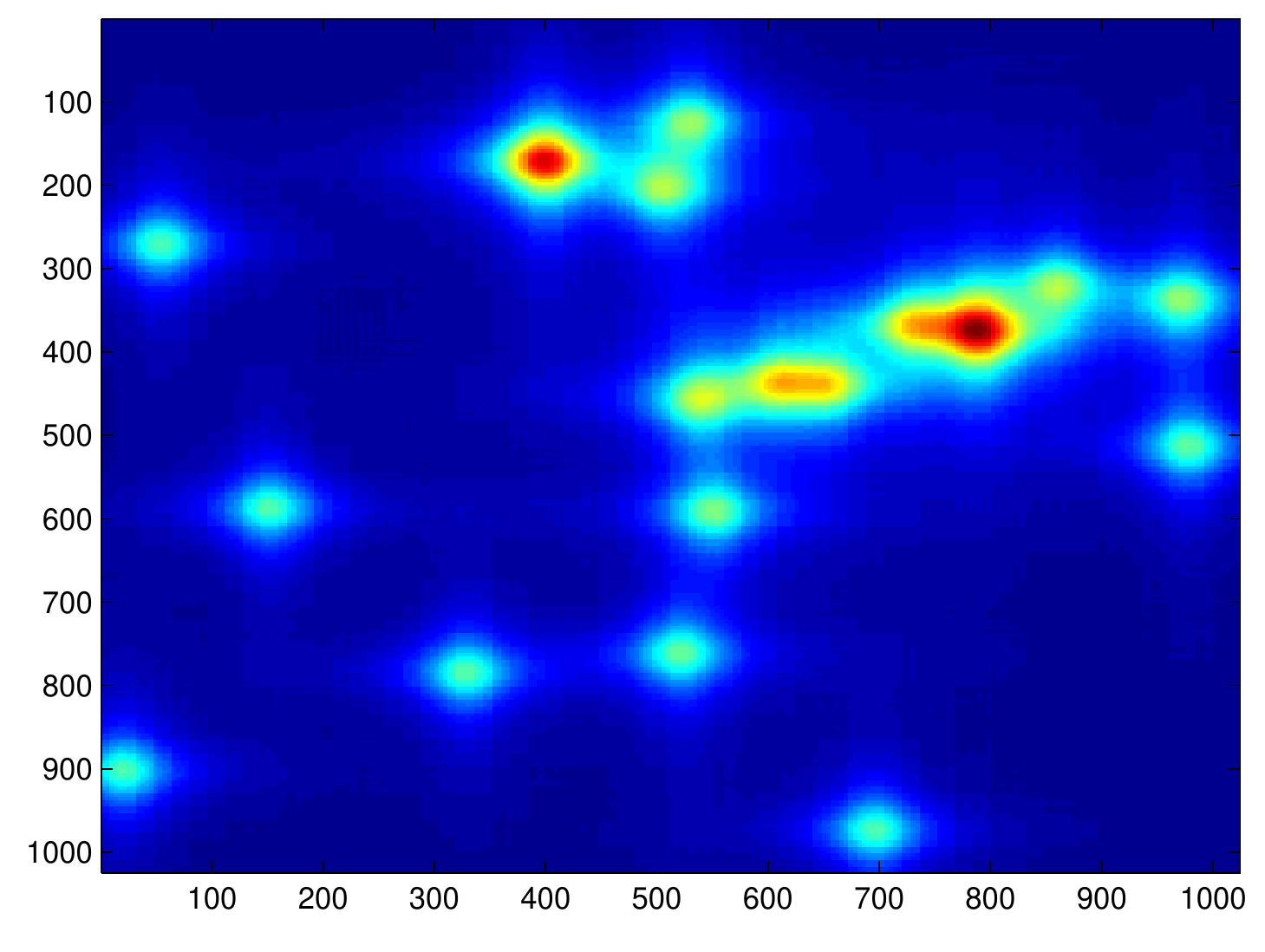}
 \caption{\footnotesize Separable Reconstruction \\ $L^1$ error = 0.0367}
\end{center}
\end{subfigure}
\begin{subfigure}[t]{0.4\textwidth}
\begin{center}
 \includegraphics[width=\textwidth]{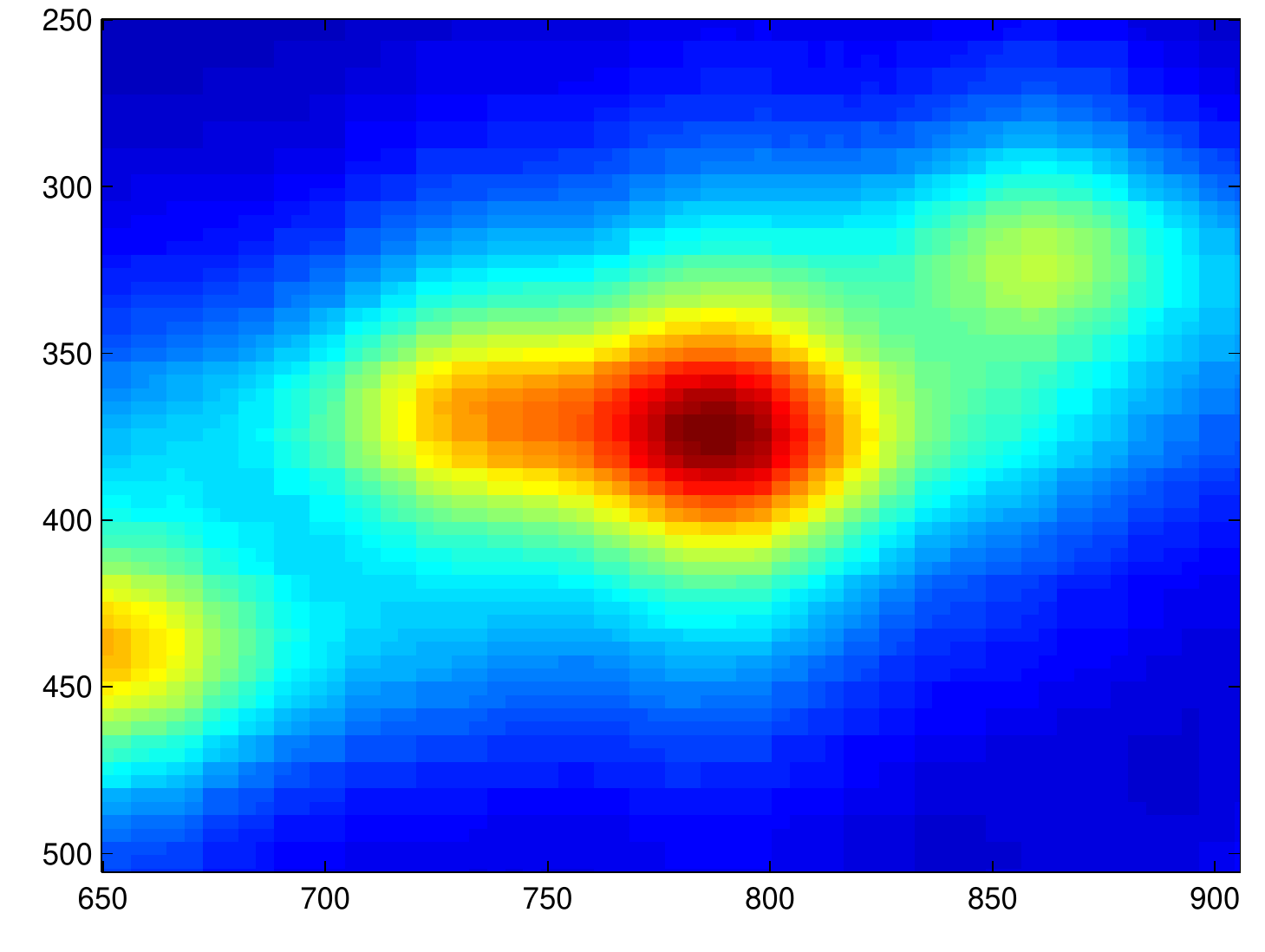} 
  \caption{\footnotesize Separable Closeup}
  \end{center}
\end{subfigure}
\begin{subfigure}[t]{0.4\textwidth}
\begin{center}
\includegraphics[width=\textwidth]{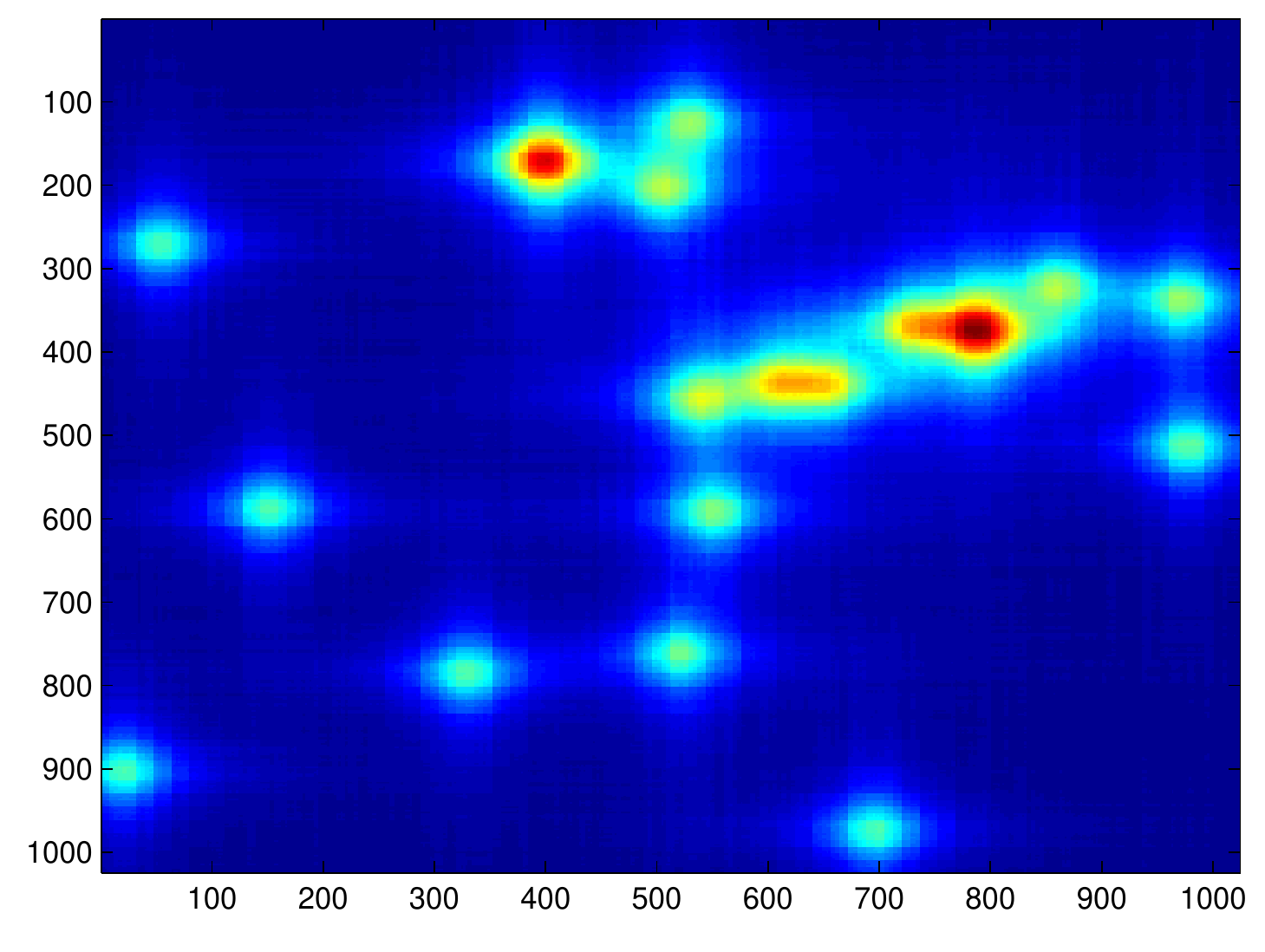}
 \caption{\footnotesize Tensor Reconstruction \\ $L^1$ error = 0.0592}
\end{center}
\end{subfigure}
\begin{subfigure}[t]{0.4\textwidth}
\begin{center}
 \includegraphics[width=\textwidth]{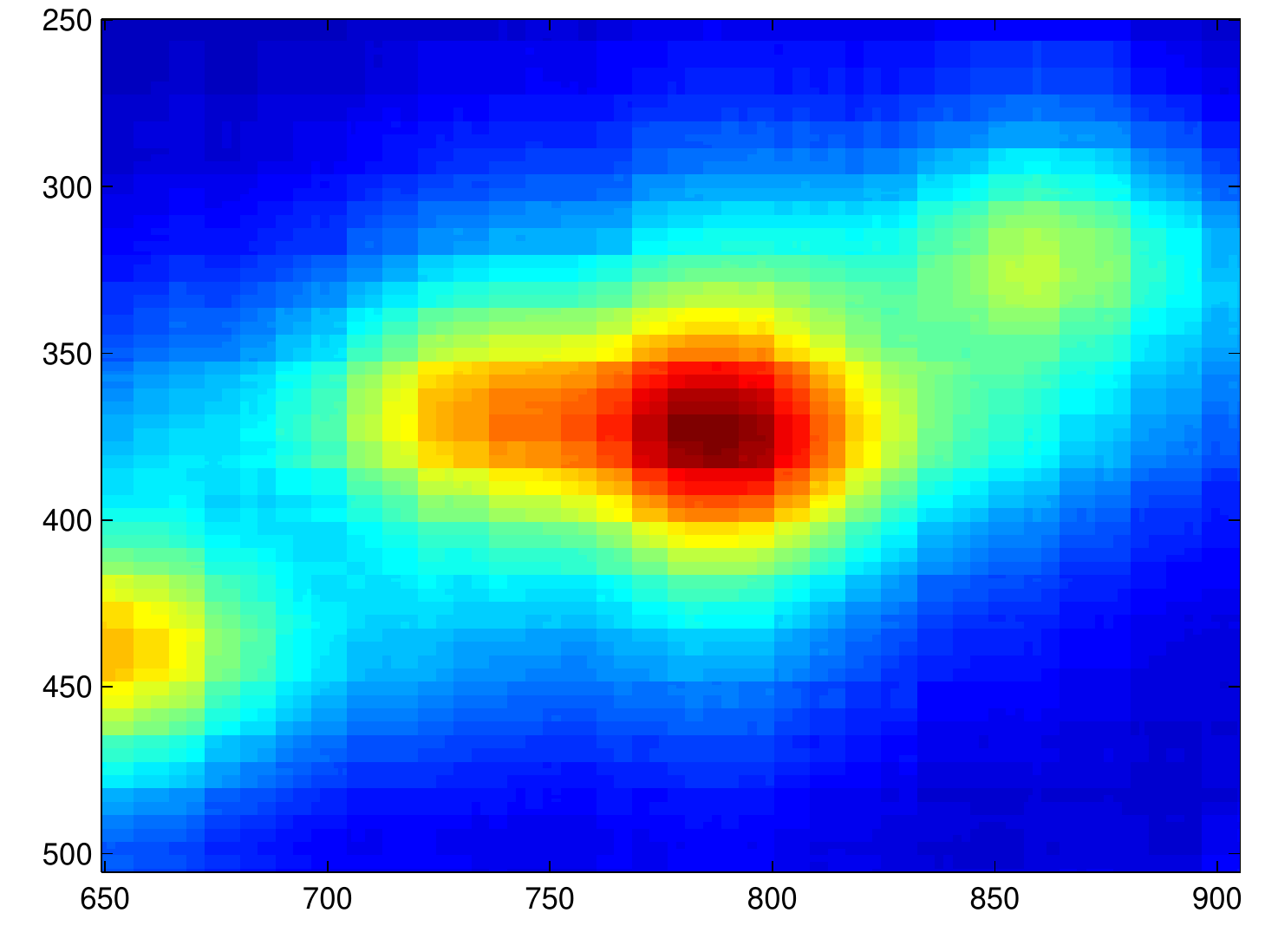} 
  \caption{\footnotesize Tensor Closeup}
  \end{center}
\end{subfigure}
\end{center}
\caption{Reconstructions from Linear Sampling Pattern} 
\label{linearrecons2d}

\end{figure}

We are now going to test how well these two bases perform under subsampling with different orderings of $B^2_\rf$. Two subampling patterns, one based on a linear ordering and another on a hyperbolic ordering, are presented in Figure \ref{spectrumsamples}. Ideally the hyperbolic subsampling pattern would not be restricted by the $\{-200,-199,...,199,200\}^2$ but this is numerically unfeasible.

Let us first consider what happens when using pattern (a) (see Figure \ref{linearrecons2d}). Notice that the separable reconstruction performs far better than the tensor reconstruction and therefore is more tolerant to subsampling with a linear ordering than the tensor case. This is unsurprising as the tensor problem suffers from noticeably large $1/\sqrt{N}$ incoherence when using a linear ordering when compared to the $1/N$ separable decay rate. 

Of course we should have fully considered the sparsity of these two problems which also factor into the ability to subsample, however $f$ was specifically chosen because it was sparse in the tensor basis and moreover we have seen that it provides a comparable reconstruction to the separable case when taking a full set of $\{-200,-199,...,199,200\}^2$ samples.
Next we observe what happens when using the pattern (b) (Figure \ref{hyprecons2d}). There is now a stark contrast to the linear case, in that both separable and tensor cases provide very similar reconstructions and furthermore the $L^1$ errors are very close. This suggests that both problems have similar susceptibility to subsampling when using hyperbolic sampling, which is reflected by their identical rates of incoherence decay with hyperbolic orderings.

\begin{figure}[h]
\begin{center}
\begin{subfigure}[t]{0.4\textwidth}
\begin{center}
\includegraphics[width=\textwidth]{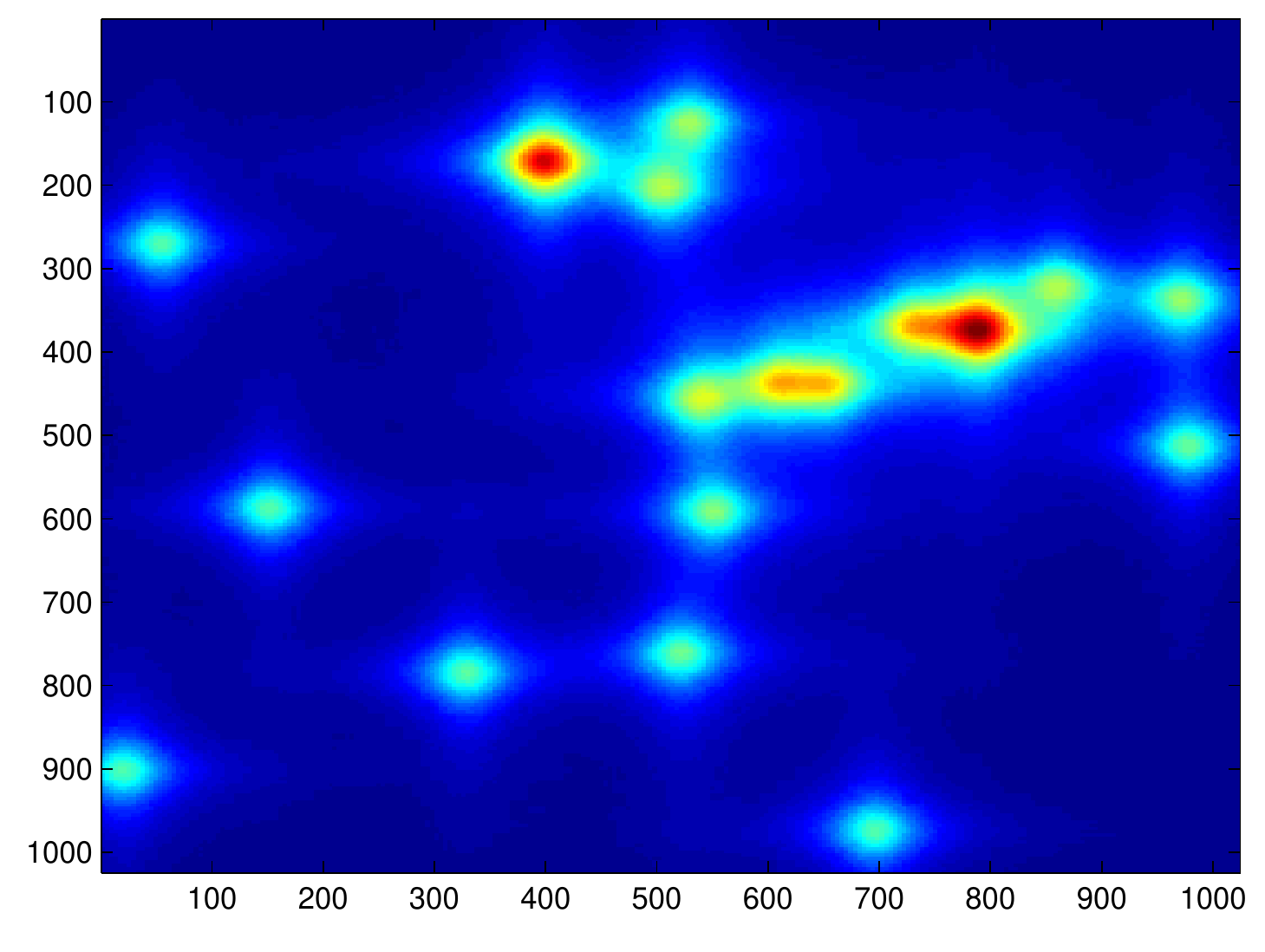}
 \caption{\footnotesize Separable Reconstruction \\ $L^1$ error = 0.0263}
\end{center}
\end{subfigure}
\begin{subfigure}[t]{0.4\textwidth}
\begin{center}
 \includegraphics[width=\textwidth]{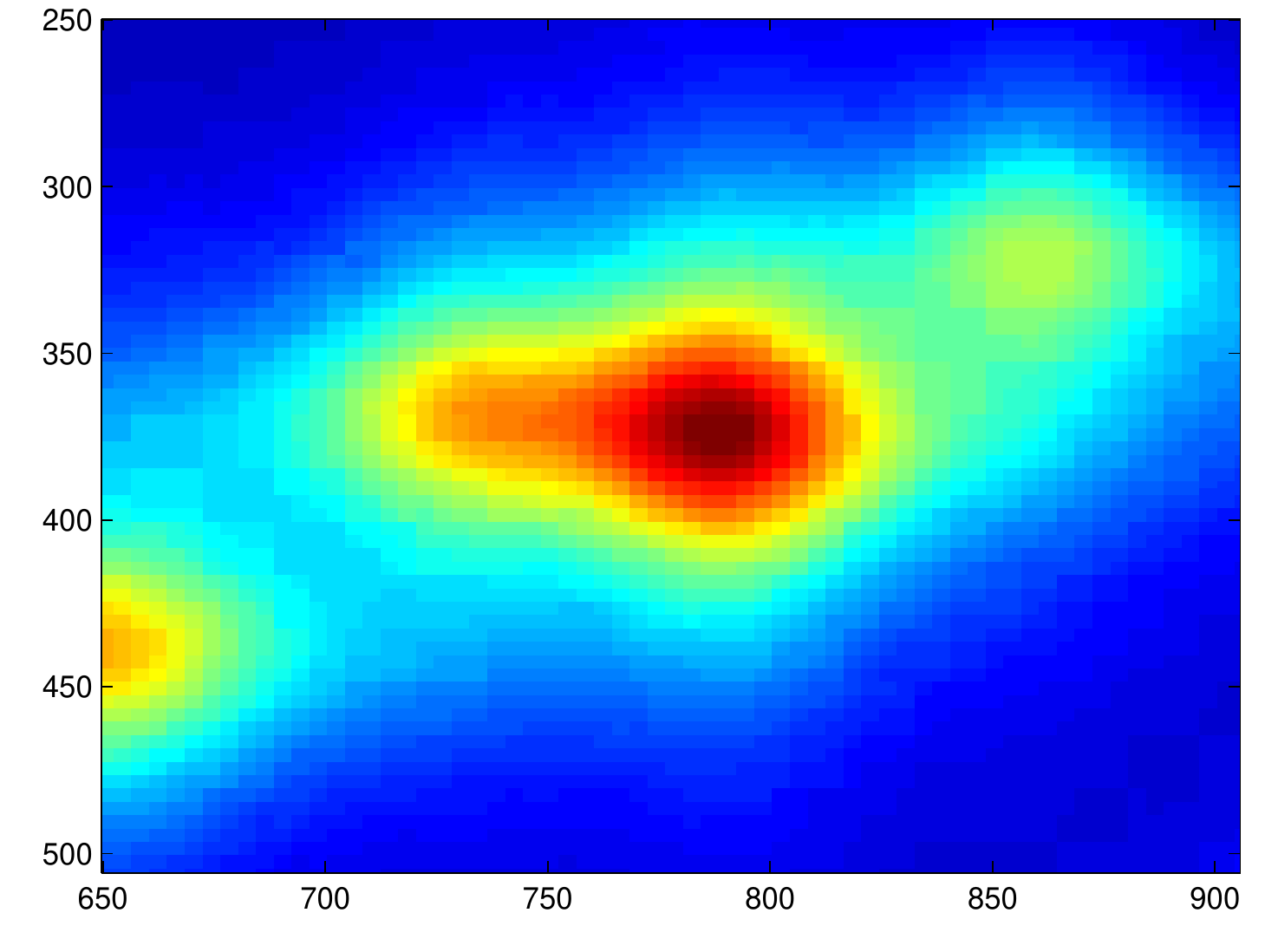} 
  \caption{\footnotesize Separable Closeup}
  \end{center}
\end{subfigure}
\begin{subfigure}[t]{0.4\textwidth}
\begin{center}
\includegraphics[width=\textwidth]{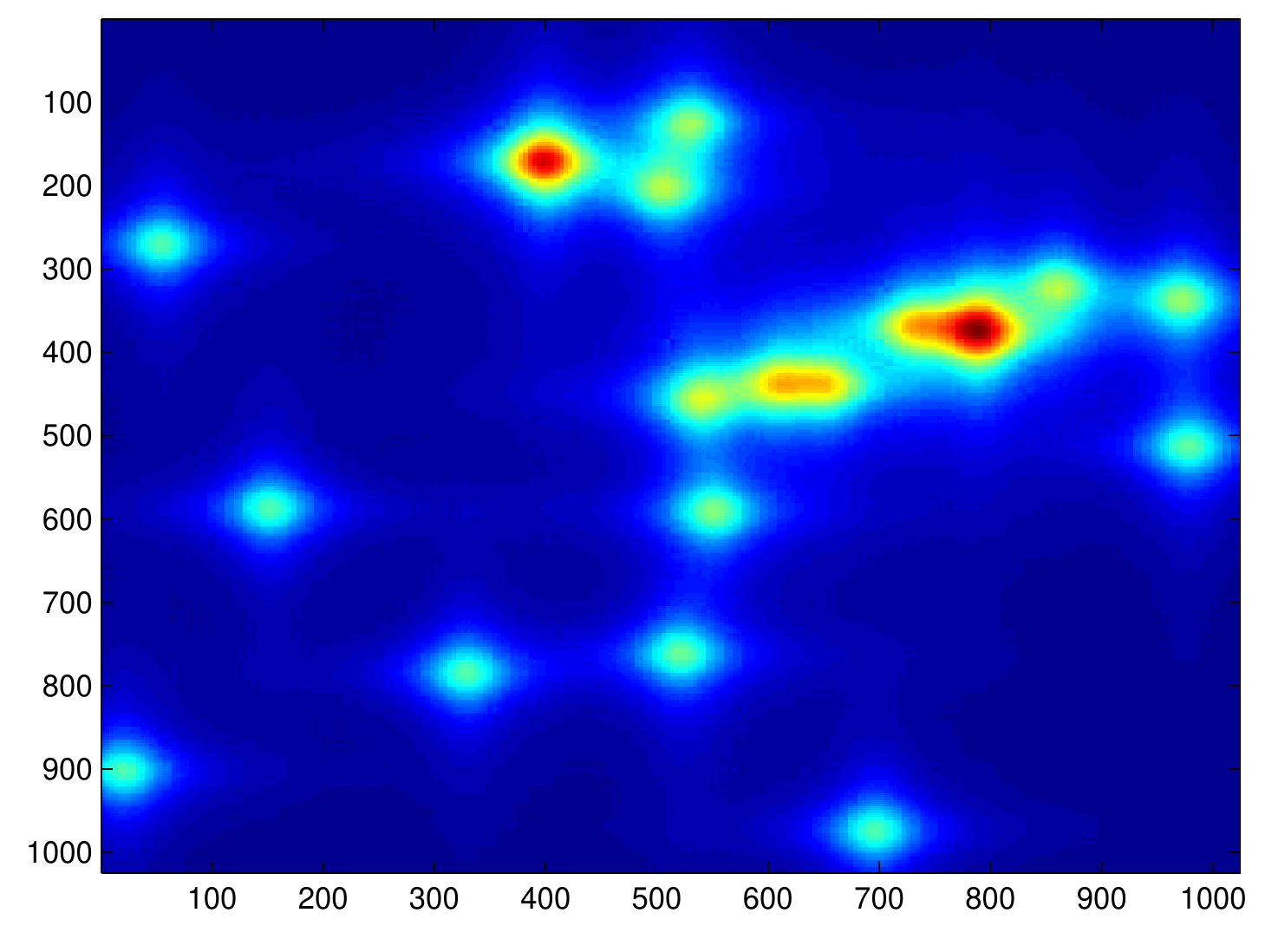}
 \caption{\footnotesize Tensor Reconstruction \\ $L^1$ error = 0.0277}
\end{center}
\end{subfigure}
\begin{subfigure}[t]{0.4\textwidth}
\begin{center}
 \includegraphics[width=\textwidth]{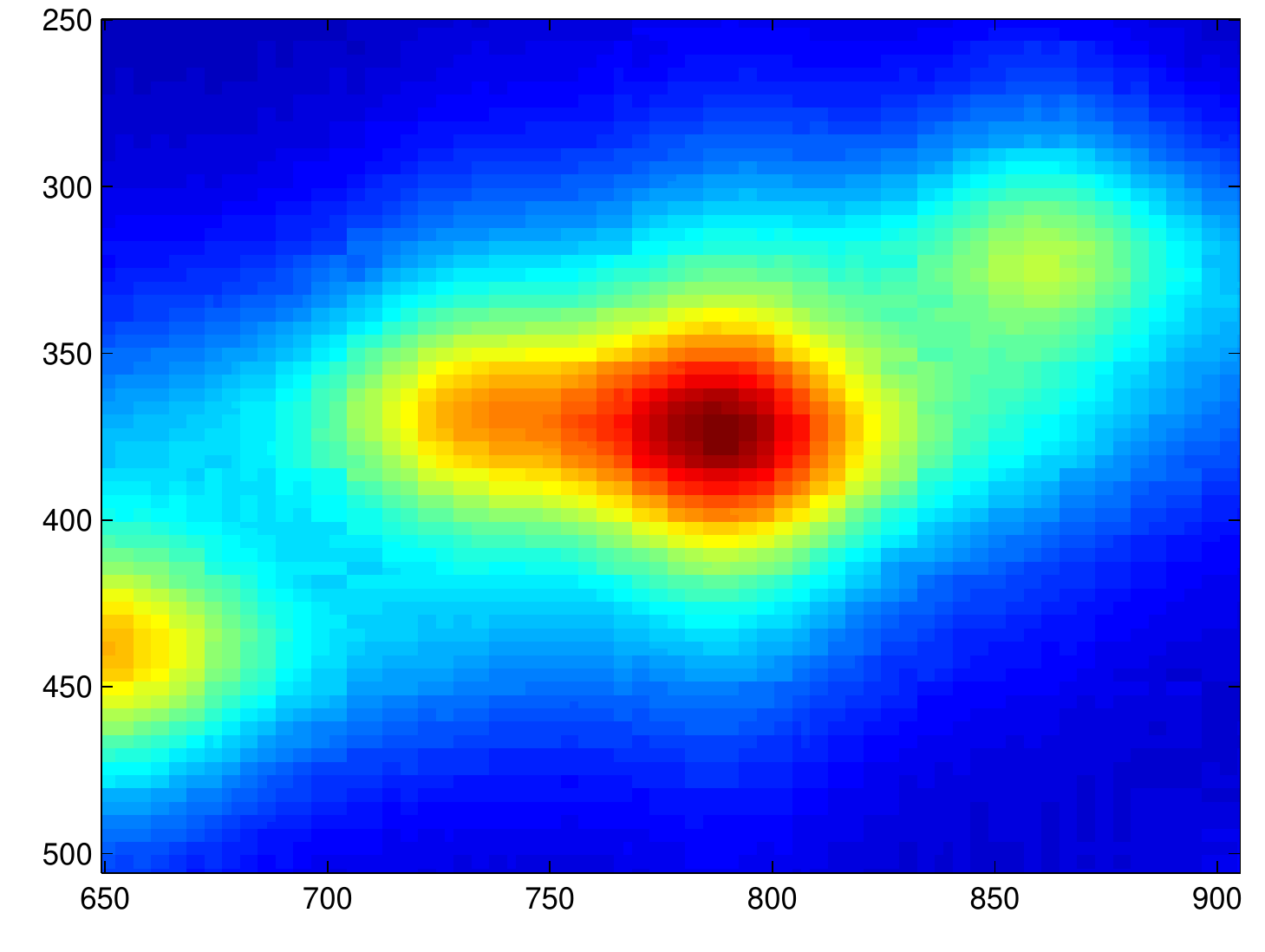} 
  \caption{\footnotesize Tensor Closeup}
  \end{center}
\end{subfigure}
\end{center}
\caption{Reconstructions from Hyperbolic Sampling Pattern} 
\label{hyprecons2d}

\end{figure}

\section{Appendix}

\textit{Proof of Proposition \ref{Hyperbolic4Separable}:} (\ref{hyperbolicbound2}) applied to part 1.) of Lemma \ref{characterisationlemma} shows that the decay of $\mu(\pi_N U)$ is bounded above by\footnote{for the definitions of $H_d, h_d$ see (\ref{hyperbolicdecay}) and (\ref{hddef}).} $F_\text{hyp}(\sigma(N))=1/H_d(\sigma(N)) \approx 1/h_d(N)$, which gives us the upper bound for $\mu(Q_N U)$ since $1/h_d(N)$ is decreasing.

For the lower bound, we focus on terms of the form $\lambda_d \circ \rho(m)=(t,...,t)$ for some $t \in \bbN$ and we set, for a fixed $q \in \bbN$
\[s=(1,...,1), \quad j:= \lceil \epsilon \log_2 t \rceil + q , \]
where we assume for now that $j \ge J$ is satisfied. This gives us
\be{ \label{specificlower}
\begin{aligned}
| \langle \Psi^s_{j,0}, \rho(m) \rangle |^2 & = \epsilon^d 2^{-dj} \prod_{i=1}^d | \mathcal{F} \psi(\epsilon 2^{-j} t) |^2
\\ & \ge  \frac{1}{2^{d(1+q)} t^d} \cdot | \mathcal{F} \psi(\epsilon 2^{-(\lceil \epsilon \log_2 t \rceil + q)} t) |^2
\\ & \ge \frac{1}{2^{d(1+q)} t^d} \cdot L_q^{2d} \quad ( \text{using (\ref{wavelower})}) .
\end{aligned}
}
Let $m$ now be arbitrary with $\prod_{i=1}^d \max( | \lambda_d \circ \rho(m)_i|,1)=M \ge 1$ and let $t = \lceil M^{1/d} \rceil +1$. Because $\rho$ corresponds to the hyperbolic cross there exists an $m'>m$ such that $\prod_{i=1}^d \max( | \lambda_d \circ \rho(m')_i|,1)=t^d$ where $\lambda_d \circ \rho(m')=(t,...,t)$. Notice that $t^d \le E(d)M$ for some constant dependent on the dimension $d$. Furthermore, (\ref{specificlower}) holds for $m=m'$ if we have that $j \ge J$, which is satisfied if $m$ is sufficiently large. Therefore we deduce by (\ref{specificlower}) that
\[
\begin{aligned}
\mu( Q_m U) \ge | \langle \Psi^s_{j,0}, \rho(m') \rangle |^2 & \ge \frac{1}{2^{d(1+q)} t^d} \cdot L_q^{2d} 
\\ & \ge \frac{1}{E 2^{d(1+q)+1} M} \cdot L_q^{2d}
\\ & = \frac{1}{E 2^{d(1+q)} \prod_{i=1}^d \max( | \lambda_d \circ \rho(m)_i|,1) } \cdot L_q^{2d}
\\ & \ge \frac{C}{E 2^{d(1+q)} h_d(m)} \cdot L^{2d}_q \quad (\text{using (\ref{hyperboliccrossZdecay}), $C>0$ some constant}). 
\end{aligned}
\]
This proves the lower bound.

\bibliographystyle{abbrv}
\bibliography{bibfirst}

\end{document}